  \renewcommand{\ALG@name}{Procedure}
\newcommand\sbullet[1][.5]{\mathbin{\vcenter{\hbox{\scalebox{#1}{$\bullet$}}}}}
\newtheorem{proposition}{Proposition}[]
\newtheorem{corollary}{Corollary}[]
\newtheorem{definition}{Definition}[]
\newtheorem{assumption}{Assumption}[]
\theoremstyle{remark}
\newtheorem{rem}{Remark}
\theoremstyle{remark}
\theoremstyle{plain}
\def\@biblabel#1{\hspace*{-\labelsep}}
\renewcommand*{\@fnsymbol}[1]{\number#1}
\begin{document}

\title{Bayesian Outlier Detection for Matrix--variate Models}

\author{
    Monica Billio\footnote{Department of Economics, Ca' Foscari University of Venice} 
    \hspace{10pt} Roberto Casarin$^{1,}$\footnote{Corresponding author address: r.casarin@unive.it. Other contacts: Monica Billio (billio@unive.it), Fausto Corradin (fausto.corradin@unive.it), Antonio Peruzzi (antonio.peruzzi@unive.it)}
    \hspace{10pt} Fausto Corradin$^{1}$ 
    \hspace{10pt} Antonio Peruzzi$^{1}$
}

\vspace{-5pt}
\date{\today}
\maketitle

\vspace{-10pt}

\begin{abstract}
Anomalies in economic and financial data -- often linked to rare yet impactful events -- are of theoretical interest, but can also severely distort inference. Although outlier-robust methodologies can be used, many researchers prefer pre-processing strategies that remove outliers. In this work, an efficient sequential Bayesian framework is proposed for outlier detection based on the predictive Bayes Factor (BF). The proposed method is specifically designed for large, multidimensional datasets and extends univariate Bayesian model outlier detection procedures to the matrix-variate setting. Leveraging power-discounted priors, tractable predictive BF are obtained, thereby avoiding computationally intensive techniques. The BF finite sample distribution, the test critical region, and robust extensions of the test are introduced by exploiting the sampling variability. The framework supports online detection with analytical tractability, ensuring both accuracy and scalability. Its effectiveness is demonstrated through simulations, and three applications to reference datasets in macroeconomics and finance are provided.
\end{abstract}
 
\medskip
\textbf{Keywords:} Bayesian Modelling, Bayes Factor, Matrix--variate, Sequential Model Assessment, Outliers.

\newpage

\section{Introduction}

Robust methods are widely used to detect and address outliers -- observations that depart significantly from typical patterns due to measurement error, structural change, or data irregularities. In economics and finance, where such anomalies often signal impactful events like crises or policy shifts, proper identification is essential (\citealp{atkinson1997detecting}, \citealp{GIORDANI2007112}). Many researchers prefer to identify and remove outliers or windsorize the data (e.g., see \citealp{malikov2020estimation}, \citealp{liao2022extrapolative}, \citealp{agarwal2024unobserved}) rather than rely on robust methodologies as the latter are often more complex to implement and may not perform equally well across all scenarios (e.g., see \citealp{zeng2021bayesian} \citealp{chang2024discussion}). Instead, an extensive literature investigates the timing, magnitude, and probability of outliers following various robust model strategies such as change points (\citealp{chopin2004bayesian}, \citealp{koop2007estimation}, \citealp{casini2024change}), Markov switching (\citealp{kole2023moments}, \citealp{casarin2024bayesian}) and Bayesian nonparametrics (\citealp{BASSETTI201449}, \citealp{BILLIO201997}). In this paper, we propose a theoretically founded testing procedure for identifying frequency and periods of outlying observations. It is specifically designed for high-dimensional large datasets. Our test can be useful to the researcher for many purposes. When applied to the raw data, it can be used either to correctly choose a model class in the model specification stage or as a comparison with a model-based structural break identification. Additionally, it can be applied to the residuals of a given model to help identify sources of model mispecification.

Outlier detection is a statistical problem that has received considerable attention from the frequentist and Bayesian perspectives \citep{Bayarri2004}. Despite their popularity, classical outlier detection procedures, such as Grubbs's test (\citealp{Grubb50}), are not designed for high-dimensional or structured data settings, such as multiple index panels or network-valued data. In such environments, observations often exhibit dependence, and univariate tests may yield misleading results. 

Figure~\ref{fig:intro} reports the outliers detected by Grubbs’ test across the three benchmark high-dimensional datasets we will consider in this work, i.e. an Inflation and Unemployment dataset (see \citealp{Can09}), an International Trade dataset (see \citealp{Rose2004}), and a  Volatility Network dataset (see \citealp{billio2018bayesianDT}). The top panels display counts of individual outlying entries over time at different significance thresholds, together with the corresponding sample range (shaded areas). The bottom panels summarize the number of rows and columns containing at least one outlier. As illustrated in Figure~\ref{fig:intro}, applying standard outlier detection methods to the three high-dimensional datasets retrieves many outliers across different significance levels (top panels). This is to be expected, as several testing procedures that rely on independence assumptions tend to overlook interdependencies and therefore overflag outliers. When Bonferroni correction is applied, the number of detected outliers is substantially reduced due to the large number of series involved, highlighting the conservative nature of such inequality-based adjustment. Further results from Grubbs' test reveal structural dependencies and asymmetries between row- and column-wise detections (bottom panels), underscoring the need for joint testing. In addition, outlier detection for high-dimensional datasets calls for efficient and interpretable tools, which in turn motivate the development of tractable, iterative procedures for model fitting and prediction.

This paper addresses these issues and contributes to the Bayesian literature by extending sequential outlier detection procedures for univariate models to matrix-variate models. We follow a Bayesian approach as it naturally allows for sequential updating of the estimators involved in the outlier detection, and analytical results are derived to avoid computationally intensive approximations. The sequential nature of our test is well-suited for large and high-dimensional datasets as it allows reducing the computational cost, while also accommodating time variation in both the conditional mean and variance of the predictive distribution.

\begin{figure}[htb!]
    \centering
    \setlength{\tabcolsep}{3pt}
    \renewcommand{\arraystretch}{1.2}
    \begin{tabular}{c}
   (a) Outliers Count \vspace{5pt}\\
\includegraphics[width = 0.95\textwidth]{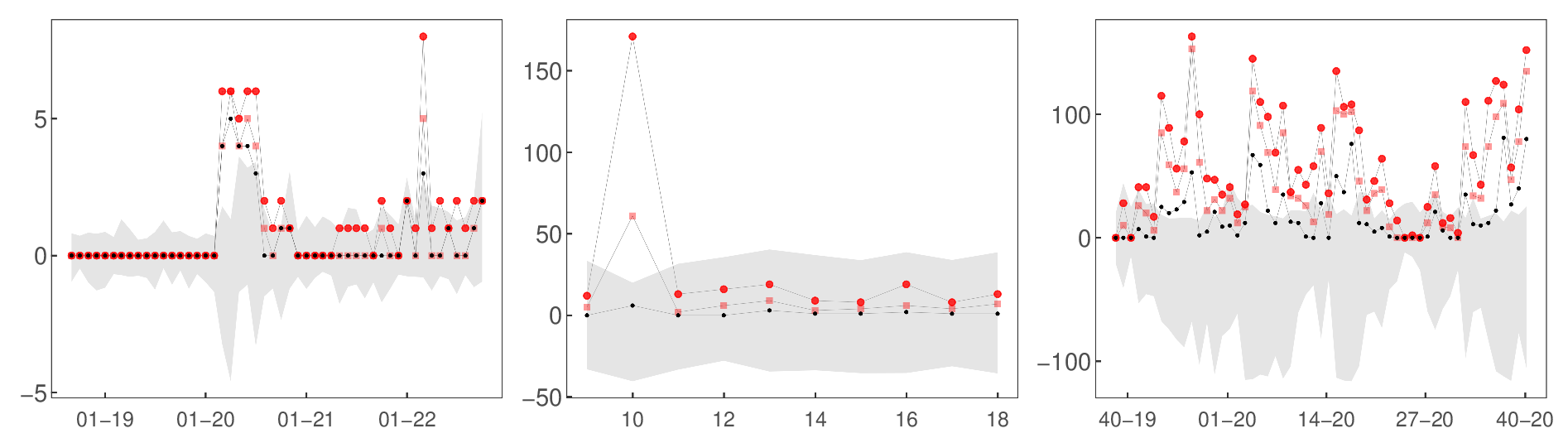}\\
   (b) Row and Column Outliers count \vspace{5pt}\\
\includegraphics[width = 0.95\textwidth]{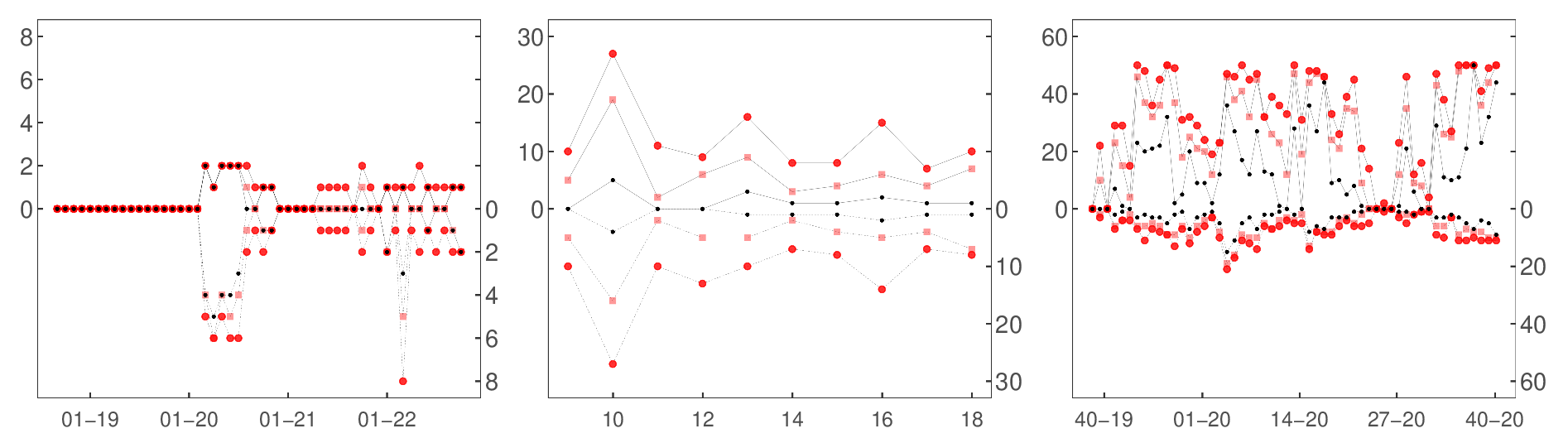}
    \end{tabular}
        \caption{
Grubbs' test for outliers across datasets. Top:  number of outlying observations over time across three benchmark datasets: (i) Inflation and Unemployment (left), (ii) International Trade (centre), and (iii) Volatility Network (right). Counts of outlying matrix entries at different significance levels ({\scriptsize\textcolor{red}{$\blacksquare$}} 1\%, {\footnotesize\textcolor{red}{$\bullet$}} 5\% and  {\footnotesize$\bullet$} 5\% with Bonferroni's correction). The sample min-max gap is denoted by the shaded areas. Bottom: number of rows (solid line, left scale) and columns (dashed line, right scale) with at least one outlier at different significance levels.}
\label{fig:intro}
\end{figure}

Within a Bayesian framework, a common approach is to assume that potential outliers arise from contaminating models distinct from the one generating the bulk of the data. Previous Bayesian works along these lines can be found in \cite{Box1968-gm,Guttman1973-bk, Abraham1979-ba, Guttman1978-be, pettit1985outliers,Pettit1992-bc,Pettit1990-zb,Verdinelli1991-cu,bayarri1994robust,Hoeting1996-ym}. Nonetheless, these approaches are not well-suited for high-dimensional and large datasets since they are designed for a univariate setup.

We follow a Bayes Factor (BF) approach, which is also widely used in Bayesian analysis \citep{Schrider2016-nn,Ly2016-aw,Chen2018-ht,Li2021-pk, Stefan2019-nf}, including selective inference \citep{Yekutieli2012-bd} and outliers detection \citep{Bayarri2003-tb}. The BF is also connected to Schwartz's criterion, also known as the Bayesian Information Criterion, which is another approach commonly used in the outlier detection literature. Due to the extreme sensitivity of the BF to the choice of the alternative distribution, robust methodologies have been developed \citep{Li2021-pk, Schad2022-oj}. In this paper, we follow the power discounting approach of \cite{west1986bayesian} and assume the alternative distribution is proportional to the principal distribution raised to a power. 

The advantage of the perturbation approach based on power discounting is twofold. First, it requires estimating the model only under the null hypothesis, thus reducing computational cost. Second, the specification of the model under the alternative is not required, thus preserving the tractability of the BF testing procedure. We exploit these features to derive the finite-sample distribution of the predictive BF and to provide some frequentist validation of the testing procedure, without relying on asymptotic approximations. A critical region for the test is derived as an alternative to the traditional Jeffreys' scale of evidence, which has been criticized. See \cite{xian209} and referenced therein. We consider the Gaussian family since it is a standard assumption in many fields \cite[e.g., see][]{billio2018bayesianDT,guhaniyogi2017bayesian} and extend the existing procedure for univariate Gaussian models \citep{west1986bayesian} to the matrix--variate case, thus providing an original contribution to the expanding literature in this area \citep[e.g., see][]{landim2000dynamic,triantafyllopoulos2008missing,wang2009bayesian,CarvWest07DynMatNormGraph,Viroli11MatNorm,thompson2020classification,billio2018bayesianDT,tomarchio2022mixtures}. The Gaussian assumption is not restrictive since we show that within a Bayesian framework the scale parameter can be integrated out of the likelihood, thus returning a Student-t predictive distribution, which can account for heavy tails.

This paper presents not only new results for detecting outliers in the matrix case but also uncovers novel results for the univariate Gaussian model. Furthermore, some pre-existing results for the univariate model are recovered as special cases. We also prove analytically that the outcome of the BF procedure for detecting outliers heavily depends on the choice of the discounting factor. Thus, we propose two robust BFs and a BF calibration procedure to alleviate the problem while maintaining a certain degree of tractability. Through some simulation experiments, we investigate the properties of a testing procedure under various outlier generation settings.


The impact of outliers on empirical economic analysis has gained importance in the aftermath of major global disruptions such as the 2008–2009 financial crisis and the COVID-19 pandemic. These episodes encouraged both researchers and official agencies to develop guidelines for outlier detection and adjusting for outliers in macroeconomic and financial data. In response to this growing need, we illustrate the effectiveness of our proposed sequential outlier detection procedure across the three aforementioned representative economic datasets: (i) a panel of inflation and unemployment indicators for European countries, (ii) a dynamic network of international trade flows, and (iii) a dynamic network of financial market volatilities.


The paper is organized as follows. Section \ref{sec:bayesian} introduces the outlier detection procedure based on BF and presents the main results. Section \ref{sec:illus} provides some analytical results on the properties of the test and a simulation study of the procedures. Section \ref{sec:empirical} presents the three real-data illustrations. Section \ref{sec:concl} concludes.
\section{Bayesian Outlier Detection}\label{sec:bayesian}
\subsection{Bayes Factor}
Consider a sequence of observations $\boldsymbol{Y}_t$, $t=1,2,\ldots$ with $\boldsymbol{Y}_{t}\in\mathcal{Y}$ where $\mathcal{Y}$ is a possibly multidimensional sample space and $t$ is a time index. In the following, boldfaced symbols represent vectors or matrices.  We assume the information available at time $t$ is given by the collection of past observations $\mathcal{D}_t=\{\boldsymbol{Y}_1,\ldots,\boldsymbol{Y}_t\}$. Given the parameter and past observations, we assume the conditional sampling distribution does not depend on $\mathcal{D}_{t-1}$ and belongs to a parametric family with unknown parameter $\boldsymbol{\theta}\in \Theta$. The parameter space $\Theta$ is endowed with a prior distribution $p(\boldsymbol{\theta})$.

Our outlier detection procedure is applied sequentially over time to reduce computational cost in large datasets and capture time variations in the moments. We assume at time $t-1$ a posterior distribution for $\boldsymbol{\theta}$ is formed given $\mathcal{D}_{t-1}$ with density given by $p(\boldsymbol{\theta}|\mathcal{D}_{t-1})\propto p(\boldsymbol{\theta})p(\boldsymbol{Y}_1|\boldsymbol{\theta})\cdot\ldots\cdot p(\boldsymbol{Y}_{t-1}|\boldsymbol{\theta})$. We denote with $\boldsymbol{\theta}_t$ the random variable $\boldsymbol{\theta}|\mathcal{D}_{t-1}$ and assume its prior distribution is $p(\boldsymbol{\theta}_t |\mathcal{D}_{t-1})$. Assuming a model $g(\boldsymbol{Y}_t|\boldsymbol{\theta}_t)$ for $\boldsymbol{Y}_t$, the marginal predictive distribution for $\boldsymbol{Y}_t$ has density: $p(\boldsymbol{Y}_t |\mathcal{D}_{t-1})=\int_{\Theta}g(\boldsymbol{Y}_t|\boldsymbol{\theta}_t )p(\boldsymbol{\theta}_t |\mathcal{D}_{t-1})  d\boldsymbol{\theta}_t,\quad \boldsymbol{Y}_t\in\mathcal{Y}$.
The distribution $p(\boldsymbol{Y}_t |\mathcal{D}_{t-1})$ naturally provides a measure of the predictive ability of the model $g(\boldsymbol{Y}_t|\boldsymbol{\theta}_t)$. In the following, we assume the predictive distribution belongs to the same distribution family as the sampling distribution, that is, $g(\boldsymbol{Y}_t|\boldsymbol{\theta}_t)=p(\boldsymbol{Y}_t|\boldsymbol{\theta}_t)$. In hypothesis testing, a model for the alternative hypothesis $\mathcal{H}_{1}$ is assumed, $p_A(\boldsymbol{Y}_t |\mathcal{D}_{t-1})$ and the prior predictive distribution $p_A(\boldsymbol{Y}_t |\mathcal{D}_{t-1})=\int_{\Theta}p(\boldsymbol{Y}_t |\boldsymbol{\theta}_t )p_A(\boldsymbol{\theta}_t |\mathcal{D}_{t-1})  d\boldsymbol{\theta}_t,\quad \boldsymbol{Y}_t\in\mathcal{Y}$ is obtained as the marginal distribution with respect to the prior distribution at time $t$ under the alternative hypothesis. The Bayes Factor is defined as the ratio of the two marginal distributions:
\begin{equation}\label{eq:bf_def}
H_t=\frac{p(\boldsymbol{Y}_t|\mathcal{D}_{t-1})}{p_A(\boldsymbol{Y}_t|\mathcal{D}_{t-1})}=\frac{\int_{\Theta}p(\boldsymbol{Y}_t|\boldsymbol{\theta}_t )p(\boldsymbol{\theta}_t |\mathcal{D}_{t-1})d\boldsymbol{\theta}_t}{\int_{\Theta}p(\boldsymbol{Y}_t|\boldsymbol{\theta}_t )p_A(\boldsymbol{\theta}_t |\mathcal{D}_{t-1})d\boldsymbol{\theta}_t}
\end{equation}
and yields an optimal decision in an inference problem under a 0-1 loss function. If $H_t>1$, the null hypothesis $\mathcal{H}_{0}$ is not rejected, then we conclude there is evidence from the observation $\boldsymbol{Y}_t$ in favor of the null hypothesis. 
This paper considers the BF as a testing procedure for model failure, which includes sensitivity to structural changes and outliers. To define the distribution under the alternative, we follow an approach based on a parametric perturbation $p_{A}(\boldsymbol{\theta}_t|\mathcal{D}_{t-1})= G(p(\boldsymbol{\theta}_t\vert \mathcal{D}_{t-1}),\alpha_t)$ of the posterior predictive distribution, where $\alpha_t$ is a possibly time-varying perturbation parameter and $G(\cdot,\cdot)$ is a perturbation function from $[0,1]\times[0,1]$ onto $[0,1]$. The main advantages of the perturbation approach are two. First, only the estimation of the model under the null is required, which substantially reduces the computational cost in large datasets and high-dimensional models. Second, the specification of the model under the alternative is not required, which usually demands rather involved models and a high computational cost. On the contrary, the perturbation approach preserves the tractability of the model under the null and provides an effective strategy for capturing deviations from the null hypothesis.

Examples of perturbation functions can be derived from the approaches based on mixtures of distributions, $p_{A}(\boldsymbol{\theta}_t\vert \mathcal{D}_{t-1})=\alpha_t p(\boldsymbol{\theta}_t)+(1-\alpha_t)p(\boldsymbol{\theta}_t\vert \mathcal{D}_{t-1})$ \citep{robert202250}, and on distortion of probability measures, $p_{A}(\boldsymbol{\theta}_t\vert \mathcal{D}_{t-1})= g(\int_{-\infty}^{\boldsymbol{\theta}_t}p(u\vert \mathcal{D}_{t-1})du,\alpha_t)p(\boldsymbol{\theta}_t\vert \mathcal{D}_{t-1})$ where $g$ is the partial derivative of $G$ with respect to its first argument.

In this paper, we assume the following perturbation function $G(p,\alpha_t)=p^{\alpha_t} C(\alpha_t)$ where $C(\alpha_t)=\left(\int_{\Theta}p(\boldsymbol{\theta}_t |\mathcal{D}_{t-1})^{\alpha_t}d\boldsymbol{\theta}_t\right)^{-1}$ is the inverse normalizing constant of the density under the alternative hypothesis. The constant satisfies $C(\alpha_t)\rightarrow 1$ as $\alpha_t\rightarrow 1^{-}$ and $C(\alpha_t)\rightarrow\lambda(\Theta)^{-1}$ as $\alpha_t\rightarrow 0^{+}$, where $\lambda(\Theta)<\infty$ denotes the Lebesgue measure of $\Theta$. If $\lambda(\Theta)$ is unbounded, then to prevent $C(\alpha_t)\rightarrow 0$ as $\alpha_t\rightarrow 0^{+}$, some restrictions can be introduced such as $\alpha_t\in(\underline{\alpha},\bar{\alpha})\subset(0,1)$, $\underline{\alpha}>0$. These limits are relevant in the analysis of the BF, denoted as $H_{t}(\alpha_t)$ in what follows. This choice of the calibration function returns the power discounting approach proposed in the seminal paper \cite{west1986bayesian} for eliciting the prior under the alternative hypothesis
\begin{equation}
p_A (\boldsymbol{\theta}_t |\mathcal{D}_{t-1})\propto(p(\boldsymbol{\theta}_t |\mathcal{D}_{t-1}))^{\alpha_t},\label{altdistr}
\end{equation}
where the discounting parameter $\alpha_t$ takes values in the interval $(0,1)$. An advantage of the power discounting approach is that it returns the popular variance contamination model for outlier detection \citep[e.g., see][]{Pettit1992-bc,weiss1997bayesian,Bayarri2003-tb,page2011bayesian,tomarchio2022mixtures} and does not require the estimation of the contamination parameter, which would be difficult, since very little information is available about the outlier generating process. See also \cite{van2021bayesian} for a comparison between Bayesian contamination procedures and alternative approaches.  

We assume a Gaussian likelihood and a conjugate prior distribution to preserve some analytical tractability and extend the univariate discounting approach to a matrix--variate setting.  We will prove that the BF is generally well-defined and bounded by a function of $\alpha_t$. However, the results of the hypothesis testing can be significantly influenced by the value of $\alpha_t$, and the BF can become unbounded in the limit for $\alpha_t\rightarrow 0^{+}$. Consequently, the discounting parameter should be chosen carefully. The new results for the matrix setting readily apply to the previous univariate and multivariate approaches.

\subsection{Robust Bayes Factors}
As one can expect, the outcome of a decision based on the BF may depend on the value of $\alpha_t$. See also the discussion in \cite{west1986bayesian}. We prove that for a given level $H^{\sbullet}$ of the threshold (which is assumed equal to one in the following), for some values of $\alpha_t$, the evidence is against the null hypothesis, and for some others, it is against the alternative. In the outlier detection setting, this means that a new observation $Y_t$ may be considered an outlier or not based on the value of $\alpha_t$. The following result illustrates the indeterminacy of the outcome of the BF procedure.
\begin{proposition}\label{th1}
Assume the likelihood satisfies
\begin{equation}
\int_{\Theta}p(\boldsymbol{Y}_t|\boldsymbol{\theta}_t)(\lambda(\Theta)p(\boldsymbol{\theta}_t|\mathcal{D}_{t-1})-1)\lambda(d\boldsymbol{\theta}_t)>0\label{condLambda1}
\end{equation}
with respect to the reference measure $\lambda$, and there exists a unique $\alpha_{t}^{\sbullet}$ such that $q_1(\alpha_{t})-q_2(\alpha_{t})<0$ for $\alpha_t<\alpha_{t}^{\sbullet}$ and $q_1(\alpha_{t})-q_2(\alpha_{t})>0$ for $\alpha_t>\alpha_{t}^{\sbullet}$ and $q_1(\alpha_{t}^{\sbullet})-q_2(\alpha_{t}^{\sbullet})=0$, where
\begin{eqnarray}
&&q_1(\alpha_t)=\int_{\Theta}p(\boldsymbol{Y}_t|\boldsymbol{\theta}_t)p(\boldsymbol{\theta}_t|\mathcal{D}_{t-1})^{\alpha_t}\lambda(d\boldsymbol{\theta}_t) \int_{\Theta}p(\boldsymbol{\theta}_t|\mathcal{D}_{t-1})^{\alpha_t}\log p(\boldsymbol{\theta}_t|\mathcal{D}_{t-1})\lambda(d\boldsymbol{\theta}_t)\\
&&q_2(\alpha_t)=\int_{\Theta}p(\boldsymbol{\theta}_t|\mathcal{D}_{t-1})^{\alpha_t}\lambda(d\boldsymbol{\theta}_t)\int_{\Theta}p(\boldsymbol{Y}_t|\boldsymbol{\theta}_t)p(\boldsymbol{\theta}_t|\mathcal{D}_{t-1})^{\alpha_t}\log p(\boldsymbol{\theta}_t|\mathcal{D}_{t-1})\lambda(d\boldsymbol{\theta}_t).
\end{eqnarray} Then  there exists a partition $(U_1,U_2)$ of the unit interval such that $H_t(\alpha_t)>1$ for $\alpha_{t}\in U_1$ and $H_t(\alpha_t)<1$ for $\alpha_{t}\in U_2$.
\end{proposition}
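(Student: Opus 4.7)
The plan is to view $H_t(\alpha_t)$ as a differentiable function on $(0,1)$, show that its derivative's sign matches the sign of $q_1(\alpha_t)-q_2(\alpha_t)$, and then combine the unique crossing assumption with boundary information at $\alpha_t\to 0^+$ and $\alpha_t\to 1^-$ to locate a single point at which $H_t$ crosses one.

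First I would substitute the normalizing constant $C(\alpha_t)$ explicitly into (\ref{eq:bf_def}) and write
\[
H_t(\alpha_t)=\frac{P\cdot M(\alpha_t)}{N(\alpha_t)},\qquad P=\int_\Theta p(\boldsymbol{Y}_t|\boldsymbol{\theta}_t)p(\boldsymbol{\theta}_t|\mathcal{D}_{t-1})\,\lambda(d\boldsymbol{\theta}_t),
\]
with $M(\alpha_t)=\int_\Theta p(\boldsymbol{\theta}_t|\mathcal{D}_{t-1})^{\alpha_t}\,\lambda(d\boldsymbol{\theta}_t)$ and $N(\alpha_t)=\int_\Theta p(\boldsymbol{Y}_t|\boldsymbol{\theta}_t)p(\boldsymbol{\theta}_t|\mathcal{D}_{t-1})^{\alpha_t}\,\lambda(d\boldsymbol{\theta}_t)$. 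Differentiating under the integral sign (justified by dominated convergence once one observes that $|p(\boldsymbol{\theta}_t|\mathcal{D}_{t-1})^{\alpha_t}\log p(\boldsymbol{\theta}_t|\mathcal{D}_{t-1})|$ is dominated on any compact subinterval $[\underline{\alpha},\bar{\alpha}]\subset(0,1)$), one obtains $M'(\alpha_t)=\int p(\boldsymbol{\theta}_t|\mathcal{D}_{t-1})^{\alpha_t}\log p(\boldsymbol{\theta}_t|\mathcal{D}_{t-1})\lambda(d\boldsymbol{\theta}_t)$ and an analogous expression for $N'(\alpha_t)$, so
\[
\frac{d}{d\alpha_t}\log H_t(\alpha_t)=\frac{M'(\alpha_t)}{M(\alpha_t)}-\frac{N'(\alpha_t)}{N(\alpha_t)}=\frac{q_1(\alpha_t)-q_2(\alpha_t)}{M(\alpha_t)N(\alpha_t)}.
\]
Since $M,N>0$, the sign of $H_t'$ is exactly that of $q_1-q_2$, so the hypothesis in the proposition implies that $H_t$ is strictly decreasing on $(0,\alpha_t^{\sbullet})$, strictly increasing on $(\alpha_t^{\sbullet},1)$, and attains a unique minimum at $\alpha_t^{\sbullet}$.

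Next I would compute the two boundary values. Sending $\alpha_t\to 1^-$ gives $M(\alpha_t)\to 1$ and $N(\alpha_t)\to P$, hence $H_t(\alpha_t)\to 1$. Sending $\alpha_t\to 0^+$ (using $\lambda(\Theta)<\infty$, which is implicit in (\ref{condLambda1})) gives $M(\alpha_t)\to\lambda(\Theta)$ and $N(\alpha_t)\to\int_\Theta p(\boldsymbol{Y}_t|\boldsymbol{\theta}_t)\,\lambda(d\boldsymbol{\theta}_t)$, so
\[
\lim_{\alpha_t\to 0^+}H_t(\alpha_t)=\frac{\lambda(\Theta)\,P}{\int_\Theta p(\boldsymbol{Y}_t|\boldsymbol{\theta}_t)\,\lambda(d\boldsymbol{\theta}_t)}.
\]
Rearranging condition (\ref{condLambda1}) as $\lambda(\Theta)P>\int p(\boldsymbol{Y}_t|\boldsymbol{\theta}_t)\lambda(d\boldsymbol{\theta}_t)$ shows this limit exceeds one.

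To conclude, I would combine monotonicity with the boundary behaviour. Since $H_t$ is strictly increasing on $(\alpha_t^{\sbullet},1)$ and tends to $1$ as $\alpha_t\to 1^-$, we have $H_t(\alpha_t)<1$ on the whole interval $(\alpha_t^{\sbullet},1)$, and in particular the minimum value $H_t(\alpha_t^{\sbullet})$ is strictly less than one. On $(0,\alpha_t^{\sbullet})$, $H_t$ is strictly decreasing from a limit greater than one down to $H_t(\alpha_t^{\sbullet})<1$, so by the intermediate value theorem there is a unique $\alpha_t^{\star}\in(0,\alpha_t^{\sbullet})$ with $H_t(\alpha_t^{\star})=1$. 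Setting $U_1=(0,\alpha_t^{\star})$ and $U_2=(\alpha_t^{\star},1)$ then yields the required partition.

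The main technical obstacle is not the monotonicity argument itself but the boundary limit at $\alpha_t\to 0^+$: the interchange of limit and integral requires either $\lambda(\Theta)<\infty$ together with a uniform domination of $p(\boldsymbol{Y}_t|\boldsymbol{\theta}_t)p(\boldsymbol{\theta}_t|\mathcal{D}_{t-1})^{\alpha_t}$, or else the restriction $\alpha_t\in(\underline{\alpha},\bar{\alpha})$ noted after (\ref{altdistr}). Handling this cleanly, and checking that the differentiation under the integral is valid at the endpoints, is where care is needed; everything else follows from elementary monotonicity and continuity.
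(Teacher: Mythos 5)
Your proposal is correct and follows essentially the same route as the paper's proof: both express $H_t(\alpha_t)$ via the normalizing constant, differentiate to show that the sign of $\partial_{\alpha_t}H_t$ equals the sign of $q_1(\alpha_t)-q_2(\alpha_t)$, and combine the unique sign change with the limits $H_t(\alpha_t)\to 1$ as $\alpha_t\to 1^-$ and $\lim_{\alpha_t\to 0^+}H_t(\alpha_t)=\lambda(\Theta)P/\int_{\Theta}p(\boldsymbol{Y}_t|\boldsymbol{\theta}_t)\lambda(d\boldsymbol{\theta}_t)>1$ guaranteed by condition~(\ref{condLambda1}). Your concluding step (strict monotonicity on the two subintervals plus the intermediate value theorem to locate the single crossing) is in fact spelled out more explicitly than in the paper, which leaves that part somewhat implicit.
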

Proposition \ref{th1} establishes the conditions such that different values of \( \alpha_t \) may cause the BF \( H_t(\alpha_t) \) to uniquely cross the threshold value \( H_t(\alpha^{\sbullet}_{t}) = 1 \) at \( \alpha_{t}^{\sbullet} \). It can be seen from the proof that for $\lambda(\Theta)$ unbounded, the evidence against the presence of an outlier is negative for $\alpha_t\rightarrow0^{+}$, consistently with the Jeffreys--Lindley Paradox \citep[e.g., see][ ch. 6]{Robert_2014, bernardo2009bayesian}. See also \cite{robert2009harold} and \cite{wag2023} for a review with a historical perspective.  In addition, we shall notice that, in the Bayesian current practice, the BF is usually interpreted following Jeffrey's scale of evidence against the null hypothesis \citep[e.g., see][]{kass1995bayes}, which suggests the evidence is negative for $1/H_t<1$, not worth more than a bare mention for $1<1/H_t<10^{1/2}$, substantial for $10^{1/2}<1/H_t<10$, strong for $10<1/H_t<10^{3/2}$, very strong for $10^{3/2}<1/H_t<10^2$ and decisive for $1/H_t>10^2$. The following example illustrates the result given in the previous proposition and shows that an observation can be regarded as an outlier depending on the value of $\alpha_t$. The illustration is general since a stepwise uniform prior is assumed, and the differentiability of the prior is not required to find the threshold value of $\alpha_t$. 

\begin{rem}[Univariate Gaussian model]\label{rem:exstepwise1}
Assume a Gaussian likelihood with location $\theta_t$ and variance $\sigma^2$, that is $Y_t  \sim\mathcal{N}(\theta_t,\sigma^2)$ independent for $t=1,\ldots,T$. Let us assume $\sigma^2$ is known and $\theta_t$ follows a prior distribution $p(\theta_t\vert \mathcal{D}_{t-1})$ which is absolutely continuous with respect to the Lebesgue measure $\lambda$, and has density function
\begin{equation}
p(\theta_t\vert \mathcal{D}_{t-1})=\sum_{j=1}^{n}g_{t,j}\mathbb{I}(\theta_t\in \Theta_j), \label{ex1:prior}
\end{equation}
where $\{\Theta_j,j=1,\ldots,n\}$ and $\Theta=\Theta_1\cup\ldots\cup\Theta_n$ with $n>1$ and such that $g_{t,1}\lambda(\Theta_1)+\ldots+g_{t,n}\lambda(\Theta_n)=1$. 
In Table \ref{tab:ex2}, we report three cases in which an observation $Y_t$ can be regarded as an outlier depending on the value of $\alpha_t$. The table reports together with  $Y_t$, the value $\alpha^{\sbullet}_t$ such that $H_t(\alpha^{\sbullet}_t) = 1$, the step-function prior specification with possibly disconnected support given by the union of $\Theta_j=(\underline{\theta}_{t,j},\overline{\theta}_{t,j})$, and density $g_{t,j}$ for $j \in 1, \ldots, 3$. The table reports also the likelihood values $p_{t,j}=p(Y_t|\theta_{t,j})$ with $\theta_{t,j} \in \Theta_j$.
\begin{table}[h!]
\renewcommand{\arraystretch}{1.2}
    \centering
    \small
    \setlength{\tabcolsep}{2pt}
    \begin{tabular}{ccc|cccccc|ccc|ccc}
    \hline
    $\alpha_{t}^{\sbullet}$& $Y_t$ &$\sigma$&$\underline{\theta}_{t,1}$ & $\overline{\theta}_{t,1}$ & $\underline{\theta}_{t,2}$ & $\overline{\theta}_{t,2}$ &$\underline{\theta}_{t,3}$ &$\overline{\theta}_{t,3}$&$g_{t,1}$&$g_{t,2}$&$g_{t,3}$&$p_{t,1}$&$p_{t,2}$&$p_{t,3}$\\
    \hline
0.054&-4.422&1.120&0.471&0.565&1.487&1.507&2.863&5.863&0.440&50.000&0.001&0.180 & 0.001&0.001\\
0.076&-5.690&1.054&-1.112&1.689&2.422&2.588&4.780&4.800&0.002&0.492&45.645&0.002 & 0.153&0.002\\
0.412&-8.723&1.939&0.996&1.409&2.628&4.209&5.498&5.549&0.934&0.023&11.510&0.146 & 0.019&0.041\\
\hline
    \end{tabular}
    \caption{Prior hyper-parameters $\underline{\theta}_{t,j}$, $\overline{\theta}_{t,j}$  and $g_{t,j}$ and the likelihood values $p_{t,j}=p(Y_t|\theta_{t,j})$, with $j=1,\ldots,3$.}
    \label{tab:ex2}
\end{table}
The marginal likelihood under the alternative hypothesis is derived as follows. By the mean-value theorem for integrals on a bounded domain, there exists $\theta_{t,j}\in\Theta_j$ such that
\begin{equation}  
\int_{\Theta}p(Y_t|\theta_t)p_A(\theta_t|\mathcal{D}_{t-1})d\theta_t=C(\alpha_t)\sum_{j=1}^{n}\int_{\Theta_j} p(Y_t\vert \theta_t)g_{t,j}^{\alpha_t}d\theta_t =C(\alpha_t)\sum_{j=1}^{n}g_{t,j}^{\alpha_t}p_{t,j}\lambda(\Theta_j),\label{eqex1}
\end{equation}
where $p_{t,j}=p(Y_t|\theta_{t,j})$ and $C(\alpha_t)=g_{t,1}^{\alpha_t}+g_{t,2}^{\alpha_t}+g_{t,3}^{\alpha_t}$ is the normalizing constant.  This expression can is used to show that the conditions in Prop. \ref{th1} are satisfied for the cases provided in Table  \ref{tab:ex2}. The values of $p_{t,j}$ show that the prior predictive assigns a non--negligible probability to the observations considered in the example. Figure \ref{ex2:fig} shows that the observation $Y_t$ is an outlier for values of $\alpha_t < \alpha_t^{\sbullet}$  ($H_t(\alpha_t)>1$), while $Y_t$ is considered an outlier for $\alpha_t>\alpha_t^{\sbullet}$ ($H_t(\alpha_t)<1$). Also, for some observations and settings, comparing the BF with the first threshold of Jeffrey's scale, $10^{1/2}$, returns substantial evidence against the null (first and second panel) depending on $\alpha_t$. The third example shows that the farther the observed value from $\Theta$, e.g. $Y_t=-8.723$, the larger  $\alpha_t$ needs to be in order to make the presence of an outlier a convincing hypothesis, i.e. $\alpha_{t}^{\sbullet}=0.412$.
\end{rem}
\begin{figure}[t]
    \centering
    \setlength{\tabcolsep}{5pt}
    \begin{tabular}{ccc}
    \includegraphics[scale=0.3]{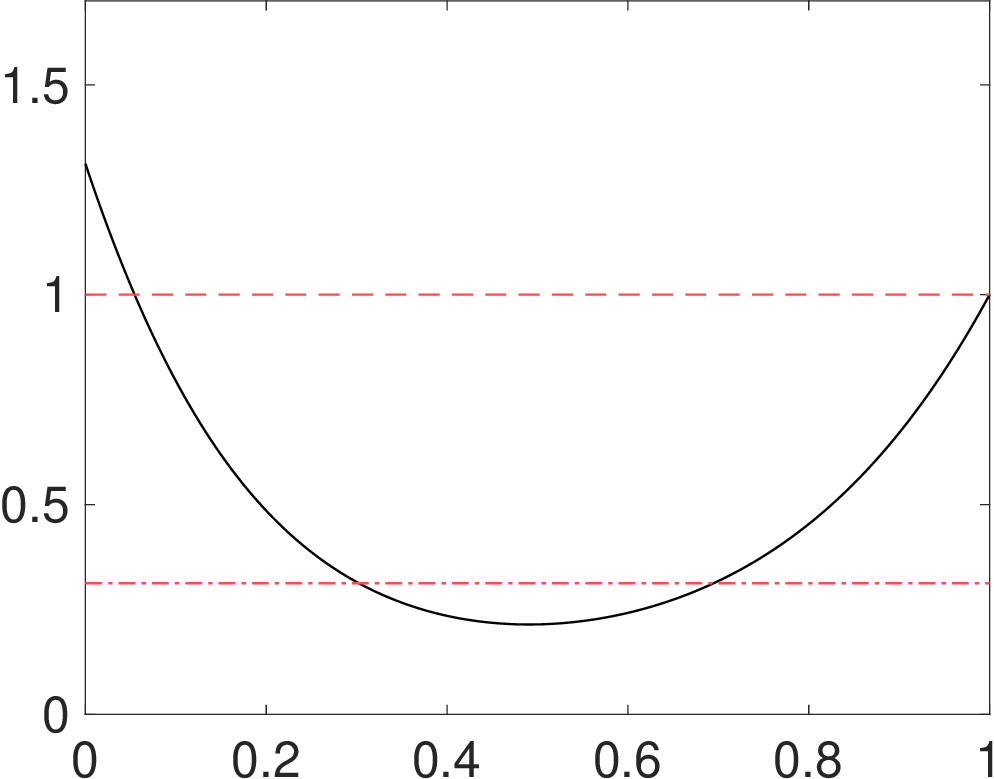}&
        \includegraphics[scale=0.3]{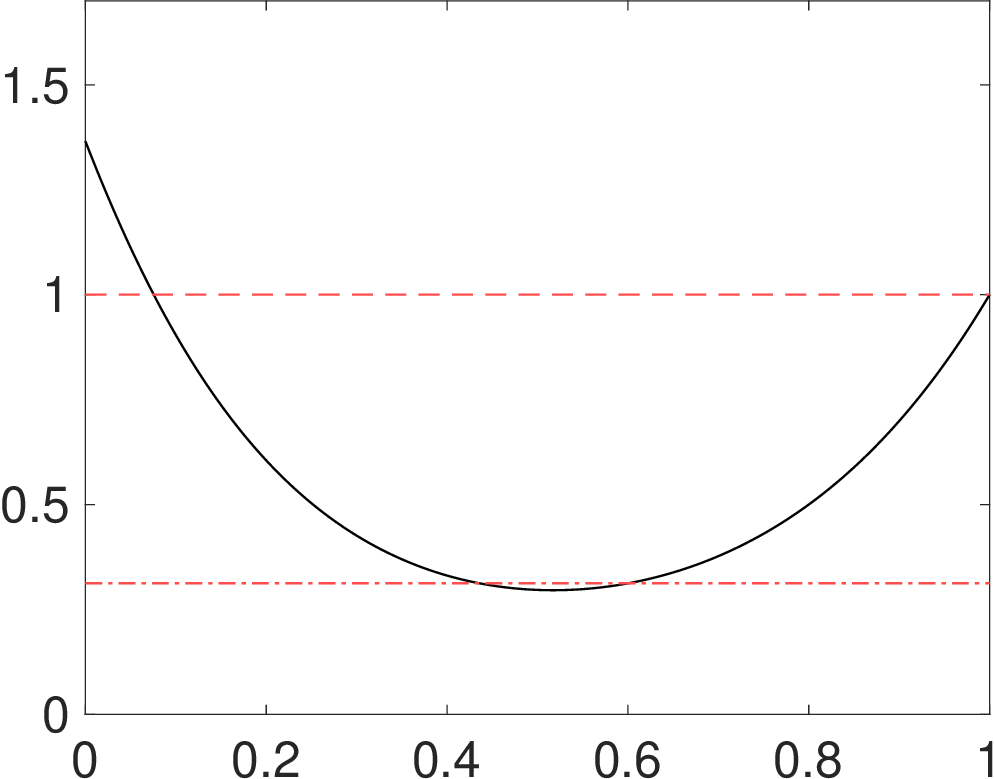}&
    \includegraphics[scale=0.3]{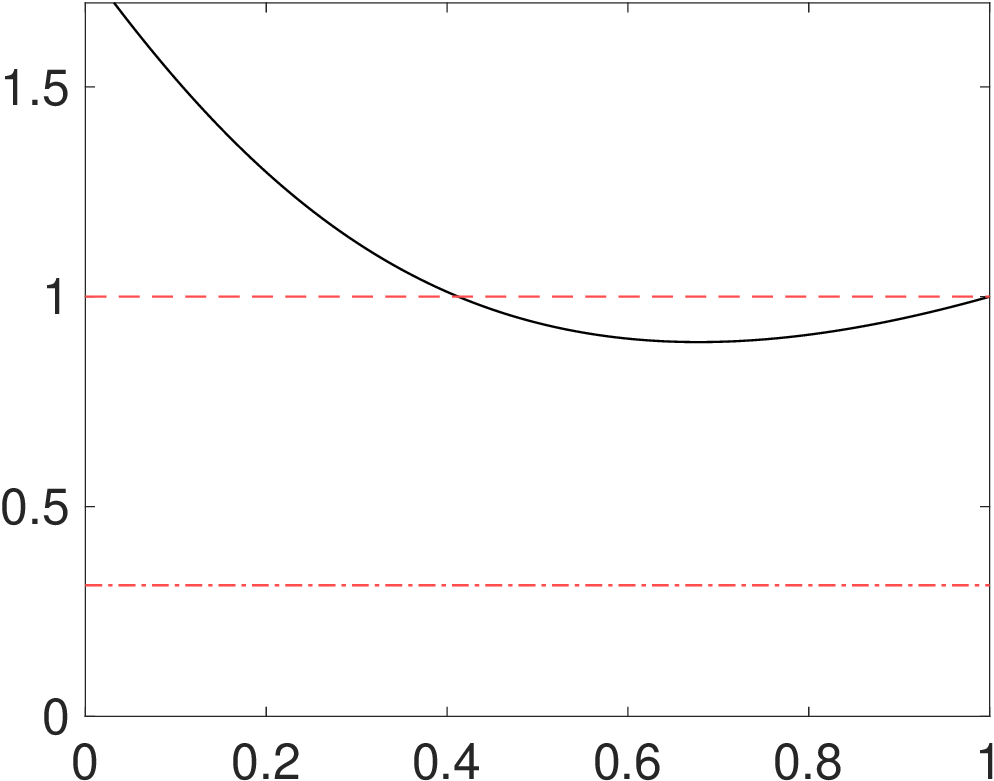}
    \end{tabular}
    \caption{Value of the Bayes Factor $H_t(\alpha_t)$ ({\color{black}\protect\tikz[baseline]{\protect\draw[line width=0.2mm] (0,.6ex)--++(0.5,0) ;}}) as a function of $\alpha_t$ (horizontal axis) for the parameter settings given in Tab. \ref{tab:ex2}. Each plot's thresholds are at 1 and $10^{-1/2}$ ({\color{red}\protect\tikz[baseline]{\protect\draw[line width=0.2mm, dashed] (0,.6ex)--++(0.5,0) ;}}).}
    \label{ex2:fig}
\end{figure}
Since the outcome of the testing procedure strongly depends on the choice of the discounting parameter $\alpha_t$ and since different thresholds for $H_t$ can be chosen, we propose three alternative Bayesian decision rules to conduct outlier detection. These rules are robust and exploit the variability of the BF due to the discounting parameter and the sampling distribution. 

A first decision rule is the Minimum BF (MBF), i.e. the smallest possible BF within the class of alternative distributions specified in Eq. \ref{altdistr}, that is $$H_t^{(MBF)}=\inf_{\alpha_t\in(\underline{\alpha},\overline{\alpha})\subset (0,1)} H_t(\alpha_t).$$
The MBF has been used in other testing problems such as unit root testing \citep{berger1994noninformative}, correlation testing \citep{Chen03062021} and reverse-Bayes procedures \citep{pawel2022sceptical}. See also \cite{HeldOtt2017} for a review of minimum BFs and a comparison with standard BF. The rationale behind this decision rule is that the evidence for the null is at least the MBF. In Remark \ref{rem:exstepwise1}, the three settings return an MBF below one. However, the MBF is not too far below the threshold in the first setting compared to the other two scenarios. These considerations motivate the need for alternative decision rules. 

We consider the Integrated BF (IBF) as a second decision rule. With this rule, one assumes a prior distribution for the discounting parameter $\alpha_t$ and averages the BF over all possible discounting values, that is
$$H_t^{(IBF)}=\int_{\underline{\alpha}}^{\bar{\alpha}} H_t(\alpha_t)\pi_t(\alpha_t)d\alpha_t-1,$$ where $\pi_t(\alpha)$ is a suitable probability density function for $\alpha_t$ with support $(\underline{\alpha},\bar{\alpha})\subset(0,1)$. The choice of $\pi_t$ is crucial to achieve a well-defined IBF in the matrix variate case. 

Since the BF can be bounded from above under mild regularity conditions, the IBF can be modified to account for the relative magnitude of the evidence in favor of the null. We thus introduce the Normalized IBF (NIBF) defined as
$$H_t^{(NIBF)}=\left(\int_{\underline{\alpha}}^{\bar{\alpha}} H_t(\alpha_t)\pi_t(\alpha_t)d\alpha_t-1\right)\left(\int_{\underline{\alpha}}^{\bar{\alpha}} \kappa_t(\alpha_t)\pi_t(\alpha_t)d\alpha_t-1\right)^{-1},$$
where $\kappa_t(\alpha_t) < \infty$ is an upper bound for $H_t(\alpha_t)$. The upper bound will be used later in this paper to show the BF integrability with respect to the discounting parameter.

IBF and NIBF do not account for the sampling variability in hypothesis testing. For this reason, we propose a predictive BF approach to incorporate such variability. As  $Y_t$ is not observed at time $t-1$, the predictive BF is a random variable whose marginal distribution can be used to derive a calibrated value for the discounting parameter $\alpha_t$. With our approach, we account for the sampling variability of the predictive BF and find the analytical distributions $F_{j,t}$ of the random BF, $H_{t}(\alpha_t)$, under the null hypothesis $\mathcal{H}_0$ of the absence of outliers ($j=0$) and the alternative hypothesis $\mathcal{H}_1$ of the presence of outliers ($j=1$). The calibrated predictive BF is derived through the following steps.
    \begin{enumerate}[1.]
        \item Derive the threshold function $\underline{h}(\alpha_t)$, $\alpha_{t}\in(\underline{\alpha},\bar{\alpha})$ for rejecting the null $\mathcal{H}_0$, such that $\mathbb{P}(\{H_t<\underline{h}(\alpha_t)\})=\tau$ under the null for a given test size $\tau$. The probability is evaluated using the $F_{0,t}$ distribution.
        \item Determine the discounting parameter  $\alpha_t^*\in(\underline{\alpha},\bar{\alpha})$ such that for a given power level $\beta$ $\mathbb{P}(\{H_t>\bar{h}(\alpha_t^*)\})=1-\beta$ under the alternative, that is:
        $\beta = 1 - F_{1,t}(\bar{h}(\alpha_t^*))$, $\bar{h}(\alpha_t^*) = 2 - \underline{h}(\alpha_t^*)$,
        where the constraint comes from the threshold symmetric assumption.
        \item Define the inconclusive interval $C_H = (\underline{h}(\alpha_t^*), \bar{h}(\alpha_t^*))$ where the BF is statistically equal to one and define the calibrated BF as $H_t^{*}=H_t(\alpha_t^{*})$.
\end{enumerate}     
The conclusion of the test procedure is to reject the null hypothesis $\mathcal{H}_0$ when $H_t<\underline{h}$, accepting it when $H_t>\bar{h}$ and randomizing when $H_t\in C_{H}$. As randomization is usually not appealing in some applications, an alternative procedure can be used where only one threshold $\bar{h}(\alpha_t^*)=\underline{h}(\alpha_t^*)<1$ is considered to define a critical region $(0,\underline{h}(\alpha_t^*))$. Our procedure for the calibrated value of $\alpha_t$ is similar in spirit to the ones proposed in \cite{weiss1997bayesian} and \cite{pawel2025closed} for determining the optimal sample size. In addition, another procedure, which accounts for the distribution of $\alpha_t$, can be defined by minimizing jointly the first and second type error probabilities.

\section{Matrix Normal Outlier Detection}\label{sec:illus}
\subsection{A Bayesian Matrix Normal Model}
The BF and its properties are derived under the following assumptions. See the Appendix \ref{sec:preliminary} for some background on the matrix distributions and proofs. The normal assumption is standard in outlier detection as it guarantees some finite-sample analytical results.
\begin{assumption}{}\label{ass1}
$\mathbf{Y}=\{\boldsymbol{Y}_{j}\left(p\times n\right),\ j = 1,\ldots,t - 1\}$, are i.i.d. matrix normal variables with distribution $\mathcal{MN}_{p,n}(\boldsymbol{B},\boldsymbol{\Sigma}_L,\boldsymbol{V})$ with $\boldsymbol{B}\in\mathbb{R}^{p\times n}$ $\boldsymbol{\Sigma}_{L}\in\mathbb{R}^{p\times p}_{+}
$ and $\boldsymbol{V}\in\mathbb{R}^{n\times n}_{+}$. 
\end{assumption}
The information set at time $T=t-1$ is given by the sigma-algebra $\mathcal{D}_{t-1}$ generated by the elements of $\mathbf{Y}$. To achieve analytical tractability, and similarly to the univariate setting of \cite{west1986bayesian}, we assume that $\boldsymbol{\Sigma}_{L}$ is $\mathcal{D}_{t-1}$-measurable and $\boldsymbol{B}$ has a conjugate prior. We shall emphasize that the Gaussian assumption is not restrictive, as within a Bayesian framework, the scale parameters can be integrated out of the likelihood, resulting in a Student-t distribution that can account for heavy tails.
\begin{assumption}{}\label{ass2}
$\boldsymbol{B}$ has a conjugate matrix normal prior, i.e. $\boldsymbol{B}\sim \mathcal{N}_{p,n}\left( \boldsymbol{M},\boldsymbol{\Sigma}_{P},\boldsymbol{V} \right)$ and $\boldsymbol{\Sigma}_{L}\otimes \boldsymbol{V}$ is given.
\end{assumption}
The assumption $\boldsymbol{\Sigma}_{P} = \boldsymbol{\Sigma}_{L}/\varphi$, with $\varphi > 0$ is common in Gaussian models \citep[e.g., see][]{zellner1986assessing}. Some analytical results can also be obtained when the covariance $\boldsymbol{V}$ is estimated following a Bayesian procedure. When $\boldsymbol{V}$ is unknown, a conjugate Normal-Inverse Wishart prior for $\boldsymbol{B}$ and $\boldsymbol{V}$ is assumed.

\begin{assumption}{}\label{ass3}
$\boldsymbol{B}$ and $\boldsymbol{\Sigma}_L$ have a conjugate inverse Wishart matrix normal prior, i.e.  $\boldsymbol{B}|\boldsymbol{V}\sim \mathcal{N}_{p,n}\left( \boldsymbol{M},\boldsymbol{\Sigma}_{L}/\varphi,\boldsymbol{V}/\rho \right)$ and $\boldsymbol{V}\sim\mathcal{I}\mathcal{W}_n(\boldsymbol{\Psi},m)$ and $\boldsymbol{\Sigma}_{L}$ is given. 
\end{assumption}

Following the notation in the previous section,  $\boldsymbol{\theta}_{t}$ corresponds to $\boldsymbol{B}_t=\boldsymbol{B}|\mathcal{D}_{t-1}$ and $(\boldsymbol{B}_t,\boldsymbol{V}_t)=(\boldsymbol{B},\boldsymbol{V})|\mathcal{D}_{t-1}$ for the $\boldsymbol{V}$ known and $\boldsymbol{V}$ unknown cases, respectively.

Proposition \ref{prop:posteriorBF} provides the analytical expression of the Bayes Factor derived under the three assumptions outlined above.

\begin{proposition}[BF]\label{prop:posteriorBF}
\begin{enumerate}[i)]
    \item Under the Assumptions \ref{ass1} and \ref{ass2}, $\boldsymbol{Y}_t|\mathcal{D}_{t - 1}\sim \mathcal{N}_{p,n}\left(\boldsymbol{M}_{*},\boldsymbol{\Sigma}_{d}, \boldsymbol{V} \right)$ under the null hypothesis, and $\boldsymbol{Y}_t|\mathcal{D}_{t - 1}\sim \mathcal{N}_{p,n}\left(\boldsymbol{M}_{*},\boldsymbol{\Sigma}_{A,d}, \boldsymbol{V} \right)$ under the alternative and the Bayes Factor $H_t$ is given by   
\begin{equation}
H_{t}(\alpha_{t}) = \frac{\left| \boldsymbol{\Sigma}_{A,d} \right|^{\frac{n}{2}}}{\left| \boldsymbol{\Sigma}_{d} \right|^{\frac{n}{2}}}\exp\left\{ - \frac{1}{2}\text{tr}\left\lbrack \left( \boldsymbol{\Sigma}_{d}^{- 1} - \boldsymbol{\Sigma}_{A,d}^{- 1} \right)\left( \boldsymbol{Y}_t - \boldsymbol{M}_{*} \right)\boldsymbol{V}^{- 1}\left( \boldsymbol{Y}_t - \boldsymbol{M}_{*} \right)^{'} \right\rbrack \right\},\label{BF_matrix}
\end{equation}
where ${\boldsymbol{\Sigma}_{d} = \boldsymbol{\Sigma}}_{L} + \boldsymbol{\Sigma}_{*}$,
$\boldsymbol{\Sigma}_{A,d} = \boldsymbol{\Sigma}_{L} + \boldsymbol{\Sigma}_{A,*}$, $\boldsymbol{\Sigma}_{A,*} = \boldsymbol{\Sigma}_{*}/\alpha_{t}$ and the posterior parameters $\boldsymbol{M}_{\ast}$ and $\boldsymbol{\Sigma}_{*}$ are given in Eq. \ref{eq:Mstar} and \ref{eq:Sigmastar}.
\item Under the the Assumptions \ref{ass1} and \ref{ass3}, $\boldsymbol{Y}_t|\mathcal{D}_{t - 1}\sim\mathcal{T}_{p,n}\left(m_{*}-2n, \boldsymbol{M}_{*}, \boldsymbol{\Sigma}_L , \boldsymbol{L}_{*} \right)$ under the null and $\boldsymbol{Y}_t|\mathcal{D}_{t - 1} \sim \mathcal{T}_{p,n}\left(m_{A, *}-2n, \boldsymbol{M}_{*}, \boldsymbol{\Sigma}_L , \boldsymbol{L}_{A, *}\right)$ under the alternative  where $\mathcal{T}_{p,n}( \nu, \boldsymbol{M},\boldsymbol{\Sigma}, \boldsymbol{\Omega})$ denotes the matrix Student-t distribution, $\boldsymbol{L}_{*} = \boldsymbol{\Psi}_{*}k_d/k_{*}$, and  $\boldsymbol{L}_{A,*} = \boldsymbol{\Psi}_{A,*}k_d/k_{A,*}$. The Bayes Factor $H_t$ is given by \begin{equation}  H_t(\alpha_t)=G\frac{\left|\boldsymbol{\Psi}_{A,*}+\frac{\alpha_t k_{*}}{\left(\alpha_t k_{*}+1\right)}\left(\boldsymbol{M}_{*}-\boldsymbol{Y}_t\right)^{'}\boldsymbol{\Sigma}_L^{-1}\left(\boldsymbol{M}_{*}-\boldsymbol{Y}_t\right)\right|^{\frac{m_{A,d}-n-1}{2}}}{\left|\boldsymbol{\Psi}_{*}+k_{*}\frac{\left(\boldsymbol{M}_{*}-\boldsymbol{Y}_t\right)^{'}\boldsymbol{\Sigma}^{-1}\left(\boldsymbol{M}_{*}-\boldsymbol{Y}_t\right)}{\left(k_{*}+1\right)}\right|^{\frac{m_{d}-n-1}{2}}},
\end{equation}
where
\begin{equation}
G=\frac{\left|\boldsymbol{\Psi}_{*}\right|^{\frac{m_{*}-n-1}{2}}k_{*}^\frac{np}{2}\Gamma_{n}\left(\frac{m_{d}-n-1}{2}\right)k_{A,d}^\frac{np}{2}\Gamma_{n}\left(\frac{m_{A,*}-n-1}{2}\right)}{k_{d}^\frac{np}{2}\Gamma_{n}\left(\frac{m_{*}-n-1}{2}\right)\left|\boldsymbol{\Psi}_{A,*}\right|^{\frac{m_{A,*}-n-1}{2}}k_{A,*}^\frac{np}{2}\Gamma_{n}\left(\frac{m_{A,d}-n-1}{2}\right)}\nonumber\\
\end{equation}
and $k_{d}=k_{*}+1$, $k_{A,*}=\alpha_t k_{*}$, $ k_{A,d}=\alpha_t k_{*}+1$, $m_{d}=m_{*}+p$, $m_{A,*}=\alpha_t( m_{*}+p)-p$, $m_{A,d}=m_{A,*}+p$, 
and $m_{*}, \boldsymbol{M}_{*},\boldsymbol{\Psi}_{*}$ as in Eq. \ref{eq:PsistarV}.
\end{enumerate}
\end{proposition}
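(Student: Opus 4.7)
The plan is to handle the two parts separately, with part (i) serving as a warm-up for the bookkeeping required in part (ii). In both cases the logic is identical: derive the posterior by conjugacy, marginalize to obtain the predictive under the null, apply the power-discount transformation to the posterior, repeat the marginalization to obtain the alternative predictive, and divide the two predictive densities evaluated at $\boldsymbol{Y}_t$.

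For part (i), I would first invoke standard matrix-normal conjugacy under Assumptions \ref{ass1}--\ref{ass2}, which yields $\boldsymbol{B}_t\mid\mathcal{D}_{t-1}\sim\mathcal{N}_{p,n}(\boldsymbol{M}_*,\boldsymbol{\Sigma}_*,\boldsymbol{V})$ with posterior hyperparameters as in Eqs.~\ref{eq:Mstar}--\ref{eq:Sigmastar}. Integrating $\boldsymbol{B}_t$ out of $p(\boldsymbol{Y}_t\mid\boldsymbol{B}_t)p(\boldsymbol{B}_t\mid\mathcal{D}_{t-1})$ via the matrix-normal convolution identity produces the null predictive $\mathcal{N}_{p,n}(\boldsymbol{M}_*,\boldsymbol{\Sigma}_d,\boldsymbol{V})$ with $\boldsymbol{\Sigma}_d=\boldsymbol{\Sigma}_L+\boldsymbol{\Sigma}_*$. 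Under the alternative, raising the posterior density to the power $\alpha_t$ multiplies the quadratic form in its matrix-normal kernel by $\alpha_t$, which upon renormalization is exactly $\mathcal{N}_{p,n}(\boldsymbol{M}_*,\boldsymbol{\Sigma}_*/\alpha_t,\boldsymbol{V})$. The same convolution then yields the alternative predictive with row covariance $\boldsymbol{\Sigma}_{A,d}=\boldsymbol{\Sigma}_L+\boldsymbol{\Sigma}_*/\alpha_t$, and the ratio of the two matrix-normal densities at $\boldsymbol{Y}_t$ — in which the $\boldsymbol{V}$-related factors cancel — reduces to the determinant and trace expression in Eq.~\ref{BF_matrix}.

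For part (ii), I would work with the joint Normal-Inverse-Wishart posterior $p(\boldsymbol{B}_t,\boldsymbol{V}_t\mid\mathcal{D}_{t-1})$ obtained from Assumptions~\ref{ass1} and \ref{ass3}, with hyperparameters $(\boldsymbol{M}_*,k_*,\boldsymbol{\Psi}_*,m_*)$ as in Eq.~\ref{eq:PsistarV}. The null predictive is derived by two-step marginalization: first integrating $\boldsymbol{B}_t\mid\boldsymbol{V}_t$ using matrix-normal convolution, then integrating the resulting inverse-Wishart factor in $\boldsymbol{V}_t$, producing the matrix Student-t $\mathcal{T}_{p,n}(m_*-2n,\boldsymbol{M}_*,\boldsymbol{\Sigma}_L,\boldsymbol{L}_*)$. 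For the alternative, raising the joint NIW density to the power $\alpha_t$ and renormalizing preserves the NIW form, because the kernel is log-linear in the sufficient statistics: the conditional matrix-normal part rescales its precision constant from $k_*$ to $k_{A,*}=\alpha_t k_*$, while the inverse-Wishart marginal acquires scale $\boldsymbol{\Psi}_{A,*}$ and degrees of freedom $m_{A,*}=\alpha_t(m_*+p)-p$. Repeating the two-step marginalization with these transformed hyperparameters gives the alternative matrix-t predictive, and dividing the two matrix-t densities at $\boldsymbol{Y}_t$ — collecting the $\Gamma_n$ factors, the $|\boldsymbol{\Psi}|$-determinants and the powers of the $k$-constants into the prefactor $G$ — yields the expression stated.

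The principal obstacle I anticipate is the bookkeeping in part (ii), specifically verifying how the power $\alpha_t$ distributes between the matrix-normal conditional and the inverse-Wishart marginal of the NIW posterior. The shift $m_{A,*}=\alpha_t(m_*+p)-p$ does not arise simply from multiplying the degrees-of-freedom parameter by $\alpha_t$: it reflects the fact that the full exponent of $|\boldsymbol{V}_t|$ in the joint density — including the $|\boldsymbol{V}_t|^{-p/2}$ prefactor from the conditional matrix-normal — gets scaled by $\alpha_t$, and must then be repartitioned so that the $-p/2$ contribution belonging to the transformed conditional matrix-normal is stripped back out before reading off the inverse-Wishart degrees of freedom. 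I would verify this identification by matching both the exponent of $|\boldsymbol{V}_t|$ and the coefficient of $\mathrm{tr}(\boldsymbol{V}_t^{-1}\boldsymbol{\Psi}_*)$ in the joint kernel before and after the power transformation. Once this is in place, the remaining convolution and marginalization steps are routine applications of the matrix-normal and matrix-t integral identities collected in Appendix~\ref{sec:preliminary}.
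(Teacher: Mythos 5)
Your proposal is correct and follows essentially the same route as the paper: conjugate posterior, marginalization to the null predictive, power-discounting of the posterior (which stays in the matrix-normal, respectively Normal-Inverse-Wishart, family), a second marginalization for the alternative predictive, and the ratio of the two densities at $\boldsymbol{Y}_t$. Your handling of the degrees-of-freedom repartition in part (ii) — that the whole $|\boldsymbol{V}_t|^{-(m_*+p)/2}$ exponent is scaled by $\alpha_t$ and the $-p/2$ of the transformed conditional normal must then be stripped back out, giving $m_{A,*}=\alpha_t(m_*+p)-p$ — is exactly the right bookkeeping and matches the quantity actually used in the paper's final expressions.
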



\begin{rem}[Interpretability]\label{interpretabiliy}
In the case $\boldsymbol{V}$ is known, it can be easily shown that the discounting approach provides a tractable prior distribution under the alternative. The approach is equivalent to assuming, under the alternative hypothesis, the following hierarchical prior distribution $\boldsymbol{B}|\tilde{\boldsymbol{M}},\mathbf{Y}\sim\mathcal{N}_{p,n}(\tilde{\boldsymbol{M}},\boldsymbol{\Sigma}_{*},\boldsymbol{V})$ with $\tilde{\boldsymbol{M}}|\mathbf{Y}\sim\mathcal{N}_{p,n}(\boldsymbol{M}_{*},\boldsymbol{\Sigma}_{*}(1/\alpha_t-1),\boldsymbol{V})$, which is well-defined for $\alpha_t \in (0,1)$. This can be interpreted as a random perturbation of the posterior distribution, which inflates the variance of the location parameter distribution.
\end{rem}

The normalizing constant of the alternative distribution and its properties can be derived from Prop. \ref{prop:posteriorBF} and are given in the following.
\begin{corollary}\label{coroll}
\begin{enumerate}
    \item Under Assumptions \ref{ass1} and \ref{ass2}, the inverse normalizing constant is
    \begin{equation}
C\left( \alpha_{t} \right)={\alpha_{t}^{\frac{np}{2}}(2\pi)}^{\frac{(\alpha_{t}-1)np}{2}}\left| \boldsymbol{\Sigma}_{*} \right|^{\frac{(\alpha_{t}-1)n}{2}}|\boldsymbol{V}|^{\frac{(\alpha_{t}-1)p}{2}}
\end{equation}
and satisfies $C(\alpha_t)\rightarrow 0$ as $\alpha_{t} \rightarrow 0^{+}$ and $C(\alpha_t)\rightarrow 1$ as $\alpha_t\rightarrow 1^{-}$.
    \item Under Assumptions \ref{ass1} and \ref{ass3}, the inverse normalizing constant is
    \begin{equation}
C(\alpha_t) = \frac{{\left|\boldsymbol{\Psi}_{A,*}\right|^{\frac{m_{A,*}-n-1}{2}}}(2\pi)^{\frac{\alpha_{t}np}{2}}|\boldsymbol{\Sigma}_{*}|^{\frac{\alpha_{t}n}{2}}2^{\frac{\alpha_{t}(m_{*}-n-1)n}{2}}\Gamma_{n}\left(\frac{m_{*}-n-1}{2}\right)^{\alpha_{t}}}{(2\pi)^{\frac{np}{2}}|\boldsymbol{\Sigma}_{A,*}|^{\frac{n}{2}}2^{\frac{(m_{A,*}-n-1)n}{2}}\Gamma_{n}\left(\frac{m_{A,*}-n-1}{2}\right)\left|\boldsymbol{\Psi}_{*}\right|^\frac{\alpha_t (m_{*}-n-1)}{2}}
\end{equation}
\end{enumerate}
and satisfies $C(\alpha_t)\rightarrow 0$ as $\alpha\rightarrow \underline{\alpha_t}^{+}$ and $C(\alpha_t)\rightarrow 1$ as $\alpha_t\rightarrow 1^{-}$, where $\underline{\alpha}=(p+2n)/m_d$.
\end{corollary}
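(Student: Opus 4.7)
The plan is to compute the inverse normalizing constant $C(\alpha_t)^{-1}=\int p(\boldsymbol{\theta}_t\vert\mathcal{D}_{t-1})^{\alpha_t}\,d\boldsymbol{\theta}_t$ directly in each case by recognizing the $\alpha_t$-th power of the posterior density as an unnormalized density of the same family with modified parameters. The explicit formulas then pop out from the standard normalizing constants, and the limits are read off by inspection.

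For part (i), under Assumptions \ref{ass1}--\ref{ass2} the posterior of $\boldsymbol{B}$ is $\mathcal{N}_{p,n}(\boldsymbol{M}_{*},\boldsymbol{\Sigma}_{*},\boldsymbol{V})$. Writing the density in the form $(2\pi)^{-np/2}|\boldsymbol{\Sigma}_{*}|^{-n/2}|\boldsymbol{V}|^{-p/2}\exp\{-\tfrac{1}{2}Q(\boldsymbol{B})\}$ with $Q(\boldsymbol{B})=\mathrm{tr}[\boldsymbol{V}^{-1}(\boldsymbol{B}-\boldsymbol{M}_{*})^{'}\boldsymbol{\Sigma}_{*}^{-1}(\boldsymbol{B}-\boldsymbol{M}_{*})]$, raising to the power $\alpha_t$ preserves the quadratic form up to the scalar $\alpha_t$, so the result is an unnormalized $\mathcal{N}_{p,n}(\boldsymbol{M}_{*},\boldsymbol{\Sigma}_{*}/\alpha_t,\boldsymbol{V})$. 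Integrating against this recognized density contributes the matrix-normal constant $(2\pi)^{np/2}|\boldsymbol{\Sigma}_{*}/\alpha_t|^{n/2}|\boldsymbol{V}|^{p/2}$. Collecting all powers of $(2\pi)$, $|\boldsymbol{\Sigma}_{*}|$, $|\boldsymbol{V}|$ and $\alpha_t$ produces the stated formula. The limit $\alpha_t\to 1^{-}$ is immediate since every $(1-\alpha_t)$-power collapses to $1$ and $\alpha_t^{np/2}\to 1$; the limit $\alpha_t\to 0^{+}$ is driven by the factor $\alpha_t^{np/2}\to 0$.

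For part (ii), under Assumptions \ref{ass1} and \ref{ass3} the joint posterior factors as $\mathcal{N}_{p,n}(\boldsymbol{B}\vert\boldsymbol{M}_{*},\boldsymbol{\Sigma}_{L},\boldsymbol{V}/k_{*})\cdot\mathcal{IW}_{n}(\boldsymbol{V}\vert\boldsymbol{\Psi}_{*},m_{*})$. Raising to $\alpha_t$ and first integrating out $\boldsymbol{B}$ exactly as in part (i) rescales the $\boldsymbol{V}$-covariance from $\boldsymbol{V}/k_{*}$ to $\boldsymbol{V}/(\alpha_t k_{*})=\boldsymbol{V}/k_{A,*}$. The remaining inverse-Wishart factor, raised to $\alpha_t$, becomes $|\boldsymbol{V}|^{-\alpha_t(m_{*}+n+1)/2}\exp\{-\tfrac{1}{2}\mathrm{tr}(\alpha_t\boldsymbol{\Psi}_{*}\boldsymbol{V}^{-1})\}$, which is an unnormalized $\mathcal{IW}_{n}(\alpha_t\boldsymbol{\Psi}_{*},m_{A,*})$ once the shape $m_{A,*}=\alpha_t(m_{*}+p)-p$ is identified; combined with the $|\boldsymbol{V}|^{-\alpha_t p/2}|\boldsymbol{V}|^{p/2}$ residue from the $\boldsymbol{B}$-integration, the exponent of $|\boldsymbol{V}|$ aligns with the IW shape $m_{A,*}$. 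Integrating $\boldsymbol{V}$ demands $m_{A,*}>2n$, equivalently $\alpha_t>(p+2n)/m_{d}=\underline{\alpha}$, and produces the multivariate gamma $\Gamma_{n}((m_{A,*}-n-1)/2)$ together with $|\alpha_t\boldsymbol{\Psi}_{*}|^{-(m_{A,*}-n-1)/2}$; identifying $\boldsymbol{\Psi}_{A,*}=\alpha_t\boldsymbol{\Psi}_{*}$ and gathering factors yields the stated expression. At $\alpha_t\to 1^{-}$ all modified parameters revert ($m_{A,*}\to m_{*}$, $\boldsymbol{\Psi}_{A,*}\to\boldsymbol{\Psi}_{*}$, $k_{A,*}\to k_{*}$) so $C(\alpha_t)\to 1$; at $\alpha_t\to\underline{\alpha}^{+}$ the last one-dimensional gamma factor in $\Gamma_{n}((m_{A,*}-n-1)/2)=\pi^{n(n-1)/4}\prod_{i=1}^{n}\Gamma((m_{A,*}-2n+1-i+1)/2)$ approaches the pole at $0$ and diverges, forcing $C(\alpha_t)\to 0$.

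The main obstacle is the bookkeeping in part (ii): matching the enlarged parameter family $(k_{*},k_{A,*},m_{*},m_{A,*},\boldsymbol{\Psi}_{*},\boldsymbol{\Psi}_{A,*})$ produced by raising a Normal--Inverse-Wishart joint density to a power, and pinning down the integrability threshold on the inverse-Wishart shape, which is precisely what determines $\underline{\alpha}=(p+2n)/m_{d}$. Every other step is a standard identification of Gaussian or inverse-Wishart normalizing constants.
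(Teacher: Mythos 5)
Your proposal is correct and follows essentially the same route as the paper: both derive $C(\alpha_t)$ by recognizing $p(\boldsymbol{\theta}_t\vert\mathcal{D}_{t-1})^{\alpha_t}$ as an unnormalized member of the same conjugate family with rescaled parameters (covariance $\boldsymbol{\Sigma}_*/\alpha_t$ in the Gaussian case; scale $\alpha_t\boldsymbol{\Psi}_*$ and shape $m_{A,*}=\alpha_t m_d-p$ in the Normal--Inverse-Wishart case), read off the ratio of normalizing constants, and obtain $\underline{\alpha}=(p+2n)/m_d$ from positivity of the smallest argument of $\Gamma_n((m_{A,*}-n-1)/2)$, whose pole drives $C(\alpha_t)\to 0$. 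Only a cosmetic slip: the arguments of the one-dimensional gamma factors should read $\Gamma((m_{A,*}-n-i)/2)$, $i=1,\ldots,n$, but your conclusion that the binding factor is $\Gamma((m_{A,*}-2n)/2)$ is exactly the paper's condition.
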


\begin{rem}[Univariate Gaussian model]\label{ex3}
Let $Y_{j}\sim\mathcal{N}(\theta,\sigma^2)$ i.i.d. for $j=1,\ldots,t-1$ and assume a conjugate prior for $\theta$, that is $\theta\sim\mathcal{N}(m,\sigma^2/\varphi)$ with $\varphi>0$. It is easy to show that the predictive under the alternative is given by $p_{A}(Y_{t}|\mathcal{D}_{t-1})=(2\pi\sigma^2_{\ast}/\alpha_t)^{1/2}\exp\{-(\theta-m_{\ast})^2\alpha_t/2\sigma^{2}_{\ast}\}$ with $m_{\ast}=\varphi m/(\varphi+t-1)+(t-1)/(\varphi+t-1)\bar{Y}$ with $\bar{Y} = (t-1)^{-1}\sum_{j = 1}^{t-1}Y_j$, $\sigma^2_{*}=\sigma^2/(\varphi+t-1)$ and the normalizing constant is
$C(\alpha_t)=(2\pi\sigma^2_{*})^{(\alpha_t-1)/2}\alpha_t^{1/2}$. It follows that the BF is $H_t(\alpha_t)=\kappa_t(\alpha_t)\exp\{(Y_t-m_{\ast})^2 A_t^{-1}\}$, where
\begin{equation}
A_t=\frac{2\sigma^2_{\ast}(\varphi+t)(\alpha_t(\varphi+t-1)+1)}{(\alpha_t-1)(\varphi+t-1)}\quad\hbox{and}\quad \kappa_t(\alpha_t)=\left(\frac{\alpha_t(\varphi+t-1)+1}{\alpha_t(\varphi+t)}\right)^{1/2}
\end{equation}
naturally represents an upper bound of $H_{t}(\alpha_t)$. Intuitively, for small values of $\alpha_t$ the null hypothesis is not rejected, that is $H_t(\alpha_t)\rightarrow +\infty$ as $\alpha_t\rightarrow 0^+$, whereas for large values, the BF gets closer to one, that is $H_t(\alpha_t)\rightarrow 1$ as $\alpha_t\rightarrow 1^{-}$. Its derivative $H^{\prime}_{t}(\alpha_t)=0$ for $\alpha_t$ is equal to
$\alpha_{t,0}=((Y_t-m_{\ast})^2/\sigma^2_{\ast}-1)^{-1}(\varphi+t-1)^{-1}$ provided $(Y_{t}-m_{\ast})^2/\sigma^2-1>(\varphi+t-1)^{-1}$. 
\end{rem}

\subsection{Bayes Factor Properties}
\begin{proposition}[BF properties, known $\boldsymbol{V}$]\label{prop:BFmatrix}
Under Assumptions \ref{ass1} and \ref{ass2}, the BF $H_{t}(\alpha_t)$ satisfies the following properties.
\begin{enumerate}[i)]
    \item There exists a function $\kappa_t\left( \alpha_{t} \right)= \left| \boldsymbol{\Sigma}_{A,d} \right|^{\frac{n}{2}}\left| \boldsymbol{\Sigma}_{d} \right|^{-\frac{n}{2}}\geq1$ independent of $\boldsymbol{Y}_t$ such that $H_t(\alpha_t)\leq \kappa_t(\alpha_t)$,  $\kappa_t \rightarrow 1$ as $\alpha_{t} \rightarrow 1^{-}$, $\kappa_t \rightarrow \infty$ as $\alpha_{t} \rightarrow 0^{+}$ and $\kappa_t$ is decreasing in $\alpha_t$;
    \item $H_{t}\left( \alpha_{t} \right) \rightarrow 1$ as $\alpha_{t} \rightarrow 1^{-}$ and $H_{t}\left( \alpha_{t} \right)\rightarrow\infty$ as $\alpha_{t} \rightarrow 0^{+}$;
    \item $\partial_{\alpha_t} H_t(\alpha_t)= H_{t}\left( \alpha_{t} \right)(2\alpha_{t})^{-1}\text{tr}\left(-n\Tilde{\boldsymbol{B}}^{- 1}\boldsymbol{\Sigma}_{*} +\alpha_{t} \boldsymbol{\Upsilon}(\alpha_t) \Tilde{\boldsymbol{A}} + \alpha_{t}\left( {1 - \alpha}_{t} \right)\boldsymbol{\Upsilon}(\alpha_t)\boldsymbol{\Sigma}_{L}\Tilde{\boldsymbol{B}}^{-1}\Tilde{\boldsymbol{A}} \right)$ where we defined $\Tilde{\boldsymbol{A}} = \left( \boldsymbol{Y}_t - \boldsymbol{M}_{*} \right)\boldsymbol{V}^{- 1}\left( \boldsymbol{Y}_t - \boldsymbol{M}_{*} \right)^{'}$, $\Tilde{\boldsymbol{B}}=\left( {\alpha_{t}\boldsymbol{\Sigma}}_{L} + \boldsymbol{\Sigma}_{*} \right)$ and $\boldsymbol{\Upsilon}(\alpha_t)=\left( \boldsymbol{\Sigma}_{L} + \boldsymbol{\Sigma}_{*} \right)^{- 1}\boldsymbol{\Sigma}_{*}\Tilde{\boldsymbol{B}}^{- 1}$.
\end{enumerate}
\end{proposition}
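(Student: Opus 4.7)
The plan is to work from the explicit form of the BF given in Proposition~\ref{prop:posteriorBF}(i), combining the factorization $\boldsymbol{\Sigma}_{A,d}=\tilde{\boldsymbol{B}}/\alpha_t$ (so that $\boldsymbol{\Sigma}_{A,d}^{-1}=\alpha_t\tilde{\boldsymbol{B}}^{-1}$) with the decomposition $\tilde{\boldsymbol{B}}=\boldsymbol{\Sigma}_d-(1-\alpha_t)\boldsymbol{\Sigma}_L$. Post-multiplying the latter by $\tilde{\boldsymbol{B}}^{-1}$ and re-arranging yields the resolvent-type identity $\tilde{\boldsymbol{B}}^{-1}=\boldsymbol{\Sigma}_d^{-1}+(1-\alpha_t)\boldsymbol{\Sigma}_d^{-1}\boldsymbol{\Sigma}_L\tilde{\boldsymbol{B}}^{-1}$, which combined with $\alpha_t\boldsymbol{\Sigma}_L\tilde{\boldsymbol{B}}^{-1}=\boldsymbol{I}-\boldsymbol{\Sigma}_*\tilde{\boldsymbol{B}}^{-1}$ produces the key simplification
\begin{equation*}
\boldsymbol{\Sigma}_d^{-1}-\boldsymbol{\Sigma}_{A,d}^{-1}=(1-\alpha_t)\boldsymbol{\Upsilon}(\alpha_t),\qquad H_t(\alpha_t)=\kappa_t(\alpha_t)\exp\!\Bigl\{-\tfrac{1-\alpha_t}{2}\mathrm{tr}\bigl(\boldsymbol{\Upsilon}(\alpha_t)\tilde{\boldsymbol{A}}\bigr)\Bigr\}.
\end{equation*}
All three parts will be read off this compact representation.

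For part (i), observe that for $\alpha_t\in(0,1)$ the matrix $\boldsymbol{\Sigma}_{A,d}-\boldsymbol{\Sigma}_d=(1/\alpha_t-1)\boldsymbol{\Sigma}_*$ is positive semidefinite, hence $\boldsymbol{\Sigma}_{A,d}\succeq\boldsymbol{\Sigma}_d\succ 0$. Loewner monotonicity of $\det$ and of matrix inversion then give $|\boldsymbol{\Sigma}_{A,d}|\ge|\boldsymbol{\Sigma}_d|$ (so $\kappa_t\ge 1$) and $\boldsymbol{\Sigma}_d^{-1}-\boldsymbol{\Sigma}_{A,d}^{-1}\succeq 0$; combined with $\tilde{\boldsymbol{A}}\succeq 0$ the exponent is non-positive and $H_t\le\kappa_t$. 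The limits $\kappa_t\to 1$ at $\alpha_t\to 1^-$ and $\kappa_t\to\infty$ at $\alpha_t\to 0^+$ follow from $\boldsymbol{\Sigma}_{A,d}\to\boldsymbol{\Sigma}_d$ and the divergence of the eigenvalues of $\boldsymbol{\Sigma}_*/\alpha_t$, respectively. Strict decrease is obtained from Jacobi's formula, $\partial_{\alpha_t}\log\kappa_t=-\tfrac{n}{2\alpha_t^2}\mathrm{tr}(\boldsymbol{\Sigma}_{A,d}^{-1}\boldsymbol{\Sigma}_*)<0$, the sign coming from the fact that $\boldsymbol{\Sigma}_{A,d}^{-1}\boldsymbol{\Sigma}_*$ is similar to a symmetric PD matrix and therefore has positive trace. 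Part (ii) is then immediate from the compact form: as $\alpha_t\to 1^-$ both $\kappa_t\to 1$ and the factor $(1-\alpha_t)$ kill the correction; as $\alpha_t\to 0^+$, $\boldsymbol{\Upsilon}(\alpha_t)=\boldsymbol{\Sigma}_d^{-1}\boldsymbol{\Sigma}_*\tilde{\boldsymbol{B}}^{-1}\to\boldsymbol{\Sigma}_d^{-1}$ stays bounded so the exponent stays finite, while $\kappa_t\to\infty$, giving $H_t\to\infty$.

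For part (iii), I differentiate $\log H_t=\tfrac{n}{2}\log|\boldsymbol{\Sigma}_{A,d}|-\tfrac{n}{2}\log|\boldsymbol{\Sigma}_d|-\tfrac{1-\alpha_t}{2}\mathrm{tr}(\boldsymbol{\Upsilon}(\alpha_t)\tilde{\boldsymbol{A}})$. Three ingredients enter: Jacobi's formula with $\boldsymbol{\Sigma}_{A,d}^{-1}=\alpha_t\tilde{\boldsymbol{B}}^{-1}$ gives $\partial_{\alpha_t}\log|\boldsymbol{\Sigma}_{A,d}|=-\alpha_t^{-1}\mathrm{tr}(\tilde{\boldsymbol{B}}^{-1}\boldsymbol{\Sigma}_*)$; the derivative $\partial_{\alpha_t}\tilde{\boldsymbol{B}}^{-1}=-\tilde{\boldsymbol{B}}^{-1}\boldsymbol{\Sigma}_L\tilde{\boldsymbol{B}}^{-1}$ propagates to $\partial_{\alpha_t}\boldsymbol{\Upsilon}(\alpha_t)=-\boldsymbol{\Upsilon}(\alpha_t)\boldsymbol{\Sigma}_L\tilde{\boldsymbol{B}}^{-1}$; and the derivative of the outer prefactor $-(1-\alpha_t)/2$ produces the $\boldsymbol{\Upsilon}(\alpha_t)\tilde{\boldsymbol{A}}$ piece. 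Collecting the three contributions, multiplying by $H_t$, and extracting the common factor $(2\alpha_t)^{-1}$ yields the stated formula. The main obstacle is carrying out the upfront simplification $\boldsymbol{\Sigma}_d^{-1}-\boldsymbol{\Sigma}_{A,d}^{-1}=(1-\alpha_t)\boldsymbol{\Upsilon}(\alpha_t)$: differentiating the original BF directly produces a term $\mathrm{tr}(\tilde{\boldsymbol{B}}^{-1}\boldsymbol{\Sigma}_*\tilde{\boldsymbol{B}}^{-1}\tilde{\boldsymbol{A}})$ whose rewrite as the two $\boldsymbol{\Upsilon}$-pieces requires exactly the same algebra, so it is cleaner to perform the simplification once at the level of the BF before differentiating.
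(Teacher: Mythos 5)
Your proposal is correct and follows essentially the same route as the paper's proof: both hinge on the identity $\boldsymbol{\Sigma}_d^{-1}-\boldsymbol{\Sigma}_{A,d}^{-1}=(1-\alpha_t)\boldsymbol{\Upsilon}(\alpha_t)$ (obtained by the resolvent identity), the positive-definiteness of that difference to bound the exponent, Jacobi's formula for the monotonicity of $\kappa_t$, and the derivative $\partial_{\alpha_t}\tilde{\boldsymbol{B}}^{-1}=-\tilde{\boldsymbol{B}}^{-1}\boldsymbol{\Sigma}_L\tilde{\boldsymbol{B}}^{-1}$ for part (iii). The only cosmetic differences are that you argue $\kappa_t\geq 1$ via Loewner monotonicity of the determinant rather than the superadditivity inequality $|\boldsymbol{A}+\boldsymbol{B}|\geq|\boldsymbol{A}|+|\boldsymbol{B}|$ the paper cites, and you differentiate $\log H_t$ instead of applying the product rule to $H_t$ directly.
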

Note that the second and third elements within the trace in the derivative at \textit{iii)} are positive. In contrast, the first one is negative and, for each $\boldsymbol{Y}_t$, $\partial_{\alpha_t}H_t(\alpha_t)\rightarrow -\infty$ as $\alpha_{t}\rightarrow 0^{+}$ and $\partial_{\alpha_t} H_t(\alpha_t)\rightarrow (-n\text{tr}\text(\left( \boldsymbol{\Sigma}_{L} + \boldsymbol{\Sigma}_{*} \right)^{- 1})+\text{tr}(\boldsymbol{\Upsilon}(1)\Tilde{\boldsymbol{A}}))/2$ as $\alpha_{t}\rightarrow 1^{-}$. Thus there is a change of the sign, provided $\text{tr}(\left( \boldsymbol{\Sigma}_{L} + \boldsymbol{\Sigma}_{*} \right)^{- 1}\boldsymbol{\Sigma}_{\ast}(\left( \boldsymbol{\Sigma}_{L} + \boldsymbol{\Sigma}_{*} \right)^{- 1}\tilde{A}-n \boldsymbol{I}))>0$, and the BF has at least one stationary point.

\begin{proposition}[BF properties, unknown $\boldsymbol{V}$]\label{prop:BFmatrixUnknownV}
Under Assumptions \ref{ass1} and \ref{ass3}, the BF $H_{t}(\alpha_t)$ satisfies the following properties.
\begin{enumerate}[i)]
    \item There exists a function $\kappa_t\left( \alpha_{t} \right)= (k_{*}k_{A,d}(k_{d}k_{A,*}))^{-np/2}$ $\Gamma_{n}((m_{d}-n-1)/2)$ $\Gamma_{n}((m_{A,*}-n-1)/2)$ $(\Gamma_{n}((m_{*}-n-1)/2)\Gamma_{n}((m_{A,d}-n-1)/2))^{-1}$ which does not depend on $\boldsymbol{Y}_t$ such that $H_t(\alpha_t)\leq \kappa_t(\alpha_t)$, $\kappa_t$ is decreasing in $\alpha_t$,  $\kappa_t \rightarrow 1$ as $\alpha_{t} \rightarrow 1^{-}$, $\kappa_t \rightarrow +\infty$ as $\alpha_{t} \rightarrow \underline{\alpha}^{+}$ where $\underline{\alpha}=(2n+p)/m_d$;
    \item $H_{t}\left( \alpha_{t} \right) \rightarrow 1$ as $\alpha_{t} \rightarrow 1^{-}$ and $H_{t}\left( \alpha_{t} \right)\rightarrow\infty$ as $\alpha_{t} \rightarrow \underline{\alpha}^{+}$;
    \item $
\partial_{\alpha_t}H_t(\alpha_t)=L(|\boldsymbol{\Psi}_{A,*}|^{(m_{A,*}-n-1)/2}k_{A,*}^\frac{np}{2}\Gamma_n((m_{A,d}-n-1)/2))^{-2}(B_1 \partial_{\alpha_t}A_1 -A_1 \partial_{\alpha_t}B_1)$, where $L=\left|\boldsymbol{\Psi}\right|_*^{(m_*-n-1)/2}k_*^{np/2}\Gamma_n\left((m_d-n-1)/2)\right)(k_d^{np/2}\left|\boldsymbol{\Psi}_d\right|^{(m_d-n-1)/2}\Gamma_n\left((m_*-n-1)/2\right))^{-1}$, $A_1=a_1a_2a_3$, $B_1=b_1b_2b_3$ with $a_1=k_{A,d}^{np/2}$, $a_2=\Gamma_n\left((m_{A,*}-n-1)/2\right)$ $a_3=\left|\boldsymbol{\Psi}_{A,d}\right|^{(m_{A,d}-n-1)/2}$, $b_1=k_{A,*}^{np/2}$, $b_2=\Gamma_n\left((m_{A,d}-n-1)/2\right)$, $b_3=\left|\boldsymbol{\Psi}_{A,*}\right|^{(m_{A,*}-n-1)/2}$.
\end{enumerate}
\end{proposition}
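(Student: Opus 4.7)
The plan is to proceed item by item, leveraging the closed-form expression for $H_t(\alpha_t)$ from Proposition~\ref{prop:posteriorBF}(ii) together with standard identities for determinants of positive-definite matrices and properties of the multivariate Gamma function $\Gamma_n$.

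For part (i), I would first isolate the $\boldsymbol{Y}_t$-dependent factor of $H_t(\alpha_t)$ as the ratio $R(\boldsymbol{Y}_t;\alpha_t) = |\boldsymbol{\Psi}_{A,*} + c_A(\alpha_t)\boldsymbol{Q}|^{(m_{A,d}-n-1)/2}/|\boldsymbol{\Psi}_{*} + c(\alpha_t)\boldsymbol{Q}|^{(m_d-n-1)/2}$, where $\boldsymbol{Q} = (\boldsymbol{M}_{*} - \boldsymbol{Y}_t)^{\prime}\boldsymbol{\Sigma}_L^{-1}(\boldsymbol{M}_{*} - \boldsymbol{Y}_t)$ is positive semidefinite and $c_A,c$ are the $\alpha_t$-dependent scalars read off from Proposition~\ref{prop:posteriorBF}. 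Using the standard identity $\partial_{\boldsymbol{Q}}\log|\boldsymbol{A} + c\boldsymbol{Q}| = c(\boldsymbol{A} + c\boldsymbol{Q})^{-1}$, I would compute the gradient of $\log R$ in $\boldsymbol{Q}$, locate its critical point on the PSD cone, and check by a second-order analysis that it is the global maximum. Substituting this maximizer back and exploiting the identities $m_{A,d}-m_{A,*} = m_d - m_{*} = p$ together with the relation between $\boldsymbol{\Psi}_{A,*}$ and $\boldsymbol{\Psi}_{*}$ induced by the discounting in Eq.~\ref{altdistr} produces the stated $\kappa_t(\alpha_t)$. The limits then follow from: (a) at $\alpha_t = 1$ all discounted quantities $k_{A,\cdot}, m_{A,\cdot}$ reduce to their non-discounted counterparts, so every factor in $\kappa_t$ equals one; (b) as $\alpha_t\downarrow\underline{\alpha} = (2n+p)/m_d$, the argument $(m_{A,*}-n-1)/2$ approaches $(n-1)/2$ from above, which is the largest pole of $\Gamma_n$, forcing $\kappa_t\to\infty$. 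Monotonicity is obtained by computing $\partial_{\alpha_t}\log\kappa_t$ and showing that the combined contribution of the algebraic $k$-ratio and the digamma terms $\psi_n$ coming from $\Gamma_n$ is negative on $(\underline{\alpha},1)$.

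For part (ii), the limits of $H_t(\alpha_t)$ itself follow from the explicit formula together with the analysis in (i). As $\alpha_t\to 1^{-}$, the alternative predictive coincides with the null one by construction of the perturbation in Eq.~\ref{altdistr}, so $H_t\to 1$; equivalently, in the closed form the constant $G$ and the two determinantal terms each tend to one. As $\alpha_t\to\underline{\alpha}^{+}$, for any fixed $\boldsymbol{Y}_t$ the ratio $R(\boldsymbol{Y}_t;\alpha_t)$ remains bounded and strictly positive, while $\Gamma_n((m_{A,*}-n-1)/2)$ in the numerator of $G$ diverges, yielding $H_t\to\infty$.

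Part (iii) is a direct calculus exercise. Writing $H_t(\alpha_t) = L\,A_1/B_1$ with $A_1 = a_1 a_2 a_3$ and $B_1 = b_1 b_2 b_3$ as in the statement, the quotient rule gives $\partial_{\alpha_t} H_t = L(B_1\partial_{\alpha_t} A_1 - A_1\partial_{\alpha_t} B_1)/B_1^{2}$, with $B_1^{2}$ coinciding with the squared factor written in the statement. Each derivative $\partial_{\alpha_t}a_i$ and $\partial_{\alpha_t}b_i$ is then obtained by the chain rule, using $\partial_{\alpha_t}k_{A,*} = \partial_{\alpha_t}k_{A,d} = k_{*}$, $\partial_{\alpha_t}m_{A,*} = \partial_{\alpha_t}m_{A,d} = m_d$, the digamma identity $\partial_{x}\log\Gamma_n(x) = \psi_n(x)$, and Jacobi's formula for differentiating $\log|\boldsymbol{\Psi}_{A,*}|^{(m_{A,*}-n-1)/2}$. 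The main obstacle is part (i): the cancellation that removes the $\boldsymbol{\Psi}$-scale matrices from $\kappa_t(\alpha_t)$ is not transparent and requires using the precise power-discounting link between $\boldsymbol{\Psi}_{A,*}$ and $\boldsymbol{\Psi}_{*}$, while the subsequent monotonicity claim demands a delicate sign analysis combining elementary $\alpha_t$-ratios with digamma increments on $(\underline{\alpha},1)$.
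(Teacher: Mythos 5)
Parts (ii) and (iii) of your plan coincide with the paper's proof: the limit $H_t\to 1$ as $\alpha_t\to 1^-$ is read off the formula, the divergence at $\underline{\alpha}^{+}$ comes from the pole of $\Gamma_n\bigl((m_{A,*}-n-1)/2\bigr)$ while everything else stays bounded, and part (iii) is exactly the quotient/chain-rule computation with digamma functions and Jacobi's formula. Your identification of the pole location ($(m_{A,*}-n-1)/2\downarrow (n-1)/2$ at $\alpha_t=\underline{\alpha}$) is correct.

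The gap is in part (i). First, the maximizer of the $\boldsymbol{Y}_t$-dependent ratio over the PSD cone is not an interior critical point: writing $\boldsymbol{Q}=(\boldsymbol{M}_*-\boldsymbol{Y}_t)'\boldsymbol{\Sigma}_L^{-1}(\boldsymbol{M}_*-\boldsymbol{Y}_t)$, the supremum is attained at the vertex $\boldsymbol{Q}=\boldsymbol{O}$ (i.e.\ $\boldsymbol{Y}_t=\boldsymbol{M}_*$), where the one-sided directional derivative of the log-ratio into the cone is already strictly negative; a ``critical point plus second-order analysis'' is the wrong tool, and establishing that $\boldsymbol{Q}=\boldsymbol{O}$ is the \emph{global} maximum requires controlling the sign of $e_Ac_A\,\mathrm{tr}\bigl((\boldsymbol{I}+tc_A\boldsymbol{Q}_0)^{-1}\boldsymbol{Q}_0\bigr)-ec\,\mathrm{tr}\bigl((\boldsymbol{I}+tc\boldsymbol{Q}_0)^{-1}\boldsymbol{Q}_0\bigr)$ along every ray, which is not obviously negative for all $t$ since $(\boldsymbol{I}+tc_A\boldsymbol{Q}_0)^{-1}\succeq(\boldsymbol{I}+tc\boldsymbol{Q}_0)^{-1}$ works against you. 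Second, even granting the maximizer, substituting $\boldsymbol{Q}=\boldsymbol{O}$ yields $\boldsymbol{\Psi}_{A,d}=\boldsymbol{\Psi}_{A,*}$ and $\boldsymbol{\Psi}_d=\boldsymbol{\Psi}_*$, so the $\boldsymbol{Y}_t$-dependent factor evaluates to $\alpha_t^{np/2}$, not $1$: the supremum of $H_t$ is $\alpha_t^{np/2}\kappa_t(\alpha_t)$, and your claim that this ``produces the stated $\kappa_t$'' is false as written (you would still need $\alpha_t^{np/2}\le 1$ to recover the stated bound). The paper avoids both issues by factoring $H_t=\chi(\alpha_t)\kappa_{1t}(\alpha_t)\kappa_{2t}(\alpha_t)$ and proving $\chi(\alpha_t)<1$ directly: it writes $\chi=\alpha_t^{np/2}\,\bigl|\boldsymbol{I}_n+\tfrac{\alpha_tk_*}{\alpha_tk_*+1}\boldsymbol{\Psi}_*^{-1}\boldsymbol{E}\bigr|^{(\alpha_tm_d-n-1)/2}\bigl|\boldsymbol{I}_n+\tfrac{k_*}{k_*+1}\boldsymbol{\Psi}_*^{-1}\boldsymbol{E}\bigr|^{-(m_d-n-1)/2}$ and uses only the Loewner-order monotonicity of the determinant together with $|\boldsymbol{A}+\boldsymbol{B}|\ge|\boldsymbol{A}|+|\boldsymbol{B}|$ to compare the two powers, with no optimization at all. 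For the monotonicity of $\kappa_t$ itself, the paper shows $\kappa_{1t}$ and $\kappa_{2t}$ are each decreasing separately (the latter via the increasing-digamma argument on $(\underline{\alpha},1)$ that you gesture at); that part of your plan is workable but you should make explicit that it is $m_{A,*}\le m_{A,d}$ together with positivity of the digamma arguments for $\alpha_t\ge\underline{\alpha}$ that forces the sign.
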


In the derivative $\partial_{\alpha_t}H_t(\alpha_t)$, the terms $a_i$, $b_i$, $i=1,\ldots,3$ are positive, and the terms $\partial_{\alpha_t}A_1$ and $\partial_{\alpha_t}B_1$ can take both positive and negative values since they are sums of digamma functions. Thus, for some values of $\alpha_t$, the $\partial_{\alpha_t}H_t(\alpha_t)$ can go to zero, and the BF has at least one stationary point.

The following remark discusses the relationship with the univariate case, whereas the empirical section will provide further illustrations for the matrix case.

\begin{rem}[Univariate Gaussian model]\label{exUnivGauss}
The BF for the univariate outlier detection and its upper bound $\kappa_t(\alpha_t)$ given in Remark \ref{ex3} can be easily obtained setting $n = p = 1$ and $\boldsymbol{\Sigma}_{P} = \sigma^{2}/\varphi$  in Eq. \ref{BF_matrix}, in Prop. \ref{prop:posteriorBF} and in \textit{i)} of Prop.  \ref{prop:BFmatrix}. Since $A_t$ in Remark \ref{ex3} is strictly negative, then the upper bound satisfies the properties in Prop. \ref{prop:BFmatrix}. The derivative of the BF is
\begin{equation}
    \partial_{\alpha_t} H_t(\alpha_t)=\frac{H_{t}(\alpha_t)}{2}\left(\frac{(\varphi+t-1)(Y_t-m_{\ast})^2}{\sigma^2 (\alpha_t(\varphi+t-1)+1)^2}-\kappa_t(\alpha_t)\frac{1}{\alpha_t^2(\varphi+t)}\right)
\end{equation}
which goes to $-\infty$ as $\alpha_t\rightarrow 0^{+}$ and is strictly positive for $\alpha_t>1/((\varphi+t-1)((Y_t-m_{\ast})^2/\sigma^2-1))$ provided that $(Y_t-m_{\ast})^2/\sigma^2>(\varphi+t)/(\varphi+t-1)$. This implies there exists a stationary point and that $H_{t}<1$ for some values of $\alpha_t$ provided the conditions in Prop. \ref{th1} are satisfied.
\end{rem}

Let us now find the subset of the sample space such that for a given threshold $0 < h_{0}(\alpha_t) < \kappa_t(\alpha_t)$, the BF leads us to accept the null hypothesis, that is, $H_{t} \geq h_{0}(\alpha_t) $. Define $\mathbf{y}_{t} = \text{vec}\left( \boldsymbol{Y}_t \right)$ and $\mathbf{m}_{*} = \text{vec}\left( \boldsymbol{M}_{*} \right)$ and let $\kappa_t(\alpha_t)$ be the upper bound given in Prop. \ref{prop:BFmatrix}. From the expression of the BF given in Prop. \ref{prop:posteriorBF} the condition $H_{t} \geq h_{0}(\alpha_t)$ is satisfied for $\mathbf{y}_t$ in the ellipsoid:
\begin{equation}
\left( \mathbf{y}_{t} - \mathbf{m}_{*} \right)^{'}\left( \boldsymbol{\Sigma}_{H} \otimes \boldsymbol{V} \right)^{- 1}\left( \mathbf{y}_{t} - \mathbf{m}_{*} \right) \leq 2\log\left( {\frac{\kappa_t(\alpha_t)}{h_{0}(\alpha_t)}} \right)
\label{thresMult}
\end{equation}
centered in $\mathbf{m}_{*}$, with axis in the direction of the eigenvector $\boldsymbol{\nu}_{k}$ of $\boldsymbol{\Sigma}_{H} \otimes \boldsymbol{V}$, where $\boldsymbol{\Sigma}_H=(1 - \alpha_{t})^{-1}\left( \alpha_{t}\boldsymbol{\Sigma}_{L} + \boldsymbol{\Sigma}_{*} \right)\boldsymbol{\Sigma}_{*}^{- 1}\left( \boldsymbol{\Sigma}_{L} + \boldsymbol{\Sigma}_{*} \right)$. The length of the ellipsoid axes along the $\nu_{k}$ eigenvector's direction is $\ell_{k} = \ 2\sqrt{2\log\left( {\frac{\kappa_t(\alpha_t)}{h_{0}(\alpha_t)}} \right)\xi_{k}}$, where $\xi_{k}$ is the corresponding eigenvalue of $\boldsymbol{\Sigma}_{H} \otimes \boldsymbol{V}$. Let $\gamma_{i}$ and $\boldsymbol{\zeta}_{i}$ be the eigenvalues and the
corresponding eigenvectors of $\boldsymbol{\Sigma}_{H}$ and $\tau_{j}$ and
$\boldsymbol{\delta}_{j}$ the eigenvalues and eigenvectors
of $\boldsymbol{V}$. Then $\boldsymbol{\Sigma}_{H} \otimes \boldsymbol{V}$ has eigenvalues
$\xi_{k} = \gamma_{i}\tau_{j}$ with corresponding eigenvectors
$\boldsymbol{\nu}_{k}= \boldsymbol{\zeta}_i \otimes \boldsymbol{\delta}_{j}$. If we consider $h_{0} = 1$, then $\boldsymbol{Y}_t$ is not considered an outlier for $H_{t} \geq 1$. Figure \ref{fig:BFelli} provides a numerical illustration for $p=2$ and $n=1$. Increasing the discounting parameter value reduces the evidence in favor of the null for $h_0>1$ (dashed lines, left plot) and increases it for $h_0<1$ (solid lines). For larger values of $h_0$, the evidence against the null becomes stronger (right plot). This effect can also be understood from the following remark on the univariate outlier detection, a special case of Eq. \ref{thresMult}.

\begin{figure}[t]
    \centering
    \begin{tabular}{cc}
    \includegraphics[width=0.35\textwidth]{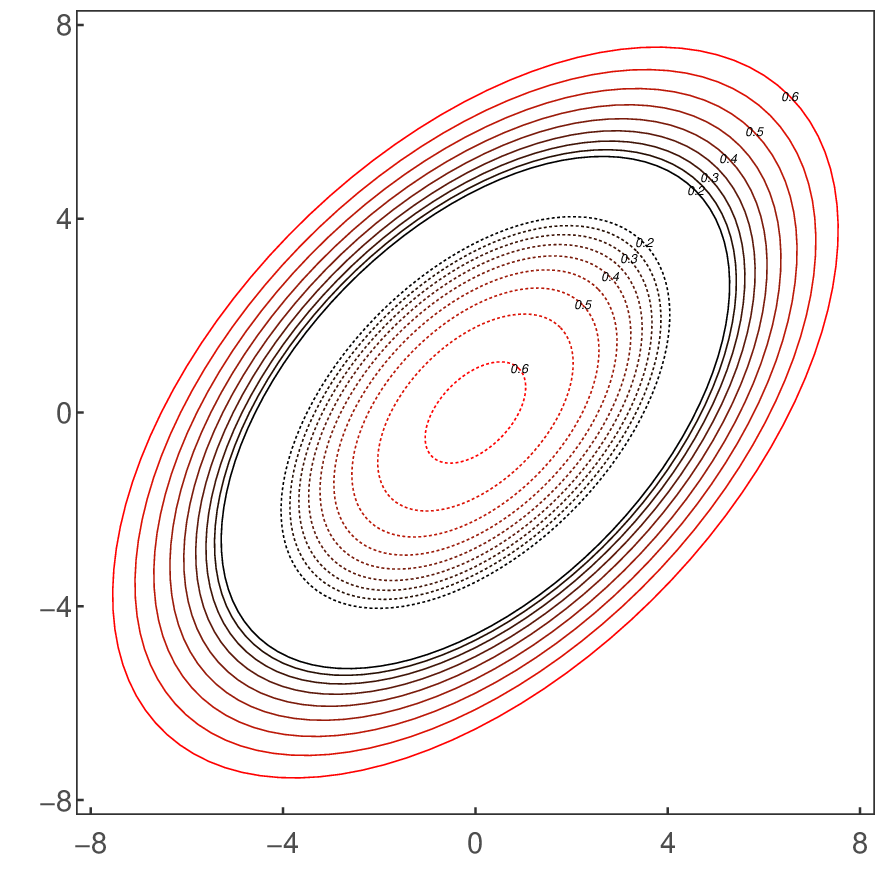}&
        \includegraphics[width=0.35\textwidth]{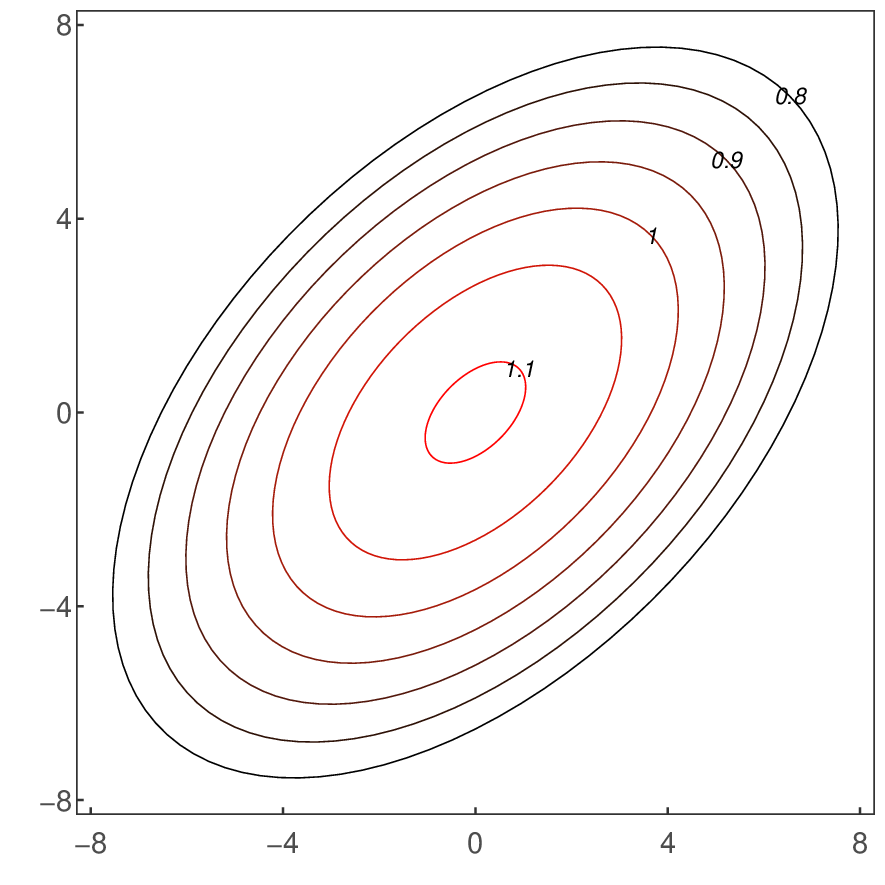}
    \end{tabular}
    \caption{Bayes Factor Critical regions. Left: increasing $\alpha_t$ from 0.2
    ({\color{black}\protect\tikz[baseline]{\protect\draw[line width=0.2mm] (0,.6ex)--++(0.5,0) ;}} $h_0=0.8$, {\color{black}\protect\tikz[baseline]{\protect\draw[line width=0.2mm,dashed] (0,.6ex)--++(0.5,0) ;}} $h_0=1.1$) to 0.6 ({\color{red}\protect\tikz[baseline]{\protect\draw[line width=0.2mm] (0,.6ex)--++(0.5,0) ;}} $h_0=0.8$, {\color{red}\protect\tikz[baseline]{\protect\draw[line width=0.2mm,dashed] (0,.6ex)--++(0.5,0) ;}} $h_0=1.1$). Right: increasing the threshold $h_0$ from 0.8 ({\color{black}\protect\tikz[baseline]{\protect\draw[line width=0.2mm] (0,.6ex)--++(0.5,0) ;}}) to 1.1 ({\color{red}\protect\tikz[baseline]{\protect\draw[line width=0.2mm] (0,.6ex)--++(0.5,0) ;}}).}
    \label{fig:BFelli}
\end{figure}

\begin{rem}[Univariate Gaussian model]\label{exUnivGaussContd}
In the univariate case, the null hypothesis is accepted, i.e. $H_t>h_{0}$, for values of $Y_t$ in the interval $(Y_{1t},Y_{2t})$ with
\begin{equation}
        Y_{jt}=m_{\ast}+(-1)^{j}\sqrt{\log\left(\frac{h_{0}(\alpha_t)}{\kappa_t(\alpha_t)}\right)\frac{2\sigma^2(\varphi+t)(\alpha_t(\varphi+t-1)+1)}{(\alpha_t-1)(\varphi+t-1)}}.
\end{equation}
\end{rem}

The BF and the outcome of the testing procedure depend on the choice of the discounting parameter $\alpha_t$. In this paper, we propose the integrated BF and its normalized version as a solution, which requires that the integral with respect to $\pi_t(\alpha_t)$ is bounded. In the following, we provide existence conditions for the IBF and NIBF under the beta perturbation assumption, a standard prior distribution used in Bayesian inference for parameters on bounded intervals. As stated in the following proposition, the integral of the upper bound for the univariate case can be derived analytically and is well-defined under the assumption of a standard uniform distribution for the discounting parameter.
\begin{proposition}[Integrability]\label{prop:IBFmatrix}
Assume a general beta distribution $\mathcal{B}e(a,b)$ truncated on the interval ($\underline{\alpha},\bar{\alpha})$. The IBF, $H_t=\int_{0}^{1}H_t(\alpha)\pi_{t}(\alpha)d\alpha$, satisfies the following properties.
\begin{enumerate}
    \item Under Assumptions \ref{ass1} and \ref{ass2}, for $a>np/2$, $b>0$, $\underline{\alpha}=0$ and $\bar{\alpha}=1$
\begin{equation}
H_t<\sum_{\ell=0}^{\infty}\sum_{\underline{k}\in\mathcal{K}_{\ell}}d_{\ell,\underline{k}}\bar{c}_{\underline{k}}\frac{\Gamma(n/2+1)B(a-np/2+w_{\underline{k}},b)}{\Gamma(\ell+1)\Gamma(n/2-\ell)B(a,b)}<\infty
\end{equation}
with $\underline{k}=(k_0,\ldots,k_p)$, $\mathcal{K}_{\ell}=\{\underline{k}|k_0+\ldots,k_p=\ell\}$, $\bar{c}_{0}=\tilde{c}_{0}-1$, $\bar{c}_{j}=\tilde{c}_j$, $j=1,\ldots,p$, and
$$
d_{\ell,\underline{k}}=\binom{\ell}{k_0,k_1,k_2,\ldots,k_p},\quad \bar{c}_{\underline{k}}=\prod_{j=0}^{p}\bar{c}_{j}^{k_j},
$$
where $d_{\ell,\underline{k}}$ denotes the multinomial coefficient, $\tilde{c}_j$ $j=0,\ldots,p$ are the coefficients of the characteristic polynomial of $\boldsymbol{\Sigma}_L^{-1}\boldsymbol{\Sigma}_{\ast}$.
\item Under Assumptions \ref{ass1} and \ref{ass3}, for $a>1$ and $b>0$, $\underline{\alpha}=(2n+p)/m_d$ and $\bar{\alpha}=1$.
\begin{align}
&H_t<(\underline{\alpha} k_{*}/(\underline{\alpha}(k_{*}+1)))^{np/2}\Gamma\left(\frac{p}{2}\right)^{-n}\left(\frac{4}{p}\right)^n\frac{B(1-\underline{\alpha};ra-r+1,qb-q+1)}{B(1-\underline{\alpha};a,b)}\xi(r,q)
\end{align}    
where $B(c,a,b)=\int_{0}^{c}x^{a-1}(1-x)^{b-1}dx$ denotes the incomplete beta function, $\xi(r,q)=(((m_{d}(1-\underline{\alpha})+p)^{1-qn}-p^{1-qn})/(m_{d}(1-qn)))^{1/q}$ and $q=r/(r-1)$ with $r > 1$.
\end{enumerate}
\end{proposition}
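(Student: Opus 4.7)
The strategy is to reduce the integrability of the IBF to that of the pointwise majorant $\kappa_t$ from Propositions \ref{prop:BFmatrix} and \ref{prop:BFmatrixUnknownV} integrated against the truncated beta prior $\pi_t$. Since $H_t(\alpha_t)\le\kappa_t(\alpha_t)$ and $\kappa_t$ does not depend on $\boldsymbol{Y}_t$, the IBF is bounded by $\int_{\underline{\alpha}}^{\bar{\alpha}}\kappa_t(\alpha_t)\pi_t(\alpha_t)\,d\alpha_t$ uniformly in $\boldsymbol{Y}_t$, so it suffices to show that this integral is finite and to compute an explicit majorant with the claimed structure.

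For part 1, I factor $\kappa_t(\alpha_t)=\alpha_t^{-np/2}|\alpha_t\boldsymbol{I}+\boldsymbol{\Sigma}_L^{-1}\boldsymbol{\Sigma}_*|^{n/2}|\boldsymbol{I}+\boldsymbol{\Sigma}_L^{-1}\boldsymbol{\Sigma}_*|^{-n/2}$ and write the characteristic polynomial $|\alpha_t\boldsymbol{I}+\boldsymbol{\Sigma}_L^{-1}\boldsymbol{\Sigma}_*|=\sum_{j=0}^{p}\tilde c_j\alpha_t^j$. Since the leading coefficient is $\tilde c_p=1$, this sum equals $1+\sum_{j=0}^{p}\bar c_j\alpha_t^j$ with $\bar c_0=\tilde c_0-1$ and $\bar c_j=\tilde c_j$ for $j\ge 1$, matching the paper's notation. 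Newton's generalized binomial series $(1+x)^{n/2}=\sum_{\ell}\binom{n/2}{\ell}x^{\ell}$ followed by the multinomial theorem on the inner $\ell$-th power produces the expansion $|\alpha_t\boldsymbol{I}+\boldsymbol{\Sigma}_L^{-1}\boldsymbol{\Sigma}_*|^{n/2}=\sum_{\ell}\sum_{\underline{k}\in\mathcal{K}_{\ell}}\binom{n/2}{\ell}d_{\ell,\underline{k}}\bar c_{\underline{k}}\alpha_t^{w_{\underline{k}}}$ with $w_{\underline{k}}=\sum_{j}jk_j$. Integrating term-by-term against $\pi_t(\alpha)=\alpha^{a-1}(1-\alpha)^{b-1}/B(a,b)$ yields $B(a-np/2+w_{\underline{k}},b)/B(a,b)$ in each summand, which is finite because the assumption $a>np/2$ and the non-negativity $w_{\underline{k}}\ge 0$ make the first beta parameter strictly positive. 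The interchange of summation and integration is justified by Fubini--Tonelli once the series is verified to converge absolutely on $(0,1)$, which follows by reducing to the dominated expansion of the continuous function $\kappa_t(\alpha_t)$.

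For part 2, the singularity of $\kappa_t$ as $\alpha_t\to\underline{\alpha}^{+}$ lives in the factor $\Gamma_n((m_{A,*}-n-1)/2)$: expanding $\Gamma_n(x)=\pi^{n(n-1)/4}\prod_{i=1}^{n}\Gamma(x-(i-1)/2)$, the $i=n$ factor behaves like $2/(m_d(\alpha_t-\underline{\alpha}))$. I bound the multivariate Gamma ratio uniformly on $(\underline{\alpha},1)$ using Wendel/Kershaw-type estimates of the form $\Gamma(x+s)/\Gamma(x+t)\le x^{s-t}$, combined with the elementary bound $\Gamma(p/2)^{-1}\le (4/p)\Gamma(p/2)^{-1}$, which produces an elementary majorant proportional to $(m_d(1-\alpha_t)+p)^{-n}$ modulo bounded constants. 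Applying Hölder's inequality with conjugate exponents $r,q>1$ ($1/r+1/q=1$) then separates $\int_{\underline{\alpha}}^{1}\kappa_t(\alpha_t)\pi_t(\alpha_t)\,d\alpha_t$ into two factors: the $\pi_t$ side, after the substitution $\alpha\mapsto 1-\alpha$, produces $B(1-\underline{\alpha};ra-r+1,qb-q+1)^{1/r}/B(1-\underline{\alpha};a,b)$, which is finite because $a>1$ guarantees $ra-r+1>0$ for some $r>1$; the residual side produces $\bigl(\int_{\underline{\alpha}}^{1}(m_d(1-\alpha_t)+p)^{-qn}\,d\alpha_t\bigr)^{1/q}=\xi(r,q)$ through the change of variable $u=m_d(1-\alpha)+p$ and an elementary antiderivative.

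The main obstacle is part 2: identifying an elementary majorant of the multivariate Gamma ratio that is sharp enough to produce exactly the stated constants $\Gamma(p/2)^{-n}(4/p)^{n}$ while being mild enough near $\underline{\alpha}$ so that the Hölder split yields the closed-form $\xi(r,q)$. The constraint $a>1$ is exactly what makes the Hölder step work for some admissible $r>1$, and a matching admissible $q=r/(r-1)$ is what forces $qn>1$, so that the residual $(m_d(1-\alpha_t)+p)^{-qn}$ is integrable on the closed subinterval away from $\alpha_t=1$.
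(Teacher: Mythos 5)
Part 1 of your proposal follows essentially the same route as the paper: bound $H_t(\alpha_t)$ by $\kappa_t(\alpha_t)$, pull out $\alpha_t^{-np/2}$, expand the normalized characteristic polynomial of $\boldsymbol{\Sigma}_L^{-1}\boldsymbol{\Sigma}_{\ast}$ with the generalized binomial series followed by the multinomial theorem, and integrate termwise against the beta density to produce $B(a-np/2+w_{\underline{k}},b)/B(a,b)$, with $a>np/2$ guaranteeing positivity of the first beta parameter. The paper additionally justifies convergence of the binomial series by noting that $|\alpha_t \boldsymbol{I}_p+\boldsymbol{A}|/|\boldsymbol{I}_p+\boldsymbol{A}|\in(0,1)$ for $\boldsymbol{A}=\boldsymbol{\Sigma}_L^{-1}\boldsymbol{\Sigma}_{\ast}$, so the quantity being raised to the power $n/2$ stays within the radius of convergence; your appeal to ``the dominated expansion of the continuous function $\kappa_t$'' should be replaced by this explicit determinant inequality, but the architecture is the same.

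Part 2 has the correct skeleton (isolate the singular factor $\Gamma_n((m_{A,*}-n-1)/2)$, apply H\"older with conjugate exponents $r,q$, obtain the incomplete beta term from the prior side and $\xi(r,q)$ from the residual side), but the central step --- bounding the ratio of multivariate gamma functions --- is not carried out correctly. Your proposed ``Wendel/Kershaw-type estimates $\Gamma(x+s)/\Gamma(x+t)\le x^{s-t}$'' do not apply here: the shift is $p/2$, which need not lie in $(0,1)$, and such estimates degenerate precisely where the argument approaches its pole, which is the regime that matters as $\alpha_t\to\underline{\alpha}^{+}$. Moreover the auxiliary bound you cite, $\Gamma(p/2)^{-1}\le(4/p)\Gamma(p/2)^{-1}$, is vacuous (or false for $p>4$), which signals that the constants $\Gamma(p/2)^{-n}(4/p)^n$ are being asserted rather than derived. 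The paper's mechanism is different and is the key idea you are missing: write each univariate ratio as
$\Gamma\bigl((\alpha_t m_d-p-n-j)/2\bigr)/\Gamma\bigl((\alpha_t m_d-n-j)/2\bigr)=B\bigl((\alpha_t m_d-n-j)/2,\,p/2\bigr)/\Gamma(p/2)$
via the identity $\Gamma(c-d)\Gamma(d)=B(c,d)\Gamma(c)$, then apply the elementary bound $B(x,y)\le 1/(xy)$ to each factor. This yields $(4/p)\Gamma(p/2)^{-1}(\alpha_t m_d-n-j)^{-1}$ per factor, and using $j\le n$ together with $m_d\underline{\alpha}-2n=p$ gives the product bound $(4/p)^n\Gamma(p/2)^{-n}\bigl(m_d(\alpha_t-\underline{\alpha})+p\bigr)^{-n}$, from which the H\"older split and the change of variable produce exactly $\xi(r,q)$. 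Finally, your closing remark that $qn>1$ is ``forced'' so that the residual is integrable is confused: $(m_d x+p)^{-qn}$ is bounded on $[0,1-\underline{\alpha}]$ because $p>0$, so it is integrable for any $q$; the binding constraint is $ra-r+1>0$ on the incomplete beta side, which is what the hypothesis $a>1$ secures for admissible $r>1$.
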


\begin{corollary}[IBF in the univariate model]\label{corol:integratedBoundUniv}
The integral of the BF $H_t$ given in Remark \ref{ex3} is bounded and satisfies:
\begin{equation}
\int_{0}^{1}{H_{t}\left( \alpha_{t} \right)d\alpha_{t} - 1 \leq \frac{1}{2\sqrt{(\varphi + t-1)(\varphi + t)}}\log\left( \frac{\sqrt{(\varphi + t)} + \sqrt{(\varphi + t-1)}}{\sqrt{(\varphi + t)} - \sqrt{(\varphi + t-1)}} \right)}.
\end{equation}
\end{corollary}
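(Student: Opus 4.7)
The plan is to reduce the problem to integrating the upper bound $\kappa_t(\alpha_t)$ derived in Remark \ref{ex3}, and then compute that integral in closed form via an elementary substitution. The main ingredient is the sign of $A_t$: for $\alpha_t\in(0,1)$ one has $\alpha_t-1<0$, while the factor $2\sigma_{*}^2(\varphi+t)(\alpha_t(\varphi+t-1)+1)(\varphi+t-1)^{-1}$ is strictly positive, so $A_t<0$. Therefore $(Y_t-m_{*})^2/A_t\leq 0$ and $\exp\{(Y_t-m_{*})^2 A_t^{-1}\}\leq 1$, which gives the pointwise bound $H_t(\alpha_t)\leq\kappa_t(\alpha_t)$ and reduces the claim to showing
\begin{equation*}
\int_{0}^{1}\kappa_t(\alpha_t)\,d\alpha_t-1=\frac{1}{2\sqrt{(\varphi+t-1)(\varphi+t)}}\log\!\left(\frac{\sqrt{\varphi+t}+\sqrt{\varphi+t-1}}{\sqrt{\varphi+t}-\sqrt{\varphi+t-1}}\right).
\end{equation*}

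Next, setting $a=\varphi+t-1$ and $b=\varphi+t$ (so that $b-a=1$), I would rewrite
\begin{equation*}
\int_{0}^{1}\kappa_t(\alpha)\,d\alpha=\frac{1}{\sqrt{b}}\int_{0}^{1}\sqrt{\frac{a\alpha+1}{\alpha}}\,d\alpha.
\end{equation*}
The substitution $\alpha=x^2$, $d\alpha=2x\,dx$, cancels the $\sqrt{\alpha}$ in the denominator and yields $\frac{2}{\sqrt{b}}\int_{0}^{1}\sqrt{ax^2+1}\,dx$. This is a standard integral with antiderivative
\begin{equation*}
\int\sqrt{ax^{2}+1}\,dx=\frac{x}{2}\sqrt{ax^{2}+1}+\frac{1}{2\sqrt{a}}\log\!\left(\sqrt{a}\,x+\sqrt{ax^{2}+1}\right)+C,
\end{equation*}
evaluating between $0$ and $1$ to $\tfrac{1}{2}\sqrt{a+1}+\tfrac{1}{2\sqrt{a}}\log(\sqrt{a}+\sqrt{a+1})$. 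Substituting back and using $a+1=b$ gives $\int_{0}^{1}\kappa_t(\alpha)\,d\alpha=1+(\sqrt{ab})^{-1}\log(\sqrt{a}+\sqrt{b})$, and hence $\int_{0}^{1}\kappa_t(\alpha)\,d\alpha-1=(\sqrt{ab})^{-1}\log(\sqrt{a}+\sqrt{b})$.

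Finally, I would close the calculation by the algebraic identity $(\sqrt{b}+\sqrt{a})(\sqrt{b}-\sqrt{a})=b-a=1$, which implies $\sqrt{a}+\sqrt{b}=(\sqrt{b}-\sqrt{a})^{-1}$ and therefore $\log(\sqrt{a}+\sqrt{b})=\tfrac{1}{2}\log\frac{\sqrt{b}+\sqrt{a}}{\sqrt{b}-\sqrt{a}}$. Substituting back $a=\varphi+t-1$, $b=\varphi+t$ produces exactly the right-hand side in the statement. The argument is essentially a one-line bound followed by a routine antiderivative; the only mild subtlety is noticing that the identity $b-a=1$ is what converts the $\sinh^{-1}$-type expression into the symmetric logarithm appearing in the corollary, so that step is what I would flag as the least mechanical portion of the proof.
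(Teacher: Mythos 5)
Your proposal is correct and follows essentially the same strategy as the paper: bound $H_t(\alpha_t)\leq\kappa_t(\alpha_t)$ using the negativity of $A_t$, and then evaluate $\int_0^1\kappa_t(\alpha_t)\,d\alpha_t$ exactly in closed form, arriving at $1+(\sqrt{ab})^{-1}\log(\sqrt{a}+\sqrt{b})$ with $a=\varphi+t-1$, $b=\varphi+t$. The only difference is cosmetic: the paper inverts $z=\kappa_t(\alpha)$ and integrates by parts, whereas you substitute $\alpha=x^2$ and use the standard antiderivative of $\sqrt{ax^2+1}$ together with the identity $b-a=1$; both are routine evaluations of the same elementary integral.
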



In the empirical illustration, we show that the beta hyper--parameters' choice can affect the hypothesis testing outcome, which calls for calibrated BFs. We assume $\boldsymbol{Y}_t$ in the predictive BF is not observed at time $t$. Thus, the predictive BF is random, and we prove that its distribution is a mixture of gamma distributions. Controlling for the test's size and power while using the BF distribution allows us to derive a calibrated BF and a suitable critical region for the test.

\begin{proposition}[Distribution of the BF, known $\boldsymbol{V}$]\label{DistribBFmatrix} 
Assume $\boldsymbol{Y}_t|\mathcal{D}_{t-1}\sim\mathcal{N}_{p,n}(\tilde{\boldsymbol{M}},\tilde{\boldsymbol{\Sigma}},\boldsymbol{V})$, let $\lambda_1$, $\ldots,\lambda_p$ be the eigenvalues of the covariance matrix $(\boldsymbol{\Sigma}_{H}^{-1/2})'\tilde{\boldsymbol{\Sigma}}\boldsymbol{\Sigma}_{H}^{-1/2}$ with $\boldsymbol{\Sigma}_H= (1 - \alpha_{t})^{-1}\left( \alpha_{t}\tilde{\boldsymbol{\Sigma}} + \boldsymbol{\Sigma}_{*} \right)\boldsymbol{\Sigma}_{*}^{- 1}\left( \tilde{\boldsymbol{\Sigma}} + \boldsymbol{\Sigma}_{*} \right)$ and $Q$ the associated eigenvector matrix. Under Assumptions \ref{ass1} and \ref{ass2}, the distribution of the BF $H_t$ defined in Eq. \ref{BF_matrix} has pdf and cdf:
\begin{eqnarray}
f_{H_{t}}(h|\alpha_t) &=&\frac{2}{h}\sum_{k=0}^{\infty}c_k g(- 2\log\left( h/\kappa_{t} \right);\frac{np}{2}+k,2\lambda),\\
F_{H_{t}}(h|\alpha_t) &=&\sum_{k=0}^{\infty}c_k (1-G(- 2\log\left( h/\kappa_{t} \right);\frac{np}{2}+k,2\lambda)),
\end{eqnarray}
respectively, with support $0<h\leq\kappa_{t}$, where $\kappa_t(\alpha_t)$ is defined in Prop. \ref{prop:BFmatrix}, the parameter $0<\lambda<\infty$ is arbitrarily chosen, $g(x;a,b)$ and $G(x;a,b)$ are the pdf and cdf of a gamma distribution with shape and scale parameters $a>0$ and $b>0$, respectively, and the coefficients $c_k$, $d_k$ and $f_k$ satisfy
\begin{eqnarray*}
    c_{k}&=&\exp(-\sum_{j=1}^p U_{jj})\prod_{j=1}^{p}
(\lambda_j/\lambda)^{-n/2}f_k,\quad
    d_k=\frac{n}{2k}\sum_{j=1}^{p}(1-\lambda/\lambda_j)^{k} + \lambda \sum_{j=1}^{p}\frac{U_{jj}}{\lambda_j}(1-\lambda/\lambda_j)^{k-1},\\
    f_{k+1}&=&\frac{1}{k+1}\sum_{j=1}^{k+1}j d_j f_{k+1-j}, k=0,1,2,\ldots,\quad \text{with}\quad  f_0=1,
\end{eqnarray*}
where $U_{jj}$ are the $(j,j)$-th entry of $\boldsymbol{U}=\boldsymbol{Q}'(\boldsymbol{\Sigma}_{H}^{1/2}\tilde{\boldsymbol{\Sigma}}^{-1}(\tilde{\boldsymbol{M}}-M_{\ast})\boldsymbol{V}^{-1})(\tilde{\boldsymbol{M}}-\boldsymbol{M}_{\ast})'\boldsymbol{\Sigma}^{-1/2}_{H}\boldsymbol{Q}$.
\end{proposition}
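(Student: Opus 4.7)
The plan is to reduce the distribution of $H_t$ to that of a known quadratic form in a Gaussian vector, then appeal to a gamma-mixture representation for weighted sums of non-central chi--squared variables. First, I use Prop.~\ref{prop:posteriorBF} to write $H_t(\alpha_t)=\kappa_t(\alpha_t)\exp(-Z/2)$ with
\begin{equation*}
Z=\mathrm{tr}\bigl[(\boldsymbol{\Sigma}_d^{-1}-\boldsymbol{\Sigma}_{A,d}^{-1})(\boldsymbol{Y}_t-\boldsymbol{M}_*)\boldsymbol{V}^{-1}(\boldsymbol{Y}_t-\boldsymbol{M}_*)'\bigr].
\end{equation*}
Applying the identity $\mathrm{tr}(PAQA')=\mathrm{vec}(A)'(Q\otimes P)\mathrm{vec}(A)$ together with the identification $\boldsymbol{\Sigma}_d^{-1}-\boldsymbol{\Sigma}_{A,d}^{-1}=\boldsymbol{\Sigma}_H^{-1}$ already established for the critical region in Eq.~\eqref{thresMult}, I obtain $Z=\mathrm{vec}(\boldsymbol{Y}_t-\boldsymbol{M}_*)'(\boldsymbol{V}\otimes\boldsymbol{\Sigma}_H)^{-1}\mathrm{vec}(\boldsymbol{Y}_t-\boldsymbol{M}_*)$, a quadratic form in a multivariate Gaussian vector.

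Second, I diagonalize. Under the working hypothesis $\boldsymbol{Y}_t|\mathcal{D}_{t-1}\sim\mathcal{MN}_{p,n}(\tilde{\boldsymbol{M}},\tilde{\boldsymbol{\Sigma}},\boldsymbol{V})$, the vector $\mathrm{vec}(\boldsymbol{Y}_t-\boldsymbol{M}_*)$ is Gaussian with mean $\mathrm{vec}(\tilde{\boldsymbol{M}}-\boldsymbol{M}_*)$ and covariance $\boldsymbol{V}\otimes\tilde{\boldsymbol{\Sigma}}$. Applying the whitening $w=(\boldsymbol{V}^{-1/2}\otimes\boldsymbol{\Sigma}_H^{-1/2})\mathrm{vec}(\boldsymbol{Y}_t-\boldsymbol{M}_*)$ makes $w$ Gaussian with covariance $\boldsymbol{I}_n\otimes\boldsymbol{R}$, where $\boldsymbol{R}=(\boldsymbol{\Sigma}_H^{-1/2})'\tilde{\boldsymbol{\Sigma}}\boldsymbol{\Sigma}_H^{-1/2}$, and $Z=\|w\|^2$. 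Diagonalizing $\boldsymbol{R}=\boldsymbol{Q}\boldsymbol{\Lambda}\boldsymbol{Q}'$ and rotating via $\boldsymbol{I}_n\otimes\boldsymbol{Q}$, the quadratic form decouples into $p$ independent weighted non-central chi--squared blocks, each with $n$ degrees of freedom. This gives the equality in law
\begin{equation*}
Z\;\stackrel{d}{=}\;\sum_{j=1}^{p}\lambda_j\,\chi_n^2(\delta_j^2),
\end{equation*}
where the non-centralities $\delta_j^2$ are the diagonal entries $U_{jj}$ of the rotated non-centrality matrix $\boldsymbol{U}$, as stated.

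Third, I invoke the Moschopoulos/Kotz series representation for a weighted sum of independent non-central $\chi^2$ variables: for any reference scale $\lambda>0$ (chosen small enough to guarantee absolute convergence, typically $\lambda\le\min_j\lambda_j$), the density of $Z$ admits the gamma--mixture expansion $f_Z(z)=\sum_{k\ge 0}c_k\,g(z;\,np/2+k,\,2\lambda)$ with the prefactor $c_0=\exp\{-\tfrac12\sum_j U_{jj}\}\prod_j(\lambda_j/\lambda)^{-n/2}$ and the weights $c_k$ generated by the recursion through $d_k$ and $f_k$ in the statement; this recursion is the standard Kotz--Johnson--Boyd/Moschopoulos identity obtained by matching the log--MGF series coefficients of $Z$ to those of the mixture. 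Integration termwise yields the corresponding mixture of incomplete gamma functions for $F_Z$.

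Finally, I transform back. Since $H_t=\kappa_t(\alpha_t)\exp(-Z/2)$ is a strictly decreasing function of $Z$ with Jacobian $|dz/dh|=2/h$ and range $(0,\kappa_t]$, substituting $z=-2\log(h/\kappa_t)$ in $f_Z$ delivers the stated pdf, while $F_{H_t}(h)=\mathbb{P}(Z\ge -2\log(h/\kappa_t))$ yields the cdf in terms of the complementary gamma cdfs. The main obstacle is the third step: while the representation of $Z$ as a weighted non-central chi--squared sum is routine, verifying that the recursion for $\{c_k,d_k,f_k\}$ in the statement is exactly the Kotz--Moschopoulos one, and that it matches the particular normalization of $U$ obtained from the $\boldsymbol{Q}$-basis used here, requires careful bookkeeping; convergence of the series (and its termwise inversion) further rests on the restriction $\lambda\le\min_j\lambda_j$, which I would state explicitly when deriving the coefficients.
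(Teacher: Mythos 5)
Your proof is correct and follows essentially the same route as the paper's: express $H_t$ as $\kappa_t\exp(-Z/2)$, reduce $Z$ to a weighted sum $\sum_{j=1}^p\lambda_j\chi^2_n(U_{jj})$ of non-central chi-squares (the paper phrases this via the trace of a non-central Wishart for $\boldsymbol{W}\boldsymbol{W}'$ rather than your Kronecker-product vectorization, but the decomposition and eigenvalues are identical), invoke the classical gamma-series expansion for such sums (the paper cites Kourouklis--Moulton where you cite Moschopoulos/Kotz, which give the same recursion), and transform back with Jacobian $2/h$. Your explicit remark that convergence of the series requires $\lambda\le\min_j\lambda_j$ is a point of care the paper glosses over by calling $\lambda$ ``arbitrarily chosen.''
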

If in the previous proposition, we set $\tilde{\boldsymbol{M}}= \boldsymbol{M}_{*}$ and $\tilde{\boldsymbol{\Sigma}}= \boldsymbol{\Sigma}_{d}$, that is the location and scale of the marginal likelihood under the null (from i) in Prop. \ref{prop:posteriorBF}) we denote the distribution with $F_{0,t}(h)$. In contrast, if $\tilde{\boldsymbol{M}}= \boldsymbol{M}_{*}$ and $\tilde{\boldsymbol{\Sigma}}= \boldsymbol{\Sigma}_{A,d}$, that is the location and scale of the marginal likelihood under the alternative (from i) in Prop. \ref{prop:posteriorBF}), we denote the distribution with $F_{1,t}(h)$. 

From the properties of the gamma distribution, $f_{H_{t}}(h|\alpha_t)$ is continuous at $0$ and $\kappa_t$. As stated in the following, the result of Prop. \ref{DistribBFmatrix} provides the BF distribution for the univariate Gaussian model given in previous studies \citep[e.g., see][]{weiss1997bayesian,DESANTIS2004121,pawel2025closed}.
\begin{corollary}[Univariate Gaussian model]\label{CorUnivBFDistr}
In the univariate case, i.e. $n=p=1$, following the notation in Remark \ref{ex3}, we set $\boldsymbol{\Sigma}=\sigma^2$, $\boldsymbol{V}=1$ and $\boldsymbol{B}=\theta$. Assuming $\tilde{\boldsymbol{M}}=\boldsymbol{B}$ and $\tilde{\boldsymbol{\Sigma}}=\boldsymbol{\Sigma}_L$ in Prop. \ref{DistribBFmatrix} the distribution of the BF is
\begin{equation}
F_{H_{t}}\left( h|\alpha_{t} \right) =1-\Phi(\sqrt{-2\log(h/\kappa_t)\sigma^2_H/\sigma^2_L}-\sqrt{\gamma})+\Phi(-\sqrt{-2\log(h/\kappa_t)\sigma^2_H/\sigma^2_L}-\sqrt{\gamma}),    
\end{equation}
where $\gamma=(\theta-m_{\ast})^2/\sigma^2$, the variances $\sigma^2_H=\boldsymbol{\Sigma}_H$ and $\sigma^2_L=\boldsymbol{\Sigma}_L$ are given in Prop. \ref{prop:posteriorBF}, the bound $\kappa_t$ is given in Remark \ref{ex3} and $\Phi(\cdot)$ is the cdf of a standard normal distribution.
\end{corollary}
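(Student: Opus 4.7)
The plan is to invert the explicit univariate form of $H_t$ and reduce the CDF to a two-tailed Gaussian probability, which is shorter than specializing the mixture-of-gammas series in Proposition \ref{DistribBFmatrix}.

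First, I would specialize Proposition \ref{prop:posteriorBF}(i) to $n=p=1$ with $\boldsymbol{V}=1$, $\boldsymbol{\Sigma}_L=\sigma^2$, and posterior variance $\sigma_\ast^2=\sigma^2/(\varphi+t-1)$. A short calculation of $\Sigma_d=\sigma^2+\sigma_\ast^2$ and $\Sigma_{A,d}=\sigma^2+\sigma_\ast^2/\alpha_t$ gives
\[
H_t(\alpha_t)=\kappa_t(\alpha_t)\,\exp\!\left\{(Y_t-m_\ast)^2/A_t\right\},
\]
with $\kappa_t=\sqrt{\Sigma_{A,d}/\Sigma_d}$ and $A_t<0$ for all $\alpha_t\in(0,1)$, recovering the expression from Remark \ref{ex3}.

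Second, I would verify the algebraic identity $-A_t=2\sigma_H^2$, where $\sigma_H^2$ is the scalar reduction of $\boldsymbol{\Sigma}_H$ with $\tilde{\boldsymbol{\Sigma}}=\boldsymbol{\Sigma}_L$, namely $\sigma_H^2=(1-\alpha_t)^{-1}(\alpha_t\sigma^2+\sigma_\ast^2)\sigma_\ast^{-2}(\sigma^2+\sigma_\ast^2)$. This is a direct check obtained by writing $\Sigma_d^{-1}-\Sigma_{A,d}^{-1}$ as a single fraction, and it is the main technical step of the proof. Using this identity, and noting that $A_t<0$ flips the inequality, the event $\{H_t\le h\}$ is equivalent, for $h\le \kappa_t$, to $\{(Y_t-m_\ast)^2\ge -2\sigma_H^2\log(h/\kappa_t)\}$.

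Third, I would combine this with the hypothesis $Y_t\mid\mathcal{D}_{t-1}\sim\mathcal{N}(\theta,\sigma^2)$ obtained by setting $\tilde{\boldsymbol{M}}=\theta$ and $\tilde{\boldsymbol{\Sigma}}=\sigma^2$ in Proposition \ref{DistribBFmatrix}. Standardizing $Z=(Y_t-m_\ast)/\sigma\sim\mathcal{N}(\mu,1)$ with $\mu=(\theta-m_\ast)/\sigma$, the event becomes $\{|Z|\ge s\}$ with $s=\sqrt{-2\log(h/\kappa_t)\sigma_H^2/\sigma_L^2}$, and
\[
\mathbb{P}(|Z|\ge s)=\mathbb{P}(Z\ge s)+\mathbb{P}(Z\le -s)=1-\Phi(s-\mu)+\Phi(-s-\mu).
\]
Since this quantity depends on $\mu$ only through $\mu^2=\gamma$, one may replace $\mu$ by $\sqrt{\gamma}$ without loss of generality, producing the stated expression for $F_{H_t}(h\mid\alpha_t)$. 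Once the identity $-A_t=2\sigma_H^2$ is in hand, the remainder is a routine event inversion and a standard Gaussian tail calculation, and the infinite-series form of Proposition \ref{DistribBFmatrix} is bypassed entirely.
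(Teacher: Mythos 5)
Your proposal is correct and arrives at the stated formula, but it takes a somewhat different route through the middle of the argument than the paper does. The paper's proof specializes the representation constructed in the proof of Proposition \ref{DistribBFmatrix}: in one dimension the trace $r=\mathrm{tr}(\boldsymbol{W}\boldsymbol{W}')$ collapses to a single scaled noncentral chi-square, $r=\lambda_1 r_1$ with $\lambda_1=\sigma_L^2/\sigma_H^2$ and $r_1\sim\chi^2_1(\gamma)$, and the paper then invokes the closed-form density $f_{1,\gamma}(z)=e^{-(z+\gamma)/2}(2\pi z)^{-1/2}\cosh(\sqrt{\gamma z})$, integrating it by substitution and splitting the $\cosh$ into two exponentials to produce the two $\Phi$ terms. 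You instead invert the event $\{H_t\le h\}$ on the scalar BF directly and compute $\mathbb{P}(|Z|\ge s)$ for a unit-variance Gaussian; this is the same two-tail computation carried out without ever naming the noncentral chi-square, and it correctly bypasses the mixture-of-gammas series (which, to be fair, the paper's proof also does not use). What your version buys is self-containedness: only the scalar BF expression and the normal cdf are needed, with no appeal to an external density formula. What the paper's version buys is that the corollary visibly runs through the same pipeline as the general matrix result, confirming consistency with Proposition \ref{DistribBFmatrix}. Two remarks on your steps: the identity you single out as the main technical point, $-A_t=2\sigma_H^2$, is already implicit in the proof of Proposition \ref{prop:BFmatrix}(i), where the exponent of the BF is written as $-\tfrac{1}{2}\mathrm{tr}\left[\boldsymbol{\Sigma}_H^{-1}(\boldsymbol{Y}_t-\boldsymbol{M}_*)\boldsymbol{V}^{-1}(\boldsymbol{Y}_t-\boldsymbol{M}_*)'\right]$, so you can cite that rather than re-derive it (note that it holds for the $A_t$ you compute from $\boldsymbol{\Sigma}_d^{-1}-\boldsymbol{\Sigma}_{A,d}^{-1}$, with $\sigma^2$ rather than $\sigma_*^2$ in the numerator); and your observation that $1-\Phi(s-\mu)+\Phi(-s-\mu)$ depends on $\mu$ only through $|\mu|$ is both necessary and correct (it follows from $\Phi(-x)=1-\Phi(x)$), so replacing $\mu$ by $\sqrt{\gamma}$ is legitimate.
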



Our perturbation framework for matrix--variate observations is a form of global regularization that affects all entries within the matrix. Later in this paper, we will explore the sensitivity of the testing procedure to variations in the patterns, proportions, and magnitude of outliers. As an extension, local contamination frameworks can be devised to detect patterns within the outliers and potential dependencies among outlying observations. In the context of matrices, a multiplicative perturbation can be employed to define $p_A(\boldsymbol{Y}_t|\mathcal{D}_t)$, assuming, for instance, the perturbed normal distribution $\mathcal{N}_{p,n}(M_{\ast}, \boldsymbol{A}_1\boldsymbol{\Sigma}_d \boldsymbol{A}_1, \boldsymbol{A}_2 \boldsymbol{V} \boldsymbol{A}_2)$, where $\boldsymbol{A}_1$ and $\boldsymbol{A}_2$ are two diagonal matrices scaling along the rows and columns with varying levels. An additive perturbation would lead to a distribution like $\mathcal{N}_{p,n}(\boldsymbol{M}_{\ast},\boldsymbol{\Sigma}_d+\boldsymbol{A}_1,\boldsymbol{V}+\boldsymbol{A}_2)$, with  $\boldsymbol{A}_1$ and $\boldsymbol{A}_2$ shifting the posterior's covariance matrix diagonal. On the one hand, local contamination allows for patterns among outliers; however, it requires the identification of these patterns, which in turn calls for inference on the contamination parameters. Therefore, alternative approaches, such as multiple-outlier models presented in \cite{page2011bayesian,tomarchio2022mixtures}, might be preferable. This typically involves introducing appropriate prior specifications and has the drawback of requiring computationally intensive procedures for posterior approximation. We will postpone this extension to future research.

\begin{figure}[t]
    \centering
    \setlength{\tabcolsep}{2pt}
    \renewcommand{\arraystretch}{1.2}
    \begin{tabular}{ccc}
    \includegraphics[scale=0.3]{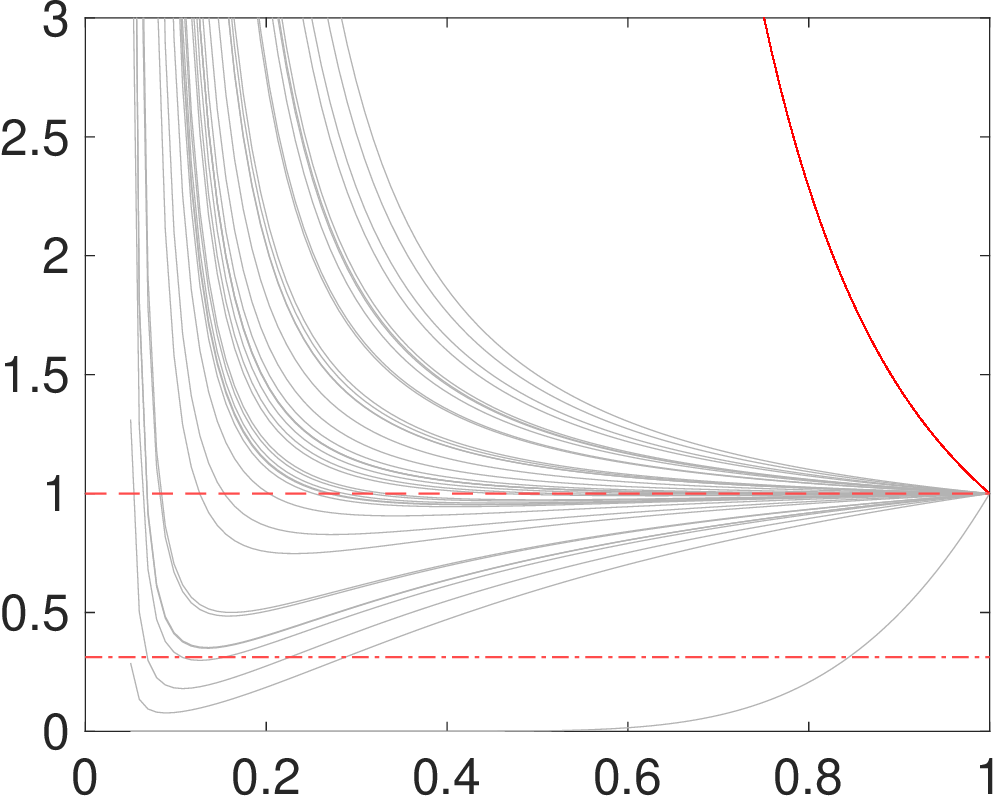}&
    \includegraphics[scale=0.3]{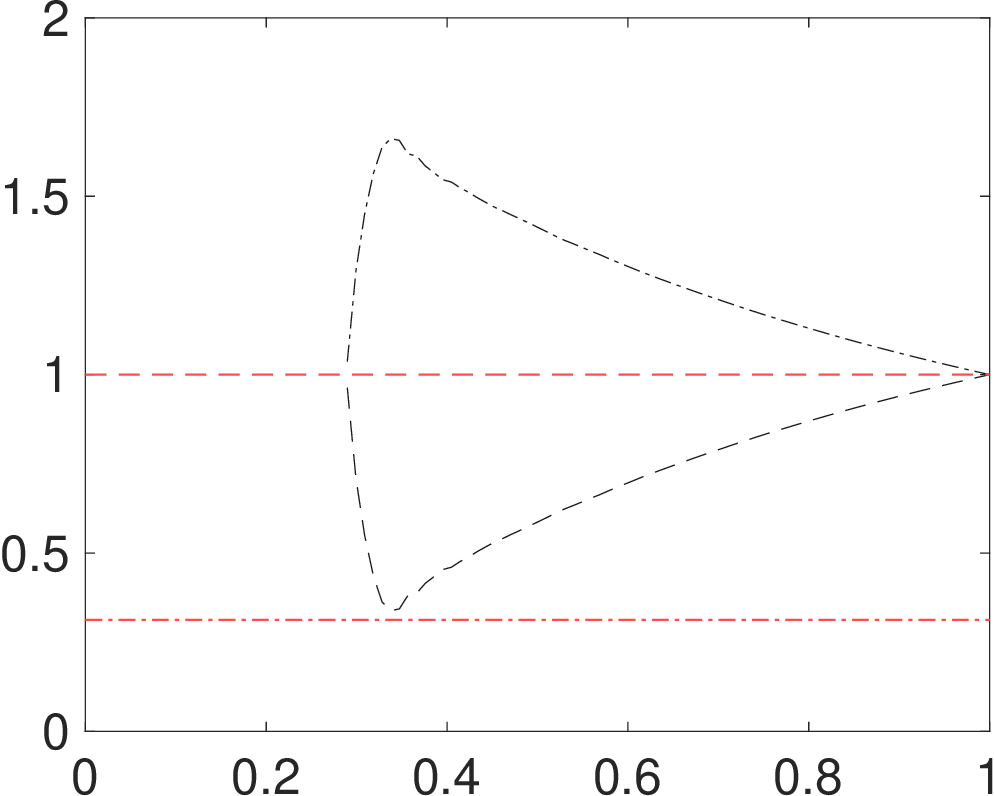}&        \includegraphics[scale=0.3]{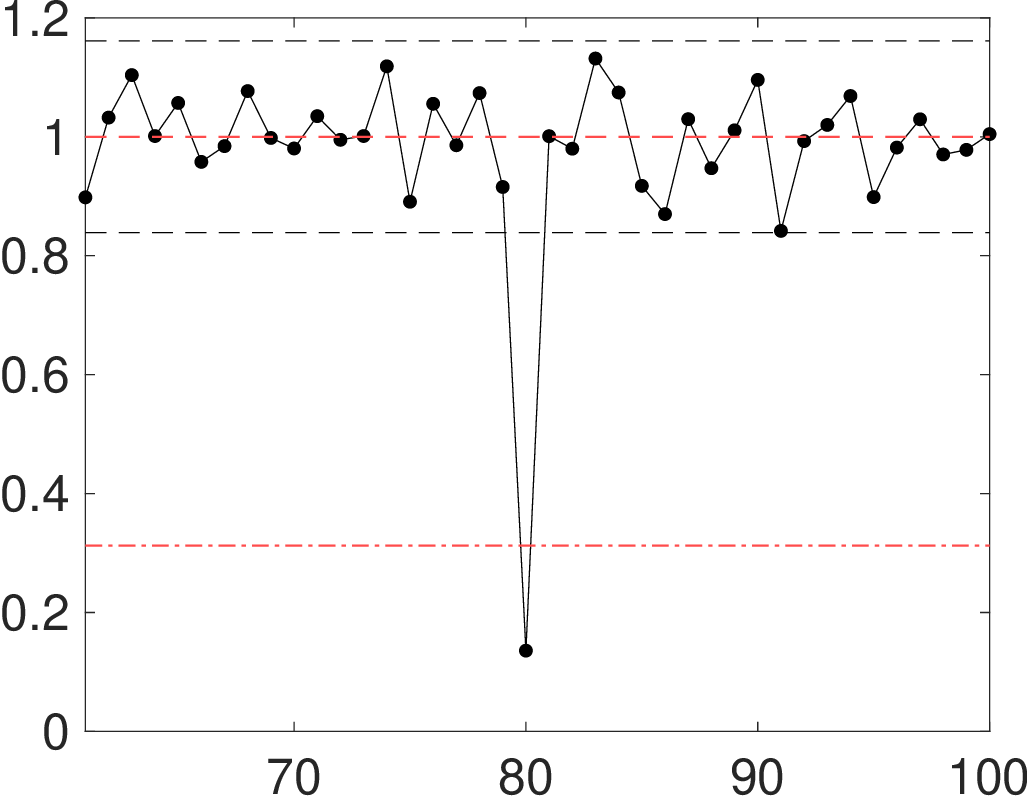}\\
    \includegraphics[scale=0.3]{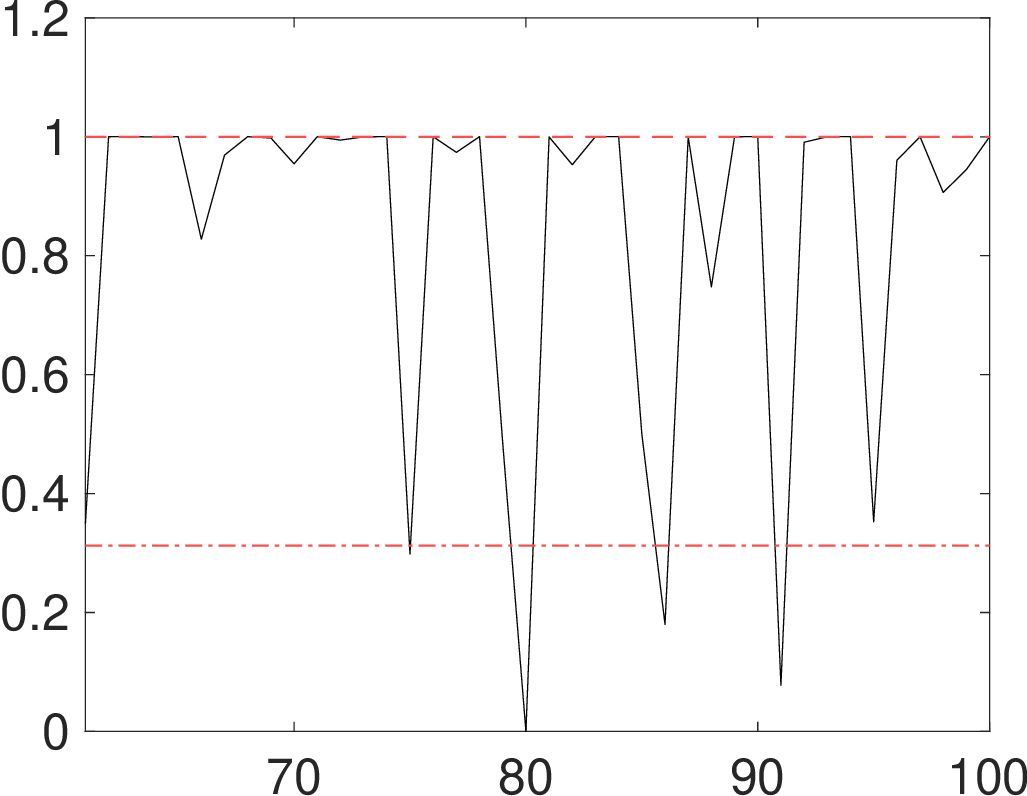}&
    \includegraphics[scale=0.3]{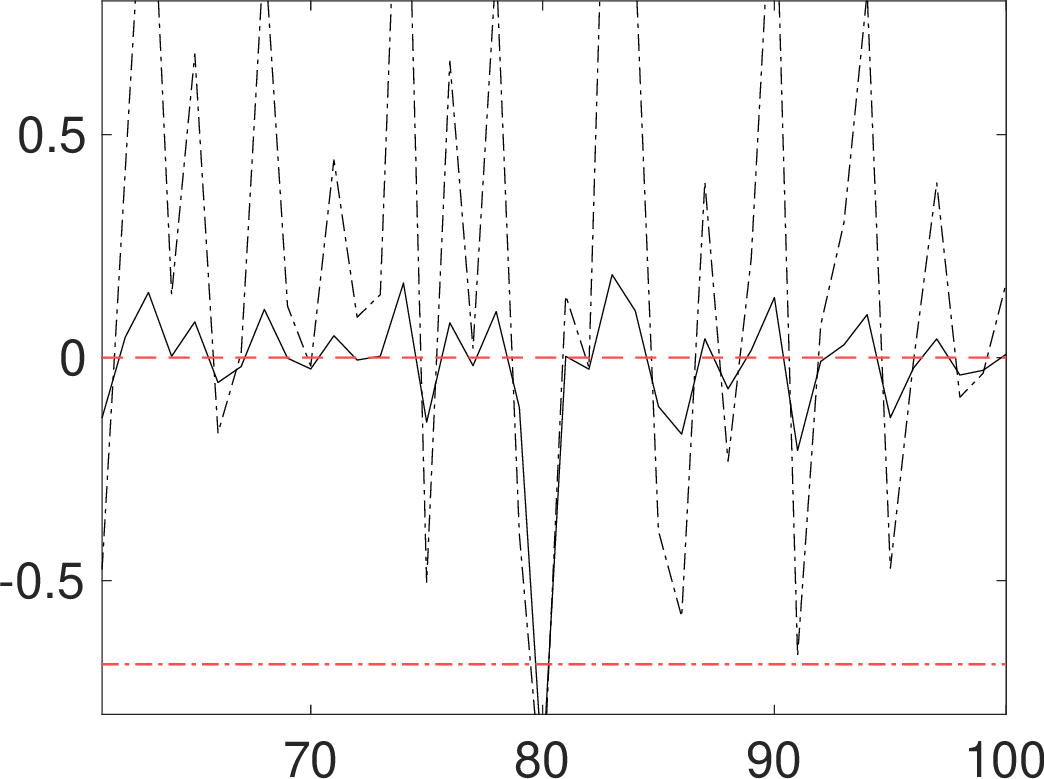}&
    \includegraphics[scale=0.3]{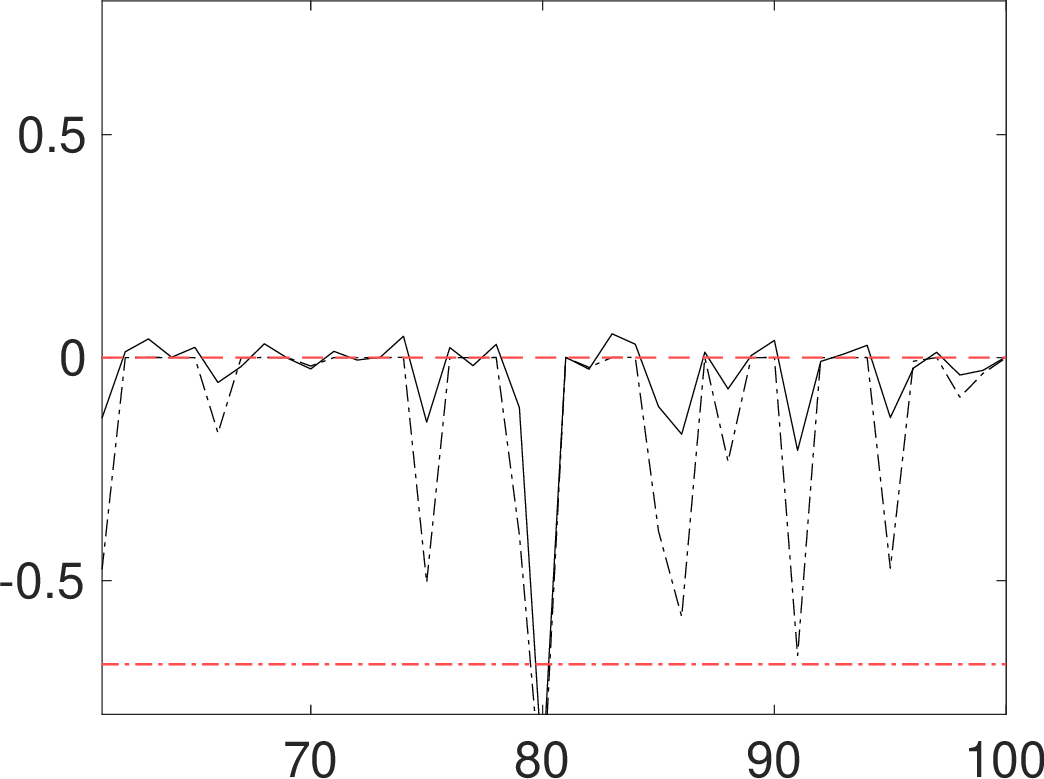}\\
\end{tabular}
        \caption{Simulated dataset. Top: the BFs $H_t(\alpha_t)$ and their upper bounds $\kappa_t(\alpha_t)$ as a function of $\alpha_t$ for different samples (left, {\color{gray}\protect\tikz[baseline]{\protect\draw[line width=0.2mm] (0,.6ex)--++(0.5,0) ;}}, {\color{red}\protect\tikz[baseline]{\protect\draw[line width=0.2mm] (0,.6ex)--++(0.5,0) ;}}); the BF lower and upper thresholds $\underline{h}(\alpha_t)$ and $\bar{h}(\alpha_t)$ (middle, {\color{red}\protect\tikz[baseline]{\protect\draw[line width=0.2mm, black] (0,.6ex)--++(0.5,0) ;}}) as a function of $\alpha_t$ and the BFs for the calibrated discount factor (right, $\bullet$) together with the calibrated lower and upper bounds $\underline{h}(\alpha^*)$ and $\bar{h}(\alpha^*)$ ({\color{black}\protect\tikz[baseline]{\protect\draw[line width=0.2mm, dashed] (0,.6ex)--++(0.5,0) ;}}). Bottom: Minimum BF (left), Integrated BF (middle) and Normalized Integrated BF (right) for different beta distribution settings ({\color{black}\protect\tikz[baseline]{\protect\draw[line width=0.2mm] (0,.6ex)--++(0.5,0) ;}},{\color{black}\protect\tikz[baseline]{\protect\draw[line width=0.2mm, dashed] (0,.6ex)--++(0.5,0) ;}}). In all plots, the reference lines at 1 ({\color{red}\protect\tikz[baseline]{\protect\draw[line width=0.2mm, dashed] (0,.6ex)--++(0.5,0) ;}}) and $10^{-1/2}$ ({\color{red}\protect\tikz[baseline]{\protect\draw[line width=0.2mm, dashdotted] (0,.6ex)--++(0.5,0) ;}}).}
    \label{emp1:figBFsimulation}
\end{figure}

\subsection{Simulation study}
In the following simulation exercise, we generate data with outliers and study some finite-sample properties of the decision procedures based on standard BF, MBF, IBF and NIBF, and the calibrated predictive BF approach we introduced. All analyses were implemented in Matlab 2023a and carried out on a 12-core computing system with 128GB RAM. The parameter values have been randomly generated as follows: 
$\boldsymbol{M}\sim 
\mathcal{N}_{p,n}(\boldsymbol{O}_{p\times n}, \boldsymbol{I}_{p},\boldsymbol{I}_{n})$, $\boldsymbol{\Sigma}=\boldsymbol{SS}'$, $\boldsymbol{S}\sim\mathcal{N}_{p,p}(\boldsymbol{O}_{p\times p}, \boldsymbol{I}_{p},\boldsymbol{I}_{p})$, $\boldsymbol{\Psi}=\boldsymbol{GG}'$, $G\sim\mathcal{N}_{n,n}(\boldsymbol{O}_{n\times n}, \boldsymbol{I}_{n}, \boldsymbol{I}_{n})$. We consider two scenarios: the case of moderate-size observation matrix, denoted as $Case_1$, with $p = 30$ and $n = 10$, and the case of a large-size observation matrix, $Case_2$, with  $p = 50$ and $n = 50$. We generate the synthetic data with an outlier as follows. First a noise sequence is generated, $\boldsymbol{E}_{t}\sim\mathcal{N}_{p,n}(\boldsymbol{O}_{p\times p},\boldsymbol{\Sigma},\boldsymbol{\Psi})$ i.i.d. $t=1,\ldots,100$. Secondly, the observable sequence is defined as $\boldsymbol{X}_{t}=\boldsymbol{M}+\boldsymbol{E}_t$ for $t\neq 80$ and $\boldsymbol{X}_{t}=\boldsymbol{M}+u \boldsymbol{R}_t+\boldsymbol{E}_t$ for $t=80$, where $\boldsymbol{R}_t$ is a binary matrix encoding the position of the outliers and $u$ is 0.5, 1, 1.5, 3, 5 and 15 that are $1/30$, $1/15$, $1/10$, $1/5$, $1/3$  and $1$ average standard deviation of the matrix normal. 

Our simulation study considered various settings obtained by varying the configurations of the matrix $\boldsymbol{R}_t$ and the magnitude of the outliers. For each setting $J$ independent datasets have been generated $\boldsymbol{X}_{t}^{(j)}$, $t=1,\ldots,T$ for $j=1,\ldots,J$, with $T=100$ and the BFs $H_t^{(j)}$ computed. The probabilities $p_{I}=P(H_t>\bar{h})$, $p_{II}=P(H_t<\underline{h})$ and  $p_{III}=P(\bar{h}<H_t<\underline{h})$ have been estimated as follows
\begin{equation*}
    p_{I}^{J}=\frac{1}{J}\sum_{j=1}^{J} \mathbb{I}(H_t^{(n)}>\bar{h}),\,\,\,\, p_{II}^{J}=\frac{1}{J}\sum_{j=1}^{J} \mathbb{I}(\underline{h}<H_t^{(n)}<\bar{h}),\,\,\,\, p_{III}^{J}=\frac{1}{J}\sum_{j=1}^{J} \mathbb{I}(H_t^{(n)}<\underline{h})
\end{equation*}
under the null hypothesis of the absence of outliers at $t\neq 80$ and under the alternative hypothesis of a certain number of outlying observations in the observation matrix at $t=80$. A summary of the results for $Case_1$ and $Case_2$ is provided in Tab. \ref{tab:SimResults} and Tab. \ref{tab:SimResults1} of Appendix \ref{sec:Simulation_appendix} respectively. The results are obtained with $J=100$ experiments for the two outlier settings. The two settings differ for the positions of the outliers within the matrix $R_t$: i) row and column patterns in the positions, panel (a); and ii) completely random positions, panel (b). Appendix \ref{sec:BAYSFWATCH} provides reference to an R package we specifically developed for outlier detection.

Figure \ref{emp1:figBFsimulation} illustrates the result of the BF procedure on one of the simulated datasets for $u=0.5$ (top left panel a in Tab. \ref{tab:SimResults}). In the BF procedure the calibrated value of $\alpha_t$ for $\tau=0.01$ and $\beta=0.8$ is $\alpha^{*}=0.750$ for all $t$ and the inconclusive interval $C_{H}$ has lower and upper bounds $\underline{h}(\alpha^{*})=0.839$ $\bar{h}(\alpha^*)=1.161$, respectively. Starting from the left in the first row, we notice that the observations with a change in the convexity of $H_t(\alpha_t)$ are recognized as outliers based on the value of $\alpha_t$. Notably, for the outlier observation introduced in the simulation, the BF is consistently convex and remains far below 1. The second plot illustrates the behavior of the thresholds $\underline{h}(\alpha_t)$ and $\bar{h}(\alpha_t)$ defined at the end of Section \ref{sec:bayesian} for given power and size values. The length of the randomized decision interval $C_h$ (vertical axis) reduces as $\alpha_t$ goes to 1. The third plot presents the BF and the inconclusive region for $\alpha^{*}=0.750$. The BF of the outlying observation is located outside the inconclusive interval, whereas about 16 observations exhibit a BF below 1 within the inconclusive interval. Moving to the second row, the Minimum BF suggests positive evidence against the null hypothesis of the absence of outliers ($MBF<10^{-1/2}$ following Jeffrey's scale of evidence) for three observations. The IBF and NIBF (solid line in the second and third plot at the bottom) provide strong evidence against the null ($IBF$, $NIBF<10^{-1/2}-1\approx -0.687$) only for the 80th observation and barely worth mentioning evidence ($IBF$, $NIBF>10^{-1/2}-1\approx -0.687$) for the other observations. Nevertheless, comparing the solid and dashed lines shows that the outcome of the IBF and NIBF procedures strongly depends on the choice of the hyperparameters of the beta distribution. Thus, the third robust method should be applied, exploiting the BF's sampling variability to find the calibrated $\alpha_t$ and the reference thresholds. 

In all the experiments, when data are generated under the null, the type I error probability $P(H_t<\underline{h}|\mathcal{H}_0)$ is about 2\%, whereas when data are generated under the alternative hypothesis (first column), the power $P(H_t<\underline{h}|\mathcal{H}_1)$ of the test gets close to one, increasing the number of outliers (e.g., see columns of the panels (b) in Tab. \ref{tab:SimResults}). As we can expect, the convergence is faster for larger number of observations (see panels (b) in Tab. \ref{tab:SimResults1}). Also, the effective size and power may depend on the position of outliers in the rows or columns of the matrix $R_t$. The presence of patterns in the position of the outliers reduces the power compared to the case of a completely random position within the matrix $R_t$ (compare columns $20\times 10$ and $200$ in panels (a) and (b), respectively). The power decreases below 80\% for small outlier amplitude (e.g. $1/10$ standard deviation) and a small number of outlying observations (e.g. 10 out of 300 elements) within the matrix. When all entries are outliers, the power of the test converges to 1, increasing the outliers' magnitude (different rows in the last column in Tab. \ref{tab:SimResults}).

\section{Empirical Illustration}\label{sec:empirical}
We illustrate our sequential matrix outlier detection on three relevant benchmark datasets. See Appendix \ref{sec:datasets} for a detailed description of the datasets and a discussion of the outliers' dating.

\subsection{Inflation and Unemployment Dataset}

The Inflation and Unemployment Dataset (\citealp{Can09}) spans a period from January 2002 to October 2022 ($T =250$) and consists of a sequence of $p\times n$ matrices, covering $p = 11$ EU countries and $n=3$ macroeconomic variables (the Industrial Production Index, the Price Index and the Unemployment Rate,). The total number of observations is $8,250$, representing an example of big data in this field.

 In the testing procedure with BF, we consider rolling windows of $w$ observations each and assumed $\boldsymbol{\Sigma}_P=\boldsymbol{\Sigma}_L /\varphi$, with $\varphi=w$. For each window, the prior mean $\boldsymbol{M}$ is set equal to the posterior mean $M_{\ast}$ of the previous window. The variance-covariance $\boldsymbol{\Sigma}_L$ has been estimated by Least Squares.

The top plots in Figure \ref{emp1:figBFdates} display the BF $H_t(\alpha_t)$ (solid grey) and the upper bound of the BF $\kappa_t(\alpha_t)$ (solid red) as functions of $\alpha_t$ for various dates (each represented by a different line). The findings in the top-left plot suggest that it is crucial to compare the BF to the upper bound. The BF tends to hover around one for higher values of $\alpha_t$, with some exceptions, and it significantly exceeds one only for small values of $\alpha_t$ (illustrated by the dark grey lines). It frequently intersects the threshold at lower values of $\alpha_t$ (light grey) as well as for higher values. Additionally, there are instances where it crosses the threshold intermittently.

The left-bottom plot offers a different illustration of the effect of $\alpha_t$ on the outcome of the sequential outlier detection for the entire sample from September 2018 to October 2022. 
\begin{figure}[t]
    \centering
    \setlength{\tabcolsep}{2pt}
    \renewcommand{\arraystretch}{1.2}
    \begin{tabular}{ccc}
     {\footnotesize (i) Inflation and Unemployment} & {\footnotesize (ii) International Trade} &{\footnotesize  (iii) Volatility Network}\\
    \includegraphics[scale=0.29]{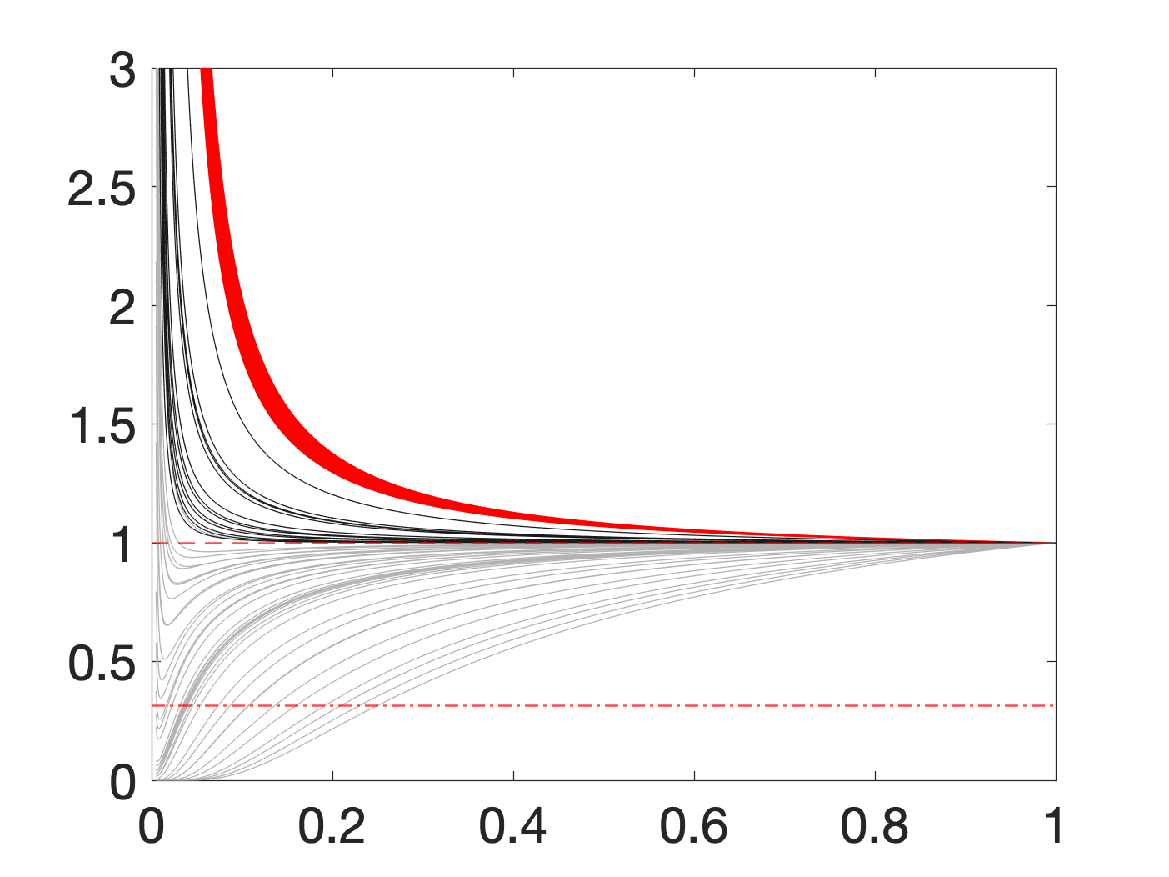}&
    \includegraphics[scale=0.30]{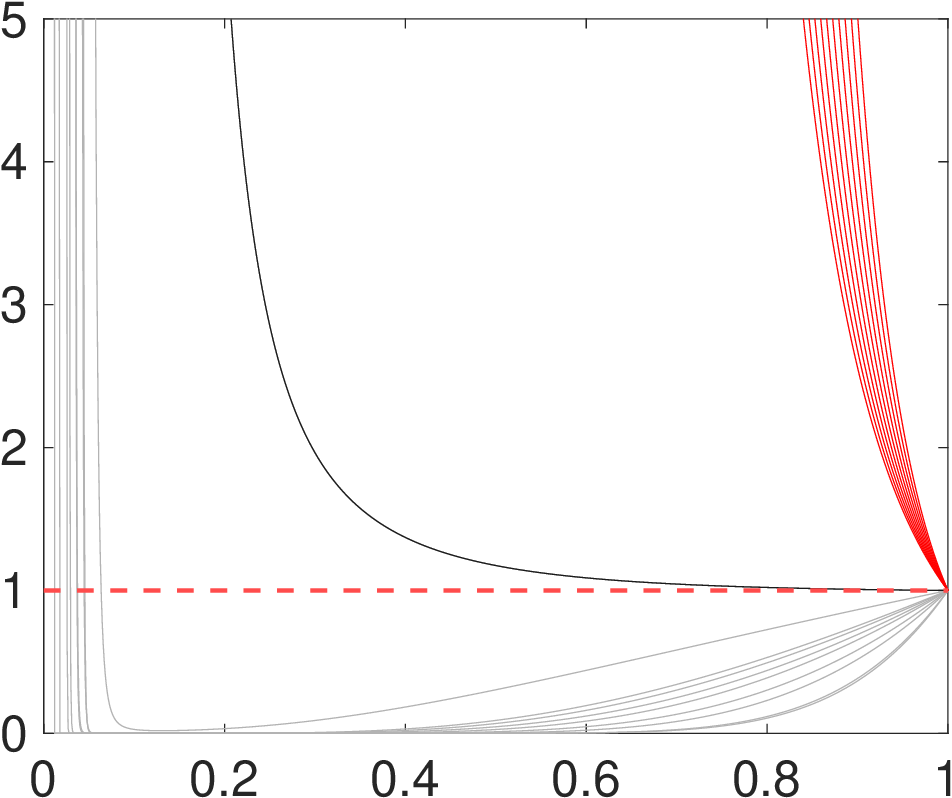}&
    \includegraphics[scale=0.29]{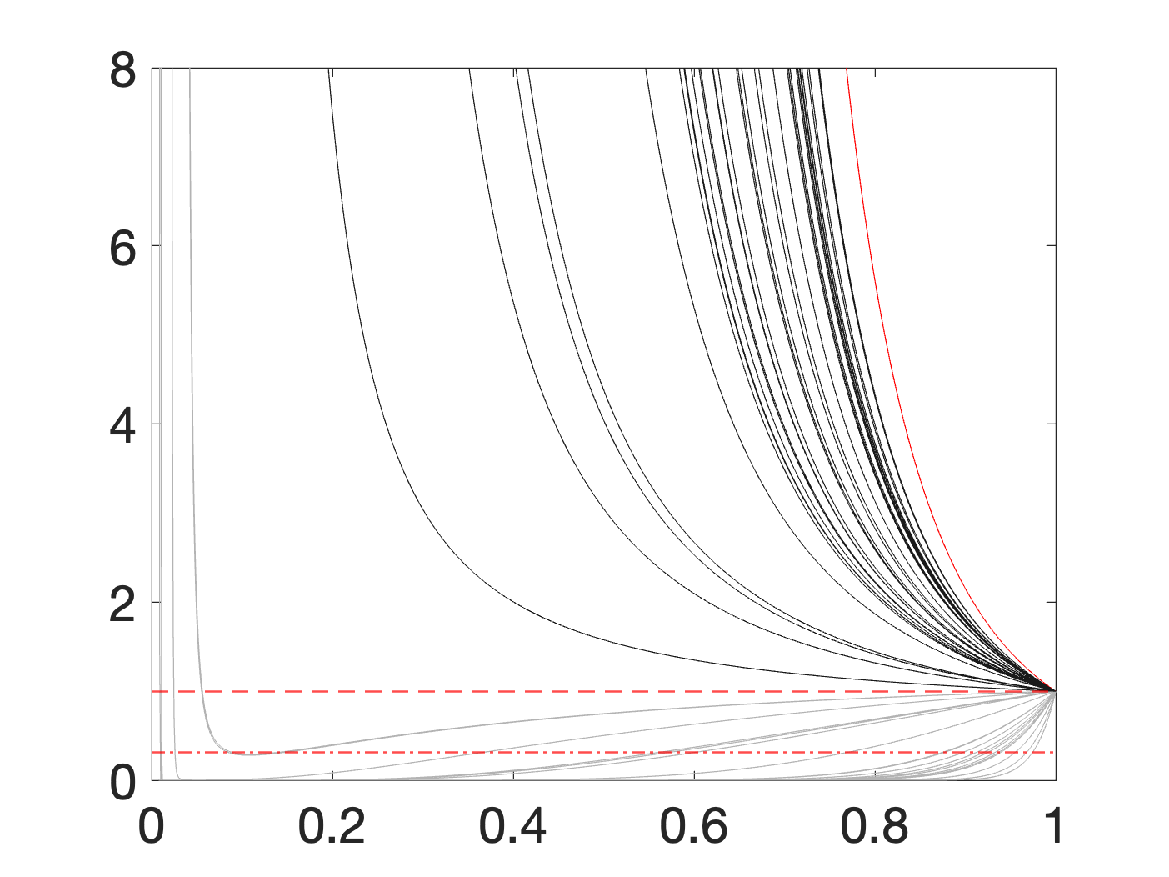}\\
    \includegraphics[scale=0.30]{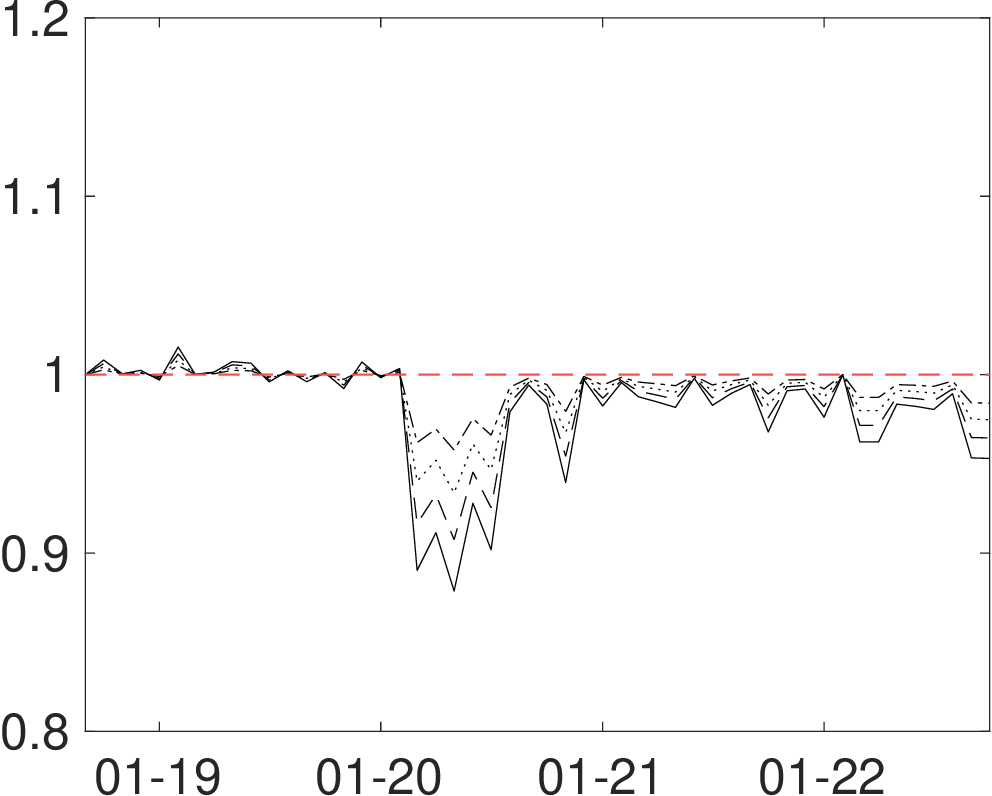}&    \includegraphics[scale=0.30]{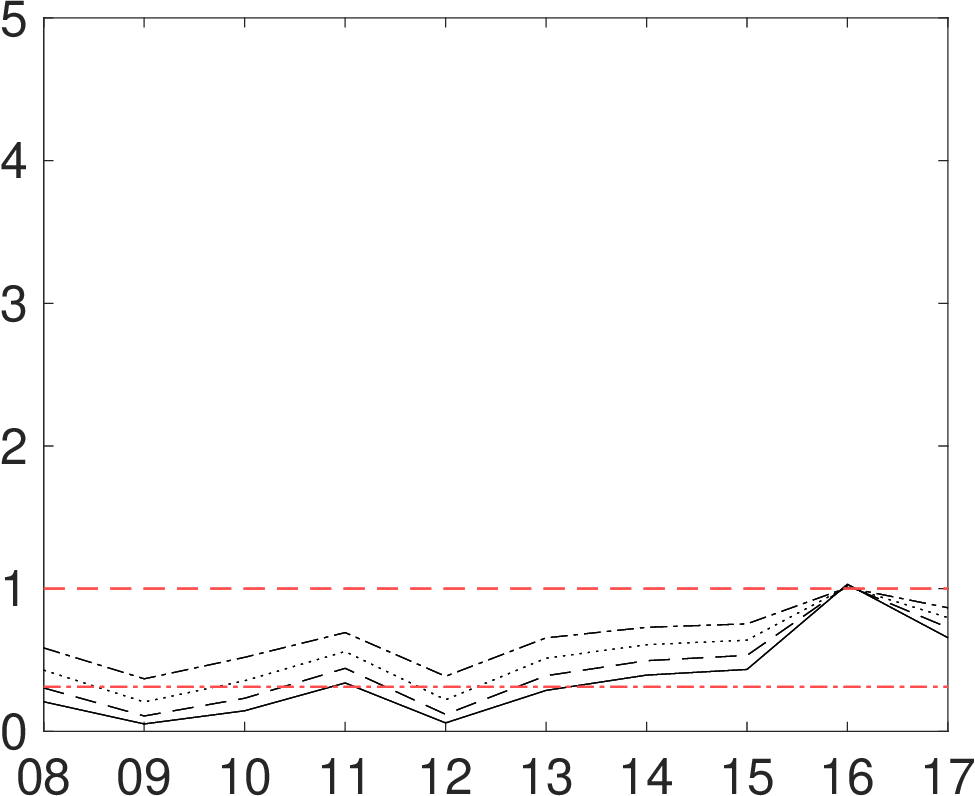}&
    \includegraphics[scale=0.30]{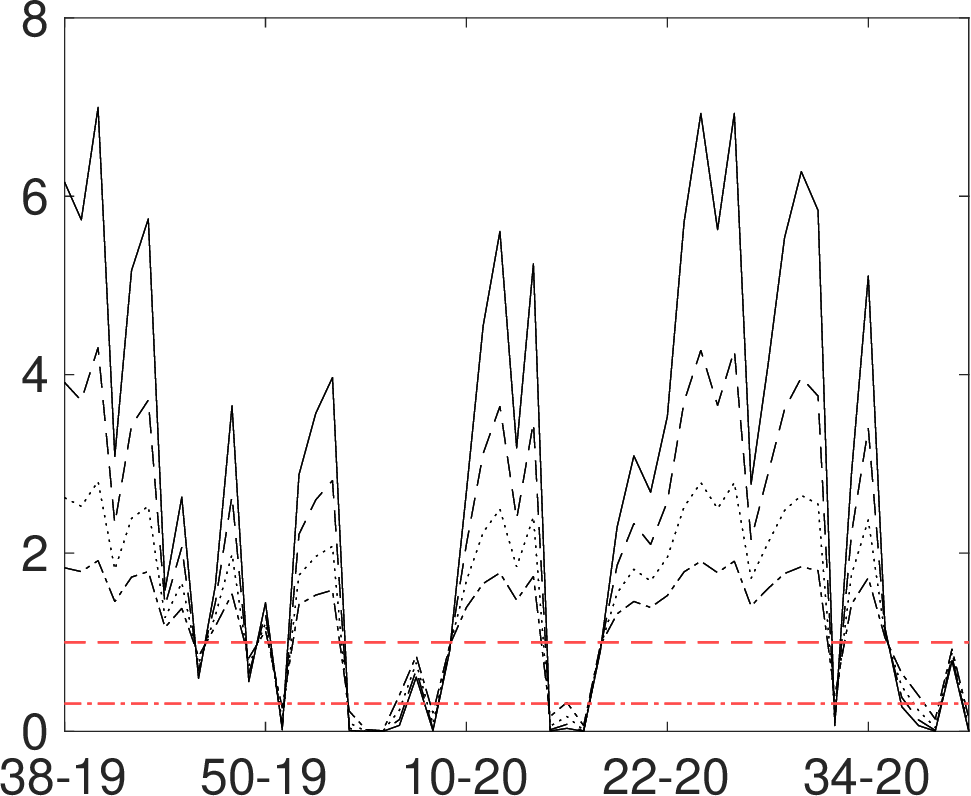}
    \end{tabular}
        \caption{BF for the three datasets: Inflation and Unemployment (left), International Trade (middle), and Volatility Network (right). Top: the BF $H_t(\alpha_t)$ and its upper bound $\kappa_t(\alpha_t)$ as functions of $\alpha_t$  at different dates ({\color{gray}\protect\tikz[baseline]{\protect\draw[line width=0.2mm] (0,.6ex)--++(0.5,0) ;}} and {\color{red}\protect\tikz[baseline]{\protect\draw[line width=0.2mm] (0,.6ex)--++(0.5,0) ;}}, respectively). Bottom: the BF over time for $\alpha_t=0.75$, $0.80$, $0.85$ and $0.90$ ({\color{black}\protect\tikz[baseline]{\protect\draw[line width=0.2mm] (0,.6ex)--++(0.5,0) ;}}, {\color{black}\protect\tikz[baseline]{\protect\draw[line width=0.2mm, dashed] (0,.6ex)--++(0.5,0) ;}}, {\color{black}\protect\tikz[baseline]{\protect\draw[line width=0.2mm, dotted] (0,.6ex)--++(0.5,0) ;}} and {\color{black}\protect\tikz[baseline]{\protect\draw[line width=0.2mm, dashdotted] (0,.6ex)--++(0.5,0) ;}}, respectively). In all plots, the reference lines at 1 ({\color{red}\protect\tikz[baseline]{\protect\draw[line width=0.2mm, dashed] (0,.6ex)--++(0.5,0) ;}}) and $10^{-1/2}$ ({\color{red}\protect\tikz[baseline]{\protect\draw[line width=0.2mm, dashdotted] (0,.6ex)--++(0.5,0) ;}}).}
    \label{emp1:figBFdates}
\end{figure}

In all settings, an outlier was detected in March 2020 (i.e. at the pandemic outbreak). For $\alpha_t=0.054$ (solid line), a sequence of outliers is detected before February 2020 with BF far from 1, whereas, for $\alpha_t$ equal to 0.402, 0.801 and 0.851 (dashed, dotted and dashed-dotted, respectively), the BF is close to one before March. Qualitatively speaking, after the outbreak, for $\alpha_t=0.402$, the BF is close to one after December 2021, whereas larger $\alpha_t$ values return BF close to just one after March 2020. 
For illustrative purposes, we present in detail the results for some relevant dates (Figure \ref{emp1:figBFdates} in the Appendix). For the observations in February 2020, during the pandemic outbreak, the null hypothesis of the absence of an outlier is not rejected for any choice of $\alpha_t$. Nevertheless, the BF is close to one for large values of $\alpha_t$. In contrast, in March 2020, there is strong evidence of an outlier. On the other dates in the figure, the alternative hypothesis is accepted for some values of $\alpha_t$. The dashed vertical lines indicate the stationary point. The stationary point is not near zero on some dates, such as February 2022. On other dates, the stationary point is near zero, and the BF is far below one for a large part of the $\alpha_t$ values (e.g., March 2020, March 2022 and October 2022). The top-left plot in Figure \ref{emp1:figBFdatesIntegratedBF} shows the results of the alternative procedures. 

The MBF and IBFs provide clearer identification of outlying observations, enhancing the standard BF and supporting the outcome of the testing procedure endowed with the inconclusive interval given in the first line of Figure \ref{emp1:figBFdatesThresholds}. Dashed lines indicate the lower and upper bounds of the inconclusive region, the dots show the value of the BF. For $\alpha_t=0.75$ and a test size of 1\%, the power is approximately 96\%. 


The outcome of the sequential test is in line with those of classical frequentist tests for outliers such as the Grubb's (G) test \citep{Grubb50} and the Generalized Extreme Studentized Deviate (GESD) test \cite{rosner1983percentage}. The second row of Figure \ref{emp1:figBFdatesThresholds} reports the number of outliers detected at the 1\% level by applying the two tests element-wise to each entry of the observation matrices.

\begin{figure}[h!]
    \centering
     \setlength{\tabcolsep}{2pt}
     \renewcommand{\arraystretch}{1.2}
    \begin{tabular}{ccc}
    \multicolumn{3}{c}{(a) Minimum BF}\vspace{5pt}\\
     {\footnotesize (i) Inflation and Unemployment} & {\footnotesize (ii) International Trade} &{\footnotesize  (iii) Volatility Network}\\
    \includegraphics[scale=0.3]{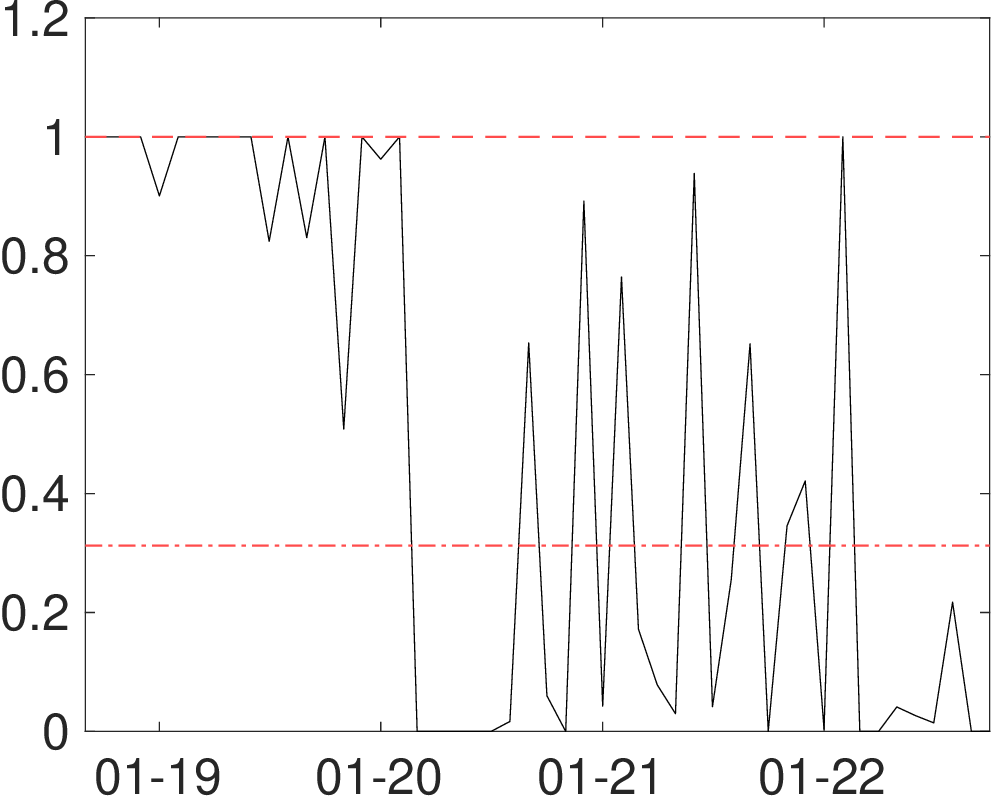}&
    \includegraphics[scale=0.30]{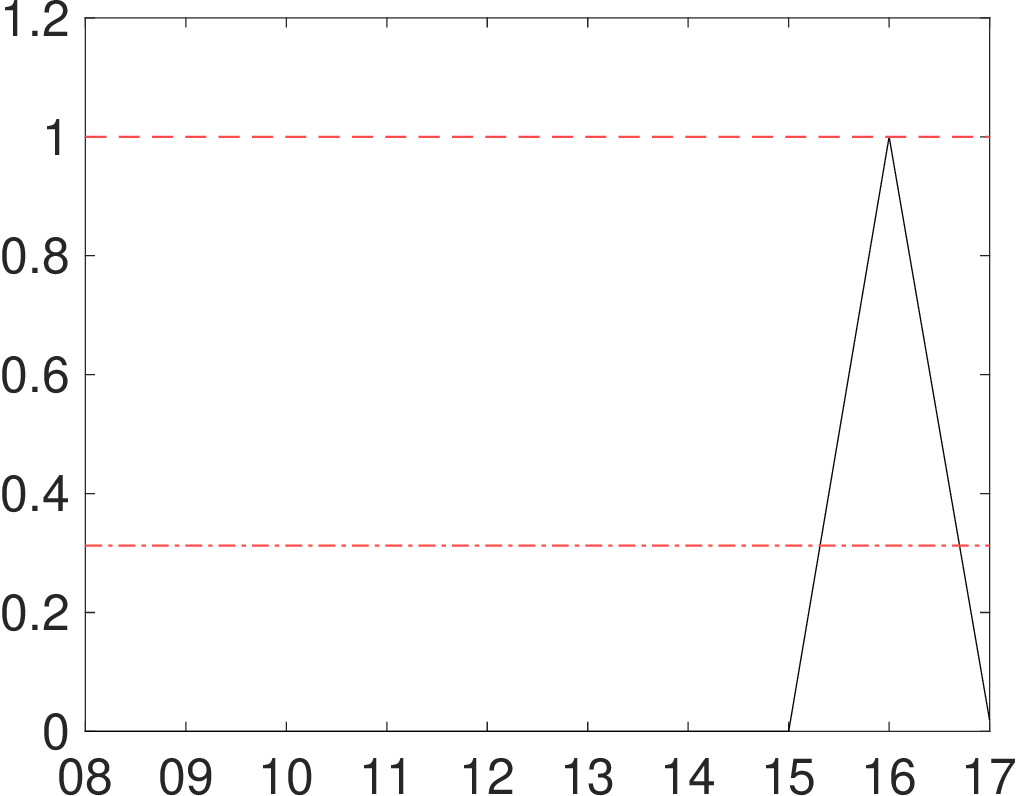}&
    \includegraphics[scale=0.30]{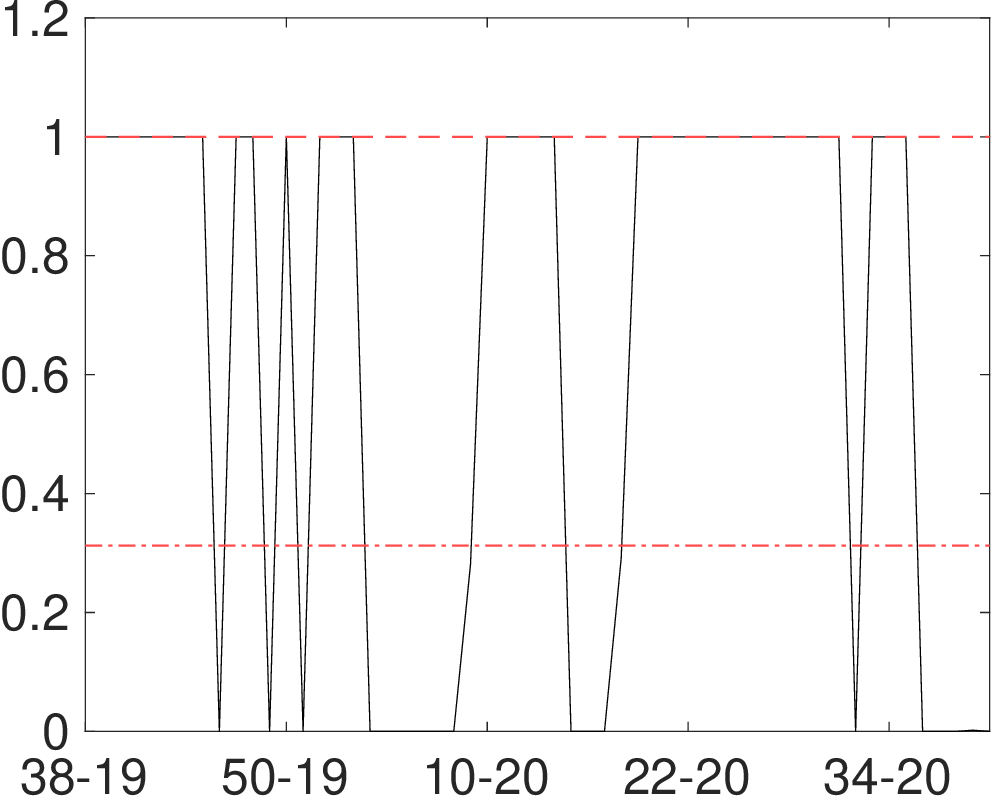}\\
    \multicolumn{3}{c}{(b) Integrated BF}\vspace{5pt}\\
     {\footnotesize (i) Inflation and Unemployment} & {\footnotesize (ii) International Trade} &{\footnotesize  (iii) Volatility Network}\\
   \includegraphics[scale=0.3]{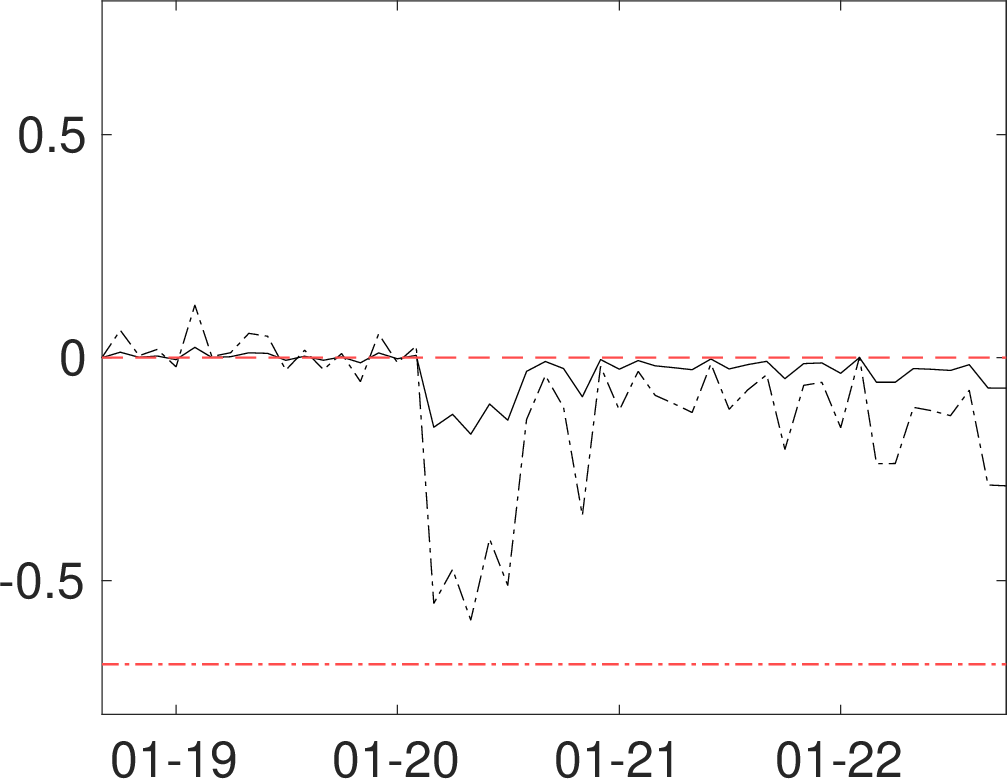}&
    \includegraphics[scale=0.3]{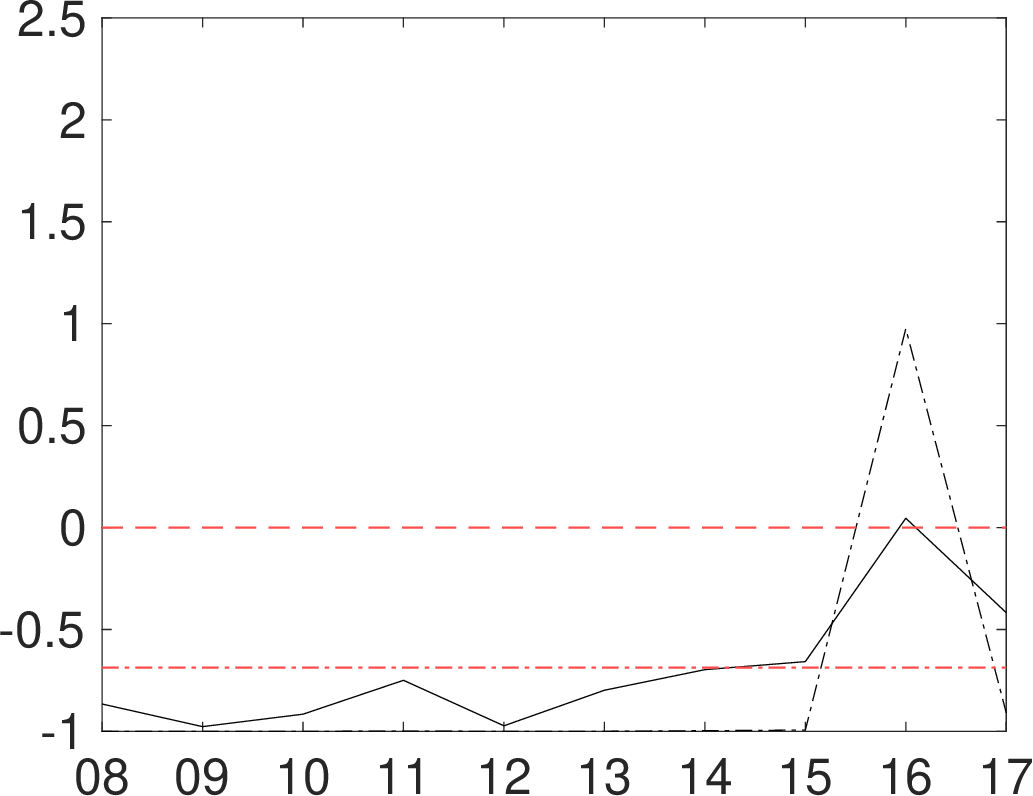}&
    \includegraphics[scale=0.3]{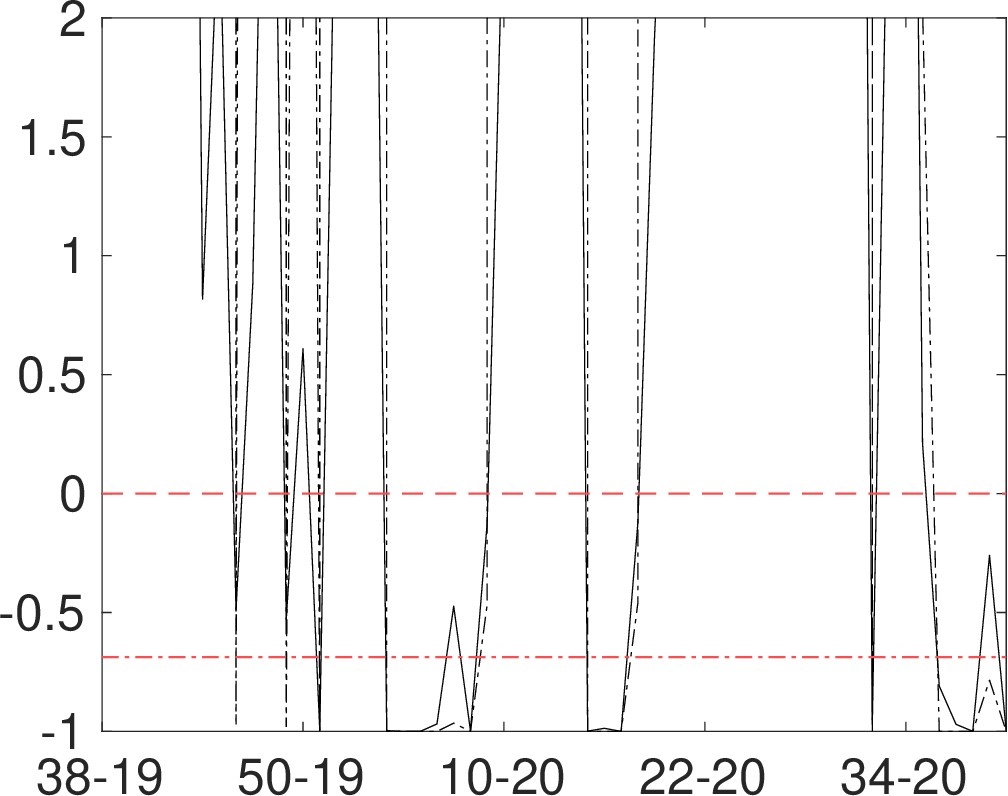}\\    
   \multicolumn{3}{c}{(c) Normalized Integrated BF}\vspace{5pt}\\
     {\footnotesize (i) Inflation and Unemployment} & {\footnotesize (ii) International Trade} &{\footnotesize  (iii) Volatility Network}\\
    \includegraphics[scale=0.3]{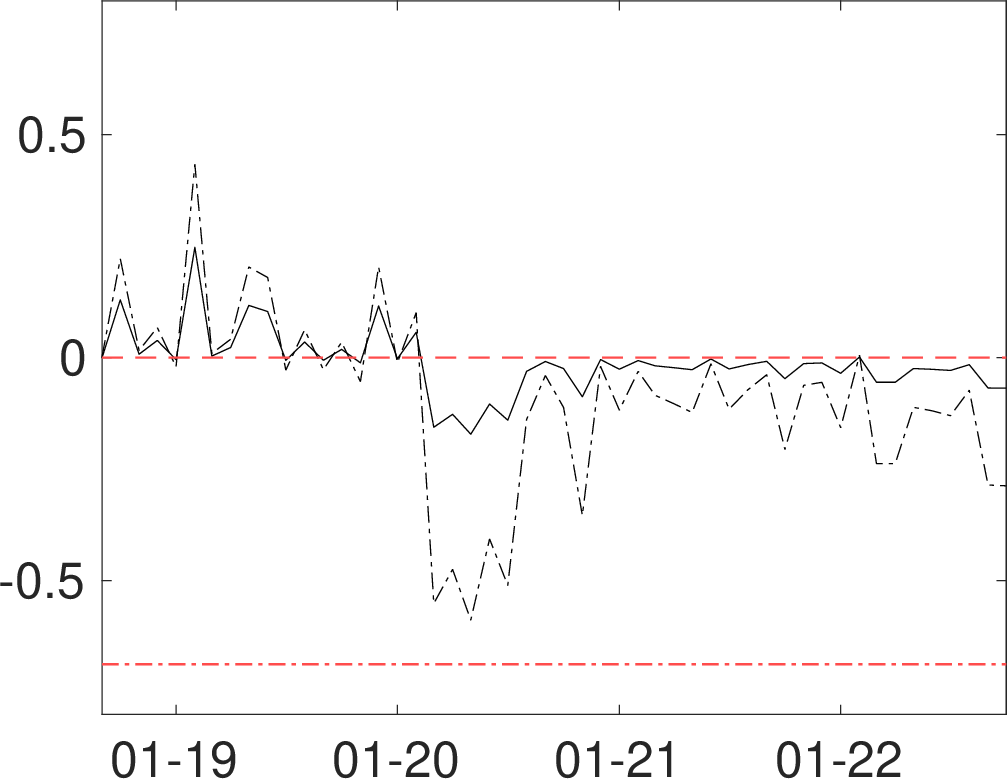}&
    \includegraphics[scale=0.3]{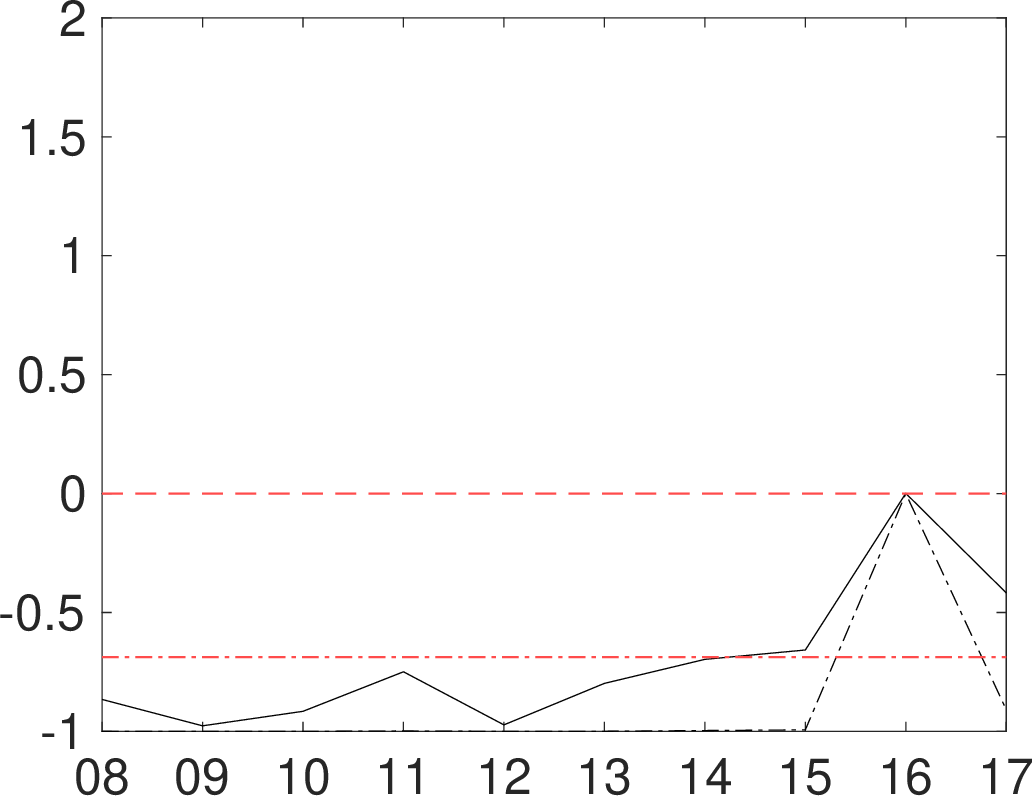}&
    \includegraphics[scale=0.3]{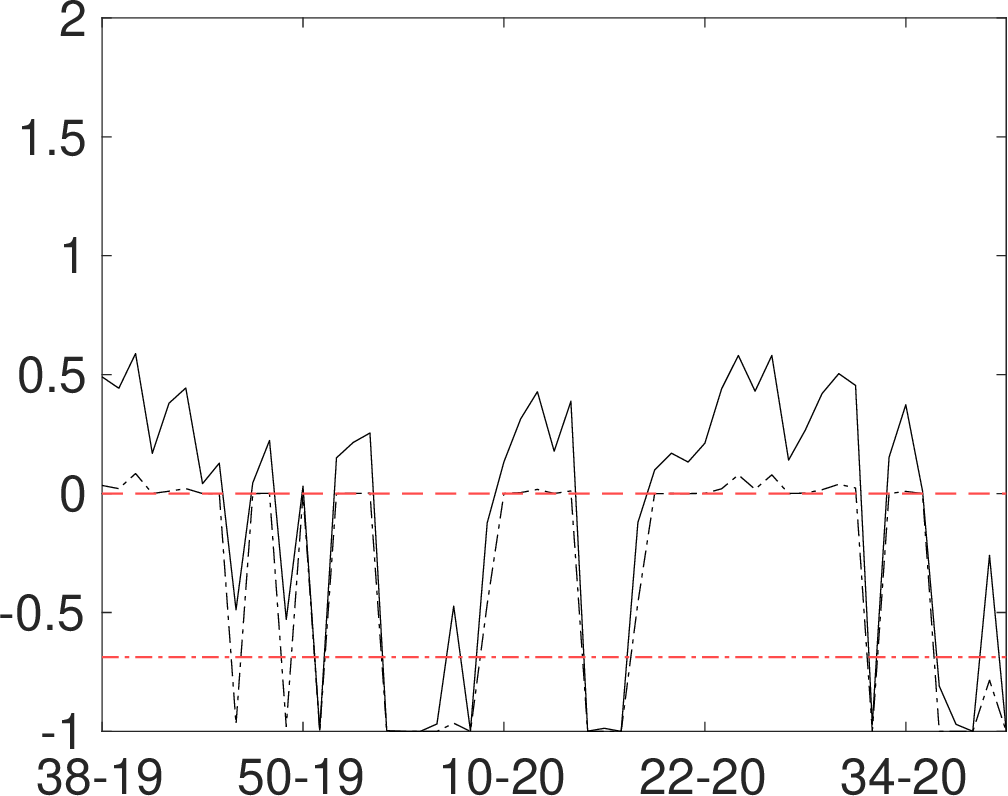}    \end{tabular}
        \caption{Bayes Factors for the three datasets: Inflation and Unemployment (left), International Trade (middle) and Volatility Network (right). The discount factor distribution is $\alpha_t\sim\mathcal{B}e(a,b)$ with the following parameter settings:  $a=16.5001$ and $b=7.6429$ (left, {\color{black}\protect\tikz[baseline]{\protect\draw[line width=0.2mm] (0,.6ex)--++(0.5,0) ;}}) and $b=37.1667$ (left, {\color{black}\protect\tikz[baseline]{\protect\draw[line width=0.2mm, dashed] (0,.6ex)--++(0.5,0) ;}}); $a=364.5001$ and $b=156.7857$ (mid, {\color{black}\protect\tikz[baseline]{\protect\draw[line width=0.2mm] (0,.6ex)--++(0.5,0) ;}}) and $b=849.1667$ (mid, {\color{black}\protect\tikz[baseline]{\protect\draw[line width=0.2mm, dashed] (0,.6ex)--++(0.5,0) ;}}); $a=1250.5001$ and $b=536.2857$ (right, {\color{black}\protect\tikz[baseline]{\protect\draw[line width=0.2mm] (0,.6ex)--++(0.5,0) ;}}) and $b=2915.3000$ (right, {\color{black}\protect\tikz[baseline]{\protect\draw[line width=0.2mm, dashed] (0,.6ex)--++(0.5,0) ;}}). In all plots, the reference lines at 1 ({\color{red}\protect\tikz[baseline]{\protect\draw[line width=0.2mm, dashed] (0,.6ex)--++(0.5,0) ;}}) and $10^{-1/2}$ ({\color{red}\protect\tikz[baseline]{\protect\draw[line width=0.2mm, dashdotted] (0,.6ex)--++(0.5,0) ;}}).}
    \label{emp1:figBFdatesIntegratedBF}
\end{figure}

\begin{figure}[t]
    \centering
    \setlength{\tabcolsep}{3pt}
    \renewcommand{\arraystretch}{1.2}
    \begin{tabular}{ccc}
   \multicolumn{3}{c}{(a) BF with thresholds}\vspace{5pt}\\
    \includegraphics[scale=0.3]{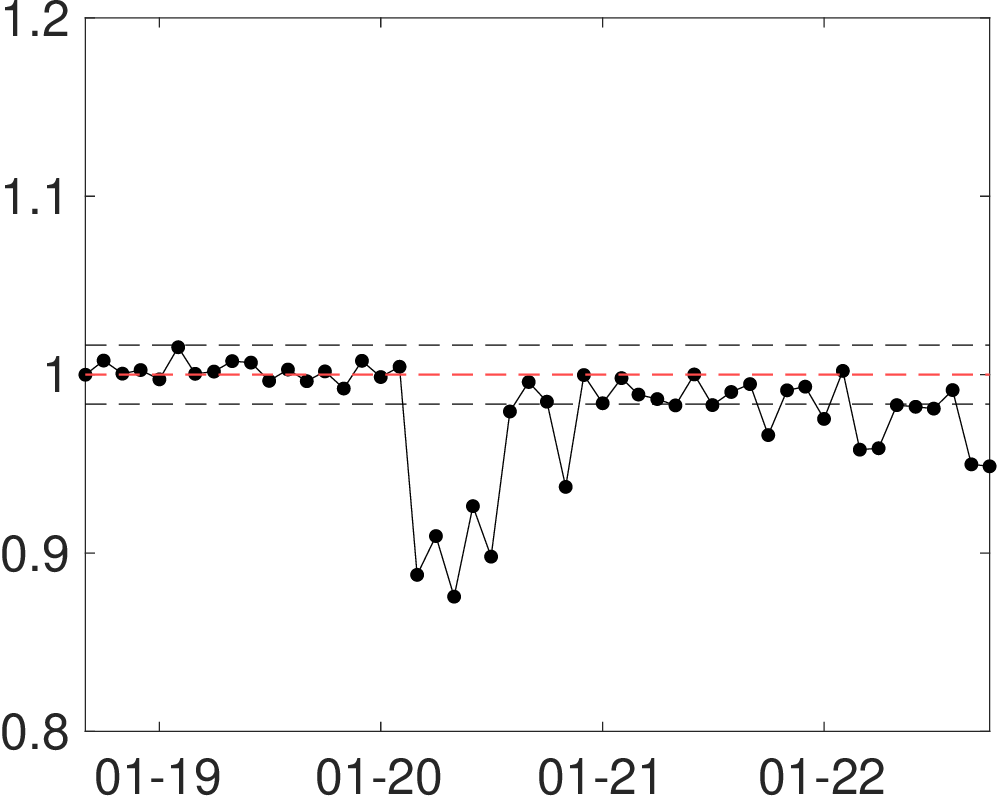}&
    \includegraphics[scale=0.3]{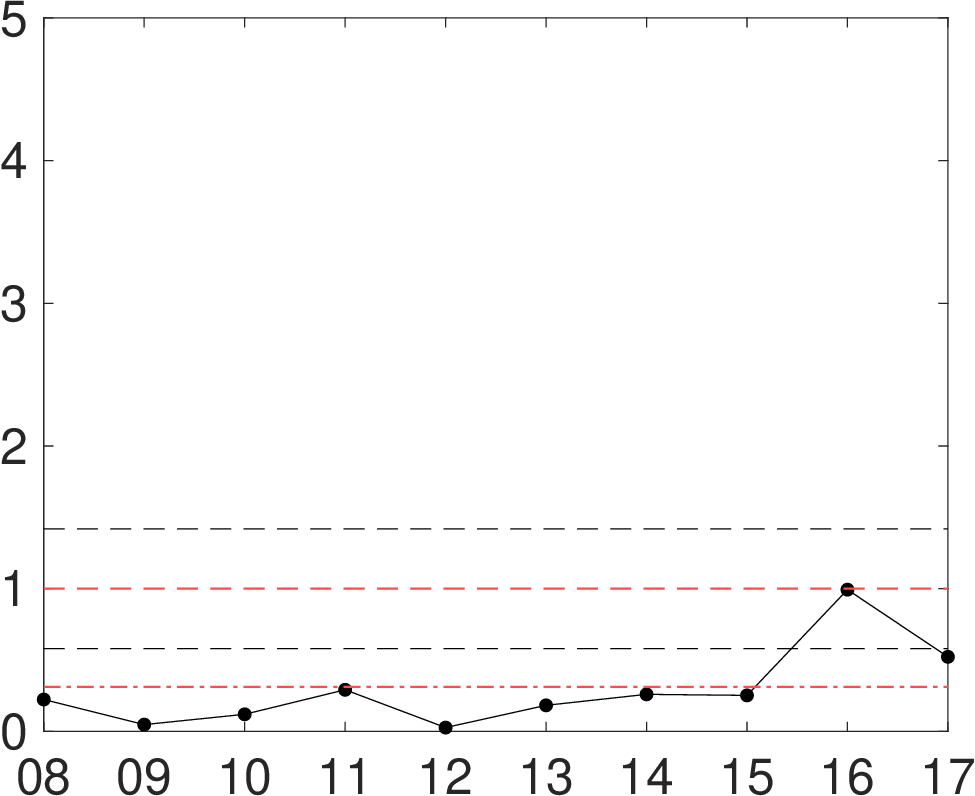}&
    \includegraphics[scale=0.3]{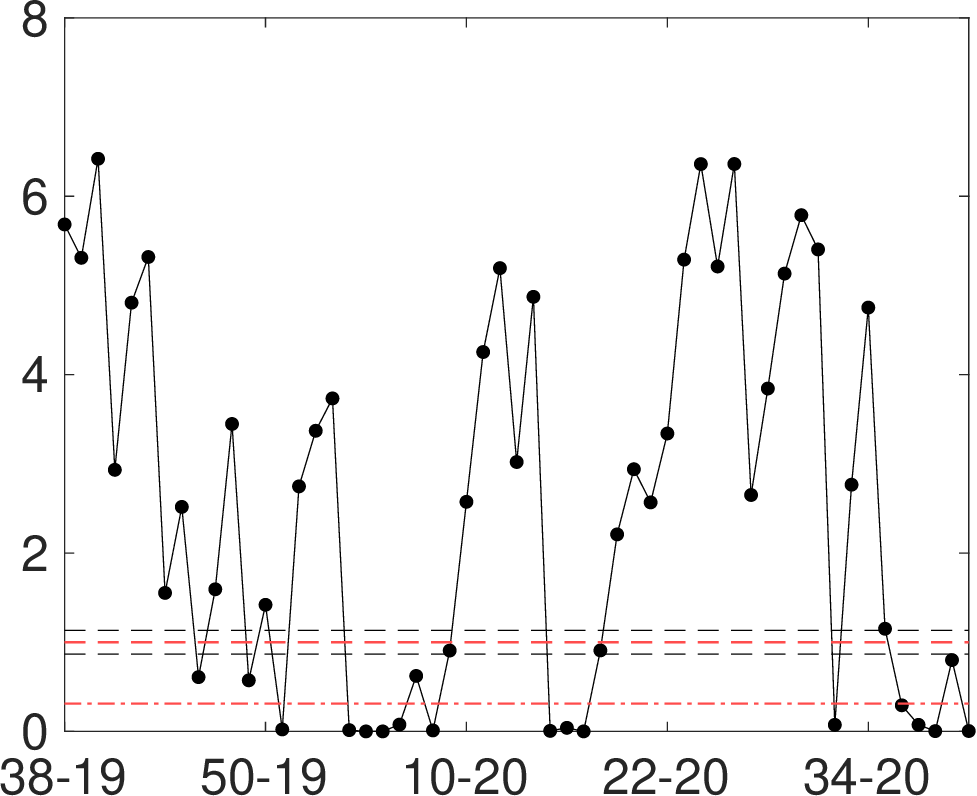}\\
   \multicolumn{3}{c}{(b) Grubb's and GESD tests}\vspace{5pt}\\
    \includegraphics[scale=0.3]{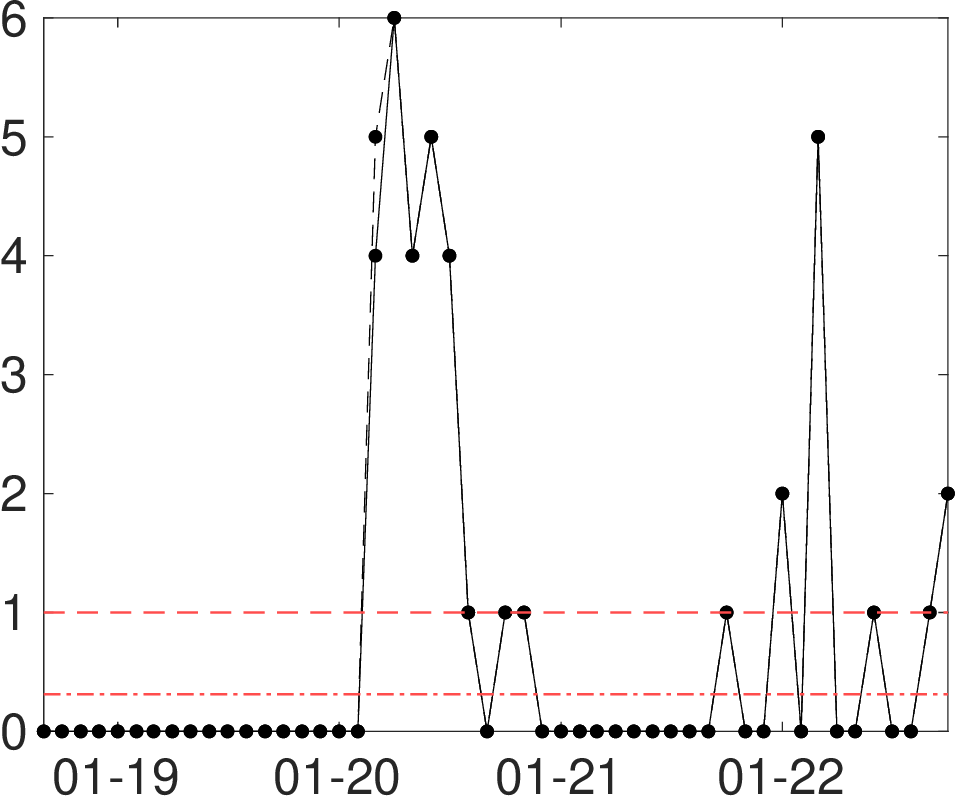}&
    \includegraphics[scale=0.3]{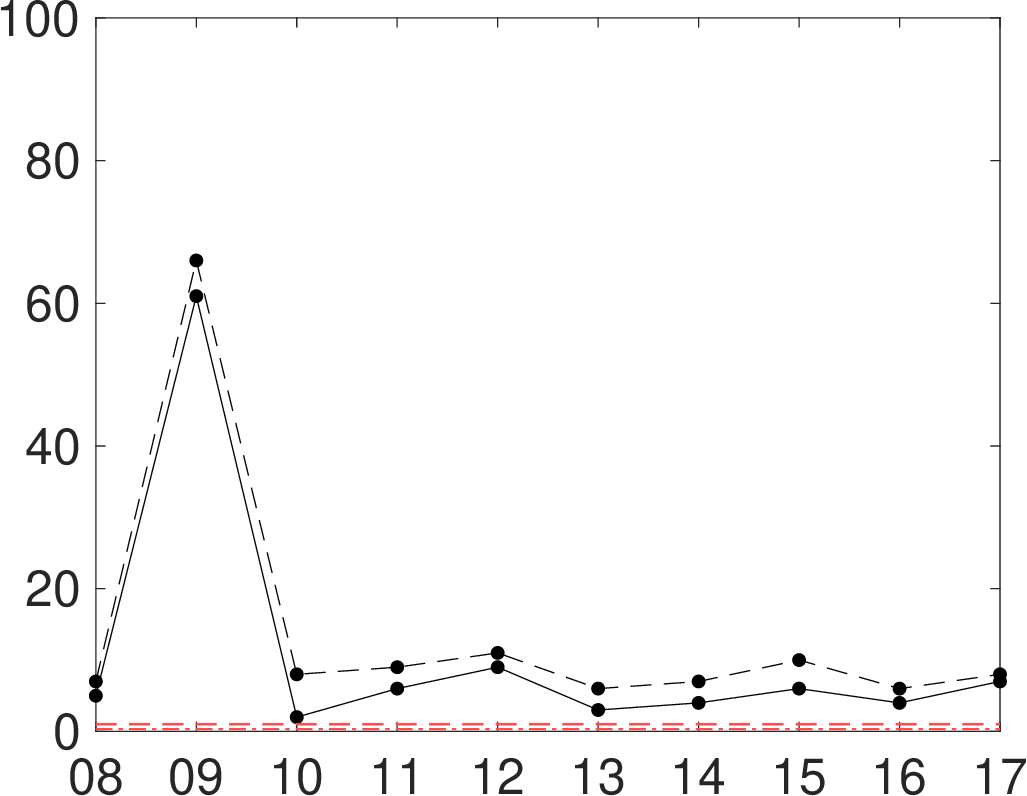}&
    \includegraphics[scale=0.3]{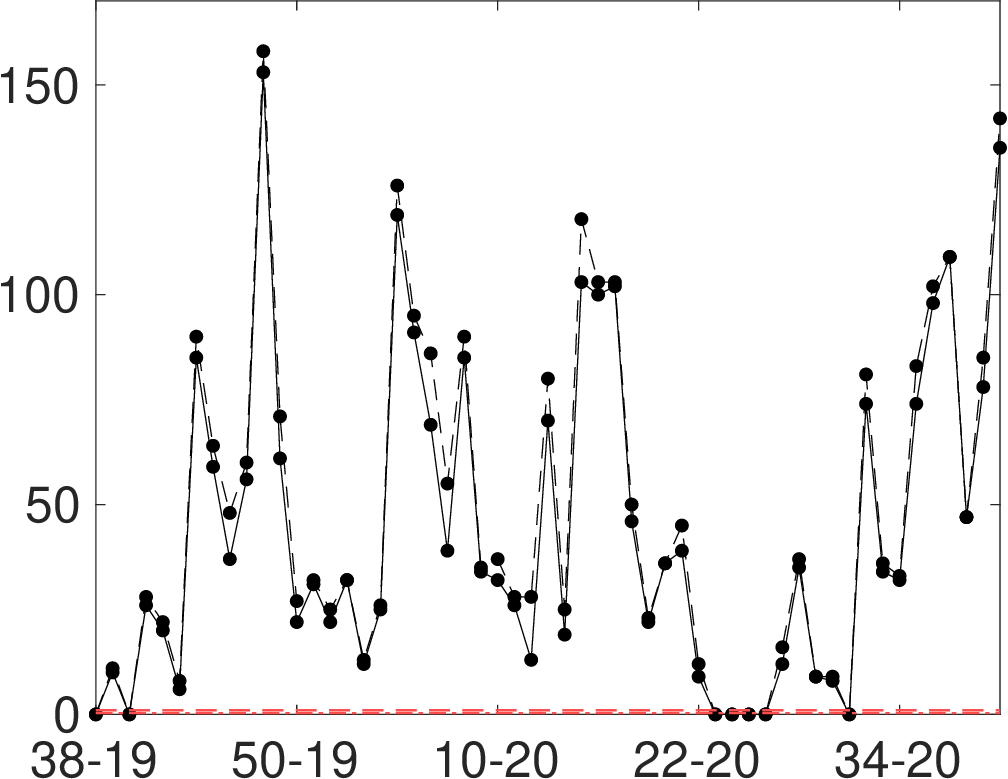}\\
    \end{tabular}
        \caption{Calibrated robust BF (top) and Grubb's test (bottom) for the three datasets: Inflation and Unemployment (left), International Trade (middle) and Volatility Network (right).
        Panel (a): Calibrated BF ({\color{black}\protect\tikz[baseline]{\protect\draw[line width=0.2mm] (0,.6ex)--++(0.5,0) ;}}) with thresholds ({\color{black}\protect\tikz[baseline]{\protect\draw[line width=0.2mm, dashed] (0,.6ex)--++(0.5,0) ;}}). The discounting factor is set at $\alpha_t=0.75$, and the size is less or equal to 1\% and the standard reference lines are at 1 ({\color{red}\protect\tikz[baseline]{\protect\draw[line width=0.2mm, dashed] (0,.6ex)--++(0.5,0) ;}}) and $10^{-1/2}$ ({\color{red}\protect\tikz[baseline]{\protect\draw[line width=0.2mm, dashdotted] (0,.6ex)--++(0.5,0) ;}}). Panel (b): Grubb's ({\color{black}\protect\tikz[baseline]{\protect\draw[line width=0.2mm] (0,.6ex)--++(0.5,0) ;}})  and GESD ({\color{black}\protect\tikz[baseline]{\protect\draw[line width=0.2mm, dashed] (0,.6ex)--++(0.5,0) ;}}) tests at the 1\% significance level}
    \label{emp1:figBFdatesThresholds}
\end{figure}

\subsection{Trade Network Dataset}
We consider the Trade Network Dataset (\citealp{Rose2004}), provided by the IMF, as it is a key reference in international trade studies. It integrates country reports with data from COMTRADE and EUROSTAT, covering 159 countries from 1995 to 2017 with annual frequency. The dataset sample we consider includes a sequence of 22 import networks of trade across 27 countries, i.e. $n=p=27$. 

Following the results in the top-middle plot of Figure \ref{emp1:figBFdates}, we find evidence of outlier observations on all dates except for 2016. Nevertheless, the BF is close to one in some cases, such as 2015 and 2017. From the sensitivity analysis in Figure \ref{emp1:figBFdatesBIStrade}, one can see that the test outcome depends crucially on the choice of $\alpha_t$. The top-middle plot in Figure \ref{emp1:figBFdatesIntegratedBF} indicates that minimum and integrated BFs provide evidence of the absence of an outlier in 2016 and the presence of outlying observations in 2015 and 2017, supporting the conclusion of the calibrated BF procedure with inconclusive intervals presented in the Panel (a) middle plot of Figure \ref{emp1:figBFdatesThresholds}. In the year 2016, the classical G and GESD tests for outliers, applied entry-wise to the observation matrix, detected one outlying series out of 729 series (mid plot in Panel (b).

\subsection{Volatility Network Dataset}
We investigate the presence of outliers in a sequence of volatility networks among European firms with the largest market capitalization (\citealp{billio2021matrix}). The dataset consists of 145 temporal networks (from the 4th of January 2016 to the 30th of September 2020) between 50 firms, i.e. $T=145$ and $n=p=50$, for 362,500 observations. In the sequential outlier detection, we used a rolling window of 90 observations. 

The plots in the top-right and bottom-right sections of Figure \ref{emp1:figBFdates} indicate that the rejection of the hypothesis is independent of the chosen discounting coefficient. However, BF exhibits greater sensitivity to discounting towards the end of 2019. The sensitivity analysis of the BF to $\alpha_t$ is presented for selected dates in Figure \ref{emp1:figBFdatesBISvolatility} in the Appendix.

The top-right plot in Figure \ref{emp1:figBFdatesIntegratedBF} shows the results of the minimum BF, which support the main findings of the calibrated BF procedure (Figure \ref{emp1:figBFdatesThresholds}, Panel(a), right plot). When the BF is far above one, the classical G and GESD tests for outliers, applied entry-wise to the observation matrix, detect a reduction in the number of outliers. In summary, the rapid changes in volatility and the persistence of volatility regimes call for nonlinear models, such as switching or threshold models, which account for structural breaks and recurrent regimes. 

\section{Conclusion}\label{sec:concl}
The assessment of the model performance is relevant in many applications and becomes crucial in forecasting. This paper proposes sequential outlier detection for the matrix normal model. The hypothesis testing procedure extends the predictive Bayes Factor (BF) with the power discounting to matrix models. The proposed approach relies on normality, now a default assumption in many applications, which serves to build a preliminary test for outliers in sequences of matrix--valued data. Some solutions are proposed to mitigate the test outcome's dependence on the discounting coefficient value, such as the minimum and the integrated BFs. The finite--sample distribution of the predictive BF is derived, and a testing procedure is proposed based on calibrated discounting and BF. Simulation experiments are conducted to study the properties of our Bayesian outlier detection. Numerical illustrations with relevant benchmark datasets are given. They include a comparison with classical tests for outliers and a validation based on major global event dates. 



\phantomsection{\large \textbf{Funding}}

This work was funded by the MUR -- PRIN project under g.a. n. 2022CLTYP4 and the Next Generation EU -- `\textit{GRINS -- Growing Resilient, INclusive and Sustainable}' project (PE0000018), National Recovery and Resilience Plan -- PE9. The views and opinions expressed are only those of the authors and do not necessarily reflect those of the EU.

\phantomsection\label{supplementary-material}
\bigskip

{\large\bf Supplementary Materials}

\begin{description}
\item[Supplementary Appendix:] A PDF document containing all mathematical derivations, proofs of theoretical results, and additional simulation evidence supporting the methodology presented in the main text. (PDF file)

\item[R-package for Outlier Detection routine:] The R package \texttt{BAYSFWATCH}, developed for this study, contains functions to implement the proposed Bayesian outlier detection procedures. The package also includes all datasets used in the empirical illustrations. (GNU zipped tar file, available at \url{https://github.com/BayesianEcon/BAYSFWATCH})
\end{description}

  \begin{center}
    {\LARGE\bf Supplementary Materials \\ Bayesian Outlier Detection for Matrix--variate Models}
\end{center}

\appendix
\renewcommand{\thesection}{A}
\renewcommand{\theequation}{A.\arabic{equation}}
\renewcommand{\thefigure}{A.\arabic{figure}}
\renewcommand{\thetable}{A.\arabic{table}}
\renewcommand{\theproposition}{A.\arabic{proposition}}
\setcounter{table}{0}
\setcounter{figure}{0}
\setcounter{equation}{0}
\setcounter{proposition}{0}

\section{Proofs of the Results}\label{app:proof}
\subsection{Proof of Proposition \ref{th1}.}
\begin{proof} For ease of notation in the following, we drop the subscript $t$ from $\alpha_t$. Since $H_t(1)=1$, we need to show that under the proposition assumptions, there exits $\alpha_g\in(0,1)$ such that $H_t(\alpha)>1$ for $0<\alpha<\alpha_g$ and $H_t(\alpha)<1$ for $\alpha_g<\alpha<1$. The limit of the normalizing constant $C(\alpha)=\left(\int_{\Theta}p(\boldsymbol{\theta}_t |\mathcal{D}_{t-1})^{\alpha}\lambda(d\boldsymbol{\theta}_t)\right)^{-1}$ for $\alpha\rightarrow 0^{+}$ is $\lambda(\Theta)^{-1}$ where $\lambda(\Theta)$ is the Lebesgue measure of $\Theta$. It follows that 
\begin{equation}
  \underset{\alpha\rightarrow 0^{+}}{\lim}H_t(\alpha)=\underset{\alpha\rightarrow 0^{+}}{\lim}\frac{1}{C(\alpha)}\frac{\int_{\Theta}p(\boldsymbol{Y}_t| \boldsymbol{\theta}_t)p(\boldsymbol{\theta}_t|\mathcal{D}_{t-1})\lambda(d\boldsymbol{\theta})}{\int_{\Theta}p(\boldsymbol{Y}_t| \boldsymbol{\theta}_t)p(\boldsymbol{\theta}_t|\mathcal{D}_{t-1})^{\alpha}\lambda(d\boldsymbol{\theta})} =\frac{\lambda(\Theta)\int_{\Theta}p(\boldsymbol{Y}_t| \boldsymbol{\theta}_t)p(\boldsymbol{\theta}_t|\mathcal{D}_{t-1})\lambda(d\boldsymbol{\theta})}{\int_{\Theta}p(\boldsymbol{Y}_t| \boldsymbol{\theta}_t)\lambda(d\boldsymbol{\theta})}\geq 1
\end{equation}
from the assumption. The inequality $H_t(\alpha)<1$ can be equivalently written as
\begin{equation}
\int_{\Theta}p(\boldsymbol{Y}_t|\boldsymbol{\theta}_t)(p(\boldsymbol{\theta}_t|\mathcal{D}_{t-1})-C(\alpha)p(\boldsymbol{\theta}_t|\mathcal{D}_{t-1})^{\alpha})\lambda(d\boldsymbol{\theta}_t)<0.\label{condLambda2}
\end{equation}
and is satisfied if $H_t(\alpha)$
has a unique minimum value for some $\alpha$. Let us assume the distribution $p(\boldsymbol{\theta}_t\vert \mathcal{D}_t)$ is absolutely continuous with respect to the Lebesgue measure $\lambda$, and define
\begin{eqnarray*}
m_0=\int_{\Theta}p(\boldsymbol{Y}_t|\boldsymbol{\theta}_t)p(\boldsymbol{\theta}_t|\mathcal{D}_{t-1})\lambda(d\boldsymbol{\theta}_t),\quad m_1(\alpha)=\int_{\Theta}p(\boldsymbol{Y}_t|\boldsymbol{\theta}_t)p(\boldsymbol{\theta}_t|\mathcal{D}_{t-1})^{\alpha}\lambda(d\boldsymbol{\theta}_t).
\end{eqnarray*}
Finding a necessary condition on $\alpha$ for $H_t$ to admit a minimum is possible. The first derivative of the BF is
\begin{eqnarray}
    \partial_{\alpha} H(\alpha)&=&\frac{\partial}{\partial \alpha} \frac{m_0}{C(\alpha)m_1(\alpha)}=\frac{m_0}{C(\alpha)m_1(\alpha)^2}{}\left(C(\alpha)m_1(\alpha) \int_{\Theta}p(\boldsymbol{\theta}_t|\mathcal{D}_{t-1})^{\alpha}\log p(\boldsymbol{\theta}_t|\mathcal{D}_{t-1})\lambda(d\boldsymbol{\theta}_t) -\right.\nonumber\\ &\quad&\left.\int_{\Theta}p(\boldsymbol{Y}_t|\boldsymbol{\theta}_t)p(\boldsymbol{\theta}_t|\mathcal{D}_{t-1})^{\alpha}\log p(\boldsymbol{\theta}_t|\mathcal{D}_{t-1})\lambda(d\boldsymbol{\theta}_t)\right)\nonumber\\
    &=&\frac{m_0}{m_1(\alpha)^2}\left(m_1(\alpha) \int_{\Theta}p(\boldsymbol{\theta}_t|\mathcal{D}_{t-1})^{\alpha}\log p(\boldsymbol{\theta}_t|\mathcal{D}_{t-1})\lambda(d\boldsymbol{\theta}_t) -\right.\nonumber\\ &\quad&\left.\int_{\Theta}p(\boldsymbol{\theta}_t|\mathcal{D}_{t-1})^{\alpha}\lambda(d\boldsymbol{\theta}_t)\int_{\Theta}p(\boldsymbol{Y}_t|\boldsymbol{\theta}_t)p(\boldsymbol{\theta}_t|\mathcal{D}_{t-1})^{\alpha}\log p(\boldsymbol{\theta}_t|\mathcal{D}_{t-1})\lambda(d\boldsymbol{\theta}_t)\right),
\end{eqnarray}
where we used
\begin{equation}
    \partial_{\alpha} C(\alpha)=-C(\alpha)^2\int_{\Theta}p(\boldsymbol{\theta}_t|\mathcal{D}_{t-1})^{\alpha}\log p(\boldsymbol{\theta}_t|\mathcal{D}_{t-1})\lambda(d\boldsymbol{\theta}_t),\nonumber
\end{equation}
which is well-defined  since $x^{\alpha}\log x\rightarrow 0$ as $x\rightarrow 0^{+}$ for all $\alpha>0$. If $\alpha_0$ satisfies the proposition's assumptions, then the first derivative changes sign only once, and the BF $H_t$ takes values above and below one.
\end{proof}

\subsection{Background and Preliminary Results}\label{sec:preliminary}

In this section, we review the notation for the matrix normal model, present some preliminary results, and state the main findings on the BF properties for the normal matrix model.

We adapt to our notation the definition of matrix normal and matrix Student-t distributions provided in Definitions 2.2.1 and 4.2.1 of \citet{Gup99}. In addition, we provide some results on the Bayesian inference for the location matrix of the matrix normal distribution under a conjugate prior assumption.
\begin{definition}[Matrix normal] The
random matrix $\boldsymbol{X}\left(p\times n \right)$ is said to have a matrix variate normal distribution with mean matrix $\boldsymbol{M}(p\times n)$ and covariance matrix
$\boldsymbol{\Sigma}\otimes \boldsymbol{V}$, where
$\boldsymbol{\Sigma}( p\times p ) > 0$ and $\boldsymbol{V}( n\times n ) > 0$, if
$\text{vec}( \boldsymbol{X}^{'})\sim \mathcal{N}_{p,n}( \text{vec}( \boldsymbol{M}^{'}),\boldsymbol{\Sigma}\otimes \boldsymbol{V})$, where $\otimes$ denotes the Kronecker product and $\text{vec}(\cdot)$ the vectorization operator which stacks matrix columns vertically.
We shall use the notation $\boldsymbol{X}\sim \mathcal{N}_{p,n}(\boldsymbol{M},\boldsymbol{\Sigma},\boldsymbol{V})$ and
the pdf of $\boldsymbol{X}$ is given in Theorem 2.2.1, p. 55 of \cite{Gup99}:
\begin{equation}
f\left( \boldsymbol{X} \vert \boldsymbol{M},\boldsymbol{\Psi},\boldsymbol{\Sigma} \right) = \frac{\exp\left\{ - \frac{1}{2}\text{tr}\left\lbrack \boldsymbol{\Sigma}^{- 1}(\boldsymbol{X} - \boldsymbol{M})\boldsymbol{\Psi}^{- 1}(\boldsymbol{X} - \boldsymbol{M})^{'} \right\rbrack \right\}}{(2\pi)^{\frac{np}{2}}|\boldsymbol{\Sigma}|^{\frac{n}{2}}|\boldsymbol{\Psi}|^{\frac{p}{2}}}.
\end{equation}
\end{definition}
We shall notice that the parameters $\boldsymbol{\Sigma}$ and $\boldsymbol{\Psi}$ are not identifiable since the likelihood is invariant to the rescaling of the two parameters. We refer the reader to \cite{And15,Gal18} for a treatment of this identification issue. In this paper, to achieve analytical tractability, we assume $\boldsymbol{\Sigma}$ and $\boldsymbol{\Psi}$ are given, as it is done for the univariate setting in \cite{west1986bayesian}. Nevertheless, some of our analytical results can be extended to cases where the covariance matrix is estimated following a Bayesian procedure. A standard prior assumption for a covariance matrix is the Inverse Wishart distribution, which is defined using the following parametrization.
\begin{definition}[Inverse Wishart] The random matrix $\boldsymbol{V}\left(n\times n \right)$ is said to have an inverse Wishart distribution with location parameter
$\boldsymbol{\Psi}(n\times n)$ and degrees of freedom parameter $m$, where
$\boldsymbol{\Psi}( n\times n ) > 0$ and $m > 2n$, if the  pdf of $\boldsymbol{V}$ is:
\begin{equation}
g\left(\boldsymbol{V}|\boldsymbol{\Psi},m\right) = \frac{\left|\boldsymbol{\Psi} \right|^{\frac{m-n-1}{2}}\exp\left\{ - \frac{1}{2}\text{tr}(\boldsymbol{\Psi} \boldsymbol{V}^{- 1})\right\}}{2^{\frac{(m-n-1)n}{2}}\Gamma_{n}(\frac{m-n-1}{2})\left|\boldsymbol{V}\right|^{\frac{m}{2}}},
\end{equation}
where $\Gamma_n(a)$ denotes the multivariate gamma function $\Gamma_{n}(a)=\pi^{n(n-1)/4}\prod_{j=1}^{n}\Gamma(a+(1-j)/2)$.
We shall use the notation $\boldsymbol{V}\sim \mathcal{I}\mathcal{W}_{n}(\boldsymbol{\Psi},m)$. See Definition 3.4.1, p. 111 of \cite{Gup99}. 
\end{definition}

\begin{definition}[Matrix Student-t] The
random matrix $\boldsymbol{X}\left(p\times n \right)$ is said to have a matrix Student-t distribution with degrees of freedom parameter $\nu > 0$, location parameter $\boldsymbol{M}$, and scale parameters $\boldsymbol{\Sigma}(p\times p) > 0$ and $\boldsymbol{\Omega}(n\times n) > 0$ if the  pdf of $\boldsymbol{X}$ is:
\begin{equation}
f\left( \boldsymbol{X} \vert \nu,\boldsymbol{M},\boldsymbol{\Sigma},\boldsymbol{\Omega} \right) = 
\frac{\Gamma_p\left(\frac{v+n+p-1}{2}\right)}{\pi^{\frac{m p}{2}} \Gamma_p\left(\frac{v+p-1}{2}\right)}\left|\boldsymbol{\Sigma}\right|^{-\frac{n}{2}}\left|\boldsymbol{\Omega}\right|^{-\frac{p}{2}}\left| I_p +\boldsymbol{\Sigma}^{-1}(\boldsymbol{X}-\boldsymbol{M}) \boldsymbol{\Omega}^{-1}(\boldsymbol{X}-\boldsymbol{M})^{\prime}\right|^{-\frac{v+n+p-1}{2}}
\end{equation}
We shall use the notation $\mathcal{T}_{p,n}( \nu, \boldsymbol{M},\boldsymbol{\Sigma}, \boldsymbol{\Omega})$ 
\end{definition}

Since the matrix normal distribution belongs to the exponential family \cite[e.g., see][]{Gup99}, it is possible to show that the matrix normal prior distribution for the location parameter of a matrix normal likelihood is conjugate.

Let us recall the following standard properties for matrix variate normal variables.
\begin{proposition}[Gupta, 1992, Theorem 2.2]\label{prop:gupta2}
Let $\boldsymbol{X}_1$ and $\boldsymbol{X}_2$ be two random matrices of dimension $p\times n$ and $q\times n$, respectively. Assume that
$\boldsymbol{X}_2|\boldsymbol{X}_1\sim \mathcal{N}_{q,n}\left( \boldsymbol{C} + \boldsymbol{DX}_1,\boldsymbol{\Sigma}_{2},\boldsymbol{\Phi} \right)$ and $\boldsymbol{X}_1\sim \mathcal{N}_{p,n}\left( \boldsymbol{F},\boldsymbol{\Sigma}_{1},\boldsymbol{\Phi} \right)$ where $\boldsymbol{C}$ is $q\times n$, $\boldsymbol{D}$ $q\times p$, $\boldsymbol{\Sigma}_{2}$ $q\times q$ $\boldsymbol{F}$ $p\times  n$ $\boldsymbol{\Phi}$ $n\times n$ $\boldsymbol{\Sigma}_{1}$ $ p\times p$, $\boldsymbol{\Sigma}_{2} > 0$, $\boldsymbol{\Sigma}_{1} > 0$, $\boldsymbol{\Phi} > 0$. Then:
\begin{equation}\boldsymbol{Z} = \begin{pmatrix}
\boldsymbol{X}_1 \\
\boldsymbol{X}_2 \\
\end{pmatrix}\sim \mathcal{N}_{q + p,n}\left\lbrack \begin{pmatrix}
\boldsymbol{F} \\
\boldsymbol{DF} + \boldsymbol{C} \\
\end{pmatrix},\begin{pmatrix}
\boldsymbol{\Sigma}_{1} & \boldsymbol{\Sigma}_{1}\boldsymbol{D}^{'} \\
\boldsymbol{D}\boldsymbol{\Sigma}_{1} & \boldsymbol{\Sigma}_{2} + \boldsymbol{D}\boldsymbol{\Sigma}_{1}\boldsymbol{D}^{'} \\
\end{pmatrix},\boldsymbol{\Phi}\right\rbrack.\label{A2.10}\end{equation}
\end{proposition}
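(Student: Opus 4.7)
The plan is to reduce this matrix-normal statement to its vectorized Gaussian analogue, invoke the classical ``marginal plus conditional yields joint'' lemma for multivariate Gaussians, and then re-assemble the result into matrix-normal form using Kronecker-product identities.

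I would begin by applying the defining vectorization of the matrix normal recalled in the appendix: $\text{vec}(\boldsymbol{X}_1^{'}) \sim \mathcal{N}(\text{vec}(\boldsymbol{F}^{'}), \boldsymbol{\Sigma}_1 \otimes \boldsymbol{\Phi})$ and, conditionally on $\boldsymbol{X}_1$, $\text{vec}(\boldsymbol{X}_2^{'}) \sim \mathcal{N}(\text{vec}((\boldsymbol{D}\boldsymbol{X}_1 + \boldsymbol{C})^{'}), \boldsymbol{\Sigma}_2 \otimes \boldsymbol{\Phi})$. Using the identity $\text{vec}(\boldsymbol{X}_1^{'}\boldsymbol{D}^{'}) = (\boldsymbol{D} \otimes \boldsymbol{I}_n)\text{vec}(\boldsymbol{X}_1^{'})$ rewrites the conditional mean linearly in $\text{vec}(\boldsymbol{X}_1^{'})$. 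At this point the problem is purely about multivariate normal vectors, and the standard lemma gives the joint distribution of $(\text{vec}(\boldsymbol{X}_1^{'})^{'}, \text{vec}(\boldsymbol{X}_2^{'})^{'})^{'}$ with the usual block mean and block covariance.

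The decisive step is then to use the mixed-product rule $(\boldsymbol{A} \otimes \boldsymbol{B})(\boldsymbol{C} \otimes \boldsymbol{D}) = (\boldsymbol{A}\boldsymbol{C}) \otimes (\boldsymbol{B}\boldsymbol{D})$ to collapse every block of the joint covariance into the form $(\cdot) \otimes \boldsymbol{\Phi}$, recognizing the joint covariance as $\boldsymbol{\Sigma}_Z \otimes \boldsymbol{\Phi}$ with $\boldsymbol{\Sigma}_Z$ the block matrix stated in the proposition. Combined with the observation that row-stacking $\boldsymbol{X}_1$ and $\boldsymbol{X}_2$ into $\boldsymbol{Z}$ exactly concatenates their row-stackings, so that $\text{vec}(\boldsymbol{Z}^{'}) = (\text{vec}(\boldsymbol{X}_1^{'})^{'}, \text{vec}(\boldsymbol{X}_2^{'})^{'})^{'}$, this identifies $\boldsymbol{Z}$ as matrix-normal with the claimed parameters.

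The hard part is not analytical but notational: one must keep the vec-of-transpose convention and Kronecker-factor order straight throughout. The argument also hinges on two structural features — that $\boldsymbol{X}_1$ and $\boldsymbol{X}_2 \mid \boldsymbol{X}_1$ share the same right covariance $\boldsymbol{\Phi}$, and that the linear dependence in the conditional mean acts on the left of $\boldsymbol{X}_1$ as $\boldsymbol{D}\boldsymbol{X}_1$. Either change would break the clean Kronecker factorization and the joint distribution would no longer fit into matrix-normal form.
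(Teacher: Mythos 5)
Your proposal is correct, but note that the paper itself gives no proof of this statement: it is imported verbatim as Theorem 2.2 of Gupta (1992) and used as a black box, so there is nothing internal to compare against. Your vectorization argument is the standard route and it checks out in every detail: the identity $\mathrm{vec}(\boldsymbol{X}_1'\boldsymbol{D}') = (\boldsymbol{D}\otimes\boldsymbol{I}_n)\,\mathrm{vec}(\boldsymbol{X}_1')$ is the right form under the paper's vec-of-transpose convention, the mixed-product rule collapses each covariance block to $(\cdot)\otimes\boldsymbol{\Phi}$ so the joint covariance factors as $\boldsymbol{\Sigma}_Z\otimes\boldsymbol{\Phi}$ with $\boldsymbol{\Sigma}_Z$ the stated block matrix, and the observation that $\mathrm{vec}(\boldsymbol{Z}')$ is exactly the concatenation of $\mathrm{vec}(\boldsymbol{X}_1')$ and $\mathrm{vec}(\boldsymbol{X}_2')$ closes the loop. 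Your closing remark is also apt: the common right covariance $\boldsymbol{\Phi}$ and the left action $\boldsymbol{D}\boldsymbol{X}_1$ are precisely what keep the joint law inside the matrix-normal family, and relaxing either would destroy the Kronecker factorization.
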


\begin{proposition}[Gupta, 1999, Theorem 2.3.12, pg. 65]\label{prop:gupta3}
Let $\boldsymbol{Z}\sim \mathcal{N}_{q,n}(\boldsymbol{H},\boldsymbol{\Sigma}, \boldsymbol{Q})$ and partitions:

\begin{equation} \boldsymbol{Z} = \begin{pmatrix}
\boldsymbol{Z}_{1r} \\
\boldsymbol{Z}_{2r} \\
\end{pmatrix}\text{\ \ \ }\begin{matrix}
m \\
q - m \\
\end{matrix};\ \ \boldsymbol{H} = \begin{pmatrix}
\boldsymbol{H}_{1r} \\
\boldsymbol{H}_{2r} \\
\end{pmatrix}\text{\ \ }\begin{matrix}
m \\
q - m \\
\end{matrix};\ \ \boldsymbol{\Sigma} = \begin{pmatrix}
\boldsymbol{\Sigma}_{11} & \boldsymbol{\Sigma}_{12} \\
\boldsymbol{\Sigma}_{21} & \boldsymbol{\Sigma}_{22} \\
\end{pmatrix}\text{\ \ }\begin{matrix}
m \\
q - m \\
\end{matrix}.\label{A2.11}\end{equation}
Then:
\begin{equation} \boldsymbol{Z}_{1r}\sim \mathcal{N}_{m,n}\left( \boldsymbol{H}_{1r},\boldsymbol{\Sigma}_{11}, \boldsymbol{Q} \right),\ \boldsymbol{Z}_{2r}|\boldsymbol{Z}_{1r}\ \sim \mathcal{N}_{q - m,n}\left( \boldsymbol{H}_{2r} + \boldsymbol{\Sigma}_{21}\boldsymbol{\Sigma}_{11}^{- 1}\left( \boldsymbol{Z}_{1r} - \boldsymbol{H}_{1r} \right),\boldsymbol{\Sigma}_{22.1}, \boldsymbol{Q}\right),\label{A2.12}
\end{equation}
where $\boldsymbol{\Sigma}_{22.1} = \boldsymbol{\Sigma}_{22} - \boldsymbol{\Sigma}_{21}\boldsymbol{\Sigma}_{11}^{- 1}\boldsymbol{\Sigma}_{12}$. It follows that:
\begin{equation}\boldsymbol{Z}_{2r}\sim \mathcal{N}_{q - m,n}\left( \boldsymbol{H}_{2r},\boldsymbol{\Sigma}_{22}, \boldsymbol{Q} \right),\, \boldsymbol{Z}_{1r}|\boldsymbol{Z}_{2r}\ \sim \mathcal{N}_{m,n}\left( \boldsymbol{H}_{1r} + \boldsymbol{\Sigma}_{12}\boldsymbol{\Sigma}_{22}^{- 1}\left( \boldsymbol{Z}_{2r} - \boldsymbol{H}_{2r} \right),\boldsymbol{\Sigma}_{11.2}, \boldsymbol{Q} \right),\label{A2.13}\end{equation}
where $\boldsymbol{\Sigma}_{11.2} = \boldsymbol{\Sigma}_{11} - \boldsymbol{\Sigma}_{12}\boldsymbol{\Sigma}_{22}^{- 1}\boldsymbol{\Sigma}_{21}$.
\end{proposition}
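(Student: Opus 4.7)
The plan is to reduce the matrix-variate claim to the standard vector-Gaussian marginal and conditional formulas via the vec operator used in the paper's Definition of the matrix normal, and then repackage the resulting Gaussians as matrix normals using the mixed-product property of the Kronecker product.

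First, by definition, $\mathrm{vec}(\boldsymbol{Z}')\sim\mathcal{N}_{qn}(\mathrm{vec}(\boldsymbol{H}'),\boldsymbol{\Sigma}\otimes\boldsymbol{Q})$. Since $\mathrm{vec}(\boldsymbol{Z}')$ stacks the rows of $\boldsymbol{Z}$, the row-partition $\boldsymbol{Z}=(\boldsymbol{Z}_{1r}',\boldsymbol{Z}_{2r}')'$ translates into the block-partition $\mathrm{vec}(\boldsymbol{Z}')=(\mathrm{vec}(\boldsymbol{Z}_{1r}')',\mathrm{vec}(\boldsymbol{Z}_{2r}')')'$, and the conformal row-partition of $\boldsymbol{\Sigma}$ yields
\[
\boldsymbol{\Sigma}\otimes\boldsymbol{Q}=\begin{pmatrix}\boldsymbol{\Sigma}_{11}\otimes\boldsymbol{Q}&\boldsymbol{\Sigma}_{12}\otimes\boldsymbol{Q}\\\boldsymbol{\Sigma}_{21}\otimes\boldsymbol{Q}&\boldsymbol{\Sigma}_{22}\otimes\boldsymbol{Q}\end{pmatrix}.
\]

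Second, I would apply the classical Gaussian marginalization and conditioning formulas to this block structure. Marginalization gives $\mathrm{vec}(\boldsymbol{Z}_{1r}')\sim\mathcal{N}(\mathrm{vec}(\boldsymbol{H}_{1r}'),\boldsymbol{\Sigma}_{11}\otimes\boldsymbol{Q})$, which by the same definition is exactly $\boldsymbol{Z}_{1r}\sim\mathcal{N}_{m,n}(\boldsymbol{H}_{1r},\boldsymbol{\Sigma}_{11},\boldsymbol{Q})$. Conditioning gives a normal with mean $\mathrm{vec}(\boldsymbol{H}_{2r}')+(\boldsymbol{\Sigma}_{21}\otimes\boldsymbol{Q})(\boldsymbol{\Sigma}_{11}\otimes\boldsymbol{Q})^{-1}(\mathrm{vec}(\boldsymbol{Z}_{1r}')-\mathrm{vec}(\boldsymbol{H}_{1r}'))$ and covariance $(\boldsymbol{\Sigma}_{22}\otimes\boldsymbol{Q})-(\boldsymbol{\Sigma}_{21}\otimes\boldsymbol{Q})(\boldsymbol{\Sigma}_{11}\otimes\boldsymbol{Q})^{-1}(\boldsymbol{\Sigma}_{12}\otimes\boldsymbol{Q})$. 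Using $(\boldsymbol{A}\otimes\boldsymbol{B})^{-1}=\boldsymbol{A}^{-1}\otimes\boldsymbol{B}^{-1}$ together with the mixed-product rule $(\boldsymbol{A}\otimes\boldsymbol{B})(\boldsymbol{C}\otimes\boldsymbol{D})=(\boldsymbol{AC})\otimes(\boldsymbol{BD})$, both expressions collapse: the cross-term reduces to $(\boldsymbol{\Sigma}_{21}\boldsymbol{\Sigma}_{11}^{-1})\otimes\boldsymbol{I}_n$ and the Schur complement becomes $\boldsymbol{\Sigma}_{22.1}\otimes\boldsymbol{Q}$. Recognising that left-multiplication by $(\boldsymbol{A}\otimes\boldsymbol{I}_n)$ on $\mathrm{vec}(\boldsymbol{X}')$ equals $\mathrm{vec}((\boldsymbol{A}\boldsymbol{X})')$, the conditional mean reassembles into the matrix form $\boldsymbol{H}_{2r}+\boldsymbol{\Sigma}_{21}\boldsymbol{\Sigma}_{11}^{-1}(\boldsymbol{Z}_{1r}-\boldsymbol{H}_{1r})$, giving the second assertion.

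The dual pair of statements, marginalising $\boldsymbol{Z}_{2r}$ and conditioning $\boldsymbol{Z}_{1r}$ on $\boldsymbol{Z}_{2r}$, is then immediate by permuting the two block labels and repeating the argument verbatim. The only real subtlety, which I would regard as the main bookkeeping obstacle, is keeping the Kronecker ordering consistent with the paper's convention $\mathrm{vec}(\boldsymbol{X}')\sim\mathcal{N}(\mathrm{vec}(\boldsymbol{M}'),\boldsymbol{\Sigma}\otimes\boldsymbol{V})$: the row-partition of the matrix must match the left factor of $\otimes$, and one has to track carefully which vec identity is invoked when translating the Kronecker-form conditional mean back into matrix form. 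Once these conventions are pinned down, every step reduces to the classical multivariate Gaussian conditioning lemma combined with routine Kronecker algebra.
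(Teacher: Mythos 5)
Your argument is correct: reducing to the vector-Gaussian case via $\mathrm{vec}(\boldsymbol{Z}')$, applying the classical conditioning lemma to the blocked covariance $\boldsymbol{\Sigma}\otimes\boldsymbol{Q}$, and collapsing the cross-term and Schur complement with the mixed-product rule is exactly the standard route, and your handling of the identity $(\boldsymbol{A}\otimes\boldsymbol{I}_n)\mathrm{vec}(\boldsymbol{X}')=\mathrm{vec}((\boldsymbol{A}\boldsymbol{X})')$ is the one subtle step and you get it right. Note that the paper does not prove this statement at all --- it is imported verbatim as Theorem 2.3.12 of Gupta and Nagar (1999) --- so there is no in-paper proof to compare against; your derivation is the textbook one and is sound.
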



\begin{proposition}\label{prop:posteriormatrix}
Under Assumptions \ref{ass1} and \ref{ass2}, the
posterior distribution of $\boldsymbol{B}$ at time $t$ conditioned to the
information available at time $t - 1$ is the matrix variate normal
$\boldsymbol{B}|\mathbf{Y}\ \sim \mathcal{N}_{p,n}\left(  \boldsymbol{M}_{*},\boldsymbol{\Sigma}_{*}, \boldsymbol{V} \right)$
with:
{\small\begin{eqnarray}
\boldsymbol{M}_{*} &=& \boldsymbol{M} + \boldsymbol{\Sigma}_{P}\left( {\boldsymbol{\iota}_{T}\otimes \boldsymbol{I}}_{p} \right)^{'}\left\lbrack \boldsymbol{I}_{T}\otimes\boldsymbol{\Sigma}_{L} + \left( {\boldsymbol{\iota}_{T}\otimes \boldsymbol{I}}_{p} \right)\boldsymbol{\Sigma}_{P}\left( {\boldsymbol{\iota}_{T}\otimes \boldsymbol{I}}_{p} \right)^{'} \right\rbrack^{- 1}\left( \mathbf{Y} - \boldsymbol{\iota}_{T}\otimes \boldsymbol{M} \right)\label{eq:Mstar}\\
\boldsymbol{\Sigma}_{*} &=& \boldsymbol{\Sigma}_{P} - \boldsymbol{\Sigma}_{P}\left( {\boldsymbol{\iota}_{T}\otimes \boldsymbol{I}}_{p} \right)^{'}\left\lbrack \boldsymbol{I}_{T}\otimes\boldsymbol{\Sigma}_{L} + \left( {\boldsymbol{\iota}_{T}\otimes I}_{p} \right)\boldsymbol{\Sigma}_{P}\left( {\boldsymbol{\iota}_{T}\otimes \boldsymbol{I}}_{p} \right)^{'} \right\rbrack^{- 1}{\left( {\boldsymbol{\iota}_{T}\otimes \boldsymbol{I}}_{p} \right)\boldsymbol{\Sigma}}_{P}.\label{eq:Sigmastar}
\end{eqnarray}}
\end{proposition}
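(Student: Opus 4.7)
The approach is to reduce the matrix--normal update to the standard multivariate--normal conditioning formula of Propositions \ref{prop:gupta2} and \ref{prop:gupta3}. The key trick is to stack the $T$ i.i.d. matrix observations into a single tall matrix of dimension $Tp\times n$ so that both sampling and prior distributions share the same right--side covariance factor $\boldsymbol{V}$, which is what Propositions \ref{prop:gupta2}--\ref{prop:gupta3} require.

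Concretely, first I would write $\mathbf{Y}=(\boldsymbol{Y}_1',\ldots,\boldsymbol{Y}_T')'$ of dimension $Tp\times n$. Under Assumption \ref{ass1}, the i.i.d.\ matrix--normal sampling model implies $\mathbf{Y}\mid\boldsymbol{B}\sim\mathcal{N}_{Tp,n}\bigl((\boldsymbol{\iota}_T\otimes\boldsymbol{I}_p)\boldsymbol{B},\ \boldsymbol{I}_T\otimes\boldsymbol{\Sigma}_L,\ \boldsymbol{V}\bigr)$, since the stacked mean equals $\boldsymbol{\iota}_T\otimes\boldsymbol{B}=(\boldsymbol{\iota}_T\otimes\boldsymbol{I}_p)\boldsymbol{B}$ and independence of the $\boldsymbol{Y}_j$'s yields the block--diagonal left covariance $\boldsymbol{I}_T\otimes\boldsymbol{\Sigma}_L$. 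Under Assumption \ref{ass2}, the prior is $\boldsymbol{B}\sim\mathcal{N}_{p,n}(\boldsymbol{M},\boldsymbol{\Sigma}_P,\boldsymbol{V})$ with the same right covariance $\boldsymbol{V}$, so the two distributions can be combined via Proposition \ref{prop:gupta2}.

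Second, I would apply Proposition \ref{prop:gupta2} with $\boldsymbol{X}_1=\boldsymbol{B}$, $\boldsymbol{F}=\boldsymbol{M}$, $\boldsymbol{\Sigma}_1=\boldsymbol{\Sigma}_P$, $\boldsymbol{X}_2=\mathbf{Y}$, $\boldsymbol{C}=\boldsymbol{0}$, $\boldsymbol{D}=\boldsymbol{\iota}_T\otimes\boldsymbol{I}_p$, $\boldsymbol{\Sigma}_2=\boldsymbol{I}_T\otimes\boldsymbol{\Sigma}_L$ and $\boldsymbol{\Phi}=\boldsymbol{V}$ to obtain
\begin{equation*}
\begin{pmatrix}\boldsymbol{B}\\ \mathbf{Y}\end{pmatrix}\sim\mathcal{N}_{p+Tp,n}\!\left[\begin{pmatrix}\boldsymbol{M}\\ \boldsymbol{\iota}_T\otimes\boldsymbol{M}\end{pmatrix},\begin{pmatrix}\boldsymbol{\Sigma}_P & \boldsymbol{\Sigma}_P(\boldsymbol{\iota}_T\otimes\boldsymbol{I}_p)'\\ (\boldsymbol{\iota}_T\otimes\boldsymbol{I}_p)\boldsymbol{\Sigma}_P & \boldsymbol{I}_T\otimes\boldsymbol{\Sigma}_L+(\boldsymbol{\iota}_T\otimes\boldsymbol{I}_p)\boldsymbol{\Sigma}_P(\boldsymbol{\iota}_T\otimes\boldsymbol{I}_p)'\end{pmatrix},\boldsymbol{V}\right].
\end{equation*}

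Third, I would invoke Proposition \ref{prop:gupta3} (the second conditional formula, with $\boldsymbol{Z}_{1r}=\boldsymbol{B}$ and $\boldsymbol{Z}_{2r}=\mathbf{Y}$) to read off $\boldsymbol{B}\mid\mathbf{Y}\sim\mathcal{N}_{p,n}(\boldsymbol{M}_*,\boldsymbol{\Sigma}_*,\boldsymbol{V})$ with
\begin{align*}
\boldsymbol{M}_* &= \boldsymbol{M}+\boldsymbol{\Sigma}_{12}\boldsymbol{\Sigma}_{22}^{-1}\bigl(\mathbf{Y}-\boldsymbol{\iota}_T\otimes\boldsymbol{M}\bigr),\\
\boldsymbol{\Sigma}_* &= \boldsymbol{\Sigma}_{11}-\boldsymbol{\Sigma}_{12}\boldsymbol{\Sigma}_{22}^{-1}\boldsymbol{\Sigma}_{21},
\end{align*}
and substituting the four blocks from the previous step produces exactly Eqs.~\eqref{eq:Mstar}--\eqref{eq:Sigmastar}. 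Invertibility of $\boldsymbol{\Sigma}_{22}=\boldsymbol{I}_T\otimes\boldsymbol{\Sigma}_L+(\boldsymbol{\iota}_T\otimes\boldsymbol{I}_p)\boldsymbol{\Sigma}_P(\boldsymbol{\iota}_T\otimes\boldsymbol{I}_p)'$ is immediate since $\boldsymbol{\Sigma}_L>0$ makes it a positive definite perturbation of a block--diagonal positive definite matrix.

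The only mild obstacle is bookkeeping the Kronecker identities $\boldsymbol{\iota}_T\otimes\boldsymbol{B}=(\boldsymbol{\iota}_T\otimes\boldsymbol{I}_p)\boldsymbol{B}$ and checking that the stacked sampling covariance $\boldsymbol{I}_T\otimes\boldsymbol{\Sigma}_L$ matches independence across the $T$ observations; once that is verified, everything reduces to the conjugate Gaussian update in matrix form and no additional computation is needed. A Woodbury--type rewriting of $\boldsymbol{\Sigma}_*$ in terms of $\boldsymbol{\Sigma}_P^{-1}+T\boldsymbol{\Sigma}_L^{-1}$ can optionally be added for later use in Proposition \ref{prop:posteriorBF}, but is not required for the present statement.
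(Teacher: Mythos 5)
Your proposal is correct and follows essentially the same route as the paper's proof: stack the $T$ observations into a $Tp\times n$ matrix normal with mean $(\boldsymbol{\iota}_T\otimes\boldsymbol{I}_p)\boldsymbol{B}$ and left covariance $\boldsymbol{I}_T\otimes\boldsymbol{\Sigma}_L$, form the joint law of $(\boldsymbol{B},\mathbf{Y})$ via Proposition \ref{prop:gupta2} with exactly the substitutions you list, and read off the conditional via Proposition \ref{prop:gupta3}. The paper merely spends more space verifying the determinant identity showing the product of the $T$ individual densities equals the stacked matrix-normal density, a detail you correctly flag as the only bookkeeping step.
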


\begin{proof}
Let $\boldsymbol{Y}_{i}\left( p\times n \right)$ $i=1,\ldots,t-1$ be a random matrix i.i.d. sequence from a matrix--variate normal distribution with mean matrix
$\boldsymbol{B}\left( p\times n \right)$ and covariance matrices
$\boldsymbol{\Sigma}_{L}\left( p\times p \right) > 0$ and
$\boldsymbol{V}\left( n\times n \right) > 0$, then
$\text{vec}( \boldsymbol{Y}_{t}^{'} )\sim \mathcal{N}_{pn}( \text{vec}( \boldsymbol{B}^{'} ),\boldsymbol{\Sigma}_{L}\otimes \boldsymbol{V} )$
and the pdf is:
\begin{equation} f\left( \boldsymbol{Y}_i|\boldsymbol{B},\boldsymbol{V},\boldsymbol{\Sigma}_{L} \right) = (2\pi)^{-\frac{np}{2}}\left| \boldsymbol{\Sigma}_{L} \right|^{-\frac{n}{2}}|\boldsymbol{V}|^{-\frac{p}{2}}\exp\left\{ - \frac{1}{2}\text{tr}\left\lbrack \boldsymbol{\Sigma}_{L}^{- 1}\left( \boldsymbol{Y}_i - \boldsymbol{B} \right)\boldsymbol{V}^{- 1}\left( \boldsymbol{Y}_i - \boldsymbol{B} \right)^{'} \right\rbrack \right\}.\label{A2.2}
\end{equation}
Define $T=t-1$, then the likelihood function is:
{\small\begin{eqnarray}
 f\left( \boldsymbol{Y}_{1},\boldsymbol{Y}_{2},\ldots,\boldsymbol{Y}_{T}|\boldsymbol{B},\boldsymbol{V},\boldsymbol{\Sigma}_{L} \right) &&=\prod_{i = 1}^{T}\frac{\exp\left\{ - \frac{1}{2}\text{tr}\left\lbrack \boldsymbol{\Sigma}_{L}^{- 1}\left( \boldsymbol{Y}_i - \boldsymbol{B} \right)\boldsymbol{V}^{- 1}\left( \boldsymbol{Y}_i - \boldsymbol{B} \right)^{'} \right\rbrack \right\}}{(2\pi)^{\frac{np}{2}}\left| \boldsymbol{\Sigma}_{L} \right|^{\frac{n}{2}}|\boldsymbol{V}|^{\frac{p}{2}}}\label{A2.3}\\
&&=\frac{\exp\left\{ - \frac{1}{2}\text{tr}\left\lbrack \sum_{i = 1}^{T}{\boldsymbol{\Sigma}_{L}^{- 1}\left( \boldsymbol{Y}_{t} - \boldsymbol{B} \right)\boldsymbol{V}^{- 1}\left( \boldsymbol{Y}_{t} - \boldsymbol{B} \right)^{'}} \right\rbrack \right\}}{(2\pi)^{\frac{Tnp}{2}}\left| \boldsymbol{\Sigma}_{L} \right|^{T\frac{n}{2}}|\boldsymbol{V}|^{T\frac{p}{2}}}\nonumber\\
&&= \frac{\exp\left\{ - \frac{1}{2}\text{tr}\left\lbrack \left( {\boldsymbol{I}_{T}\otimes\boldsymbol{\Sigma}}_{L}^{- 1} \right)\left( \mathbf{Y} - \boldsymbol{\iota}_{T}\otimes \boldsymbol{B} \right)\boldsymbol{V}^{- 1}\left( \mathbf{Y} - \boldsymbol{\iota}_{T}\otimes \boldsymbol{B} \right)^{'} \right\rbrack \right\}}{(2\pi)^{\frac{Tnp}{2}}\left| \boldsymbol{\Sigma}_{L} \right|^{T\frac{n}{2}}|\boldsymbol{V}|^{\frac{Tp}{2}}},\nonumber
\end{eqnarray}}
where
$\mathbf{Y}' = \left\lbrack \boldsymbol{Y}_{1}^{'},\ \boldsymbol{Y}_{2}^{'},\ \ldots,\ \boldsymbol{Y}_{T}^{'} \right\rbrack$
is a matrix $( n \times Tp)$ obtained by stacking vertically the $T$ matrices $\boldsymbol{Y}_i,\ i = 1,\ldots,T$ and
$\boldsymbol{I}_{T}\otimes\boldsymbol{\Sigma}_{L}$ is $( Tp\times Tp)$.  We can say that the previous expression is the pdf of the random matrix $\mathbf{Y}(Tp\times n)$ with mean $\boldsymbol{\iota}_{T}\otimes \boldsymbol{B}(Tp\times n)$ and a covariance matrix
$( \boldsymbol{I}_{T}\otimes\boldsymbol{\Sigma}_{L} )\otimes \boldsymbol{V}$ where $\boldsymbol{I}_{T}$ and $\boldsymbol{\iota}_{T}$ are the $T$-dimensional identity matrix and unit column vector, respectively. The matrix $\boldsymbol{I}_{T}\otimes\boldsymbol{\Sigma}_{L} ( Tp\times Tp )$ is positive definite, which follows from $\boldsymbol{\Sigma}_{L} > 0$, $\boldsymbol{I}_{T} > 0$ and $\boldsymbol{V}>0$.
Since
$\left| ( \boldsymbol{I}_{T}\otimes\boldsymbol{\Sigma}_{L} )\otimes \boldsymbol{V} \right| = \left| \left( \boldsymbol{I}_{T}\otimes\boldsymbol{\Sigma}_{L} \right) \right|^{n}|\boldsymbol{V}|^{Tp}=
\left| \boldsymbol{I}_{T} \right|^{np}\left| \boldsymbol{\Sigma}_{L} \right|^{nT}|\boldsymbol{V}|^{Tp}=\left| \boldsymbol{\Sigma}_{L} \right|^{Tn}|\boldsymbol{V}|^{Tp}$ the pdf of $\mathbf{Y}$ is:
{\small\begin{eqnarray} f\left( \mathbf{Y}|\boldsymbol{B},\boldsymbol{V},\boldsymbol{\Sigma}_{L} \right) &=& \frac{\exp\left\{ - \frac{1}{2}\text{tr}\left\lbrack \left( \boldsymbol{I}_{T}\otimes\boldsymbol{\Sigma}_{L} \right)^{- 1}\left( \mathbf{Y} - \boldsymbol{\iota}_{T}\otimes \boldsymbol{B} \right)\boldsymbol{V}^{- 1}\left( \mathbf{Y} - \boldsymbol{\iota}_{T}\otimes \boldsymbol{B} \right)^{'} \right\rbrack \right\}}{(2\pi)^{\frac{\text{Tnp}}{2}}\left| \boldsymbol{I}_{T}\otimes\boldsymbol{\Sigma}_{L} \right|^{\frac{n}{2}}|\boldsymbol{V}|^{\frac{\text{Tp}}{2}}}\nonumber\\
&=&\frac{\exp\left\{ - \frac{1}{2}\text{tr}\left\lbrack \left( \boldsymbol{I}_{T}\otimes\boldsymbol{\Sigma}_{L} \right)^{- 1}\left( \mathbf{Y} - \boldsymbol{\iota}_{T}\otimes \boldsymbol{B} \right)\boldsymbol{V}^{- 1}\left( \mathbf{Y} - \boldsymbol{\iota}_{T}\otimes \boldsymbol{B} \right)^{'} \right\rbrack \right\}}{(2\pi)^{\frac{Tnp}{2}}\left| \boldsymbol{\Sigma}_{L} \right|^{\frac{\text{Tn}}{2}}|\boldsymbol{V}|^{\frac{Tp}{2}}}\label{A2.8}
\end{eqnarray}}
which is equal to the pdf in Equation \eqref{A2.3}. In the following, we assume the  $p\times n$ matrix $\boldsymbol{B}$ is random, and the matrices $\boldsymbol{\Sigma}_{P}$ and $\boldsymbol{V}$ are known. Assume a matrix normal prior distribution for $\boldsymbol{B}$, i.e. $\boldsymbol{B}\sim \mathcal{N}_{p,n}\left( \boldsymbol{M},\boldsymbol{\Sigma}_{P}, \boldsymbol{V} \right)$, with pdf
\begin{equation}
f\left( \boldsymbol{B}|\boldsymbol{M},\boldsymbol{V},\boldsymbol{\Sigma}_{P} \right) = \frac{\exp\left\{ - \frac{1}{2}\text{tr}\left\lbrack \boldsymbol{\Sigma}_{P}^{- 1}(\boldsymbol{B} - \boldsymbol{M})\boldsymbol{V}^{- 1}(\boldsymbol{B} - \boldsymbol{M})^{'} \right\rbrack \right\}}{(2\pi)^{\frac{np}{2}}\left| \boldsymbol{\Sigma}_{P} \right|^{\frac{n}{2}}|\boldsymbol{V}|^{\frac{p}{2}}}\label{A2.9}.
\end{equation}
Note that $\boldsymbol{\iota}_{T}\otimes \boldsymbol{B} = \left( {\boldsymbol{\iota}_{T}\otimes \boldsymbol{I}}_{p} \right)\boldsymbol{B}$ and
apply Prop. \ref{prop:gupta3} to \eqref{A2.8} and \eqref{A2.9}, replacing $\boldsymbol{X}_1,\ \boldsymbol{X}_2|\boldsymbol{X}_1,\ \boldsymbol{C},\ \boldsymbol{D},\ \boldsymbol{F},\ \boldsymbol{\Sigma}_{1},\ \boldsymbol{\Sigma}_{2},\ \boldsymbol{\Phi}$ wih $\boldsymbol{B},\ \mathbf{Y}|\boldsymbol{B},\ \boldsymbol{O}_{Tp \times n},\ {\boldsymbol{\iota}_{T}\otimes \boldsymbol{I}}_{p},\ \boldsymbol{M},\ \boldsymbol{\Sigma}_{P},\ \boldsymbol{\Sigma}_{L},\boldsymbol{V}$ , where $\boldsymbol{I}_{p}$ is the identity matrix of order $p$, we get: 
\begin{equation}
\boldsymbol{Z}\sim \mathcal{N}_{Tp + p,n}\left\lbrack \begin{pmatrix}
\boldsymbol{M} \\
\left( {\boldsymbol{\iota}}_{T}\otimes\boldsymbol{I}_{p} \right)\boldsymbol{M} + \boldsymbol{O} \\
\end{pmatrix},\begin{pmatrix}
\boldsymbol{\Sigma}_{P} & \boldsymbol{\Sigma}_{P}\left( {\boldsymbol{\iota}}_{T}\otimes\boldsymbol{I}_{p}\  \right)^{'} \\
\left( {\boldsymbol{\iota}}_{T}\otimes\boldsymbol{I}_{p} \right)\boldsymbol{\Sigma}_{P} & \left( \boldsymbol{I}_{T}\otimes\boldsymbol{\Sigma}_{L} \right) + \left( {\boldsymbol{\iota}_{T}\otimes \boldsymbol{I}}_{p} \right)\boldsymbol{\Sigma}_{P}\left( {\boldsymbol{\iota}_{T}\otimes \boldsymbol{I}}_{p} \right)^{'} \\
\end{pmatrix}\otimes \boldsymbol{V} \right\rbrack,\label{A2.14}
\end{equation}
where $\boldsymbol{Z}' =(\boldsymbol{B}',\mathbf{Y}')$. By applying Prop. \ref{prop:gupta3} to $\boldsymbol{Z}$ and defining:
{\small
\begin{eqnarray}
\boldsymbol{M}_{*} &=& \boldsymbol{M} + \boldsymbol{\Sigma}_{P}\left( {\boldsymbol{\iota}_{T}\otimes \boldsymbol{I}}_{p} \right)^{'}\left\lbrack \boldsymbol{I}_{T}\otimes\boldsymbol{\Sigma}_{L} + \left( {\boldsymbol{\iota}_{T}\otimes \boldsymbol{I}}_{p} \right)\boldsymbol{\Sigma}_{P}\left( {\boldsymbol{\iota}_{T}\otimes I}_{p} \right)^{'} \right\rbrack^{- 1}\left( \mathbf{Y} - \boldsymbol{\iota}_{T}\otimes \boldsymbol{M} \right)\\
\boldsymbol{\Sigma}_{*} &=& \boldsymbol{\Sigma}_{P} - \boldsymbol{\Sigma}_{P}\left( {\boldsymbol{\iota}_{T}\otimes \boldsymbol{I}}_{p} \right)^{'}\left\lbrack \boldsymbol{I}_{T}\otimes\boldsymbol{\Sigma}_{L} + \left( {\boldsymbol{\iota}_{T}\otimes \boldsymbol{I}}_{p} \right)\boldsymbol{\Sigma}_{P}\left( {\boldsymbol{\iota}_{T}\otimes \boldsymbol{I}}_{p} \right)^{'} \right\rbrack^{- 1}{\left( {\boldsymbol{\iota}_{T}\otimes \boldsymbol{I}}_{p} \right)\boldsymbol{\Sigma}}_{P},\label{A2.15}
\end{eqnarray}}
we obtain the posterior distribution of $\boldsymbol{B}$ at time $t-1$ with 
pdf
\begin{equation}f\left( \boldsymbol{B}|\mathbf{Y},\boldsymbol{M},\boldsymbol{V},\boldsymbol{\Sigma}_{*} \right) = \frac{\exp\left\{ - \frac{1}{2}\text{tr}\left\lbrack \boldsymbol{\Sigma}_{*}^{- 1}\left( \boldsymbol{B} - \boldsymbol{M}_{*} \right)\boldsymbol{V}^{- 1}\left( \boldsymbol{B} - \boldsymbol{M}_{*} \right)^{'} \right\rbrack \right\}}{(2\pi)^{\frac{np}{2}}\left| \boldsymbol{\Sigma}_{*} \right|^{\frac{n}{2}}|\boldsymbol{V}|^{\frac{p}{2}}}\label{A2.16}
\end{equation}
that is $\boldsymbol{B}$ follows a matrix normal $\boldsymbol{B}|\mathbf{Y}\ \sim \mathcal{N}_{p,n}\left( \boldsymbol{M}_{*},\boldsymbol{\Sigma}_{*},\boldsymbol{V} \right)$.

\end{proof}


\begin{proposition}\label{prop:posteriormatrixUnknownV}
Under Assumptions \ref{ass1} and \ref{ass3}, the joint posterior distribution of $\boldsymbol{B}$ and $\boldsymbol{V}$ at time $t$ given the information available at time $t - 1$ is the matrix normal inverse Wishart
$\boldsymbol{B}|\boldsymbol{V},\mathbf{Y} \sim \mathcal{N}_{p,n}\left(  \boldsymbol{M}_{*},\boldsymbol{\Sigma}_{*}, \boldsymbol{V} \right)$ and $\boldsymbol{V}|\mathbf{Y}\sim\mathcal{I}\mathcal{W}(\boldsymbol{\Psi}_{*},m_{*})$
with:
\begin{eqnarray}
&&\boldsymbol{M}_{*} =\frac{k\boldsymbol{M}+T\bar{\boldsymbol{Y}}}{k+T},\quad
\boldsymbol{\Sigma}_{*} =\frac{1}{k_{*}}\boldsymbol{\Sigma}_L,\quad\boldsymbol{\Psi}_{*}=\boldsymbol{\Psi}+kT\frac{\left(\boldsymbol{M}-\bar{\boldsymbol{Y}}\right)^{'}\boldsymbol{\Sigma}_L^{-1}\left(\boldsymbol{M}-\bar{\boldsymbol{Y}}\right)}{\left(k+T\right)}+T\boldsymbol{S}\nonumber\\&& k_{*}=k+T,\quad k=\rho\varphi,\quad m_{*}=m+Tp\label{eq:PsistarV}
\end{eqnarray}
where, $\bar{\boldsymbol{Y}}=\frac{1}{T}\sum_{i = 1}^{T}\boldsymbol{Y}_i$ and $\boldsymbol{S}=\frac{1}{T}\sum_{i = 1}^{T}{\left(\boldsymbol{Y}_i - \bar{\boldsymbol{Y}} \right)^{'}\boldsymbol{\Sigma}_L^{- 1}\left(\boldsymbol{Y}_i - \bar{\boldsymbol{Y}} \right)}$.
\end{proposition}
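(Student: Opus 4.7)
The plan is to compute the joint posterior by multiplying the matrix normal likelihood from Assumption \ref{ass1} with the conjugate Normal--Inverse Wishart prior of Assumption \ref{ass3}, and then to factor the resulting kernel as (matrix normal in $\boldsymbol{B}$ given $\boldsymbol{V}$) $\times$ (inverse Wishart in $\boldsymbol{V}$). First I would use the cyclic property of the trace to rewrite each data term as $\mathrm{tr}[\boldsymbol{V}^{-1}(\boldsymbol{Y}_i-\boldsymbol{B})^{\prime}\boldsymbol{\Sigma}_L^{-1}(\boldsymbol{Y}_i-\boldsymbol{B})]$, which is the natural form because the unknown column-scale $\boldsymbol{V}$ then appears only through $\boldsymbol{V}^{-1}$ inside a single outer trace. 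Summing over $i=1,\ldots,T$ and applying the usual within/between decomposition
\begin{equation*}
\sum_{i=1}^{T}(\boldsymbol{Y}_i-\boldsymbol{B})^{\prime}\boldsymbol{\Sigma}_L^{-1}(\boldsymbol{Y}_i-\boldsymbol{B})=T(\bar{\boldsymbol{Y}}-\boldsymbol{B})^{\prime}\boldsymbol{\Sigma}_L^{-1}(\bar{\boldsymbol{Y}}-\boldsymbol{B})+T\boldsymbol{S}
\end{equation*}
separates the likelihood into a part depending on $\boldsymbol{B}$ and a data-only residual part $T\boldsymbol{S}$ that will feed directly into $\boldsymbol{\Psi}_{*}$.

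Next I would treat the prior $\boldsymbol{B}|\boldsymbol{V}\sim\mathcal{N}_{p,n}(\boldsymbol{M},\boldsymbol{\Sigma}_L/\varphi,\boldsymbol{V}/\rho)$. Because $(\boldsymbol{\Sigma}_L/\varphi)^{-1}\otimes(\boldsymbol{V}/\rho)^{-1}=\varphi\rho\,\boldsymbol{\Sigma}_L^{-1}\otimes\boldsymbol{V}^{-1}$, the prior exponent is $-\tfrac{k}{2}\mathrm{tr}[\boldsymbol{V}^{-1}(\boldsymbol{B}-\boldsymbol{M})^{\prime}\boldsymbol{\Sigma}_L^{-1}(\boldsymbol{B}-\boldsymbol{M})]$ with $k=\rho\varphi$, which combines naturally with the $\boldsymbol{B}$--dependent piece of the likelihood. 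The key step is the matrix completion of the square
\begin{equation*}
T(\bar{\boldsymbol{Y}}-\boldsymbol{B})^{\prime}\boldsymbol{\Sigma}_L^{-1}(\bar{\boldsymbol{Y}}-\boldsymbol{B})+k(\boldsymbol{B}-\boldsymbol{M})^{\prime}\boldsymbol{\Sigma}_L^{-1}(\boldsymbol{B}-\boldsymbol{M})=k_{*}(\boldsymbol{B}-\boldsymbol{M}_{*})^{\prime}\boldsymbol{\Sigma}_L^{-1}(\boldsymbol{B}-\boldsymbol{M}_{*})+\tfrac{kT}{k+T}(\boldsymbol{M}-\bar{\boldsymbol{Y}})^{\prime}\boldsymbol{\Sigma}_L^{-1}(\boldsymbol{M}-\bar{\boldsymbol{Y}}),
\end{equation*}
with $k_{*}=k+T$ and $\boldsymbol{M}_{*}=(k\boldsymbol{M}+T\bar{\boldsymbol{Y}})/k_{*}$. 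This identifies the conditional posterior as $\boldsymbol{B}|\boldsymbol{V},\mathbf{Y}\sim\mathcal{N}_{p,n}(\boldsymbol{M}_{*},\boldsymbol{\Sigma}_L/k_{*},\boldsymbol{V})$, exactly as stated, and leaves a leftover term that together with $T\boldsymbol{S}$ yields the claimed $\boldsymbol{\Psi}_{*}$.

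Finally I would collect the $|\boldsymbol{V}|$ factors to identify the marginal posterior of $\boldsymbol{V}$. The likelihood contributes $|\boldsymbol{V}|^{-Tp/2}$, the prior on $\boldsymbol{B}|\boldsymbol{V}$ contributes $|\boldsymbol{V}|^{-p/2}$, the $\boldsymbol{B}$-integration returns a $|\boldsymbol{V}|^{+p/2}$ factor (since the conditional posterior is matrix normal with column covariance $\boldsymbol{V}$), and the IW prior contributes $|\boldsymbol{V}|^{-m/2}$. Combining and adding the residual trace term to $\boldsymbol{\Psi}$ under $\boldsymbol{V}^{-1}$ yields $\boldsymbol{V}|\mathbf{Y}\sim\mathcal{IW}_{n}(\boldsymbol{\Psi}_{*},m_{*})$ with $m_{*}=m+Tp$ and $\boldsymbol{\Psi}_{*}$ as in \eqref{eq:PsistarV}.

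The main obstacle is the matrix completion of the square: unlike the scalar/multivariate case, one must verify that the cross term $-T\boldsymbol{B}^{\prime}\boldsymbol{\Sigma}_L^{-1}\bar{\boldsymbol{Y}}-k\boldsymbol{B}^{\prime}\boldsymbol{\Sigma}_L^{-1}\boldsymbol{M}$ and its transpose combine cleanly through the common left factor $\boldsymbol{\Sigma}_L^{-1}$ and the common right factor $\boldsymbol{V}^{-1}$ inside a single trace, so that the ``leftover'' constant is indeed the quadratic form $\tfrac{kT}{k+T}(\boldsymbol{M}-\bar{\boldsymbol{Y}})^{\prime}\boldsymbol{\Sigma}_L^{-1}(\boldsymbol{M}-\bar{\boldsymbol{Y}})$ rather than a different Kronecker contraction. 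The bookkeeping of the determinant factors in $\boldsymbol{V}$ is also delicate, but it is the same calculation used in standard Normal--Inverse Wishart conjugacy results and poses no conceptual difficulty.
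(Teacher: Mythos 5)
Your proposal is correct and follows essentially the same route as the paper's proof: multiply the matrix normal likelihood by the Normal--Inverse Wishart prior, apply the within/between decomposition of the data quadratic form, complete the square in $\boldsymbol{B}$ under the common $\boldsymbol{\Sigma}_L^{-1}(\cdot)\boldsymbol{V}^{-1}$ trace structure to obtain $\boldsymbol{M}_{*}$, $k_{*}$ and the leftover term $\tfrac{kT}{k+T}(\boldsymbol{M}-\bar{\boldsymbol{Y}})^{\prime}\boldsymbol{\Sigma}_L^{-1}(\boldsymbol{M}-\bar{\boldsymbol{Y}})$ feeding $\boldsymbol{\Psi}_{*}$, and then track the powers of $|\boldsymbol{V}|$ to read off the $\mathcal{IW}_{n}(\boldsymbol{\Psi}_{*},m_{*})$ marginal. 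The only cosmetic difference is that the paper verifies the factorization by explicitly computing the normalizing integral over $\boldsymbol{B}$ and $\boldsymbol{V}$, whereas you identify the kernel directly; these are equivalent.
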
 
For the unknown $\boldsymbol{V}$ case, the pair $\boldsymbol{B}_t$ and $\boldsymbol{V}_t$ corresponds to $\boldsymbol{\theta}_{t}$ and their joint distribution corresponds to the posterior $p\left(\boldsymbol{\theta}_{t}\vert \mathcal{D}_{t - 1} \right)$.

\begin{proof}[Proof of Proposition \ref{prop:posteriormatrixUnknownV}]
Let $f\left( \boldsymbol{B},\boldsymbol{V}|\mathbf{Y},\boldsymbol{M},\boldsymbol{\Sigma}_L,\boldsymbol{\Psi},m \right)$ be the density of the Matrix Normal Inverse Wishart prior, the joint posterior density of $\boldsymbol{B}$ and $\boldsymbol{V}$ is:
\begin{eqnarray}f\left( \boldsymbol{B},\boldsymbol{V}|\mathbf{Y},\boldsymbol{M},\boldsymbol{\Sigma}_L,\boldsymbol{\Psi},m \right)&=&\frac{h\left( \boldsymbol{B},\boldsymbol{V}|\boldsymbol{M},\boldsymbol{\Sigma}_L,\boldsymbol{\Psi},m \right)f\left( \mathbf{Y}|\boldsymbol{B},\boldsymbol{V},\boldsymbol{\Sigma}_L \right)}{\int_{\boldsymbol{V}>0}\int_{\boldsymbol{B}}h\left( \boldsymbol{B},\boldsymbol{V}|\boldsymbol{M},\boldsymbol{\Sigma}_L,\boldsymbol{\Psi},m \right)f\left( \mathbf{Y}|\boldsymbol{B},\boldsymbol{V},\boldsymbol{\Sigma}_L \right)d\boldsymbol{B}d\boldsymbol{V}}\nonumber\\
&=&\frac{{\left|\boldsymbol{V}\right|^{-\frac{m+p}{2}}|\boldsymbol{V}|^{-\frac{Tp}{2}}}\exp\left\{ - \frac{1}{2}\text{tr}\left\lbrack\mathcal{A}\right\rbrack\right\}}{\int_{\boldsymbol{V}>0}\int_{\boldsymbol{B}}{\left|\boldsymbol{V}\right|^{-\frac{m+p}{2}}|\boldsymbol{V}|^{-\frac{Tp}{2}}}\exp\left\{ - \frac{1}{2}\text{tr}\left\lbrack\mathcal{A}\right\rbrack\right\}d\boldsymbol{B}d\boldsymbol{V}},\nonumber
\end{eqnarray}
where we defined $\mathcal{A}=(\boldsymbol{B} - \boldsymbol{M})^{'}k\boldsymbol{\Sigma}_L^{- 1}(\boldsymbol{B} - \boldsymbol{M})\boldsymbol{V}^{-1}+\boldsymbol{\Psi} \boldsymbol{V}^{-1}+\sum_{s = 1}^{T}{\left(     \boldsymbol{Y}_{s} - \boldsymbol{B} \right)^{'}\boldsymbol{\Sigma}_L^{- 1}\left(\boldsymbol{Y}_{s} - \boldsymbol{B} \right)\boldsymbol{V}^{- 1}}$ with $k = \rho\varphi$ and $\bar{\boldsymbol{Y}}=T^{-1}\sum_{s = 1}^{T}    \boldsymbol{Y}_{s}$.

Regarding the numerator, since
\begin{equation*}
    \boldsymbol{S} \boldsymbol{V}^{- 1}=\sum_{s = 1}^{T}{\left(     \boldsymbol{Y}_{s} - \boldsymbol{B} \right)^{'}\boldsymbol{\Sigma}_L^{- 1}\left(     \boldsymbol{Y}_{s} - \boldsymbol{B} \right)\boldsymbol{V}^{- 1}} -T\left(\bar{    \boldsymbol{Y}} - \boldsymbol{B} \right)^{'}\boldsymbol{\Sigma}_L^{- 1}\left(\bar{    \boldsymbol{Y}} - \boldsymbol{B} \right)\boldsymbol{V}^{- 1}
\end{equation*}
with $\boldsymbol{S}=\sum_{s = 1}^{T}{\left(    \boldsymbol{Y}_{s} - \bar{\boldsymbol{Y}} \right)^{'}\boldsymbol{\Sigma}_L^{- 1}\left(    \boldsymbol{Y}_{s} - \bar{\boldsymbol{Y}} \right)})/T$ and
{\small\begin{eqnarray*}
%
&&\quad k(\boldsymbol{B}^{'}\boldsymbol{\Sigma}_L^{-1}\boldsymbol{B}-\boldsymbol{M}^{'}\boldsymbol{\Sigma}_L^{-1}\boldsymbol{B}-\boldsymbol{B}^{'}\boldsymbol{\Sigma}_L^{-1}\boldsymbol{M}+\boldsymbol{M}^{'}\boldsymbol{\Sigma}_L^{-1}\boldsymbol{M})+T(\bar{    \boldsymbol{Y}}^{'}\boldsymbol{\Sigma}_L^{-1}\bar{\boldsymbol{Y}}-\boldsymbol{B}^{'}\boldsymbol{\Sigma}_L^{-1}\bar{\boldsymbol{Y}}-\bar{    \boldsymbol{Y}}^{'}\boldsymbol{\Sigma}_L^{-1}\boldsymbol{B}+\boldsymbol{B}^{'}\boldsymbol{\Sigma}_L^{-1}\boldsymbol{B})\nonumber\\
&&\quad+\frac{\left(k\boldsymbol{M}^{'}+T\bar{    \boldsymbol{Y}}^{'}\right)\boldsymbol{\Sigma}_L^{-1}\left(k    \boldsymbol{M}+T\bar{    \boldsymbol{Y}}\right)}{(k+T)}-\frac{\left(k\boldsymbol{M}^{'}+T\bar{    \boldsymbol{Y}}^{'}\right)\boldsymbol{\Sigma}_L^{-1}\left(k\boldsymbol{M}+T\bar{    \boldsymbol{Y}}\right)}{(k+T)}\nonumber\\
&&=\left(k+T\right)\left(\boldsymbol{B}-\frac{k\boldsymbol{M}+T\bar{\boldsymbol{Y}}^{'}}{(k+T)}\right)^{'}\boldsymbol{\Sigma}_L^{-1}\left(\boldsymbol{B}-\frac{k\boldsymbol{M}+T\bar{\boldsymbol{Y}}}{(k+T)}\right)+kT\frac{\left(\boldsymbol{M}-\bar{\boldsymbol{Y}}\right)^{'}\boldsymbol{\Sigma}_L^{-1}\left(\boldsymbol{M}-\bar{\boldsymbol{Y}}\right)}{\left(k+T\right)}
\end{eqnarray*}
the quantity $\mathcal{A}$ in the exponential term can be written as
\begin{eqnarray*}\mathcal{A}&=&\left(k+T\right)\left(\boldsymbol{B}-\frac{k\boldsymbol{M}+T\bar{\boldsymbol{Y}}}{(k+T)}\right)^{'}\boldsymbol{\Sigma}_L^{-1}\left(\boldsymbol{B}-\frac{k\boldsymbol{M}+T\bar{\boldsymbol{Y}}}{(k+T)}\right)\boldsymbol{V}^{-1}+kT\frac{\left(\boldsymbol{M}-\bar{\boldsymbol{Y}}\right)^{'}\boldsymbol{\Sigma}_L^{-1}\left(\boldsymbol{M}-\bar{\boldsymbol{Y}}\right)}{\left(k+T\right)}\boldsymbol{V}^{-1} \nonumber\\
&+&\sum_{i = 1}^{T}{\left(\boldsymbol{Y}_i - \bar{\boldsymbol{Y}} \right)^{'}\boldsymbol{\Sigma}_L^{- 1}\left(\boldsymbol{Y}_i - \bar{\boldsymbol{Y}} \right)} \boldsymbol{V}^{-1}+\boldsymbol{\Psi} \boldsymbol{V}^{-1}=\lbrack k_{*}\left(\boldsymbol{B}-\boldsymbol{M}_{*}\right)^{'}\boldsymbol{\Sigma}_L^{-1}\left(\boldsymbol{B}-\boldsymbol{M}_{*}\right)+\boldsymbol{\Psi}_{*}\rbrack \boldsymbol{V}^{-1},
\end{eqnarray*}}
where $k_{*}=k+T$, $m_{*}=m+Tp$, $\boldsymbol{M}_{*}=\frac{k\boldsymbol{M}+T\bar{\boldsymbol{Y}}}{k+T}$, and
 $\boldsymbol{\Psi}_{*}=\boldsymbol{\Psi}+kT\frac{\left(\boldsymbol{M}-\bar{\boldsymbol{Y}}\right)^{'}\boldsymbol{\Sigma}_L^{-1}\left(\boldsymbol{M}-\bar{\boldsymbol{Y}}\right)}{\left(k+T\right)}+T\boldsymbol{S}$. 
 
The integral at the denominator is
\begin{eqnarray}
&&\int_{\boldsymbol{V}>0}\int_{\boldsymbol{B}}{\left|\boldsymbol{V}\right|^{-\frac{m+p}{2}}|\boldsymbol{V}|^{-\frac{Tp}{2}}}\exp\left\{ - \frac{1}{2}\text{tr}\left\lbrack\mathcal{A}\right\rbrack\right\}d\boldsymbol{B}d\boldsymbol{V}=\nonumber\\
&&=\int_{\boldsymbol{V}>0}\int_{\boldsymbol{B}}{\left|\boldsymbol{V}\right|^{-\frac{m+Tp}{2}}|\boldsymbol{V}|^{-\frac{p}{2}}}\exp\left\{ - \frac{1}{2}\text{tr}\left\lbrack k_{*}\left(\boldsymbol{B}-\boldsymbol{M}_{*}\right)^{'}\boldsymbol{\Sigma}_L^{-1}\left(\boldsymbol{B}-\boldsymbol{M}_{*}\right)+\boldsymbol{\Psi}_{*}\right\rbrack \boldsymbol{V}^{-1}\right\}d\boldsymbol{B}d\boldsymbol{V}\nonumber\\
&&=\int_{\boldsymbol{V}>0}\int_{\boldsymbol{B}}{|\boldsymbol{V}|^{-\frac{p}{2}}}\exp\left\{ - \frac{1}{2}\text{tr}\left\lbrack k_{*}\left(\boldsymbol{B}-\boldsymbol{M}_{*}\right)^{'}\boldsymbol{\Sigma}_L^{-1}\left(\boldsymbol{B}-\boldsymbol{M}_{*}\right)\right\rbrack \boldsymbol{V}^{-1}\right\}d\boldsymbol{B} d\boldsymbol{V}\nonumber\\
&&=(2\pi)^{\frac{np}{2}}|\boldsymbol{\Sigma}_L/k_{*}|^{\frac{n}{2}}\int_{\boldsymbol{V}>0}\left|\boldsymbol{V}\right|^{-\frac{m_{*}}{2}}\exp\left\{ - \frac{1}{2}\text{tr}\left\lbrack \boldsymbol{\Psi}_{*}\right\rbrack \boldsymbol{V}^{-1}\right\}d\boldsymbol{V}\nonumber\\
&&=(2\pi)^{\frac{np}{2}}|\boldsymbol{\Sigma}_L/k_{*}|^{\frac{n}{2}}2^{\frac{(m_{*}-n-1)n}{2}}\Gamma_{n}\left(\frac{m_{*}-n-1}{2}\right)\left|\boldsymbol{\Psi}_{*}\right|^{-\frac{m_{*}-n-1}{2}}.
\end{eqnarray}
Thus the ratio 
\begin{eqnarray}
&&\frac{{\left|\boldsymbol{V}\right|^{-\frac{m+Tp}{2}}|\boldsymbol{V}|^{-\frac{p}{2}}}\exp\left\{ - \frac{1}{2}\text{tr}\left\lbrack k_{*}\left(\boldsymbol{B}-\boldsymbol{M}_{*}\right)^{'}\boldsymbol{\Sigma}_L^{-1}\left(\boldsymbol{B}-\boldsymbol{M}_{*}\right)+\boldsymbol{\Psi}{*}\right\rbrack \boldsymbol{V}^{-1}\right\}}{(2\pi)^{\frac{np}{2}}|\boldsymbol{\Sigma}_L/k_{*}|^{\frac{n}{2}}2^{\frac{(m_{*}-n-1)n}{2}}\Gamma_{n}\left(\frac{m_{*}-n-1}{2}\right)\left|\boldsymbol{\Psi}_{*}\right|^{-\frac{m_{*}-n-1}{2}}}\nonumber\\
&=&\frac{{\left|\boldsymbol{\Psi}_{*}\right|^{\frac{m_{*}-n-1}{2}}}\exp\left\{ - \frac{1}{2}\text{tr}\left\lbrack \left(\boldsymbol{B}-\boldsymbol{M}_{*}\right)^{'}\boldsymbol{\Sigma}_{*}^{-1}\left(\boldsymbol{B}-\boldsymbol{M}_{*}\right)+\boldsymbol{\Psi}{*}\right\rbrack \boldsymbol{V}^{-1}\right\}}{(2\pi)^{\frac{np}{2}}|\boldsymbol{\Sigma}_{*}|^{\frac{n}{2}}2^{\frac{(m_{*}-n-1)n}{2}}\Gamma_{n}\left(\frac{m_{*}-n-1}{2}\right)|\boldsymbol{V}|^{\frac{m_{*}+p}{2}}}
\end{eqnarray}
returns the density of a $\mathcal{MNIW}\left(\boldsymbol{B},\boldsymbol{V}|\boldsymbol{M}_{*},\boldsymbol{\Sigma}_{*},\boldsymbol{\Psi}_{*},k_{*},m_{*}\right)$, where $\boldsymbol{\Sigma}_{*}=\boldsymbol{\Sigma}_L/k_{*}$.

\end{proof}

\subsection{Proof of Proposition \ref{prop:posteriorBF}}

\begin{proof}[Proof of Proposition \ref{prop:posteriorBF}]
\begin{enumerate}[i)]
    \item Under Assumptions \ref{ass1} and \ref{ass2},  apply Propositions \ref{prop:gupta2} and \ref{prop:posteriormatrix}  with
$\boldsymbol{X}_1=\boldsymbol{B}|\mathbf{Y}$, 
$\boldsymbol{X}_2=\boldsymbol{Y}_{t}$, $\boldsymbol{C}= \boldsymbol{O}_{p, n}$, $\boldsymbol{D}=\boldsymbol{I}_{p}$,$\boldsymbol{F}=\boldsymbol{M}_{\ast}$, $\boldsymbol{\Sigma}_1=\boldsymbol{\Sigma}_{\ast}$ and $\boldsymbol{\Sigma}_2=\boldsymbol{\Sigma}_L$, where
$\boldsymbol{Y}_{t}$ is the sample at time $t$ and find:
\begin{equation} 
\boldsymbol{Z} = \begin{pmatrix}
\boldsymbol{B}|\mathbf{Y} \\
\boldsymbol{Y}_{t} \\
\end{pmatrix}
\sim 
N_{p + p,n}\left\lbrack \begin{pmatrix}
\boldsymbol{M}_{*} \\
\boldsymbol{M}_{*} \\
\end{pmatrix},\begin{pmatrix}
\boldsymbol{\Sigma}_{*} & \boldsymbol{\Sigma}_{*} \\
\boldsymbol{\Sigma}_{*} & \boldsymbol{\Sigma}_{d} \\
\end{pmatrix}\otimes \boldsymbol{V} \right\rbrack, \label{A2.17}
\end{equation}
where ${\boldsymbol{\Sigma}_{d} = \boldsymbol{\Sigma}}_{L} + \boldsymbol{\Sigma}_{*}$. The posterior predictive distribution returns the numerator of the BF:
\begin{eqnarray}
&&p\left( \boldsymbol{Y}_{t}|\mathcal{D}_{t - 1} \right) = \int_{}^{}{p\left( \boldsymbol{Y}_{t}|\boldsymbol{\theta}_{t} \right)p\left( \boldsymbol{\theta}_{t}|\mathcal{D}_{t - 1} \right)}d\boldsymbol{\theta}_{t} \\
&&= \int_{}^{}{f\left( \boldsymbol{Y}_{t}|\boldsymbol{B},\boldsymbol{V},\boldsymbol{\Sigma}_{L} \right)f\left( \boldsymbol{B}|\mathbf{Y},\boldsymbol{M},\boldsymbol{V},\boldsymbol{\Sigma}_{P} \right)}d\boldsymbol{B}\\
&&= \frac{\exp\left\{ - \frac{1}{2}\text{tr}\left\lbrack \boldsymbol{\Sigma}_{d}^{- 1}\left( \boldsymbol{Y}_{t} - \boldsymbol{M}_{*} \right)\boldsymbol{V}^{- 1}\left( \boldsymbol{Y}_{t} - \boldsymbol{M}_{*} \right)^{'} \right\rbrack \right\}}{(2\pi)^{\frac{np}{2}}\left| \boldsymbol{\Sigma}_{d} \right|^{\frac{n}{2}}|\boldsymbol{V}|^{\frac{p}{2}}}\label{A2.18}
\end{eqnarray}
that is $\boldsymbol{Y}_{t}|\mathcal{D}_{t - 1} \sim \mathcal{N}_{p,n}\left( \boldsymbol{M}_{*},\boldsymbol{\Sigma}_{d}, \boldsymbol{V} \right)$. The denominator of the BF is derived from the alternative distribution of the parameter
\begin{equation}
p_{A}\left(\boldsymbol{\theta}_{t}|\mathcal{D}_{t - 1} \right)= C(\alpha_t)p\left( \boldsymbol{B}|\mathbf{Y},\boldsymbol{M},\boldsymbol{V},\boldsymbol{\Sigma}_{P} \right)^{\alpha_t}= \frac{\exp\left\{ - \frac{1}{2}\text{tr}\left\lbrack \boldsymbol{\Sigma}_{A,*}^{- 1}\left( \boldsymbol{B} - \boldsymbol{M}_{*} \right)\boldsymbol{V}^{- 1}\left( \boldsymbol{B} - \boldsymbol{M}_{*} \right)^{'} \right\rbrack \right\}}{(2\pi)^{\frac{np}{2}}\left| \boldsymbol{\Sigma}_{A,*} \right|^{\frac{n}{2}}|\boldsymbol{V}|^{\frac{p}{2}}}
\label{A3.1}
\end{equation}
with $\boldsymbol{\Sigma}_{A,*} = \boldsymbol{\Sigma}_{*}/\alpha_{t}$ and the constant

\begin{eqnarray}
    C(\alpha_t) &= \frac{(2\pi)^{\frac{\alpha_{t}np}{2}}\alpha_{t}^{\frac{\alpha_{t}np}{2}}\left| \frac{\boldsymbol{\Sigma}_{*}}{\alpha_{t}} \right|^{\frac{\alpha_{t}n}{2}}|\boldsymbol{V}|^{\frac{\alpha_{t}p}{2}}}{(2\pi)^{\frac{np}{2}}\left| \frac{\boldsymbol{\Sigma}_{*}}{\alpha_{t}} \right|^{\frac{n}{2}}|\boldsymbol{V}|^{\frac{p}{2}}}= \frac{{\alpha_{t}^{p\frac{n}{2}}(2\pi)}^{\frac{\alpha_{t}np}{2}}\left| \boldsymbol{\Sigma}_{*} \right|^{\frac{\alpha_{t}n}{2}}|\boldsymbol{V}|^{\frac{\alpha_{t}p}{2}}}{(2\pi)^{\frac{np}{2}}\left| \boldsymbol{\Sigma}_{*} \right|^{\frac{n}{2}}|\boldsymbol{V}|^{\frac{p}{2}}}.
\end{eqnarray}

We apply Prop. \ref{prop:gupta2} with $i = t$:
\begin{equation}
Z = \begin{bmatrix}
\left( \boldsymbol{B}|\mathbf{Y} \right)_{A} \\
\boldsymbol{Y}_{t} \\
\end{bmatrix}\sim 
\mathcal{N}_{p + p,n}\left\lbrack \begin{pmatrix}
\ \boldsymbol{M}_{*} \\
\ \boldsymbol{M}_{*} \\
\end{pmatrix},\begin{pmatrix}
\boldsymbol{\Sigma}_{A,*} & \boldsymbol{\Sigma}_{A,*} \\
\boldsymbol{\Sigma}_{A,*} & \boldsymbol{\Sigma}_{A,d} \\
\end{pmatrix},\boldsymbol{V} \right\rbrack,
\label{A3.2}
\end{equation}
where $\left(\boldsymbol{B}|\mathbf{Y} \right)_{A}$ means that we consider the
alternative distribution as is in \eqref{A3.1} and $\boldsymbol{\Sigma}_{A,d} = \boldsymbol{\Sigma}_{L} + \boldsymbol{\Sigma}_{A,*}$. We get:
\begin{eqnarray}
&&p_{A}\left( \boldsymbol{Y}_{t}|\mathcal{D}_{t - 1} \right) = \int_{}^{}{p\left( \boldsymbol{Y}_{t}|\boldsymbol{\theta}_{t} \right)p_{A}\left( \boldsymbol{\theta}_{t}|\mathcal{D}_{t - 1} \right)}d\boldsymbol{\theta}_{t} = \int_{}^{}{f\left( \boldsymbol{Y}_{t}|\boldsymbol{B},\boldsymbol{V},\boldsymbol{\Sigma}_{L} \right)f_{A}\left( \boldsymbol{B}|\mathbf{Y},\boldsymbol{M},\boldsymbol{V},\boldsymbol{\Sigma}_{P} \right)}d\boldsymbol{B}\nonumber\\
&&= \frac{\exp\left\{ - \frac{1}{2}\text{tr}\left\lbrack \boldsymbol{\Sigma}_{A,d}^{- 1}\left( \boldsymbol{Y}_{t} - \ \boldsymbol{M}_{*} \right)\boldsymbol{V}^{- 1}\left( \boldsymbol{Y}_{t} -  M_{*} \right)^{'} \right\rbrack \right\}}{(2\pi)^{\frac{np}{2}}\left| \boldsymbol{\Sigma}_{A,d} \right|^{\frac{n}{2}}|\boldsymbol{V}|^{\frac{p}{2}}}\label{A3.3}
\end{eqnarray}
that is the pdf of $\mathcal{N}_{p,n}\left(\boldsymbol{M}_{*},\boldsymbol{\Sigma}_{A,d}, \boldsymbol{V} \right)$. The BF follows from its definition
\begin{equation}
H_{t} = \frac{p\left( \boldsymbol{Y}_{t}|\mathcal{D}_{t - 1} \right)}{p_{A}\left( \boldsymbol{Y}_{t}|\mathcal{D}_{t - 1} \right)} = 
\frac{\left| \boldsymbol{\Sigma}_{A,d} \right|^{\frac{n}{2}}}{\left| \boldsymbol{\Sigma}_{d} \right|^{\frac{n}{2}}}\exp\left\{ - \frac{1}{2}\text{tr}\left\lbrack \left( \boldsymbol{\Sigma}_{d}^{- 1} - \boldsymbol{\Sigma}_{A,d}^{- 1} \right)\left( \boldsymbol{Y}_{t} - \ \boldsymbol{M}_{*} \right)\boldsymbol{V}^{- 1}\left( \boldsymbol{Y}_{t} - \ \boldsymbol{M}_{*} \right)^{'} \right\rbrack \right\}\label{A4.1}.
\end{equation}
\item Under Assumptions \ref{ass1} and \ref{ass3}, since the observation $\boldsymbol{Y}_{t}$ follows a matrix normal with density $f\left( \boldsymbol{Y}_{t}|\boldsymbol{B}_t,\boldsymbol{V}_t,\boldsymbol{\Sigma} \right)$, the posterior predictive at the numerator of the BF is:
{\small
\begin{eqnarray}p\left( \boldsymbol{Y}_{t}|\mathcal{D}_{t - 1} \right) = \int_{\boldsymbol{V}_t>0}\int_{\boldsymbol{B}_t}{f\left( \boldsymbol{Y}_{t}|\boldsymbol{B}_t,\boldsymbol{V}_t,\boldsymbol{\Sigma}_L \right)f\left( \boldsymbol{B}_t,\boldsymbol{V}_t|\mathbf{Y},\boldsymbol{M}_{*},\boldsymbol{\Sigma}_{*},\boldsymbol{\Psi}_{*},m_{*} \right)}d\boldsymbol{B}_td\boldsymbol{V}_t,
\end{eqnarray}}
where, from Proposition \ref{prop:posteriormatrixUnknownV}, the argument of the integral is:
\begin{equation}
\frac{{\left|\boldsymbol{\Psi}_{*}\right|^{\frac{m_{*}-n-1}{2}}}\exp\left\{ - \frac{1}{2}\text{tr}\left\lbrack\left( k_{*}\left(\boldsymbol{B}_t-\boldsymbol{M}_{*}\right)^{'}\boldsymbol{\Sigma}_L^{-1}\left(\boldsymbol{B}_t-\boldsymbol{M}_{*}\right)+\boldsymbol{\Psi}_{*}+\left( \boldsymbol{Y}_{t} - \boldsymbol{B}_t \right)^{'}\boldsymbol{\Sigma}_L^{- 1}\left( \boldsymbol{Y}_{t} - \boldsymbol{B}_t \right)\right)\boldsymbol{V}_t^{-1}\right\rbrack \right\}}{(2\pi)^{\frac{np}{2}}|\boldsymbol{\Sigma}_L/k_{*}|^{\frac{n}{2}}2^{\frac{(m_{*}-n-1)n}{2}}\Gamma_{n}\left(\frac{m_{*}-n-1}{2}\right)|\boldsymbol{V}_t|^{\frac{m_{*}+p}{2}}(2\pi)^{\frac{np}{2}}\left| \boldsymbol{\Sigma}_L \right|^{\frac{n}{2}}|\boldsymbol{V}_t|^{\frac{p}{2}}}.\nonumber
\end{equation}
The matrix premultiplying $\boldsymbol{V}_t^{-1}$ in the argument of the $tr$ is:
\begin{eqnarray}
&& k_{*}\left(\boldsymbol{B}_t-\boldsymbol{M}_{*}\right)^{'}\boldsymbol{\Sigma}_L^{-1}\left(\boldsymbol{B}_t-\boldsymbol{M}_{*}\right)+\boldsymbol{\Psi}_{*}+\left( \boldsymbol{Y}_{t} - \boldsymbol{B}_t \right)^{'}\boldsymbol{\Sigma}_L^{- 1}\left( \boldsymbol{Y}_{t} - \boldsymbol{B}_t \right)=\nonumber\\
&&\quad
k_{d}\left(\boldsymbol{B}_t-\boldsymbol{M}_{d}\right)^{'}\boldsymbol{\Sigma}_L^{-1}\left(\boldsymbol{B}_t-\boldsymbol{M}_{d}\right)+\boldsymbol{\Psi}_{d}= \\ && \quad \left(\boldsymbol{B}_t-\boldsymbol{M}_{d}\right)^{'}\boldsymbol{\Sigma}_{d}^{-1}\left(\boldsymbol{B}_t-\boldsymbol{M}_{d}\right)+\boldsymbol{\Psi}_{d},
\end{eqnarray}
where the equality follows from the fact that $\boldsymbol{S}$ in $\boldsymbol{\Psi}_{d}$ is equal to zero and $\bar{\boldsymbol{Y}}=\boldsymbol{Y}_{t}$ for $T=1$, and where we defined $k_{d}=k_{*}+1,\,   m_{d}=m_{*}+p,\,$ and
\begin{equation}
\boldsymbol{M}_{d}=\frac{k_{*}\boldsymbol{M}_{*}+\boldsymbol{Y}_{t}}{k_{*}+1},\quad \boldsymbol{\Sigma}_{d}=\boldsymbol{\Sigma}_L/k_{d},\quad
\boldsymbol{\Psi}_{d}=\boldsymbol{\Psi}_{*}+k_{*}\frac{\left(\boldsymbol{M}_{*}-\boldsymbol{Y}_{t}\right)^{'}\boldsymbol{\Sigma}_L^{-1}\left(\boldsymbol{M}_{*}-\boldsymbol{Y}_{t}\right)}{\left(k_{*}+1\right)}.
\end{equation}
The integral becomes:
\begin{eqnarray}
&&\int_{\boldsymbol{V}_t>0}\frac{\left|\boldsymbol{\Psi}_{*}\right|^{\frac{m_{*}-n-1}{2}}k_{*}^\frac{np}{2}}{(2\pi)^{\frac{np}{2}}k_{d}^\frac{np}{2}2^{\frac{(m_{*}-n-1)n}{2}}\Gamma_{n}\left(\frac{m_{*}-n-1}{2}\right)|\boldsymbol{V}_t|^{\frac{m_{d}}{2}}\left| \boldsymbol{\Sigma}_L \right|^{\frac{n}{2}}}\nonumber\\
&&\quad\quad\int_{\boldsymbol{B}_t}\frac{\exp\left\{ - \frac{1}{2}\text{tr}\left\lbrack\left( \left(\boldsymbol{B}_t-\boldsymbol{M}_{d}\right)^{'}\boldsymbol{\Sigma}_d^{-1}\left(\boldsymbol{B}_t-\boldsymbol{M}_{d}\right)+\boldsymbol{\Psi}_{d}\right) \boldsymbol{V}_t^{-1}\right\rbrack \right\}}{|\boldsymbol{\Sigma}_{d}|^{\frac{n}{2}} |\boldsymbol{V}_t|^{\frac{p}{2}}(2\pi)^{\frac{np}{2}}}d\boldsymbol{B}_td\boldsymbol{V}_t\nonumber\\
&&=\int_{\boldsymbol{V}_t>0}\frac{{\left|\boldsymbol{\Psi}_{*}\right|^{\frac{m_{*}-n-1}{2}}}k_{*}^\frac{np}{2}\exp\left\{ - \frac{1}{2}\text{tr}\left\lbrack\boldsymbol{\Psi}_{d} \boldsymbol{V}_t^{-1}\right\rbrack \right\}}{(2\pi)^{\frac{np}{2}}k_{d}^\frac{np}{2}2^{\frac{(m_{*}-n-1)n}{2}}\Gamma_{n}\left(\frac{m_{*}-n-1}{2}\right)|\boldsymbol{V}_t|^{\frac{m_{d}}{2}}\left| \boldsymbol{\Sigma}_L \right|^{\frac{n}{2}}}d\boldsymbol{V}_{t}\nonumber\\
&&=\int_{\boldsymbol{V}_t>0}\frac{\left|\boldsymbol{\Psi}_{*}\right|^{\frac{m_{*}-n-1}{2}}\left|\boldsymbol{\Psi}_{d}\right|^{\frac{m_{d}-n-1}{2}}k_{*}^\frac{np}{2}\Gamma_{n}\left(\frac{m_{d}-n-1}{2}\right)\exp\left\{ - \frac{1}{2}\text{tr}\left\lbrack\boldsymbol{\Psi}_{d} \boldsymbol{V}_t^{-1}\right\rbrack \right\}}{(2\pi)^{\frac{np}{2}}k_{d}^\frac{np}{2}\left|\boldsymbol{\Psi}_{d}\right|^{\frac{m_{d}-n-1}{2}}2^{\frac{(m_{d}-n-1)n}{2}}2^{-\frac{pn}{2}}\Gamma_{n}\left(\frac{m_{*}-n-1}{2}\right)\Gamma_{n}\left(\frac{m_{d}-n-1}{2}\right)|\boldsymbol{V}_t|^{\frac{m_{d}}{2}}\left| \boldsymbol{\Sigma}_L \right|^{\frac{n}{2}}}d\boldsymbol{V}_{t}\nonumber\\
&&=\frac{\left|\boldsymbol{\Psi}_{*}\right|^{\frac{m_{*}-n-1}{2}}k_{*}^\frac{np}{2}\Gamma_{n}\left(\frac{m_{d}-n-1}{2}\right)}{\pi^{\frac{np}{2}}k_{d}^\frac{np}{2}\left|\boldsymbol{\Psi}_{d}\right|^{\frac{m_{d}-n-1}{2}}\Gamma_{n}\left(\frac{m_{*}-n-1}{2}\right)\left| \boldsymbol{\Sigma}_L \right|^{\frac{n}{2}}}
\end{eqnarray}
which is the density of a matrix Student-t $\mathcal{T}_{p,n}(m_{*}-2n, \boldsymbol{M}_{*}, \boldsymbol{\Sigma}_L, \boldsymbol{L}_{*})$ following the definition in \cite{Gup99} p. 134, 
given:
\begin{eqnarray}
&&\left|\boldsymbol{\Psi}_{d}\right|^{\frac{m_{d}-n-1}{2}}=\left|\boldsymbol{\Psi}_{*}+k_{*}\frac{\left(\boldsymbol{M}_{*}-\boldsymbol{Y}_{t}\right)^{'}\boldsymbol{\Sigma}_L^{-1}\left(\boldsymbol{M}_{*}-\boldsymbol{Y}_{t}\right)}{\left(k_{*}+1\right)}\right|^{\frac{m_{d}-n-1}{2}}\nonumber\\
&&=\left|\boldsymbol{\Psi}_{*}\left[ \boldsymbol{I}_n+\boldsymbol{L}_{*}^{-1}\left(\boldsymbol{M}_{1,*}-\boldsymbol{Y}_{1,t}\right)\boldsymbol{\Sigma}_L^{-1}\left(\boldsymbol{M}_{1,*}-\boldsymbol{Y}_{1,t}\right)^{'}\right]\right|^{\frac{m_{d}-n-1}{2}}\nonumber\\
&&=\left|\boldsymbol{\Psi}_{*}\right|^{\frac{m_*+p-n-1}{2}}\left| \boldsymbol{I}_n+\boldsymbol{L}_{*}^{-1}\left(\boldsymbol{M}_{1,*}-\boldsymbol{Y}_{1,t}\right)\boldsymbol{\Sigma}_L^{-1}\left(\boldsymbol{M}_{1,*}-\boldsymbol{Y}_{1,t}\right)^{'}\right|^{\frac{m_*+p-n-1}{2}}\nonumber\\
&&=\frac{(\pi)^{-\frac{np}{2}}\left|\boldsymbol{L}_{*}\right|^{-\frac{p}{2}}\left| \boldsymbol{\Sigma}_L \right|^{-\frac{n}{2}}\Gamma_{n}\left(\frac{m_{1}+p+n-1}{2}\right)}{\left| \boldsymbol{I}_n+\boldsymbol{L}_{*}^{-1}\left(\boldsymbol{M}_{1,*}-\boldsymbol{Y}_{1,t}\right)\boldsymbol{\Sigma}_L^{-1}\left(\boldsymbol{M}_{1,*}-\boldsymbol{Y}_{1,t}\right)^{'}\right|^{\frac{m_1+p+n-1}{2}}\Gamma_{n}\left(\frac{m_{1}+n-1}{2}\right)},
\end{eqnarray}
is the probability density function of a $\mathcal{T}_{n,p}(m_{*}-2n, \boldsymbol{M}_{1,*}, \boldsymbol{L}_{*}, \boldsymbol{\Sigma}_L)$ for the quantity $\boldsymbol{Y}_{1,t}$,
where we have defined $\boldsymbol{L}_{*}=\boldsymbol{\Psi}_{*}k_{d}/k_{*}$, $\boldsymbol{M}_{1,*}=\boldsymbol{M}_{*}^{'}$, $\boldsymbol{Y}_{1,t}=\boldsymbol{Y}_{t}^{'}$, $m_d=m_*+p$, and $m_1=m_{*}-2n=m+Tp-2n$ (assuming $m>2n-Tp$) and given the Theorem 4.4.4 in \citet{Gup99} about the probability density function of the transposed of a matrix Student-t distribution. 

For what concerns the denominator of the BF, after some algebraic manipulations, one obtains:
\begin{eqnarray}
&&p_{A}(\boldsymbol{\theta}_t |\mathcal{D}_{t-1})=C(\alpha_t)p(\boldsymbol{B}_t,\boldsymbol{V}_t|\mathbf{Y},\boldsymbol{M}_{*},\boldsymbol{\Sigma}_{*},\boldsymbol{\Psi}_{*},m_{*})^{\alpha_t}\nonumber\\ &&=\frac{{\left|\alpha_t\boldsymbol{\Psi}_{*}\right|^{\frac{m_{A,*}-n-1}{2}}}\exp\left\{ - \frac{1}{2}\text{tr}\left\lbrack \alpha_t k_{*}\left(\boldsymbol{B}_t-\boldsymbol{M}_{*}\right)^{'}\boldsymbol{\Sigma}_L^{-1}\left(\boldsymbol{B}_t-\boldsymbol{M}_{*}\right)+\alpha_t\boldsymbol{\Psi}_{*}\right\rbrack \boldsymbol{V}_t^{-1}\right\}}{(2\pi)^{\frac{np}{2}}|\boldsymbol{\Sigma}_L/\alpha_t k_{*}|^{\frac{n}{2}}2^{\frac{(m_{A,*}-n-1)n}{2}}\Gamma_{n}\left(\frac{m_{A,*}-n-1}{2}\right)|\boldsymbol{V}_t|^{\frac{m_{A,*}+p}{2}}}\nonumber\\
&&=\frac{{\left|\boldsymbol{\Psi}_{A,*}\right|^{\frac{m_{A,*}-n-1}{2}}}\exp\left\{ - \frac{1}{2}\text{tr}\left\lbrack \left(\boldsymbol{B}_t-\boldsymbol{M}_{*}\right)^{'}\boldsymbol{\Sigma}_{A,*}^{-1}\left(\boldsymbol{B}_t-\boldsymbol{M}_{*}\right)+\boldsymbol{\Psi}_{A,*}\right\rbrack \boldsymbol{V}_t^{-1}\right\}}{(2\pi)^{\frac{np}{2}}|\boldsymbol{\Sigma}_{A,*}|^{\frac{n}{2}}2^{\frac{(m_{A,*}-n-1)n}{2}}\Gamma_{n}\left(\frac{m_{A,*}-n-1}{2}\right)|\boldsymbol{V}_t|^{\frac{m_{A,*}+p}{2}}},\label{Pmatrix}
\end{eqnarray}
where we defined $m_{A,*}=\alpha_t\left(m_*+p\right)=\alpha_t m_d$, $\boldsymbol{\Sigma}_{A,*}=\boldsymbol{\Sigma}_{*}/\alpha_{t}$,  $\boldsymbol{\Psi}_{A,*}=\alpha_{t}\boldsymbol{\Psi}_{*}$, and the constant
\begin{equation}
C(\alpha_t) = \frac{{\left|\boldsymbol{\Psi}_{A,*}\right|^{\frac{m_{A,*}-n-1}{2}}}(2\pi)^{\frac{\alpha_{t}np}{2}}|\boldsymbol{\Sigma}_{*}|^{\frac{\alpha_{t}n}{2}}2^{\frac{\alpha_{t}(m_{*}-n-1)n}{2}}\Gamma_{n}\left(\frac{m_{*}-n-1}{2}\right)^{\alpha_{t}}}{(2\pi)^{\frac{np}{2}}|\boldsymbol{\Sigma}_{A,*}|^{\frac{n}{2}}2^{\frac{(m_{A,*}-n-1)n}{2}}\Gamma_{n}\left(\frac{m_{A,*}-n-1}{2}\right)\left|\boldsymbol{\Psi}_{*}\right|^\frac{\alpha_t (m_{*}-n-1)}{2}}.
\end{equation}
The posterior predictive under the alternative is
\begin{eqnarray}
p_A\left(\boldsymbol{Y}_{t}|\mathcal{D}_{t - 1} \right) &=& \frac{\left|\boldsymbol{\Psi}_{A,*}\right|^{\frac{m_{A,*}-n-1}{2}}k_{A,*}^\frac{np}{2}\Gamma_{n}\left(\frac{m_{A,d}-n-1}{2}\right)}{(\pi)^{\frac{np}{2}}k_{A,d}^\frac{np}{2}\left|\boldsymbol{\Psi}_{A,d}\right|^{\frac{m_{A,d}-n-1}{2}}\Gamma_{n}\left(\frac{m_{A,*}-n-1}{2}\right)\left| \boldsymbol{\Sigma}_L \right|^{\frac{n}{2}}}\label{PAmatrix}
\end{eqnarray}
which is a Student-t distribution $\mathcal{T}(m_{A,*} -2n, \boldsymbol{M}_{*}, \boldsymbol{\Sigma}_L, \boldsymbol{L}_{A, *})$ where $\boldsymbol{L}_{A, * } = \boldsymbol{\Psi}_{A, *}k_{A, d}/k_{A, *}$,
where $k_{A,*}=\alpha_t k_{*}$, $k_{A,d}=\alpha_t k_{*}+1$, $m_{A,d}=m_{A,*}+p$, and
\begin{equation*}\boldsymbol{\Psi}_{A,d}=\boldsymbol{\Psi}_{A,*}+\alpha_t k_{*}\left(\boldsymbol{M}_{*}-\boldsymbol{Y}_{t}\right)^{'}\boldsymbol{\Sigma}_L^{-1}\left(\boldsymbol{M}_{*}-\boldsymbol{Y}_{t}\right)/(\left(\alpha_t k_{*}+1\right))\end{equation*}
following an argument analogous to the one used for the numerator.
We conclude with the BF, which is the ratio of $p(\boldsymbol{Y}_{t} |\mathcal{D}_{t-1})$ given in Eq. \ref{Pmatrix} and $p_A(\boldsymbol{Y}_{t} |\mathcal{D}_{t-1})$ given in Eq. \ref{PAmatrix}: 
\begin{equation}
H_t(\alpha_t)=G\left|\boldsymbol{\Psi}_{A,d}\right|^{\frac{m_{A,d}-n-1}{2}}\left|\boldsymbol{\Psi}_{d}\right|^{-\frac{m_{d}-n-1}{2}}\end{equation}
where
\begin{equation}
G=\frac{\left|\boldsymbol{\Psi}_{*}\right|^{\frac{m_{*}-n-1}{2}}k_{*}^\frac{np}{2}\Gamma_{n}\left(\frac{m_{d}-n-1}{2}\right)k_{A,d}^\frac{np}{2}\Gamma_{n}\left(\frac{m_{A,*}-n-1}{2}\right)}{k_{d}^\frac{np}{2}\Gamma_{n}\left(\frac{m_{*}-n-1}{2}\right)\left|\boldsymbol{\Psi}_{A,*}\right|^{\frac{m_{A,*}-n-1}{2}}k_{A,*}^\frac{np}{2}\Gamma_{n}\left(\frac{m_{A,d}-n-1}{2}\right)}.
\end{equation}
$H_t\left(\alpha_{t}\right)$ goes to 1 when $\alpha_{t}\rightarrow 1^{-}$, $H_t(\alpha_t)\rightarrow \infty$ as $\alpha_t\rightarrow \underline{\alpha}^{+}$, due to the fact that $G\rightarrow \infty$ as $\alpha_t\rightarrow \underline{\alpha}^{+}$ because $\Gamma_n\left(\frac{m_{A,*}-n-1}{2}\right) \rightarrow \infty$. 
\end{enumerate}
\end{proof}

\subsection{Proof of Corollary \ref{coroll}}
\begin{proof}[Proof of Corollary \ref{coroll}]
\begin{enumerate}[i)]
    \item In the first case
    \begin{eqnarray}
&&C\left( \alpha_{t} \right)
= \frac{{\alpha_{t}^{p\frac{n}{2}}(2\pi)}^{\frac{\alpha_{t}np}{2}}\left| \boldsymbol{\Sigma}_{*} \right|^{\frac{\alpha_{t}n}{2}}|\boldsymbol{V}|^{\frac{\alpha_{t}p}{2}}}{(2\pi)^{\frac{np}{2}}\left| \boldsymbol{\Sigma}_{*} \right|^{\frac{n}{2}}|\boldsymbol{V}|^{\frac{p}{2}}}
\end{eqnarray}
which goes to zero for $\alpha_t\rightarrow 0^{+}$ and goes to one for $\alpha_t\rightarrow 1^{-}$.
\item In the second case, from the proof of Prop. \ref{prop:posteriorBF} the constant is
\begin{equation}
C(\alpha_t) = \frac{{\left|\boldsymbol{\Psi}_{A,*}\right|^{\frac{m_{A,*}-n-1}{2}}}(2\pi)^{\frac{\alpha_{t}np}{2}}|\boldsymbol{\Sigma}_{*}|^{\frac{\alpha_{t}n}{2}}2^{\frac{\alpha_{t}(m_{*}-n-1)n}{2}}\Gamma_{n}\left(\frac{m_{*}-n-1}{2}\right)^{\alpha_{t}}}{(2\pi)^{\frac{np}{2}}|\boldsymbol{\Sigma}_{A,*}|^{\frac{n}{2}}2^{\frac{(m_{A,*}-n-1)n}{2}}\Gamma_{n}\left(\frac{m_{A,*}-n-1}{2}\right)\left|\boldsymbol{\Psi}_{*}\right|^\frac{\alpha_t (m_{*}-n-1)}{2}}.
\end{equation}
Note that $m_{A,*}=\alpha_t m_d\rightarrow 0$ for $\alpha_t \rightarrow 0^{+}$ and the argument of the multivariate Gamma function $\Gamma_n \left(\left( m_{A,*}-n-1\right)/2\right)$ becomes negative. To guarantee the positivity of the constant, the condition $\Gamma_n((m_{A,\ast}-n-1)/2)>0$ is needed noticing that $m_{A,\ast} \leq m_{\ast} \leq m_d \leq m_{A,d}$. A sufficient condition is that the $n$-th element of the multivariate gamma product $\Gamma((m_{A,*}-n-1/2)$ is strictly positive, that is
$(m_{A,*}-n-1)/2+(1-n)/2=(\alpha_t(m+Tp+p)-p-n-1+1-n)/2>0$, or equivalently $\alpha_t>\underline{\alpha}$ where $\underline{\alpha}=(p+2n)/m_d$. Under this condition $C(\alpha_t)\rightarrow 0^+$ for $\alpha_t\rightarrow \underline{\alpha}^{+}$. Moreover, $C(\alpha_t)\rightarrow 1$ for $\alpha_t\rightarrow 1^{-}$.
\end{enumerate}
\end{proof}

\subsection{Proof of Proposition \ref{prop:BFmatrix}}
\begin{proof}[Proof of Proposition \ref{prop:BFmatrix}]
\textit{i)} Define:
\begin{eqnarray}
&&\boldsymbol{\Sigma}_{H} = \left\lbrack \left( \frac{1}{\alpha_{t}} - 1 \right)\left( \boldsymbol{\Sigma}_{L} + \boldsymbol{\Sigma}_{*} \right)^{- 1}\boldsymbol{\Sigma}_{*}\left( \boldsymbol{\Sigma}_{L} + \frac{\boldsymbol{\Sigma}_{*}}{\alpha_{t}} \right)^{- 1} \right\rbrack^{- 1}\label{A4.4}\\
&&= \left( \frac{1}{1 - \alpha_{t}} \right)\left( \alpha_{t}\boldsymbol{\Sigma}_{L} + \boldsymbol{\Sigma}_{*} \right)\boldsymbol{\Sigma}_{*}^{- 1}\left( \boldsymbol{\Sigma}_{L} + \boldsymbol{\Sigma}_{*} \right)
\end{eqnarray}
and $\kappa_t(\alpha_t)=\left| \boldsymbol{\Sigma}_{L} + \frac{\boldsymbol{\Sigma}_{*}}{\alpha_{t}} \right|^{\frac{n}{2}}\left| \boldsymbol{\Sigma}_{L} + \boldsymbol{\Sigma}_{*} \right|^{-\frac{n}{2}} = \left| \boldsymbol{\Sigma}_{A,d} \right|^{\frac{n}{2}}\left| \boldsymbol{\Sigma}_{d} \right|^{-\frac{n}{2}}$. Since: $(\boldsymbol{\Sigma}_{L} + \boldsymbol{\Sigma}_{*})^{- 1} - ( \boldsymbol{\Sigma}_{L} + \boldsymbol{\Sigma}_{*}/\alpha_{t})^{- 1} = (\boldsymbol{\Sigma}_{L} + \boldsymbol{\Sigma}_{*})^{- 1}\lbrack \boldsymbol{\Sigma}_{*}/\alpha_{t} - \boldsymbol{\Sigma}_{*}\rbrack (\boldsymbol{\Sigma}_{L} + \boldsymbol{\Sigma}_{*}/\alpha_{t})^{- 1}$ $=(1/(\alpha_{t} - 1))( \boldsymbol{\Sigma}_{L} + \boldsymbol{\Sigma}_{*})^{- 1}\boldsymbol{\Sigma}_{*}(\boldsymbol{\Sigma}_{L} + \boldsymbol{\Sigma}_{*}/\alpha_{t})^{- 1}$, then the BF can be written as
{\footnotesize
\begin{eqnarray}
&&H_{t} = \kappa_t(\alpha_t)\exp\left\{ - \frac{1}{2}\text{tr}\left\lbrack \left\lbrack \left( \boldsymbol{\Sigma}_{L} + \boldsymbol{\Sigma}_{*} \right)^{- 1} - \left( \boldsymbol{\Sigma}_{L} + \frac{\boldsymbol{\Sigma}_{*}}{\alpha_{t}} \right)^{- 1} \right\rbrack\left( \boldsymbol{Y}_{t} - \boldsymbol{M}_{*} \right)\boldsymbol{V}^{- 1}\left( \boldsymbol{Y}_{t} - \boldsymbol{M}_{*} \right)^{'} \right\rbrack \right\}\label{A4.2}\\
&& = 
\kappa_t(\alpha_t)\exp\left\{ - \frac{1}{2}\text{tr}\left\lbrack \boldsymbol{\Sigma}_{H}^{- 1}\left( \boldsymbol{Y}_{t} - \boldsymbol{M}_{*} \right)\boldsymbol{V}^{- 1}\left( \boldsymbol{Y}_{t} - \boldsymbol{M}_{*} \right)^{'} \right\rbrack \right\}.\label{A4.5}
\end{eqnarray}}
The assumptions $\left(1/\alpha_{t} - 1\right) > 0$, $\boldsymbol{\Sigma}_{*} > 0$, $\boldsymbol{\Sigma}_{L} > 0$,
$\boldsymbol{\Sigma}_{L} + \boldsymbol{\Sigma}_{*} > 0$ imply the eigenvalues of
$\boldsymbol{\Sigma}_{L} + \boldsymbol{\Sigma}_{*}$ and of $\boldsymbol{\Sigma}_{L} + \boldsymbol{\Sigma}_{*}/\alpha_{t}$ are greater than zero, and the product of the three matrices $\left( \boldsymbol{\Sigma}_{L} + \boldsymbol{\Sigma}_{*} \right)^{- 1}$,
$\boldsymbol{\Sigma}_{*}$, $\left( \boldsymbol{\Sigma}_{L} + \boldsymbol{\Sigma}_{*}/\alpha_{t} \right)^{- 1}$ has
positive eigenvalues. Moreover the assumption
$\boldsymbol{\Sigma}_{L},\ \boldsymbol{\Sigma}_{*}$ are symmetric and positive definite implies $\left( (1-\alpha_{t})^{-1} \right)\left( \boldsymbol{\Sigma}_{L} + \boldsymbol{\Sigma}_{*} \right)^{- 1}\boldsymbol{\Sigma}_{*}\left( \boldsymbol{\Sigma}_{L} + \boldsymbol{\Sigma}_{*}/\alpha_{t} \right)^{- 1}$ is symmetric and has positive eigenvalues, it is positive definite. 
Since the argument of the exponential function is negative, we get an upper bound for $H_{t}$:
\begin{equation}
H_{t} \leq \kappa_t(\alpha_t)\label{A4.6}
\end{equation}
such that $\lim_{\alpha_{t} \rightarrow 1^{-}}{\kappa_t =}1$. From Th. 7.7 in \cite{zhang2011matrix} it follows:
\begin{equation}\left| \boldsymbol{\Sigma}_{L} + \boldsymbol{\Sigma}_{*} \right| \geq \left| \boldsymbol{\Sigma}_{L} \right| + \left| \boldsymbol{\Sigma}_{*} \right|\Leftrightarrow \left| \boldsymbol{\Sigma}_{L} + \frac{\boldsymbol{\Sigma}_{*}}{\alpha_{t}} \right| \geq \left| \boldsymbol{\Sigma}_{L} \right| + \left| \frac{\boldsymbol{\Sigma}_{*}}{\alpha_{t}} \right| = \left| \boldsymbol{\Sigma}_{L} \right| + \frac{\left| \boldsymbol{\Sigma}_{*} \right|}{\alpha_{t}^{p}},\label{A4.7}
\end{equation}
or equivalently
\begin{equation}
\left| \boldsymbol{\Sigma}_{L} + \frac{\boldsymbol{\Sigma}_{*}}{\alpha_{t}} \right| - \left| \boldsymbol{\Sigma}_{L} + \boldsymbol{\Sigma}_{*} \right| \geq \frac{\left| \boldsymbol{\Sigma}_{*} \right|}{\alpha_{t}^{p}} - \left| \boldsymbol{\Sigma}_{*} \right| = \left( \frac{1}{\alpha_{t}^{p}} - 1 \right)\left| \boldsymbol{\Sigma}_{*} \right| \geq 0 \label{A4.8}
\end{equation}
and $\kappa_t \geq 1$. By Jacobi's formula on the derivative of the matrix determinant \citep[][]{magneu99}, we obtain
{\small \begin{eqnarray}
&&\partial_{\alpha_t} \kappa_t\left( \alpha_{t} \right) = \frac{n}{2}\frac{\left| \boldsymbol{\Sigma}_{L} + \frac{\boldsymbol{\Sigma}_{*}}{\alpha_{t}} \right|^{\frac{n}{2} - 1}}{\left| \boldsymbol{\Sigma}_{L} + \boldsymbol{\Sigma}_{*} \right|^{\frac{n}{2}}}\frac{d}{d\alpha_{t}}\left| \boldsymbol{\Sigma}_{L} + \frac{\boldsymbol{\Sigma}_{*}}{\alpha_{t}} \right|\nonumber\\
&&= \frac{n}{2}\kappa_t(\alpha_t)\text{tr}\left\lbrack \left( \boldsymbol{\Sigma}_{L} + \frac{\boldsymbol{\Sigma}_{*}}{\alpha_{t}} \right)^{- 1}\left( - \frac{\boldsymbol{\Sigma}_{*}}{\alpha_{t}^{2}} \right) \right\rbrack = - \frac{n}{2\alpha_{t}^{2}}\kappa_t(\alpha_t)\text{tr}\left\lbrack \left( \boldsymbol{\Sigma}_{L} + \frac{\boldsymbol{\Sigma}_{*}}{\alpha_{t}} \right)^{- 1}\boldsymbol{\Sigma}_{*} \right\rbrack.
\end{eqnarray}}
Note that  $\left( \boldsymbol{\Sigma}_{L} + \frac{\boldsymbol{\Sigma}_{*}}{\alpha_{t}} \right)^{- 1} > 0$ and
$\boldsymbol{\Sigma}_{*} > 0$, thus the trace of the product of positive matrices is
positive and $\partial_{\alpha_t} \kappa_t\left( \alpha_{t} \right) < 0$ for every
$\alpha_{t}$. In conclusion, we showed $\kappa_t\left( \alpha_{t} \right)$ is a decreasing function that goes to infinite when $\alpha_{t} \rightarrow 0^{+}$ and goes to 1 when $\alpha_{t} \rightarrow 1^{-}$.
\medskip

\textit{ii)} From its definition it trivially follows $\lim_{\alpha_{t} \rightarrow 1^{-}}{H_{t}(\alpha_t)}=1$.  Consider now the term inside the trace and write:
\begin{equation}\left( \boldsymbol{\Sigma}_{L} + \frac{\boldsymbol{\Sigma}_{*}}{\alpha_{t}} \right)^{- 1} = \boldsymbol{\Sigma}_{L}^{- 1} - \boldsymbol{\Sigma}_{L}^{- 1}\left\lbrack \boldsymbol{\Sigma}_{L}^{- 1} + \left( \frac{\boldsymbol{\Sigma}_{*}}{\alpha_{t}} \right)^{- 1} \right\rbrack^{- 1}\boldsymbol{\Sigma}_{L}^{- 1} = \boldsymbol{\Sigma}_{L}^{- 1} - \boldsymbol{\Sigma}_{L}^{- 1}\left\lbrack \boldsymbol{\Sigma}_{L}^{- 1} + {\alpha_{t}\boldsymbol{\Sigma}}_{*}^{- 1} \right\rbrack^{- 1}\boldsymbol{\Sigma}_{L}^{- 1}.\label{A4.9}
\end{equation}
This term goes to zero as $\alpha_{t} \rightarrow 0^{+}$, and the
exponential in \eqref{A4.5} has a bounded limit since the other terms
don't depend by $\alpha_{t}$. Furthermore, due to the:
\begin{equation}
\left| \boldsymbol{\Sigma}_{L} + \frac{\boldsymbol{\Sigma}_{*}}{\alpha_{t}} \right| = \left| \frac{{{\alpha_{t}\boldsymbol{\Sigma}}_{L} + \boldsymbol{\Sigma}}_{*}}{\alpha_{t}} \right| = \frac{1}{\alpha_{t}^{p}}\left| {{\alpha_{t}\boldsymbol{\Sigma}}_{L} + \boldsymbol{\Sigma}}_{*} \right|.\label{A4.10}
\end{equation}
We have:
\begin{equation}\lim_{\alpha_{t} \rightarrow 0^{+}}{\left| \boldsymbol{\Sigma}_{L} + \frac{\boldsymbol{\Sigma}_{*}}{\alpha_{t}} \right|^{\frac{n}{2}} = \lim_{\alpha_{t} \rightarrow 0^{+}}{\frac{1}{\alpha_{t}^{\frac{np}{2}}}\left| {{\alpha_{t}\boldsymbol{\Sigma}}_{L} + \boldsymbol{\Sigma}}_{*} \right|^{\frac{n}{2}} = \infty}}.\label{A4.11}
\end{equation}
From \eqref{A4.5} and \eqref{A4.11} if follows that $\lim_{\alpha_{t} \rightarrow 0^{+}}{\text{H}_{t}\left( \alpha_{t} \right)} = \infty$.
\medskip

\textit{iii)} Let us define
$\Tilde{\boldsymbol{A}} = \left( \boldsymbol{Y}_{t} -  \boldsymbol{M}_{*} \right)\boldsymbol{V}^{- 1}\left( \boldsymbol{Y}_{t} - \boldsymbol{M}_{*} \right)^{'}$
and $\kappa_t\left( \alpha_{t} \right) = \left| \boldsymbol{\Sigma}_{L} + \boldsymbol{\Sigma}_{*}/\alpha_{t} \right|^{\frac{n}{2}}\left| \boldsymbol{\Sigma}_{L} + \boldsymbol{\Sigma}_{*} \right|^{-\frac{n}{2}}$
then the derivative of the BF is

\begin{equation}
\partial_{\alpha_t} H_{t}\left( \alpha_{t} \right)= \exp\left\{ - \frac{1}{2}\text{tr}\left\lbrack \boldsymbol{\Sigma}_{H}^{- 1}\Tilde{\boldsymbol{A}}  \right\rbrack \right\}\frac{d}{d\alpha_{t}}\kappa_t(\alpha_{t}) + \kappa_t(\alpha_{t})\frac{d}{d\alpha_{t}}\exp\left\{ - \frac{1}{2}\text{tr}\left\lbrack \boldsymbol{\Sigma}_{H}^{- 1}\Tilde{\boldsymbol{A}}  \right\rbrack \right\}.
\end{equation}
From \textit{i)} we know:
\begin{equation}
\partial_{\alpha_t}\kappa_t\left( \alpha_{t} \right)=- \frac{n\left| \boldsymbol{\Sigma}_{L} + \frac{\boldsymbol{\Sigma}_{*}}{\alpha_{t}} \right|^{\frac{n}{2}}}{2\alpha_{t}^{2}\left| \boldsymbol{\Sigma}_{L} + \boldsymbol{\Sigma}_{*} \right|^{\frac{n}{2}}}\text{tr}\left\lbrack \left( \boldsymbol{\Sigma}_{L} + \frac{\boldsymbol{\Sigma}_{*}}{\alpha_{t}} \right)^{- 1}\boldsymbol{\Sigma}_{*} \right\rbrack
\end{equation}
and since
$\boldsymbol{\Sigma}_{H}^{- 1} = (1/\alpha_{t} - 1)(\boldsymbol{\Sigma}_{L} + \boldsymbol{\Sigma}_{*})^{- 1}\boldsymbol{\Sigma}_{*}(\boldsymbol{\Sigma}_{L} + \boldsymbol{\Sigma}_{*}/\alpha_{t})^{- 1} = ( 1 - \alpha_{t}) \boldsymbol{\Upsilon}$, where we define $\boldsymbol{\Upsilon}=\left( \boldsymbol{\Sigma}_{L} + \boldsymbol{\Sigma}_{*} \right)^{- 1}\boldsymbol{\Sigma}_{*}\left( {\alpha_{t}\boldsymbol{\Sigma}}_{L} + \boldsymbol{\Sigma}_{*} \right)^{- 1}$, then we obtain
\begin{eqnarray*}
&&\hspace{-20pt}\frac{d}{d\alpha_{t}}\exp\left\{ - \frac{1}{2}\text{tr}\left\lbrack \left( 1 - \alpha_{t} \right)\boldsymbol{\Upsilon} \Tilde{\boldsymbol{A}}  \right\rbrack \right\}=\exp\left\{ - \frac{1}{2}\text{tr}\left\lbrack \boldsymbol{\Sigma}_{H}^{- 1}\Tilde{\boldsymbol{A}} \left( \alpha_{t} \right) \right\rbrack \right\}\left\{ \frac{1}{2}\text{tr}\left\lbrack \boldsymbol{\Upsilon} \Tilde{\boldsymbol{A}}  \right\rbrack + \frac{\left( \alpha_{t} - 1 \right)}{2} \frac{d}{d\alpha_{t}}\text{tr}\left\lbrack \boldsymbol{\Upsilon} \Tilde{\boldsymbol{A}}  \right\rbrack \right\} \\
&&\hspace{-20pt}=\exp\left\{ - \frac{1}{2}\text{tr}\left\lbrack \boldsymbol{\Sigma}_{H}^{- 1}\Tilde{\boldsymbol{A}} \left( \alpha_{t} \right) \right\rbrack \right\}\left\{ \frac{1}{2}\text{tr}\left\lbrack \boldsymbol{\Upsilon} \Tilde{\boldsymbol{A}}  \right\rbrack + \frac{\left( \alpha_{t} - 1 \right)}{2} tr\left\lbrack \left( \boldsymbol{\Sigma}_{L} + \boldsymbol{\Sigma}_{*} \right)^{- 1}\boldsymbol{\Sigma}_{*}\frac{d}{d\alpha_{t}}\left( {\alpha_{t}\boldsymbol{\Sigma}}_{L} + \boldsymbol{\Sigma}_{*} \right)^{- 1}\Tilde{\boldsymbol{A}}  \right\rbrack \right\}.
\end{eqnarray*}
The derivative in the last line is:
\begin{equation}
\frac{d}{d\alpha_{t}}\left( {\alpha_{t}\boldsymbol{\Sigma}}_{L} + \boldsymbol{\Sigma}_{*} \right)^{- 1} = - \left( {\alpha_{t}\boldsymbol{\Sigma}}_{L} + \boldsymbol{\Sigma}_{*} \right)^{- 1}\boldsymbol{\Sigma}_{L}\left( {\alpha_{t}\boldsymbol{\Sigma}}_{L} + \boldsymbol{\Sigma}_{*} \right)^{- 1}.
\end{equation}
See \cite{abamag05} and \cite{magneu99}. Thus, we get the result:
\begin{eqnarray*}
&&\partial_{\alpha_t} H_{t}\left( \alpha_{t} \right) = \kappa_t\left( \alpha_{t} \right)\exp\left\{ - \frac{1}{2}\text{tr}\left\lbrack \boldsymbol{\Sigma}_{H}^{- 1}\Tilde{\boldsymbol{A}} \right\rbrack \right\}\left\{-\frac{n}{{2\alpha}_{t}^{2}}\text{tr}\left\lbrack \left( \boldsymbol{\Sigma}_{L} + \frac{\boldsymbol{\Sigma}_{*}}{\alpha_{t}} \right)^{- 1}\boldsymbol{\Sigma}_{*} \right\rbrack +\right.\\
&&\left. \frac{1}{2}\text{tr}\left\lbrack \boldsymbol{\Upsilon} \Tilde{\boldsymbol{A}}  \right\rbrack + \frac{\left( {1 - \alpha}_{t} \right)}{2}\text{tr}\left\lbrack \boldsymbol{\Upsilon} \boldsymbol{\Sigma}_{L}\left( {\alpha_{t}\boldsymbol{\Sigma}}_{L} + \boldsymbol{\Sigma}_{*} \right)^{- 1}\Tilde{\boldsymbol{A}}  \right\rbrack \right\} =\\
&&= \frac{H_{t}\left( \alpha_{t} \right)}{2}\text{tr}\left\{ {- \frac{n}{\alpha_{t}^{2}}\left( \boldsymbol{\Sigma}_{L} + \frac{\boldsymbol{\Sigma}_{*}}{\alpha_{t}} \right)}^{- 1}\boldsymbol{\Sigma}_{*} + \boldsymbol{\Upsilon} \Tilde{\boldsymbol{A}}  + \left( {1 - \alpha}_{t} \right)\boldsymbol{\Upsilon} \boldsymbol{\Sigma}_{L}\left( {\alpha_{t}\boldsymbol{\Sigma}}_{L} + \boldsymbol{\Sigma}_{*} \right)^{- 1}\Tilde{\boldsymbol{A}}  \right\}.
\end{eqnarray*}
\end{proof}

\subsection{Proof of Proposition \ref{prop:BFmatrixUnknownV}}

\begin{proof}[Proof of Proposition \ref{prop:BFmatrixUnknownV}]
Note that from Prop. \ref{prop:posteriorBF} the BF can be written as $H_t(\alpha_t)=\chi(\alpha_t)\kappa_{1t}(\alpha_t)\kappa_{2t}(\alpha_t)$ where the first term depend on $\boldsymbol{Y}_{t}$ and $\bar{\boldsymbol{Y}}$
\begin{equation}
\chi(\alpha_t)=\frac{\left|\boldsymbol{\Psi}_{A,d}\right|^\frac{m_{A,d}-n-1}{2}\left|\boldsymbol{\Psi}_*\right|^\frac{m_*-n-1}{2}}{\left|\boldsymbol{\Psi}_{d}\right|^\frac{m_{d}-n-1}{2}\left|\boldsymbol{\Psi}_{A,*}\right|^\frac{m_{A,*}-n-1}{2}}=\frac{\left|\boldsymbol{\Psi}_{A,d}\right|^\frac{\alpha_{t}m_{d}-n-1}{2}\left|\boldsymbol{\Psi}_*\right|^\frac{m_*-n-1}{2}}{\left|\boldsymbol{\Psi}_{d}\right|^\frac{m_{d}-n-1}{2}\left|\boldsymbol{\Psi}_{A,*}\right|^\frac{\alpha_{t}m_{d}-p-n-1}{2}}
\end{equation}
and the remaining two terms depend only on $\alpha_t$
\begin{equation}
\kappa_{1t}(\alpha_t)=\left(\frac{k_{*}k_{A,d}}{k_{d}k_{A,*}}\right)^{\frac{np}{2}},\quad \kappa_{2t}(\alpha_t) = \frac{\Gamma_{n}\left(\frac{m_{d}-n-1}{2}\right)\Gamma_{n}\left(\frac{m_{A,*}-n-1}{2}\right)}{\Gamma_{n}\left(\frac{m_{*}-n-1}{2}\right)\Gamma_{n}\left(\frac{m_{A,d}-n-1}{2}\right)}.
\end{equation}
First, we prove that $\chi(\alpha_t)<1$ to find the upper bound to the BF and then study the properties of the remaining terms to find the limits of the BF.

\textit{i)} Define the positive definite matrix $\boldsymbol{E}=\left(\boldsymbol{M}_{*}-\boldsymbol{Y}_{t}\right)^{'}\boldsymbol{\Sigma}^{-1}\left(\boldsymbol{M}_{*}-\boldsymbol{Y}_{t}\right)$ and write
\begin{eqnarray}
\boldsymbol{\Psi}_{A,d}&=&\alpha_t\boldsymbol{\Psi}_{*}+\frac{\alpha_tk_*}{\alpha_{t}k_{*}+1}\boldsymbol{E}=\alpha_{t}\boldsymbol{\Psi}_{*}\left(\boldsymbol{I}_n+\frac{\alpha_tk_*}{\alpha_{t}k_{*}+1}\boldsymbol{\Psi}_*^{-1}\boldsymbol{E}\right)\\
\boldsymbol{\Psi}_{d}&=&\boldsymbol{\Psi}_{*}+\frac{k_*}{k_{*}+1}\boldsymbol{E}=\boldsymbol{\Psi}_{*}\left(\boldsymbol{I}_n+\frac{k_*}{k_{*}+1}\boldsymbol{\Psi}_*^{-1}\boldsymbol{E}\right)
\end{eqnarray}
and $\boldsymbol{\Psi}_{A,*}=\alpha_{t}\boldsymbol{\Psi}_{*}$. Substituting in the expression of $\chi(\alpha_t)$ their determinants 
\begin{eqnarray}
\left|\boldsymbol{\Psi}_{A,d}\right|^{\frac{\alpha_{t}m_{d}-n-1}{2}}&=&\alpha_{t}^{\frac{n(\alpha_{t}m_{d}-n-1)}{2}}\left|\boldsymbol{\Psi}_{*}\right|^{\frac{\alpha_{t}m_{d}-n-1}{2}}\left|\boldsymbol{I}_n+\frac{\alpha_tk_*}{\alpha_{t}k_{*}+1}\boldsymbol{\Psi}_*^{-1}\boldsymbol{E}\right|^{\frac{\alpha_{t}m_{d}-n-1}{2}}\\
\left|\boldsymbol{\Psi}_{d}\right|^{\frac{m_{d}-n-1}{2}}&=&\left|\boldsymbol{\Psi}_{*}\right|^{\frac{m_{d}-n-1}{2}}\left|\boldsymbol{I}_n+\frac{k_*}{k_{*}+1}\boldsymbol{\Psi}_*^{-1}\boldsymbol{E}\right|^{\frac{m_{d}-n-1}{2}}\\
\left|\boldsymbol{\Psi}_{A,*}\right|^\frac{\alpha_{t}m_{d}-p-n-1}{2}&=&\alpha_{t}^\frac{n(\alpha_{t}m_{d}-p-n-1)}{2}\left|\boldsymbol{\Psi}_*\right|^\frac{\alpha_{t}m_{d}-p-n-1}{2}
\end{eqnarray}
and using $m_d=m_*+p$ yield
\begin{eqnarray}
\chi(\alpha_t)
&=&\frac{\alpha_{t}^{\frac{np}{2}}\left|\boldsymbol{\Psi}_{*}\right|^{\frac{\alpha_{t}m_{d}+m_*-m_d-\alpha_{t}m_d+p}{2}}\left|\boldsymbol{I}_n+\frac{\alpha_tk_*}{\alpha_{t}k_{*}+1}\boldsymbol{\Psi}_*^{-1}\boldsymbol{E}\right|^{\frac{\alpha_{t}m_{d}-n-1}{2}}}{\left|\boldsymbol{I}_n+\frac{k_*}{k_{*}+1}\boldsymbol{\Psi}_*^{-1}\boldsymbol{E}\right|^{\frac{m_{d}-n-1}{2}}}\nonumber\\
&=&
\alpha_{t}^{\frac{np}{2}}\left|\boldsymbol{I}_n+\frac{\alpha_tk_*}{\alpha_{t}k_{*}+1}\boldsymbol{\Psi}_*^{-1}\boldsymbol{E}\right|^{\frac{\alpha_{t}m_{d}-n-1}{2}}
\left|\boldsymbol{I}_n+\frac{k_*}{k_{*}+1}\boldsymbol{\Psi}_*^{-1}\boldsymbol{E}\right|^{-\frac{m_{d}-n-1}{2}}.
\end{eqnarray}
In the sense of positive definite matrices, the following inequality is satisfied:
\begin{equation}
\boldsymbol{I}_n+\frac{\alpha_{t}k_*}{\alpha_{t}k_{*}+1}\boldsymbol{\Psi}_*^{-1}\boldsymbol{E}<\boldsymbol{I}_n+\frac{k_*}{k_{*}+1}\boldsymbol{\Psi}_*^{-1}\boldsymbol{E}
\end{equation}
since $\alpha_t<1$, and $\alpha_{t}k_*/(\alpha_{t}k_{*}+1)-k_*/(k_{*}+1)=k_*(\alpha_{t}-1)/((\alpha_{t}k_{*}+1)(k_{*}+1))<0$. From Th. 7.8 in \cite{zhang2011matrix} on the relationship between determinants of positive definite matrices we obtain:
$|\boldsymbol{I}_n+\alpha_tk_*/(\alpha_{t}k_{*}+1)\boldsymbol{\Psi}_*^{-1}\boldsymbol{E}|<|\boldsymbol{I}_n+k_*/(k_{*}+1)\boldsymbol{\Psi}_*^{-1}\boldsymbol{E}|$ which implies
\begin{equation}
\left|\boldsymbol{I}_n+\frac{\alpha_tk_*}{\alpha_{t}k_{*}+1}\boldsymbol{\Psi}_*^{-1}\boldsymbol{E}\right|^{\frac{\alpha_{t}m_{d}-n-1}{2}}<\left|\boldsymbol{I}_n+\frac{k_*}{k_{*}+1}\boldsymbol{\Psi}_*^{-1}\boldsymbol{E}\right|^{\frac{m_{d}-n-1}{2}}
\end{equation}
since the arguments of the power functions are larger than one. For the first argument, this can be proved using Th. 7.7 and 7.8 in \cite{zhang2011matrix} and the positive definiteness:
\begin{equation}
\left|\boldsymbol{I}_n+\frac{\alpha_tk_*}{\alpha_{t}k_{*}+1}\boldsymbol{\Psi}_*^{-1}\boldsymbol{E}\right|>\left|\boldsymbol{I}_n\right|+\left|\frac{\alpha_tk_*}{\alpha_{t}k_{*}+1}\boldsymbol{\Psi}_*^{-1}\boldsymbol{E}\right|>1.
\end{equation}
For the second argument, the same inequality for determinants can be used together with $\alpha_{t}m_d+n-1>0$ and the assumption $\alpha_t>\underline{\alpha}=(2n+p)/m_d$. It follows that $\chi(\alpha_t)<1$ and $H_t(\alpha_t)<\kappa_{1t}(\alpha_t)\kappa_{2t}(\alpha_t)$ and the upper bound $\kappa_t(\alpha_t)=\kappa_{1t}(\alpha_t)\kappa_{2t}(\alpha_t)$ does not depends on $\boldsymbol{Y}_{t}$ and $\bar{\boldsymbol{Y}}$.

We show now that $\kappa_t(\alpha_t)$ is monotone decreasing since $\kappa_{1t}(\alpha_t)$ and $\kappa_{2t}(\alpha_t)$ are monotone decreasing. Regarding the first derivative of $\kappa_{1t}(\alpha_t)$ 
\begin{equation}
\partial_{\alpha_t}\left(\frac{\alpha_{t}k_*+1}{\alpha_t(k_*+1)}\right)^{\frac{np}{2}}=\frac{np}{2}\left(\frac{\alpha_{t}k_*+1}{\alpha_t(k_*+1)}\right)^{\frac{np}{2}-1}\left(-\alpha_t^{-2}(k_*+1)^{-1}\right)<0.
\end{equation}
Regarding the first derivative of $\kappa_2(\alpha_t)$ consider the following preliminary results, which can be easily proved using the properties of the multivariate  function of order $n$ \citep[e.g., see][] {Gup99}
{\footnotesize\begin{eqnarray}
\partial_{\alpha_t}\Gamma_{n}\left(\frac{m_{A,*}-n-1}{2}\right)&=&\frac{m_*+p}{2}\Gamma_{n}\left(\frac{m_{A,*}-n-1}{2}\right)\sum_{i = 1}^{n}\gamma_{n}\left(\frac{m_{A,*}-n-1}{2}+\frac{1-i}{2}\right)\label{gam1}\\
\partial_{\alpha_t}\Gamma_{n}\left(\frac{m_{A,d}-n-1}{2}\right)&=&\frac{m_*+p}{2}\Gamma_{n}\left(\frac{m_{A,d}-n-1}{2}\right)\sum_{i = 1}^{n}\gamma_{n}\left(\frac{m_{A,d}-n-1}{2}+\frac{1-i}{2}\right),\label{gam2}
\end{eqnarray}}
where $\gamma_n(a)$ is the multivariate digamma function of order $n$. Note that since $m_{A,d}\geq m_{A,*}$ the smallest argument of $\gamma_n$ in Eq. \ref{gam1}-\ref{gam2} is $
 (\alpha_t(m_*+p)-p-n-1+1-n)/2$ which is positive for  $\alpha_t\geq \underline{\alpha}=(p+2n)/(m_*+p)$. The first derivative of $\kappa_{2t}(\alpha_t)$ becomes:
 \begin{eqnarray}
\partial_{\alpha_t}\kappa_2(\alpha_t)&=&\frac{\Gamma_{n}\left(\frac{m_{d}-n-1}{2}\right)}{\Gamma_{n}\left(\frac{m_{*}-n-1}{2}\right)}\partial_{\alpha_t}\frac{\Gamma_{n}\left(\frac{m_{A,*}-n-1}{2}\right)}{\Gamma_{n}\left(\frac{m_{A,d}-n-1}{2}\right)}\nonumber\\
 &=&\frac{\Gamma_{n}\left(\frac{m_{d}-n-1}{2}\right)}{\Gamma_{n}\left(\frac{m_{*}-n-1}{2}\right)}\frac{(m_*+p)}{2}\Gamma_{n}\left(\frac{m_{A,*}-n-1}{2}\right)\Gamma_{n}\left(\frac{m_{A,d}-n-1}{2}\right)\nonumber\\
&\times&\frac{\sum_{i=1}^{n}\left[\gamma_n\left(\frac{m_{A,*}-n-1}{2}+\frac{i-1}{2}\right)-\gamma_n\left(\frac{m_{A,d}-n-1}{2}+\frac{i-1}{2}\right)\right]}{\Gamma_{n}\left(\frac{m_{A,d}-n-1}{2}\right)^2}.
 \end{eqnarray}
Since $m_{A,d}\geq m_{A,*}$ and the $\gamma(a)$ function is an increasing function for $a\geq0$ the differences of gamma functions at the numerator are negative, provided $\alpha_t\geq \underline{\alpha}$. This means that  $\partial_{\alpha_t}\kappa_2(\alpha_t)\leq 0$.

Finally, the limits of $\kappa_{1t}(\alpha_t)$ as $\alpha_t\rightarrow 1^{-}$ is 1, since $\kappa_{1t}(\alpha_t)$ is decreasing and the lower bound is
\begin{equation}
\kappa_{1t}(\alpha_t)=\left(\frac{k_{*}k_{A,d}}{k_{d}k_{A,*}}\right)^{\frac{np}{2}}=\left(\frac{k_{*}(\alpha_{t}k_*+1)}{(k_*+1)\alpha_{t}k_*}\right)^{\frac{np}{2}}=   \left(\frac{\alpha_{t}k_{*}+1}{\alpha_t(k_*+1)}\right)^{\frac{np}{2}}\ge 1
\end{equation}
whereas $\kappa_{1t}(\alpha_t)$ converges to $(\underline{\alpha}_tk_{*}/(\underline{\alpha_t}(k_{*}+1)))^{np/2}$ that is larger than 1 if $\alpha_t\rightarrow\underline{\alpha}$. We also get $\kappa_{2t}(\alpha_t)\rightarrow 1$ as $\alpha_t\rightarrow 1$ , since $m_{A,d}=m_d$ and $m_{A,*}=m_*$. For $\alpha_t\rightarrow\underline{\alpha}$ we get $\kappa_2(\alpha_t)\rightarrow +\infty$ because $\Gamma_n[(m_{A,*}-n-1)/2]\rightarrow+\infty$ for $\alpha_t\rightarrow\underline{\alpha}$ and $m_{A,*}\leq m_*\leq m_d \leq m_{A,d}$ for $\alpha_t\leq1$. 

In conclusion we get that $\kappa(\alpha_t)=\kappa_{1t}(\alpha_t)\kappa_{2t}(\alpha_t)$ is the product of two functions monotone decreasing, thus $\kappa_(\alpha_t)$ is monotone decreasing and 
$\kappa(\alpha_t)\rightarrow\infty$ as $\alpha\rightarrow \underline{\alpha}_{t}^{+}$ and $\kappa(\alpha_t)\rightarrow 1$ as $\alpha\rightarrow 1^{-}$.

\textit{ii)} From its definition it trivially follows $\lim_{\alpha_{t} \rightarrow 1^{-}}{H_{t}(\alpha_t)}=1$. Regarding the second limit, note that all terms at the numerator and denominator of $H_t(\alpha_t)$ have bounded limit for $\alpha_t\rightarrow \underline{\alpha}$, expect for $\Gamma_{n}\left(\frac{m_{A,*}-n-1}{2}\right)$ which is unbounded since $m_{A,*}-n-1=\alpha_t(m_*+p)-p-n-1\rightarrow 0$. Thus $H_t(\alpha_t)\rightarrow\infty$ for $\alpha_t\rightarrow \underline{\alpha}$.

\textit{iii)} We factorize out the terms that depend on $\alpha_t$ and write $H_t$ as $ H_t(\alpha_t)=L (a_1 a_2 a_3)/(b_1 b_2 b_3)$, where $L=\left|\boldsymbol{\Psi}_*\right|^\frac{m_*-n-1}{2}k_*^\frac{np}{2}\Gamma_n\left(\frac{m_d-n-1}{2}\right) k_d^\frac{-np}{2}\left|\boldsymbol{\Psi}_d\right|^{-\frac{m_d-n-1}{2}}\Gamma_n\left(\frac{m_*-n-1}{2}\right)^{-1}$
and
\begin{eqnarray}
&&a_1=k_{A,d}^\frac{np}{2},\,a_2=\Gamma_n\left(\frac{m_{A,*}-n-1}{2}\right),\,a_3=\left|\boldsymbol{\Psi}_{A,d}\right|^\frac{m_{A,d}-n-1}{2},\nonumber\\
&&b_1=k_{A,*}^\frac{np}{2},\,b_2=\Gamma_n\left(\frac{m_{A,d}-n-1}{2}\right),\,b_3=\left|\boldsymbol{\Psi}_{A,*}\right|^\frac{m_{A,*}-n-1}{2}.
\end{eqnarray}
The first derivative of $H(\alpha_t)$ is:
$\partial_{\alpha_t}H(\alpha_t)=L(b_1b_2b_3)^{-2}(\left(a_1^{'}a_2a_3+a_1a_2^{'}a_3+a_1a_2a_3^{'}\right)b_1b_2b_3$ $-\left(b_1^{'}b_2b_3+b_1b_2^{'}b_3+b_1b_2b_3^{'}\right)a_1a_2a_3)$ where
\begin{eqnarray}
a_1^{'}&=&\partial_{\alpha_t} k_{A,d}^\frac{np}{2}=\partial_{\alpha_t} \left(\alpha_t k_*+1\right)^\frac{np}{2}=\frac{np}{2}k_*\left(\alpha_t k_*+1\right)^\frac{np-2}{2}\\
a_2^{'} &=& \partial_{\alpha_t}\Gamma_n\left(\frac{m_{A,*}-n-1}{2}\right)=\partial_{\alpha_t}\Gamma_n\left[\frac{\alpha_t(m_{*}+p)-p-n-1}{2}\right]\nonumber\\
&=&(m_*+p)\Gamma_n\left(\frac{m_{A,*}-n-1}{2}\right)\sum_{i = 1}^{n}\psi\left(\frac{m_{A,*}-n-i}{2}\right)\nonumber\\
a_3^{'} &=& \partial_{\alpha_t}\left|\boldsymbol{\Psi}_{A,d}\right|^\frac{m_{A,d}-n-1}{2}\nonumber\\
&=&\left|\boldsymbol{\Psi}_{A,d}\right|^\frac{m_{A,d}-n-1}{2}\left[\partial_{\alpha_t}\left(\frac{m_{A,d}-n-1}{2}\right)log\left|\boldsymbol{\Psi}_{A,d}\right|+\frac{m_{A,d}-n-1}{2}\frac{\partial_{\alpha_t}\left|\boldsymbol{\Psi}_{A,d}\right|}{\left|\boldsymbol{\Psi}_{A,d}\right|}\right]\nonumber
\end{eqnarray}
with $\psi(a)$ the digamma function and
\begin{eqnarray}
\partial_{\alpha_t}\frac{m_{A,d}-n-1}{2}&=&\frac{m_*+p}{2}\nonumber\\
\partial_{\alpha_t}\left|\boldsymbol{\Psi}_{A,d}\right|&=&\left|\boldsymbol{\Psi}_{A,d}\right|tr\left[\boldsymbol{\Psi}_{A,d}^{-1}\left(\boldsymbol{\Psi}_*+\frac{k_*\left(\boldsymbol{M}_*-\boldsymbol{Y}_{t}\right)^{'}\boldsymbol{\Sigma}^{-1}\left(\boldsymbol{M}_*-\boldsymbol{Y}_{t}\right)}{\left(\alpha_tk_*+1\right)^2}\right)\right].\nonumber
\end{eqnarray}
The remaining three terms have derivatives:
$b_1^{'}$:
{\small\begin{eqnarray}
b_1^{'}&=&\partial_{\alpha_t} k_{A,*}^\frac{np}{2}=\partial_{\alpha_t} \alpha_t k_*^\frac{np}{2}=\frac{np}{2}k_*k_{A,*}^\frac{np-2}{2}=\frac{np}{2}k_*\left(\alpha_t k_*\right)^\frac{np-2}{2}\\
b_2^{'} &=& \partial_{\alpha_t}\Gamma_n\left(\frac{m_{A,d}-n-1}{2}\right)=\partial_{\alpha_t}\Gamma_n\left[\frac{\alpha_t(m_{*}+p)-p+p-n-1}{2}\right]\nonumber\\
&=&(m_*+p)\Gamma_n\left(\frac{m_{A,d}-n-1}{2}\right)\sum_{i = 1}^{n}\psi\left(\frac{m_{A,d}-n-i}{2}\right)\\
b_3^{'} &=& \partial_{\alpha_t}\left|\boldsymbol{\Psi}_{A,*}\right|^\frac{m_{A,*}-n-1}{2}\nonumber\\
&=&\left|\boldsymbol{\Psi}_{A,*}\right|^\frac{m_{A,*}-n-1}{2}\left[\partial_{\alpha_t}\left(\frac{m_{A,*}-n-1}{2}\right)\log\left|\boldsymbol{\Psi}_{A,*}\right|+\frac{m_{A,*}-n-1}{2}\frac{\partial_{\alpha_t}\left|\boldsymbol{\Psi}_{A,*}\right|}{\left|\boldsymbol{\Psi}_{A,*}\right|}\right],
\end{eqnarray}}
where $\partial_{\alpha_t}(m_{A,*}-n-1)/2=(m_*+p)/2$ and $\partial_{\alpha_t}\left|\boldsymbol{\Psi}_{A,*}\right|=\left|\boldsymbol{\Psi}_{A,*}\right|tr\left(\boldsymbol{\Psi}_{A,*}^{-1}\boldsymbol{\Psi}_*\right)=\left|\boldsymbol{\Psi}_{A,*}\right|\alpha_t^{-1}$ since $\boldsymbol{\Psi}_{A,*}=\alpha_t\boldsymbol{\Psi}_{*}$.  In conclusion, the derivative of $H_t(\alpha_t)$ is:
\begin{equation}
\partial_{\alpha_t}H_t(\alpha_t)=L\left[\left|\boldsymbol{\Psi}_{A,*}\right|^\frac{m_{A,*}-n-1}{2}k_{A,*}^\frac{np}{2}\Gamma_n\left(\frac{m_{A,d}-n-1}{2}\right)\right]^{-2}(A_1^{'}B_1-A_1B_1^{'})
\end{equation}
where $A_1=a_1a_2a_3$, $B_1=b_1b_2b_3$ and their derivatives $A_1{'}, B_1{'}$ are defined as:
{\footnotesize \begin{eqnarray*}
A_1&=&k_{A,d}^\frac{np}{2}\Gamma_n\left(\frac{m_{A,*}-n-1}{2}\right)\left|\boldsymbol{\Psi}_{A,d}\right|^\frac{m_{A,d}-n-1}{2}\\
B_1&=&k_{A,*}^\frac{np}{2}\Gamma_n\left(\frac{m_{A,d}-n-1}{2}\right)\left|\boldsymbol{\Psi}_{A,*}\right|^\frac{m_{A,*}-n-1}{2}\\
A_1^{'}&=&a_1^{'}a_2a_3+a_1a_2^{'}a_3+a_1a_2a_3^{'}\nonumber\\
&=&\frac{np}{2}k_*k_{A,d}^\frac{np-2}{2}\Gamma_n\left(\frac{m_{A,*}-n-1}{2}\right)\left|\boldsymbol{\Psi}_{A,d}\right|^\frac{m_{A,d}-n-1}{2}+\nonumber\\
&&k_{A,d}^\frac{np}{2}(m_*+p)\Gamma_n\left(\frac{m_{A,*}-n-1}{2}\right)\sum_{i = 1}^{n}\psi\left(\frac{m_{A,*}-n-i}{2}\right)\left|\boldsymbol{\Psi}_{A,d}\right|^\frac{m_{A,d}-n-1}{2}+\nonumber\\
&&k_{A,d}^\frac{np}{2}\Gamma_n\left(\frac{m_{A,*}-n-1}{2}\right)\left|\boldsymbol{\Psi}_{A,d}\right|^\frac{m_{A,d}-n-1}{2}.\nonumber\\
&&\left\lbrace\frac{m_*+p}{2}log\left|\boldsymbol{\Psi}_{A,d}\right|+\frac{m_{A,d}-n-1}{2}tr\left[\boldsymbol{\Psi}_{A,d}^{-1}\left(\boldsymbol{\Psi}_*+\frac{k_*\left(\boldsymbol{M}_*-\boldsymbol{Y}_{t}\right)^{'}\boldsymbol{\Sigma}^{-1}\left(\boldsymbol{M}_*-\boldsymbol{Y}_{t}\right)}{\left(\alpha_tk_*+1\right)^2}\right)\right]\right\rbrace\\
B_1^{'}&=&b_1^{'}b_2b_3+b_1b_2^{'}b_3+b_1b_2b_3^{'}\nonumber\\
&=&\frac{np}{2}k_*k_{A,*}^\frac{np-2}{2}\Gamma_n\left(\frac{m_{A,d}-n-1}{2}\right)\left|\boldsymbol{\Psi}_{A,*}\right|^\frac{m_{A,*}-n-1}{2}+\nonumber\\
&&k_{A,*}^\frac{np}{2}(m_*+p)\Gamma_n\left(\frac{m_{A,d}-n-1}{2}\right)\sum_{i = 1}^{n}\psi\left(\frac{m_{A,d}-n-i}{2}\right)\left|\boldsymbol{\Psi}_{A,*}\right|^\frac{m_{A,*}-n-1}{2}+\nonumber\\
&&k_{A,*}^\frac{np}{2}\Gamma_n\left(\frac{m_{A,d}-n-1}{2}\right)\left|\boldsymbol{\Psi}_{A,*}\right|^\frac{m_{A,*}-n-1}{2}\left[\left(\frac{m_*+p}{2}\right)log\left|\boldsymbol{\Psi}_{A,*}\right|+\frac{m_{A,*}-n-1}{2\alpha_t}\right].
\end{eqnarray*}}
\end{proof}

\subsection{Proof of Corollary \ref{exUnivGauss}}

\begin{proof}[Proof of Corollary \ref{exUnivGauss}]
Follows from some long but straightforward algebra from \textit{iv) of Prop. \ref{prop:BFmatrix}}
\end{proof}

\subsection{Proof of Remark \ref{exUnivGaussContd}}

\begin{proof}[Proof of Remark \ref{exUnivGaussContd}]
From the BF in Eq. \ref{BF_matrix}, assuming $n=p=1$, $\boldsymbol{\Sigma}_P=\sigma^2/\varphi$, $\boldsymbol{\Sigma}_L=\sigma^2$, $\boldsymbol{V}=1$, $\boldsymbol{M}=m$ and $\mathbf{Y}= (Y_{1},\ldots,Y_{t-1})$ one obtains $H_t(\alpha_t)>H_0$ for
\begin{eqnarray}
    &&\exp\left\{\frac{(\alpha_t-1)(\varphi+T)(Y_t-m_{\ast})^2}{2\sigma^2(\varphi+T+1)(\alpha_t\varphi+\alpha_t T+1)}\right\}>H_0 \frac{\sqrt{\alpha_t(\varphi+T+1)}}{\sqrt{\alpha_t(\varphi+T)+1}}\Longleftrightarrow\nonumber\\
    &&(Y_t-m_{\ast})^2 \leq \frac{2\sigma^2(\varphi+T+1)(\alpha_t\varphi+\alpha_t T+1)}{(\alpha_t-1)(\varphi+T)}\log\left\{H_0\frac{\sqrt{\alpha_t(\varphi+T+1)}}{\sqrt{\alpha_t(\varphi+T)+1}}\right\},
\end{eqnarray}
where $m_{\ast}=\varphi/(\varphi+T)m+T/(\varphi+T)\bar{Y}$, $\bar{Y}=(Y_1+\ldots+Y_{T})/T$ and $T=t-1$. The inequality is satisfied for $Y_{1t}\leq Y_t\leq Y_{2t}$ there $Y_{jt}$ are the solutions of the second order equation associated with the above inequality.
\end{proof}

\subsection{Proof of Proposition \ref{prop:IBFmatrix}}

\begin{proof}[Proof of Proposition \ref{prop:IBFmatrix}]
i) Since $H_{t}(\alpha_t)<\kappa_t(\alpha_t)$ it is sufficient to find $a$, $b$, $\underline{\alpha}$ and $\bar{\alpha}$ such that $\int_{0}^{1}\kappa_t(\alpha_t)\pi_t(\alpha_t)d\alpha_t<\infty$. By multinomial theorem, it follows
\begin{eqnarray}
    &&\frac{|\boldsymbol{\Sigma}_L+\boldsymbol{\Sigma}_{\ast}/\alpha_t|^{n/2}}{|\boldsymbol{\Sigma}_L+\boldsymbol{\Sigma}_{\ast}|^{n/2}}=|\boldsymbol{\Sigma}_L+\boldsymbol{\Sigma}_{\ast}|^{-n/2} \left(\sum_{j=0}^{p}c_j \alpha_t^{-j}\right)^{n/2}=|\boldsymbol{\Sigma}_L+\boldsymbol{\Sigma}_{\ast}|^{-n/2}\left(\sum_{\underline{k}\in\mathcal{K}_n}d_{n,\underline{k}}\prod_{j=0}^{p}c_{j}^{k_j}\alpha_t^{-j k_j}\right)^{1/2}\nonumber\\
&&=|\boldsymbol{\Sigma}_L+\boldsymbol{\Sigma}_{\ast}|^{-n/2}\left(\sum_{\underline{k}\in\mathcal{K}}d_{n,\underline{k}}\alpha_t^{-\sum_{j=0}^{p}j k_j}c_{\underline{k}}\right)^{1/2}<c \alpha_t^{-pn/2}
\end{eqnarray}
with $c_j$ the $j$-th coefficient of the characteristic polynomial of $\boldsymbol{\Sigma}_L+\boldsymbol{\Sigma}_{\ast}$, $\underline{k}=(k_0,\ldots,k_p)$ a multi-index with values in $\mathcal{K}_n=\{\underline{k}|k_0+\ldots+k_p=n\}$ and
$$
c=|\boldsymbol{\Sigma}_L+\boldsymbol{\Sigma}_{\ast}|^{-n/2}\left(\sum_{\underline{k}\in\mathcal{K}_n}d_{\underline{k}}c_{\underline{k}}\right)^{1/2},\quad d_{n,\underline{k}}=\binom{n}{k_0,k_1,k_2,\ldots,k_p},\quad c_{\underline{k}}=\prod_{j=0}^{p}c_{j}^{k_j}.
$$
We used  the relationship $\sum_{j=1}^{p}j k_j=\sum_{\ell=1}^{p}\sum_{j=\ell}^{p}k_j<\sum_{\ell=1}^{p}\sum_{j=0}^{p}k_j=pn$, along with the condition $\alpha_t<1$, to derive the inequality here above. Thus we conclude
\begin{equation}
\int_{0}^{1}\kappa_t(\alpha)\pi_t(\alpha_t)d\alpha_t<c \int_{0}^{1}\frac{1}{B(a,b)}\alpha_t^{-pn/2}\alpha_t^{a-1}(1-\alpha_t)^{b-1} d\alpha_t<\infty 
\end{equation}
for $a-pn/2-1>-1$, that is $a>pn/2$, and $b>0$. Under the integrability conditions given above, the integral of the upper bound is
{\small
\begin{eqnarray*}
    &&\int_0^{1} \frac{|\boldsymbol{\Sigma}_L+\boldsymbol{\Sigma}_{\ast}/\alpha_t|^{n/2}}{|\boldsymbol{\Sigma}_L+\boldsymbol{\Sigma}_{\ast}|^{n/2}}\frac{1}{B(a,b)}\alpha_t^{a-1}(1-\alpha_t)^{b-1}d\alpha_t=\nonumber\\
    &&=\int_0^{1}\alpha^{-np/2}\left(1+\sum_{j=0}^{p}\tilde{c}_j\alpha_t^{j}-1\right)^{n/2}\frac{1}{B(a,b)}\alpha_t^{a-1}(1-\alpha_t)^{b-1}d\alpha_t\nonumber\\
&&=\sum_{\ell=0}^{\infty}\frac{\Gamma(n/2+1)}{\Gamma(\ell+1)\Gamma(n/2-\ell+1)} \int_0^{1}\alpha_t^{-np/2}\left(\sum_{j=0}^{p}\bar{c}_j\alpha_t^{j}+1\right)^{\ell}\frac{1}{B(a,b)}\alpha_t^{a-1}(1-\alpha_t)^{b-1}d\alpha_t
\end{eqnarray*}}
with $\tilde{c}_{j}$ the $j$-th coefficient of the characteristic polynomial of $\boldsymbol{\Sigma}_L^{-1}\boldsymbol{\Sigma}_{\ast}$, $\bar{c}_{0}=\tilde{c}_{0}-1$, $\bar{c}_{j}=\tilde{c}_j$, $j=1,\ldots,p$, and where we used $0<|\alpha_t \boldsymbol{I}_p+\boldsymbol{A}|/| \boldsymbol{I}_p+\boldsymbol{A}|<|\alpha_t \boldsymbol{I}_p+\boldsymbol{A}|/(| \boldsymbol{I}_p\alpha|+ |\boldsymbol{I}_p(1-\alpha_t)+\boldsymbol{A}|)<1$ with $\boldsymbol{A}=\boldsymbol{\Sigma}_L^{-1}\boldsymbol{\Sigma}_{\ast}$, and the generalized binomial formula. The above expression becomes
\begin{eqnarray}
&&\sum_{\ell=0}^{\infty}\sum_{\underline{k}\in\mathcal{K}_{\ell}}\frac{\Gamma(n/2+1)}{\Gamma(\ell+1)\Gamma(n/2-\ell+1)} d_{\ell,\underline{k}}\bar{c}_{\underline{k}}\int_0^{1}\frac{1}{B(a,b)}\alpha_t^{a-np/2+w_{\underline{k}}-1}(1-\alpha_t)^{b-1}d\alpha\nonumber\\
&&=\sum_{\ell=0}^{\infty}\sum_{\underline{k}\in\mathcal{K}_{\ell}+1}d_{\ell,\underline{k}}\bar{c}_{\underline{k}}\frac{\Gamma(n/2+1)B(a-np/2+w_{\underline{k}},b)}{\Gamma(\ell+1)\Gamma(n/2-\ell+1)B(a,b)}
\end{eqnarray}
by multinomial theorem, with
$$
\bar{c}_{\underline{k}}=\prod_{j=0}^{p}\bar{c}_{j}^{k_j}.
$$

ii) Since from Corollary \ref{coroll} the normalizing constant is well defined for $\alpha_t>(p+2n)/m_d$ it is natural to assume in the general beta distribution $\underline{\alpha}=(p+2n)/m_d$. It follows from Prop. \ref{prop:BFmatrixUnknownV} the BF $H_{t}(\alpha_t)=\chi(\alpha_t)\kappa_{1t}(\alpha_t)\kappa_{2t}(\alpha_t)$, with $\chi(\alpha_t)<1$, $\kappa_{1t}(\alpha_t)<(\underline{\alpha} k_{*}/(\underline{\alpha}(k_{*}+1)))^{np/2}$. To show the integrability it is sufficient to prove that $\int \kappa_{2t}(\alpha)\pi_t(\alpha)d\alpha<\infty$. We need to show that 
\begin{align}\label{A85}
    &\frac{\Gamma_{n}((m_d-n-1)/2)}{\Gamma_{n}((m_*-n-1)/2)}\int_{\underline{\alpha}}^{1} \frac{\Gamma_n\left(\frac{m_{A,*}-n-1}{2}\right)}{\Gamma_n\left(\frac{m_{A,d}-n-1}{2}\right)}\pi(\alpha_t)d\alpha_t<\infty.
\end{align}
Note that we can write:
\begin{align}
&\int_{\underline{\alpha}}^{1}\frac{\Gamma_n\left(\frac{m_{A,*}-n-1}{2}\right)}{\Gamma_n\left(\frac{m_{A,d}-n-1}{2}\right)}\pi(\alpha_t)d\alpha_t=\int_{\underline{\alpha}}^{1}\frac{\pi^{\frac {n(n-1)}{4}}\prod_{j=1}^n\Gamma\left(\frac{m_{A,*}-n-1}{2}+\frac{1-j}{2}\right)}{\pi^{\frac {n(n-1)}{4}}\prod_{j=1}^n\Gamma\left(\frac{m_{A,d}-n-1}{2}+\frac{1-j}{2}\right)}\pi(\alpha_t)d\alpha_t\nonumber\\
&=\int_{\underline{\alpha}}^{1}\frac{\prod_{j=1}^{n}\Gamma\left(\frac{\alpha_{t}m_{d}-p-n-j}{2}\right)}{\prod_{j=1}^{n}\Gamma\left(\frac{\alpha_{t}m_{d}-n-j}{2}\right)}\pi(\alpha_t)d\alpha_t=\int_{\underline{\alpha}}^{1}\prod_{j=1}^{n}\frac{B\left((\alpha_{t}m_{d}-n-j)/2,\frac{p}{2}\right)}{\Gamma\left(\frac{p}{2}\right)}\pi(\alpha_t)d\alpha_t,
\end{align}
where the last line follows from the property: $\Gamma\left(c-d\right)\Gamma\left(d\right)=B\left(c,d\right)\Gamma(c)$.
Given the upper bound for the beta function, $B(x,y) \leq \frac{1}{xy}$ \citep[e.g., see][Th. 1.2]{from2022some}, we can write
\begin{align}
&B\left(\frac{\alpha_{t}m_{d}-n-j}{2},\frac{p}{2}\right)=
\int_{0}^{1}t^{\frac{m_{d}(\alpha_{t}-\underline{\alpha})+m_d\underline{\alpha}-n-j}{2}-1}(1-t)^{\frac{p}{2}-1}dt\nonumber\\
&=\int_{0}^{1}t^{\frac{m_d}{2}(\alpha_{t}-\underline{\alpha})+\frac{m_d\underline{\alpha}-n-j}{2}-1}(1-t)^{\frac{p}{2}-1}dt
\leq\frac{2}{p\left[\frac{m_d}{2}(\alpha_{t}-\underline{\alpha})+\frac{m_d\underline{\alpha}-n-j}{2}\right]}.\nonumber
\end{align}
From the above inequality and using $j\leq n$ and by H\"older's inequality 
\begin{align}
&\int_{\underline{\alpha}}^{1}\frac{\Gamma_n\left(\frac{m_{A,*}-n-1}{2}\right)}{\Gamma_n\left(\frac{m_{A,d}-n-1}{2}\right)}\pi(\alpha_t)d\alpha_t
\leq\Gamma\left(\frac{p}{2}\right)^{-n}\int_{\underline{\alpha}}^{1}\prod_{j=1}^{n}\frac{4}{p\left[m_{d}(\alpha_{t}-\underline{\alpha})+(m_d\underline{\alpha}-n-j)\right]}\pi(\alpha_t)d\alpha_t\nonumber\\
&\leq\Gamma\left(\frac{p}{2}\right)^{-n}\int_{\underline{\alpha}}^{1}\prod_{j=1}^{n}\frac{4}{p\left[m_{d}(\alpha_{t}-\underline{\alpha})+(m_d\underline{\alpha}-n-n)\right]}\pi(\alpha_t)d\alpha_t\nonumber\\
&=\Gamma\left(\frac{p}{2}\right)^{-n}\left(\frac{4}{p}\right)^n\int_{0}^{1-\underline{\alpha}}\frac{x^{(a-1)}\left(1-x\right)^{(b-1)}}{B(1-\underline{\alpha};a,b)\left(m_{d}x+p\right)^n}dx\nonumber\\
&\leq\Gamma\left(\frac{p}{2}\right)^{-n}\left(\frac{4}{p}\right)^nB(1-\underline{\alpha};a,b)^{-1}\left[\int_{0}^{1-\underline{\alpha}}x^{r(a-1)}\left(1-x\right)^{r(b-1)}dx\right]^{1/r}\left[\int_{0}^{1-\underline{\alpha}}\left(m_{d}x+p\right)^{-qn}dx\right]^{1/q}\nonumber\\
&=\Gamma\left(\frac{p}{2}\right)^{-n}\left(\frac{4}{p}\right)^n\frac{B(1-\underline{\alpha};ra-r+1,qb-q+1)^{1/r}}{B(1-\underline{\alpha};a,b)}\left[\int_{0}^{1-\underline{\alpha}}\left(m_{d}x+p\right)^{-qn}dx\right]^{1/q}\nonumber\\
&=\Gamma\left(\frac{p}{2}\right)^{-n}\left(\frac{4}{p}\right)^n\frac{B(1-\underline{\alpha};ra-r+1,qb-q+1)}{B(1-\underline{\alpha};a,b)}\left[\frac{(m_{d}(1-\underline{\alpha})+p)^{-qn+1}-p^{-qn+1}}{m_{d}(-qn+1)}\right]^{1/q}<\infty.\nonumber
\end{align}
Note that, since $r$ and $q$ are such that $r^{-1}+q^{-1}=1$, the upper bound in the last line be minimize in $r$ after choosing $q=r/(1-r)$.
\end{proof}

\subsection{Proof of Corollary \ref{corol:integratedBoundUniv}}

\begin{proof}[Proof of Corollary \ref{corol:integratedBoundUniv}]
From Remark \ref{ex3} an upper bound of the BF is $A_t(\alpha_t)=(\alpha_t(\varphi+t-1)+1)/(\alpha_t(\varphi+t))^{1/2}$ and its integral is bounded since $\int_{0}^{1}{A_{t}(\alpha_{t})d\alpha_{t}}\  \leq \ \ \ \int_{0}^{1}{\alpha_{t}^{- 1/2}d\alpha_{t} = 2}$ even if $\alpha_{t} = 0$ is not a continuity point for $A_{t}$. By
the change of variable $A_{t}(\alpha_t) = z$ we obtain
\begin{eqnarray*}
&&\int_{0}^{1}{A_{t}(\alpha_t)d\alpha_{t}} = \int_{-\infty}^{1}{\frac{- 2(\varphi + t)z^{2}}{( (\varphi + t)z^{2} - (\varphi + t-1))^{2}}dz} = 1 - \int_{-\infty}^{1}{\frac{1}{(\varphi + t)z^{2} - (\varphi + t-1)}dz}\\
&&=1 - \frac{1}{(\varphi + t-1)}\int_{-\infty}^{\sqrt{\frac{(\varphi + t-1 + 1)}{(\varphi + t-1)}}}{\sqrt{\frac{(\varphi + t-1)}{(\varphi + t)}}\frac{1}{w^{2} - 1}}dw\\ 
&&= 1 + \frac{1}{2\sqrt{(\varphi + t-1)(\varphi + t)}}\log\left( \frac{\sqrt{(\varphi + t)} + \sqrt{(\varphi + t-1)}}{\sqrt{(\varphi + t)} - \sqrt{(\varphi + t-1)}} \right),
\end{eqnarray*}
where the second line follows from integration by parts and the last line from the change of variable
$w=z(\varphi + t)^{1/2}/(\varphi + t-1)^{1/2}$.
\end{proof}

\subsection{Proof of Proposition \ref{DistribBFmatrix}}

\begin{proof}[Proof of Proposition \ref{DistribBFmatrix}]
Let us define $\boldsymbol{E}_t=\boldsymbol{Y}_{t} - \boldsymbol{M}_{*}$, then following the expression of BF given in i) Prop. \ref{prop:BFmatrix}, the cdf of the BF is
\begin{eqnarray}
&&F_{H_{t}}(h|\alpha_t) =\mathbb{P}\left(H_{t} < h|\alpha_t \right)=\mathbb{P}\left( \frac{\left| \boldsymbol{\Sigma}_{A,d} \right|^{\frac{n}{2}}}{\left| \boldsymbol{\Sigma}_{d} \right|^{\frac{n}{2}}}\exp\left\{ - \frac{1}{2}\text{tr}\left\lbrack \boldsymbol{\Sigma}_{H}^{- 1}\boldsymbol{E}_t \boldsymbol{V}^{- 1} \boldsymbol{E}_t^{'} \right\rbrack \right\} < h|\alpha_t \right)\nonumber\\
&&=\mathbb{P}\left(\exp\left\{ - \frac{1}{2}\text{tr}\left\lbrack \boldsymbol{\Sigma}_{H}^{- 1}\boldsymbol{E}_t \boldsymbol{V}^{- 1}\boldsymbol{E}_t^{'} \right\rbrack \right\} < \frac{h}{\kappa_{t}}\left|\right.\alpha_t \right)\nonumber\\
&&=\mathbb{P}\left(\text{tr}\left\lbrack (\boldsymbol{\Sigma}_{H}^{-1/2})\boldsymbol{E}_t \boldsymbol{V}^{-1/2}(\boldsymbol{V}^{-1/2})'\boldsymbol{E}_t^{'}(\boldsymbol{\Sigma}_{H}^{-1/2})' \right\rbrack > - 2\log\left( h/\kappa_{t} \right)|\alpha_t \right),
\end{eqnarray}
where $- \log\left(h/\kappa_{t}\right) > 0$ since
$h_{t} < \kappa_{t}$, and $\boldsymbol{A}^{1/2}$ denotes the Cholesky decomposition of $\boldsymbol{A}$ and $\boldsymbol{A}^{-1/2}$ denotes its inverse. From Th. 2.3.10 in \cite{Gup99}, 
the random matrix $\boldsymbol{W}=(\boldsymbol{\Sigma}_{H}^{-1/2})\boldsymbol{E}_t \boldsymbol{V}^{-1/2}$ follows the matrix normal $\mathcal{N}_{p,n}(\boldsymbol{M},\boldsymbol{\Sigma}, \boldsymbol{I}_n)$ with $\boldsymbol{M}=(\boldsymbol{\Sigma}_{H}^{-1/2})(\tilde{\boldsymbol{M}}-\boldsymbol{M}_{\ast})\boldsymbol{V}^{-1/2}$ and $\boldsymbol{\Sigma}=(\boldsymbol{\Sigma}_{H}^{-1/2})\tilde{\boldsymbol{\Sigma}} \boldsymbol{\Sigma}_{H}^{-1/2}$. The product $\boldsymbol{WW}'$ follows a non-central Wishart distribution $\mathcal{W}_{p,n}(\boldsymbol{\Sigma},\boldsymbol{\Sigma}^{-1} \boldsymbol{M} \boldsymbol{M}')$ and following the results in \cite{mathai1980moments} and \cite{pham2015trace} its trace $r=\text{tr}(\boldsymbol{WW}')$ is the weighted sum of non-central chi-square variables $r=\lambda_1 r_{1}+\ldots+\lambda_p r_p$, where $r_{j}$ follows a non-central chi-square $f_{n,U_{jj}}$ with non-centrality parameter $U_{jj}$, with $U_{jj}$ being the element $j$-th diagonal element of $\boldsymbol{U}=\boldsymbol{Q}'\boldsymbol{\Sigma}^{-1}\boldsymbol{MM}'\boldsymbol{Q}=\boldsymbol{Q}'(\boldsymbol{\Sigma}_{H}^{1/2}\tilde{\boldsymbol{\Sigma}}^{-1}(\tilde{\boldsymbol{M}}-\boldsymbol{M}_{\ast})\boldsymbol{V}^{-1})(\tilde{\boldsymbol{M}}-\boldsymbol{M}_{\ast})'\boldsymbol{\Sigma}^{-1/2}_{H}\boldsymbol{Q}$, and $\boldsymbol{Q}$ an orthogonal matrix such that $\boldsymbol{Q}' \boldsymbol{\Sigma} \boldsymbol{Q} =\text{diag}(\lambda_1,\ldots,\lambda_n)$. Then $F_{H_{t}}(h|\alpha_t)=1-F_{r}(- 2\log\left( h/\kappa_{t} \right))$ and from \cite{kourouklis1985distribution} the density of $H_t$ is
\begin{eqnarray}
f_{H_{t}}(h|\alpha_t) &=&\frac{2}{h}\sum_{k=0}^{\infty}c_k g(- 2\log\left( h/\kappa_{t} \right);\frac{np}{2}+k,2\lambda)
\end{eqnarray}
with support $0<h<\kappa_{t}$,  where $0<\lambda<\infty$ is arbitrarily chosen, $g(x;a,b)$ is the density of a gamma distribution with shape and scale parameters $a>0$ and $b>0$, respectively, and the coefficients $c_k$, $d_k$ and $f_k$ satisfy
\begin{eqnarray*}
    c_{k}&=&\exp(-\sum_{j=1}^p U_{jj})\prod_{j=1}^{p}
(\lambda_j/\lambda)^{-n/2}f_k,\quad d_k=\frac{n}{2k}\sum_{j=1}^{p}(1-\lambda/\lambda_j)^{k} + \lambda \sum_{j=1}^{p}\frac{U_{jj}}{\lambda_j}(1-\lambda/\lambda_j/)^{k-1},\\
    f_{k+1}&=&\frac{1}{k+1}\sum_{j=1}^{k+1}j d_j f_{k+1-j}, k=0,1,2,\ldots\quad \text{with}\quad  f_0=1.
\end{eqnarray*}
The cdf can be obtained by a change of variable argument as
\begin{eqnarray}
    F_{H_{t}}(h|\alpha_t) &=&\int_{0}^{h}f_{H_{t}}(u|\alpha_t)du=\sum_{k=0}^{\infty}c_k\int_{- 2\log\left( h/\kappa_{t} \right)}^{\infty} g(u;\frac{np}{2}+k,2\lambda))du\nonumber\\    
    &=&\sum_{k=0}^{\infty}c_k (1-G(- 2\log\left( h/\kappa_{t} \right);\frac{np}{2}+k,2\lambda)),
\end{eqnarray}
where $G(x;a,b)$ is the cdf of the gamma density $g(x;a,b)$.
\end{proof}
\subsection{Proof of Corollary \ref{CorUnivBFDistr}}
\begin{proof}[Proof of Corollary \ref{CorUnivBFDistr}]
In the univariate case, following the notation in Example \ref{ex3}, we set $n=p=1$, $\tilde{\boldsymbol{\Sigma}}=\sigma_L^2$, $\boldsymbol{V}=1$ and $\tilde{\boldsymbol{M}}=\theta$. From Prop. \ref{DistribBFmatrix} one obtains $r=\lambda_1 r_1$ where $\lambda_1=\sigma^2_L/\sigma^2_H$ and $U_{11}=(\theta-m_{\ast})^2/\sigma^2_{L}$ and $r_1$ follows a non-central chi-square $f_{n,\gamma}(z)$ with $\gamma=U_{11}$. Then it follows $F_{H_{t}}(h|\alpha_t)=1-F_{r}(- 2\log\left( h/\kappa_{t} \right))=1-F_{n,U_{11}}(- 2\log\left(h/\kappa_{t} \right)\sigma^2_H/\sigma^2_L)$ with density
\begin{equation}
    f_{H_{t}}(h|\alpha_t)=2\frac{\sigma^2_H}{h\sigma^2_L}f_{n,U_{11}}(- 2\log\left(h/\kappa_{t} \right)\sigma^2_H/\sigma^2_L).
\end{equation}
From Th. 1.3 in \cite{ANDRAS2008395} the distribution of a noncentral chi-square with 1 degree of freedom can be written as $f_{n,\gamma}(z)=\exp(-(z+\gamma)/2)(2\pi z)^{-1/2}\cosh(\sqrt{\gamma z})$
for $n=1$, and the cdf of $H_t$ is
\begin{eqnarray}
&&    F_{H_t}(h|\alpha_t)=\int_{0}^{h}\frac{\sigma^2_H}{u\sigma^2_L}\exp(-(-2\log(u/\kappa_t)\sigma^2_H/\sigma^2_L+\gamma)/2)\frac{\cosh(\sqrt{-2\gamma\log(u/\kappa_t)\sigma^2_H/\sigma^2_L)}}{\sqrt{-4\pi\log(u/\kappa_t)\sigma^2_H/\sigma^2_L}}du\nonumber\\
&&=\int_{\sqrt{-2\log(h/\kappa_t)\sigma^2_H/\sigma^2_L}}^{+\infty}\exp(-(u^2+\gamma)/2)\frac{1}{\sqrt{2\pi}}\exp(\sqrt{\gamma}u)du+\nonumber
   \\ 
&&\quad \int_{\sqrt{-2\log(h/\kappa_t)\sigma^2_H/\sigma^2_L}}^{+\infty}\exp(-(u^2+\gamma)/2)\frac{1}{\sqrt{2\pi}}\exp(-\sqrt{\gamma}u)du\nonumber\\
&&=1-\Phi(\sqrt{-2\log(h/\kappa_t)\sigma^2_H/\sigma^2_L}-\sqrt{\gamma})+\Phi(-\sqrt{-2\log(h/\kappa_t)\sigma^2_H/\sigma^2_L}-\sqrt{\gamma}),   
\end{eqnarray}
where $\gamma=(\theta-m_{\ast})^2/\sigma^2$ and $\Phi(\cdot)$ is the cdf of a standard normal distribution.
\end{proof}

\vfill
\newpage

\renewcommand{\thesection}{B}
\renewcommand{\theequation}{B.\arabic{equation}}
\renewcommand{\thefigure}{B.\arabic{figure}}
\renewcommand{\thetable}{B.\arabic{table}}
\renewcommand{\theproposition}{.\arabic{proposition}}
\setcounter{table}{0}
\setcounter{figure}{0}
\setcounter{equation}{0}
\setcounter{proposition}{0}
\section{Further Simulation Results}
\label{sec:Simulation_appendix}
This section presents the simulation results for the following settings:  
$\boldsymbol{M}\sim \mathcal{N}_{p,n}(\boldsymbol{O}_{p\times n}, \boldsymbol{I}_{p},\boldsymbol{I}_{n})$,  
$\boldsymbol{\Sigma}=\boldsymbol{SS}'$, where $\boldsymbol{S}\sim\mathcal{N}_{p,p}(\boldsymbol{O}_{p\times p}, \boldsymbol{I}_{p},\boldsymbol{I}_{p})$,  
$\boldsymbol{\Psi}=\boldsymbol{GG}'$, where $\boldsymbol{G}\sim\mathcal{N}_{n,n}(\boldsymbol{O}_{n\times n}, \boldsymbol{I}_{n}, \boldsymbol{I}_{n})$.  

Two scenarios are considered:  
- $Case_1$: A moderate-sized observation matrix with $p = 30$ and $n = 10$.  
- $Case_2$: A large-sized observation matrix with $p = 50$ and $n = 50$.  

The synthetic dataset, including an outlier, is generated as follows:

\begin{enumerate}
    \item A noise sequence $\boldsymbol{E}_{t}$ is sampled independently as  $\boldsymbol{E}_{t}\sim\mathcal{N}_{p,n}(\boldsymbol{O}_{p\times p},\boldsymbol{\Sigma},\boldsymbol{\Psi})$ for $t=1,\ldots,100$.  

    \item The observed sequence is defined as follows:  
    - For most time points ($t\neq 80$): $\boldsymbol{X}_{t}=\boldsymbol{M}+\boldsymbol{E}_t$.  
    - At time $t=80$, an outlier is introduced: $\boldsymbol{X}_{t}=\boldsymbol{M}+u \boldsymbol{R}_t+\boldsymbol{E}_t$,  
      where $\boldsymbol{R}_t$ is a binary matrix specifying outlier positions and  
      $u$ takes values $0.5, 1, 1.5, 3, 5$, and $15$.  
      These values correspond to $1/30$, $1/15$, $1/10$, $1/5$, $1/3$,  
      and $1$ times the average standard deviation of the matrix normal distribution.  
\end{enumerate}

We considered various configurations of the matrix $\boldsymbol{R}_t$ and the magnitude of the outliers. For each setting $J$ independent datasets have been generated $X_{t}^{(j)}$, $t=1,\ldots,T$ for $j=1,\ldots,J$, with $T=100$ and the BFs $H_t^{(j)}$ computed. The probabilities $p_{I}=P(H_t>\bar{h})$, $p_{II}=P(H_t<\underline{h})$ and  $p_{III}=P(\bar{h}<H_t<\underline{h})$ have been estimated as follows
\begin{equation*}
    p_{I}^{J}=\frac{1}{J}\sum_{j=1}^{J} \mathbb{I}(H_t^{(n)}>\bar{h}),\,\,\,\, p_{II}^{J}=\frac{1}{J}\sum_{j=1}^{J} \mathbb{I}(\underline{h}<H_t^{(n)}<\bar{h}),\,\,\,\, p_{III}^{J}=\frac{1}{J}\sum_{j=1}^{J} \mathbb{I}(H_t^{(n)}<\underline{h})
\end{equation*}
under the null hypothesis of the absence of outliers at $t\neq 80$ and under the alternative hypothesis of a certain number of outlying observations in the observation matrix at $t=80$.
 A summary of the results for $Case_1$ and $Case_2$ is provided in Tab. \ref{tab:SimResults} and Tab. \ref{tab:SimResults1} respectively.

\begin{sidewaystable}[h!]
    \centering
    \setlength{\tabcolsep}{7pt}
    \captionsetup{width=0.9\textwidth}
    \renewcommand{\arraystretch}{1.15}
    \begin{small}
     \resizebox{0.9\textwidth}{!}{  
    \begin{tabular}{l|c|cccccccc|cccccc}
         &$\mathcal{H}_{0}$ & \multicolumn{14}{c}{$\mathcal{H}_{1}$} \\
         \hline
         & & \multicolumn{8}{c|}{(a) Patterns in the binary matrix $R_t\in\mathbb{R}^{30\times 10}$} & \multicolumn{6}{c}{(b) Random Positions within $R_t\in\mathbb{R}^{30\times 10}$}\\ 
         & & \multicolumn{8}{c|}{(number of random rows $\times$ number of random columns)} & \multicolumn{6}{c}{(number of randomly selected entries)} \\       
        & & $3\times 2$ &$1\times 10$ & $7\times 5$ & $9\times 7$ & $11\times 9$ & $15\times 9$ & $20\times 10$ & $25\times 10$ & 50 & 100 & 150 & 200 &250 & 300\\
        \hline\vspace{-20pt}\\
        \multicolumn{16}{c}{$\,$}\\                       
        \hline
        & & \multicolumn{14}{c}{One thirtieth ($1/30$) of the standard deviation, i.e. $u=0.5$ in $\boldsymbol{X}_{t}=\boldsymbol{M}+u \boldsymbol{R}_t+\boldsymbol{E}_t$ for $t=80$}\\
        \hline                          
$P(H_t>\bar{h})$
&0.05&0.06&0&0&0&0&0&0&0&0&0&0&0&0&0\\
$P(\underline{h}<H_t<\bar{h})$
&0.93&0.72&0.76&0.6&0.58&0.52&0.3&0.3&0.3&0.44&0.32&0.32&0.3&0.24&0.22\\
$P(H_t<\underline{h})$
&0.02&0.22&0.24&0.4&0.42&0.48&0.7&0.7&0.7&0.56&0.68&0.68&0.7&0.76&0.78\\
        \hline\vspace{-20pt}\\
        \multicolumn{16}{c}{$\,$}\\                  
        \hline
        & & \multicolumn{14}{c}{One fifteenth ($1/15$) of the standard deviation, i.e. $u=1$ in $\boldsymbol{X}_{t}=\boldsymbol{M}+u \boldsymbol{R}_t+\boldsymbol{E}_t$ for $t=80$}\\
        \hline 
$P(H_t>\bar{h})$&0.05&0.01&0&0&0.01&0&0&0&0&0&0&0&0&0&0\\
$P(\underline{h}<H_t<\bar{h})$&0.93&0.55&0.35&0.24&0.18&0.09&0.1&0.04&0.04&0.14&0.02&0.01&0.01&0.02&0.02\\
$P(H_t<\underline{h})$&0.02&0.44&0.65&0.76&0.81&0.91&0.9&0.96&0.96&0.86&0.98&0.99&0.99&0.98&0.98\\        
        \hline\vspace{-20pt}\\
        \multicolumn{16}{c}{$\,$}\\                        
        \hline
        & & \multicolumn{14}{c}{One tenth ($1/10$) of the standard deviation, i.e. $u=1.5$ in $\boldsymbol{X}_{t}=\boldsymbol{M}+u \boldsymbol{R}_t+\boldsymbol{E}_t$ for $t=80$}\\
        \hline                          
$P(H_t>\bar{h})
$&0.05&0.06&0&0&0&0&0&0&0&0&0&0&0&0&0\\
$P(\underline{h}<H_t<\bar{h})$
&0.93&0.36&0.26&0.09&0.07&0.09&0.05&0.03&0.02&0.04&0.01&0&0&0&0\\
$P(H_t<\underline{h})$
&0.02&0.63&0.74&0.91&0.93&0.91&0.95&0.97&0.98&0.96&0.99&1&1&1&1\\
        \hline \vspace{-20pt}\\                      
        \multicolumn{16}{c}{$\,$}\\                        
        \hline
        & & \multicolumn{14}{c}{One fifth ($1/5$) of the standard deviation, i.e. $u=3$ in $\boldsymbol{X}_{t}=\boldsymbol{M}+u \boldsymbol{R}_t+\boldsymbol{E}_t$ for $t=80$}\\
        \hline
$P(H_t>\bar{h})$&0.05&0.06&0&0&0&0&0&0&0&0&0&0&0&0&0\\
$P(\underline{h}<H_t<\bar{h})$&0.93&0.22&0.1&0.05&0.02&0&0&0&0&0&0&0&0&0&0\\
$P(H_t<\underline{h})$&0.02&0.78&0.9&0.95&0.98&1&1&1&1&1&1&1&1&1&1\\
        \hline\vspace{-20pt}\\
        \multicolumn{16}{c}{$\,$}\\                        
        \hline
        & & \multicolumn{14}{c}{One tenth ($1/3$) of the standard deviation, i.e. $u=5$ in $\boldsymbol{X}_{t}=\boldsymbol{M}+u \boldsymbol{R}_t+\boldsymbol{E}_t$ for $t=80$}\\
        \hline                          
$P(H_t>\bar{h})$&0.05&0&0&0&0&0&0&0&0&0&0&0&0&0&0\\
$P(\underline{h}<H_t<\bar{h})$&0.93&0.08&0.02&0.02&0&0.01&0&0&0&0&0&0&0&0&0\\
$P(H_t<\underline{h})$&0.02&0.92&0.98&0.98&1&0.99&1&1&1&1&1&1&1&1&1\\
        \hline\vspace{-20pt}\\
        \multicolumn{16}{c}{$\,$}\\                        
        \hline
        & & \multicolumn{14}{c}{One ($1$) standard deviation, i.e. $u=15$ in $\boldsymbol{X}_{t}=\boldsymbol{M}+u \boldsymbol{R}_t+\boldsymbol{E}_t$ for $t=80$}\\
        \hline
$P(H_t>\bar{h})$ 
&0.05&0 &0 &0 &0 &0 &0 &0 &0 &0 &0 &0 &0 &0 &0 \\
$P(\underline{h}<H_t<\bar{h})$ 
&0.93&0 &0 &0 &0 &0 &0 &0 &0 &0 &0 &0 &0 &0 &0 \\ 
$P(H_t<\underline{h})$ 
&0.02 &1 &1 &1 &1 &1 &1 &1 &1 &1 &1 &1 &1 &1 &1 \\ 
    \end{tabular}}
    \end{small}
    \caption{\small\textbf{Moderate-size observation matrix}. The observable sequence is defined as $\boldsymbol{X}_{t}=\boldsymbol{M}+\boldsymbol{E}_t$ for $t\neq 80$ and $\boldsymbol{X}_{t}=\boldsymbol{M}+u \boldsymbol{R}_t+\boldsymbol{E}_t$ for $t=80$, with $\boldsymbol{X}_{t},\boldsymbol{M},\boldsymbol{R}_t, \boldsymbol{E}_t\in\mathbb{R}^{30\times 10}$, where $u$ is the magnitude of the outlier and $\boldsymbol{R}_t$ is a random binary matrix. Note: the position of the outliers in entries of $\boldsymbol{R}_t$ can follow row and column patterns ($r \times c$ in (a) with $r$ and $c$ the given number of rows and columns, respectively, selected randomly) or can be completely random ($r$ in (b) with $r$ the given total number of outlying entries selected randomly).}
    \label{tab:SimResults}
\end{sidewaystable}

\begin{sidewaystable}[p]
    \centering
    \captionsetup{width=0.9\textwidth}
    \setlength{\tabcolsep}{7pt}
    \begin{small}
    \resizebox{0.9\textwidth}{!}{  
    \begin{tabular}{l|c|cccccccc|cccccc}
         &$\mathcal{H}_{0}$ & \multicolumn{14}{c}{$\mathcal{H}_{1}$} \\
         \hline
         & & \multicolumn{8}{c|}{(a) Patterns in the binary matrix $R_t\in\mathbb{R}^{50\times 50}$} & \multicolumn{6}{c}{(b) Random Positions within $R_t\in\mathbb{R}^{50\times 50}$}\\ 
         & & \multicolumn{8}{c|}{(number of random rows $\times$ number of random columns)} & \multicolumn{6}{c}{(number of randomly selected entries)} \\       
        & & $3\times 2$ &$1\times 10$ & $7\times 5$ & $9\times 7$ & $11\times 9$ & $15\times 9$ & $20\times 10$ & $25\times 10$ & 50 & 100 & 150 & 200 &250 & 300\\
        \hline\vspace{-20pt}\\
        \multicolumn{16}{c}{$\,$}\\                       
        \hline
        & & \multicolumn{14}{c}{One thirtieth ($1/30$) of the standard deviation, i.e. $u=0.5$ in $\boldsymbol{X}_{t}=\boldsymbol{M}+u \boldsymbol{R}_t+\boldsymbol{E}_t$ for $t=80$}\\
        \hline                          
$P(H_t>\bar{h})$&0.05&0.03&0&0&0&0&0&0&0&0&0&0&0&0&0\\
$P(\underline{h}<H_t<\bar{h})$&0.93&0.57&0.45&0.35&0.15&0.12&0.09&0.06&0.06&0.09&0.02&0.04&0.01&0.01&0\\
$P(H_t<\underline{h})$&0.02&0.4&0.55&0.65&0.85&0.88&0.91&0.94&0.94&0.91&0.98&0.96&0.99&0.99&1\\
        \hline\vspace{-20pt}\\
        \multicolumn{16}{c}{$\,$}\\                  
        \hline
        & & \multicolumn{14}{c}{One fifteenth ($1/15$) of the standard deviation, i.e. $u=1$ in $\boldsymbol{X}_{t}=\boldsymbol{M}+u \boldsymbol{R}_t+\boldsymbol{E}_t$ for $t=80$}\\
        \hline 
$P(H_t>\bar{h})$&0.05&0.02&0.01&0&0&0&0&0&0&0&0&0&0&0&0\\
$P(\underline{h}<H_t<\bar{h})$&0.93&0.33&0.23&0.11&0.06&0.03&0.01&0&0&0.01&0&0&0&0&0\\
$P(H_t<\underline{h})$&0.02&0.65&0.76&0.89&0.94&0.97&0.99&1&1&0.99&1&1&1&1&1\\
        \hline\vspace{-20pt}\\
        \multicolumn{16}{c}{$\,$}\\                        
        \hline
        & & \multicolumn{14}{c}{One tenth ($1/10$) of the standard deviation, i.e. $u=1.5$ in $\boldsymbol{X}_{t}=\boldsymbol{M}+u \boldsymbol{R}_t+\boldsymbol{E}_t$ for $t=80$}\\
        \hline                          
$P(H_t>\bar{h})$&0.05&0&0.01&0&0&0&0&0&0&0&0&0&0&0&0\\
$P(\underline{h}<H_t<\bar{h})$&0.93&0.23&0.16&0.04&0.01&0.02&0&0&0&0&0&0&0&0&0\\
$P(H_t<\underline{h})$&0.02&0.77&0.83&0.96&0.99&0.98&1&1&1&1&1&1&1&1&1\\
        \hline\vspace{-20pt}\\                       
        \multicolumn{16}{c}{$\,$}\\                        
        \hline
        & & \multicolumn{14}{c}{One fifth ($1/5$) of the standard deviation, i.e. $u=3$ in $\boldsymbol{X}_{t}=\boldsymbol{M}+u \boldsymbol{R}_t+\boldsymbol{E}_t$ for $t=80$}\\
        \hline
$P(H_t>\bar{h})$&0.05&0&0&0&0&0&0&0&0&0&0&0&0&0&0\\
$P(\underline{h}<H_t<\bar{h})$&0.93&0.08&0.04&0.01&0&0&0&0&0&0&0&0&0&0&0\\
$P(H_t<\underline{h})$&0.02&0.92&0.96&0.99&1&1&1&1&1&1&1&1&1&1&1\\
        \hline\vspace{-20pt}\\
        \multicolumn{16}{c}{$\,$}\\                        
        \hline
        & & \multicolumn{14}{c}{One tenth ($1/3$) of the standard deviation, i.e. $u=5$ in $\boldsymbol{X}_{t}=\boldsymbol{M}+u \boldsymbol{R}_t+\boldsymbol{E}_t$ for $t=80$}\\
        \hline                          
$P(H_t>\bar{h})$&0.05&0&0&0&0&0&0&0&0&0&0&0&0&0&0\\
$P(\underline{h}<H_t<\bar{h})$&0.93&0.02&0.01&0&0&0&0&0&0&0&0&0&0&0&0\\
$P(H_t<\underline{h})$&0.02&0.98&0.99&1&1&1&1&1&1&1&1&1&1&1&1\\
        \hline\vspace{-20pt}\\
        \multicolumn{16}{c}{$\,$}\\                        
        \hline
        & & \multicolumn{14}{c}{One ($1$) standard deviation, i.e. $u=15$ in $\boldsymbol{X}_{t}=\boldsymbol{M}+u \boldsymbol{R}_t+\boldsymbol{E}_t$ for $t=80$}\\
        \hline
$P(H_t>\bar{h})$&0.05&0&0&0&0&0&0&0&0&0&0&0&0&0&0\\
$P(\underline{h}<H_t<\bar{h})$&0.93&0&0&0&0&0&0&0&0&0&0&0&0&0&0\\
$P(H_t<\underline{h})$&0.02&1&1&1&1&1&1&1&1&1&1&1&1&1&1\\
\end{tabular}}
    \end{small}
    \caption{\small\textbf{Large-size observation matrix}. The observable sequence is defined as $\boldsymbol{X}_{t}=\boldsymbol{M}+\boldsymbol{E}_t$ for $t\neq 80$ and $\boldsymbol{X}_{t}=\boldsymbol{M}+u \boldsymbol{R}_t+\boldsymbol{E}_t$ for $t=80$, with $\boldsymbol{X}_{t},\boldsymbol{M},\boldsymbol{R}_t, \boldsymbol{E}_t\in\mathbb{R}^{50\times 50}$, where $u$ is the magnitude of the outlier, and $\boldsymbol{R}_t$ is a random binary matrix. Note: the position of the outliers in entries of $\boldsymbol{R}_t$ can follow row and column patterns ($r \times c$ in (a) with $r$ and $c$ the given number of rows and columns, respectively, selected randomly) or can be completely random ($r$ in (b) with $r$ the given total number of outlying entries selected randomly).}
    \label{tab:SimResults1}
\end{sidewaystable}

\pagebreak
\renewcommand{\thesection}{C}
\renewcommand{\theequation}{C.\arabic{equation}}
\renewcommand{\thefigure}{C.\arabic{figure}}
\renewcommand{\thetable}{C.\arabic{table}}
\renewcommand{\theproposition}{C.\arabic{proposition}}
\setcounter{table}{0}
\setcounter{figure}{0}
\setcounter{equation}{0}
\setcounter{proposition}{0}

\section{Datasets Description and Outlier Detection}\label{sec:datasets}
In this paper, we consider a multi-country dataset of macroeconomic variables (see \citealp{Can09}), an international trade network dataset (see \citealp{Rose2004}) and a financial network dataset (\citealp{billio2018bayesianDT}).

\subsection{Inflation and Unemployment Dataset}
\label{subsec:datset_infl}
The Inflation and Unemployment Dataset includes three relevant variables for macroeconomic research: the Industrial Production Index, the Price Index and the Unemployment Rate, available from the Federal Reserve Economic Data and the EUROSTAT databases. We selected observations from 11 countries of the EU (Austria, Belgium, Estonia, Finland, France, Germany, Greece, Ireland, Italy, The Netherlands, Portugal, and Spain).  Series are seasonally and working day adjusted, and when necessary, they have been differentiated to get stationarity. Variables are sampled at a monthly frequency from January 2002 to October 2022.

Outlying observations are detected in periods related to three main events: i) the COVID-19 pandemic (from March 2020 to August 2020 and in November 2020); ii) the Ukrainian conflict (March and April 2022); iii) and the rise of inflation (September and October 2022). Our procedure results indicate that the pandemic outbreak's main effects are in March 2020 since the BF of February 2020 falls within the inconclusive interval. Also, the effects of the Ukrainian conflict on the global economy began in March 2020, and the beginning of inflation rise was detected in September 2020. Another outlying observation was detected in October 2021 and is related to significant changes in the unemployment series. 

\subsection{Trade Network Dataset}
\label{subsec:dataset_trade}
We consider an international trade dataset made available by the International Monetary Fund (IMF). This dataset is now a reference in international trade  studies\cite[e.g., see] []{Rose2004,Alcala2004,Brancaccio2020} and combines the country reports with other sources such as United Nations COMTRADE and EUROSTAT. The data frequency is annual. The dataset includes 159 countries from 1995 to 2017. The trade network corresponds to the value of all goods imported (CIF) between pairs of countries collected by the IMF in U.S. dollars, which are transformed into constant terms using the US GDP PPP deflator from the same source with the base year 2010. All series have been differentiated to remove unit roots, and the units sub-selected to include larger GDP countries and to filter out series with more than 8\% of zeroes. The final sample includes a sequence of 22 networks between 27 countries, i.e. $n=p=27$, with the following ISO 3166--1 alpha--3 codes: AUS, AUT, BRA, CAN, CHE, CHN, DEU, DNK, ESP, FIN, FRA, GBR, GRC, IDN, IND, IRL, ITA, JPN, KOR, NLD, NOR, NZL, POL, PRT, SWE, TUR, and USA. 

BFs provide evidence of the absence of an outlier in 2016 and the presence of outlying observations in 2015 and 2017.

\subsection{Volatility Network Dataset}
\label{subsec:dataset_vol}
 The dataset consists of volatility networks composed of 50 firms (22 German, 24 French, 4 Italian) belonging to 11 GICS sectors: Financials (7 firms), Communication Services (4 firms), Consumer Discretionary (11 firms), Consumer Staples (5 firms), Health Care (6 firms), Energy (2 firms), Industrials (5 firms), Information Technology (3 firms), Materials (2 firms), Real Estate (1 firm), Utilities (3 firms), and not classified in a specific GICS sector (1 firm). The temporal volatility networks have been extracted by pairwise Granger causality approach  \citep[e.g., see][]{billio2021matrix} at a weekly frequency (Friday-Friday). The sample period ranges from the 4th of January 2016 to the 30th of September 2020, and periods before and after the outbreak of COVID-19 are included in the time interval.

 The identified outliers in 2019 are associated with transitions between different volatility levels: i) from ``low'' to ``very low" volatility on 11-15 November; ii) from ``low'' to ``moderate'' on 2-6 December; iii) from ``moderate" to ``low'' on the week of 16-20 December. In 2020, the following transitions were detected: i) from ``low'' to ``high'' during the week of 17-21 February; ii) a progressive transition from ``high'' to ``moderate'' volatility between 7 and 24 April; iii) from ``moderate'' to ``low'' volatility during the week of 3-7 August; iv) a sequence of outliers corresponding to a period of switches between ``low'' and ``moderate'' between 1 September and 2 October. The BFs of the transitory periods mentioned belong to the inconclusive interval: 25-28 February 2020 (``low'' to ``high"), 27-30 April (``high'' to ``moderate''), and 21-25 September (``low" to ``moderate").

\pagebreak
\renewcommand{\thesection}{D}
\renewcommand{\theequation}{D.\arabic{equation}}
\renewcommand{\thefigure}{D.\arabic{figure}}
\renewcommand{\thetable}{D.\arabic{table}}
\renewcommand{\theproposition}{D.\arabic{proposition}}
\setcounter{table}{0}
\setcounter{figure}{0}
\setcounter{equation}{0}
\setcounter{proposition}{0}

\section{Further Empirical Results}
\begin{figure}[h!]
    \centering
    \setlength{\tabcolsep}{5pt}

    \begin{tabular}{ccc}
\small    September 2018 & \small January 2019 & \small July 2019\\
    \includegraphics[scale=0.3]{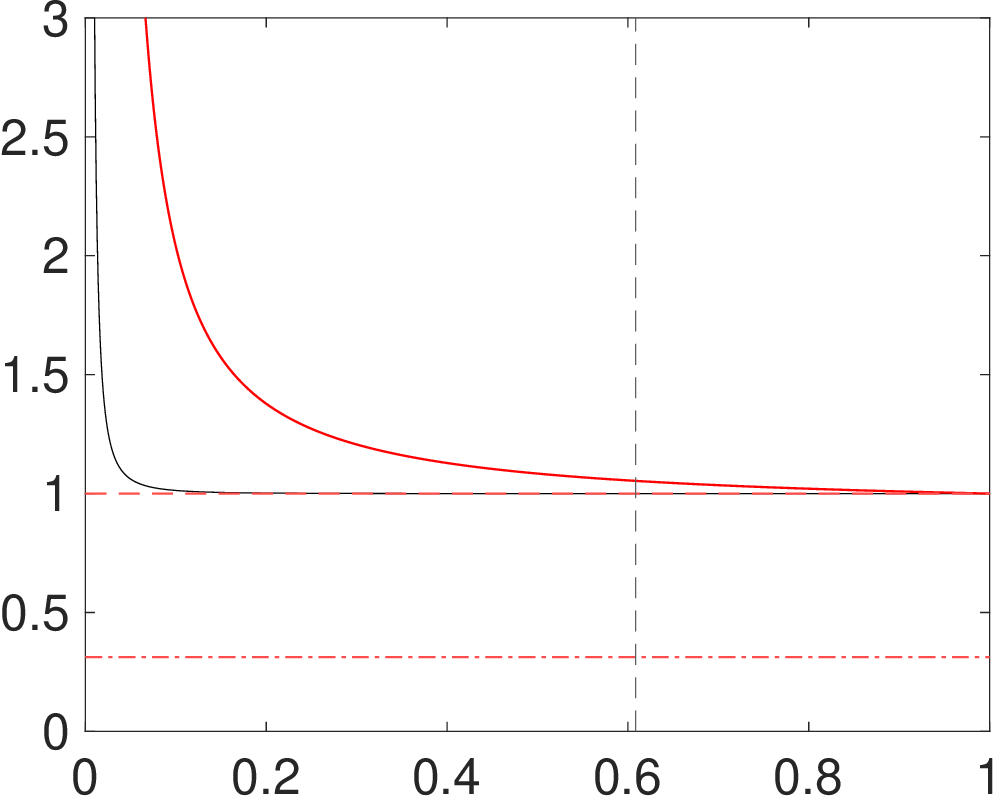}&
        \includegraphics[scale=0.3]{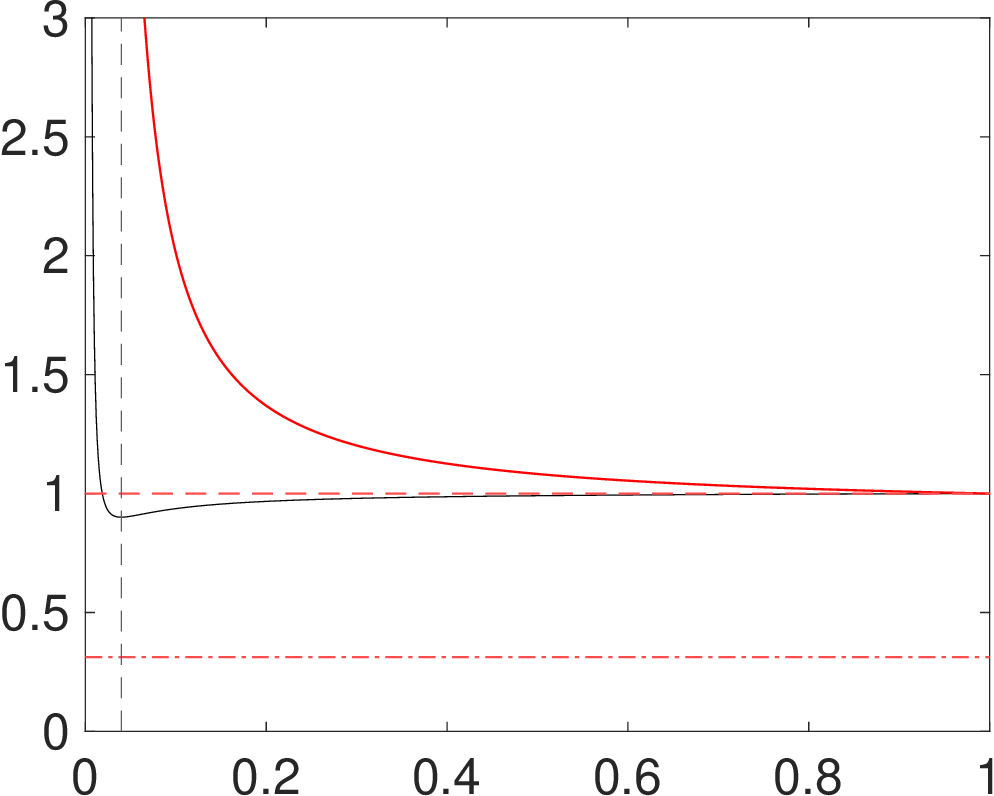}&
    \includegraphics[scale=0.3]{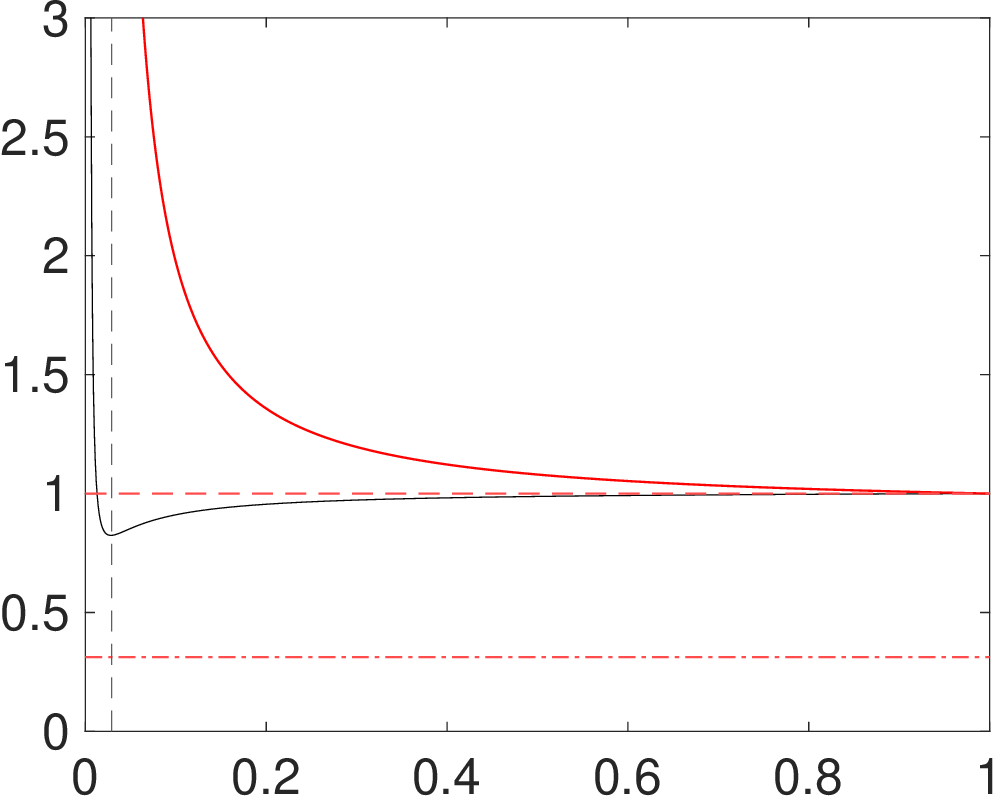}\\
\small    September 2019 & \small January 2020 & \small February 2020\\        \includegraphics[scale=0.3]{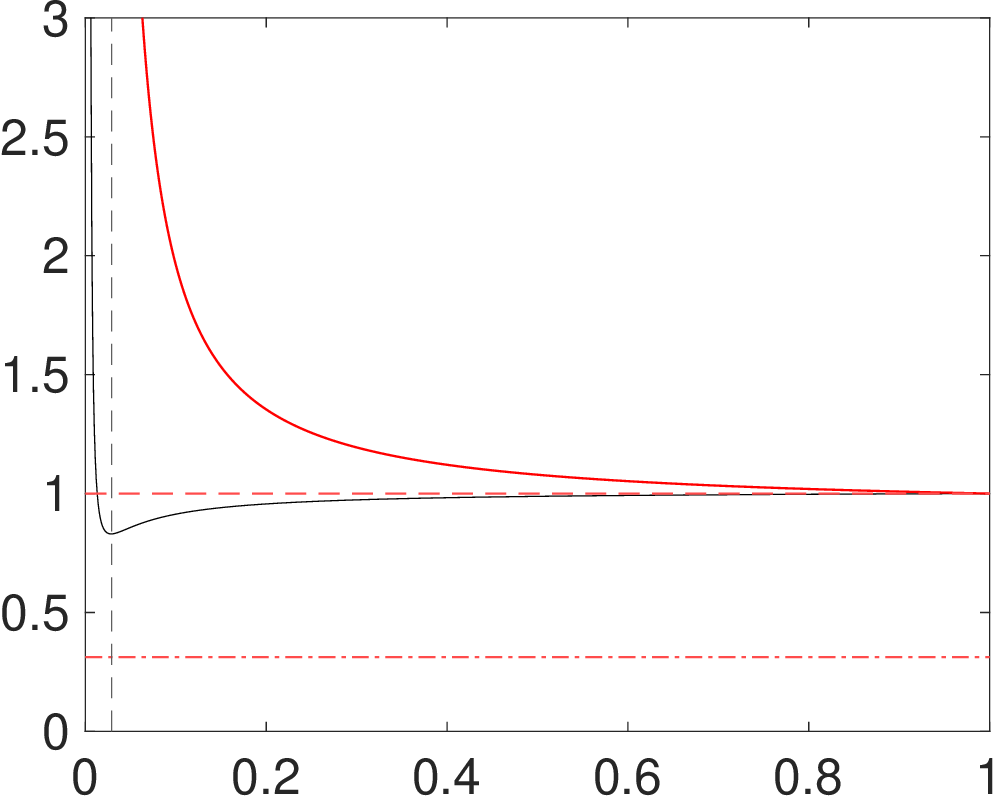}&
        \includegraphics[scale=0.3]{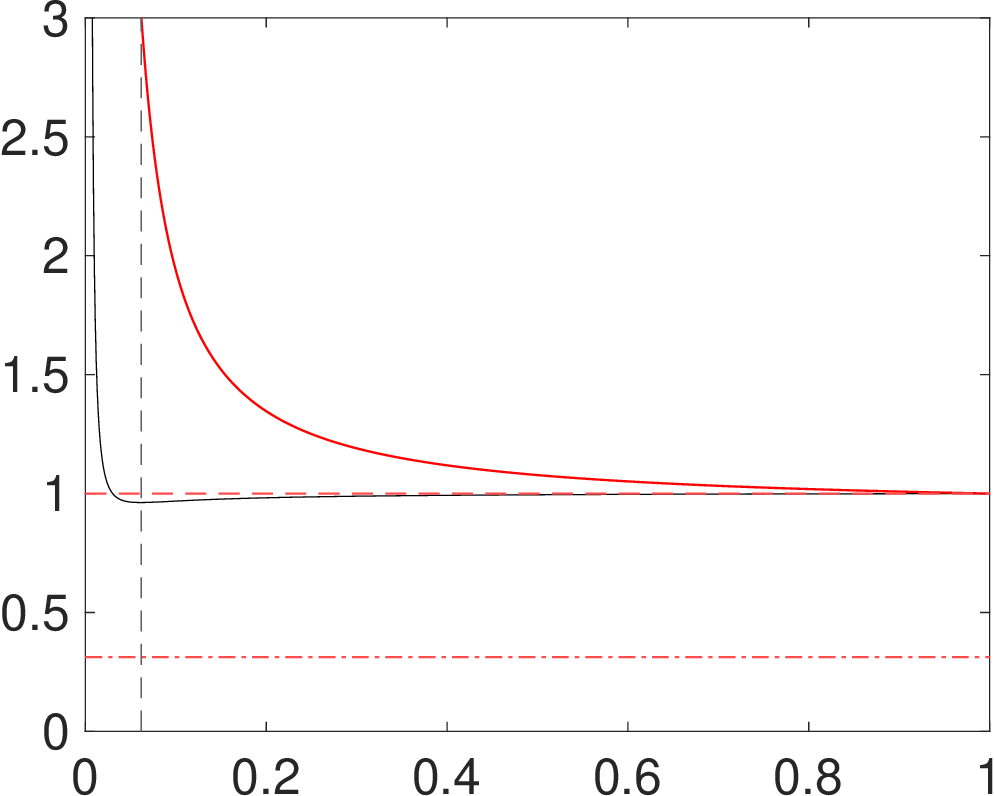}&
    \includegraphics[scale=0.3]{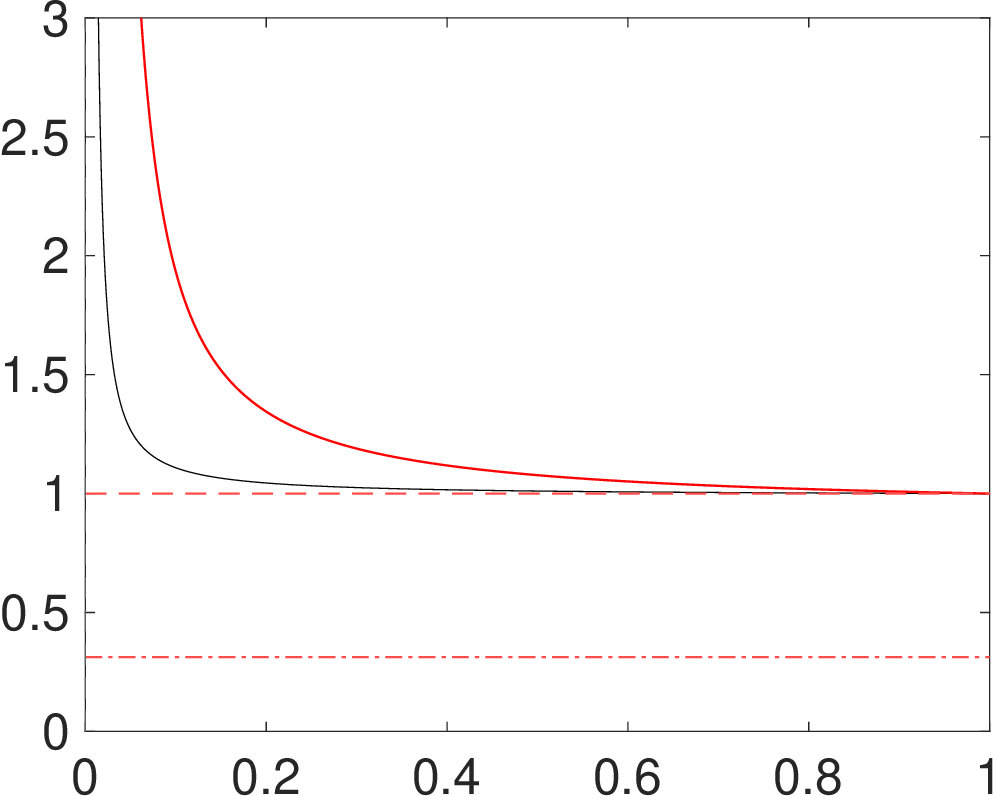}\\
\small    March 2020 & \small February 2021 & \small June 2021\\        \includegraphics[scale=0.3]{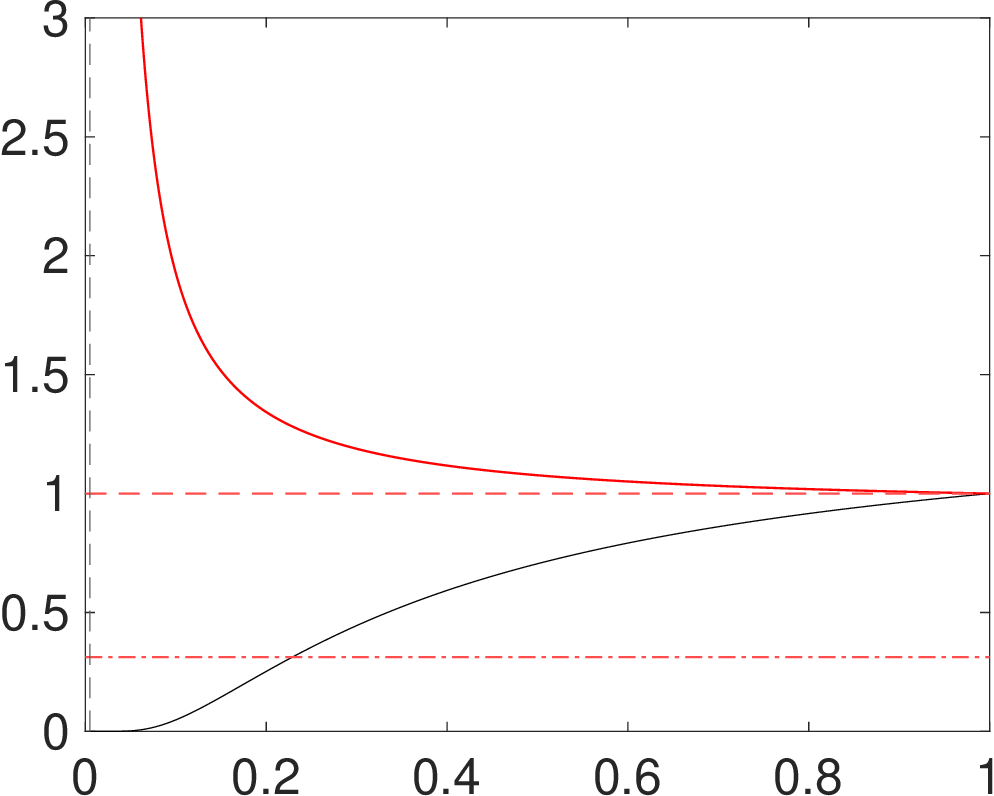}&
        \includegraphics[scale=0.3]{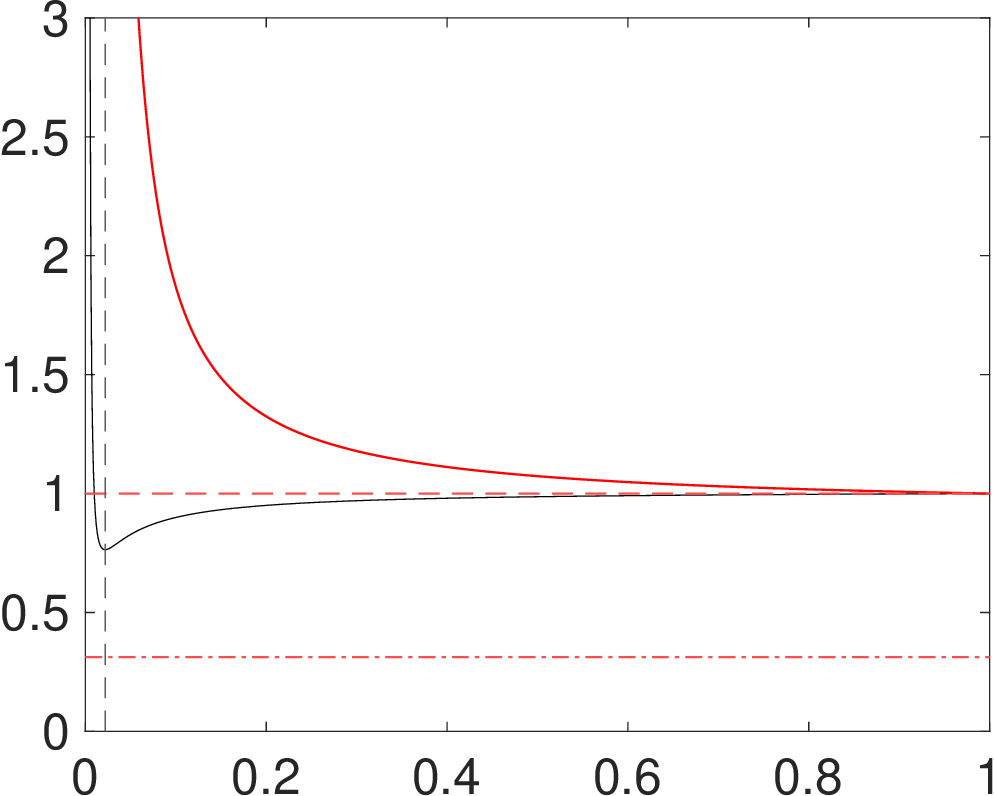}&
    \includegraphics[scale=0.3]{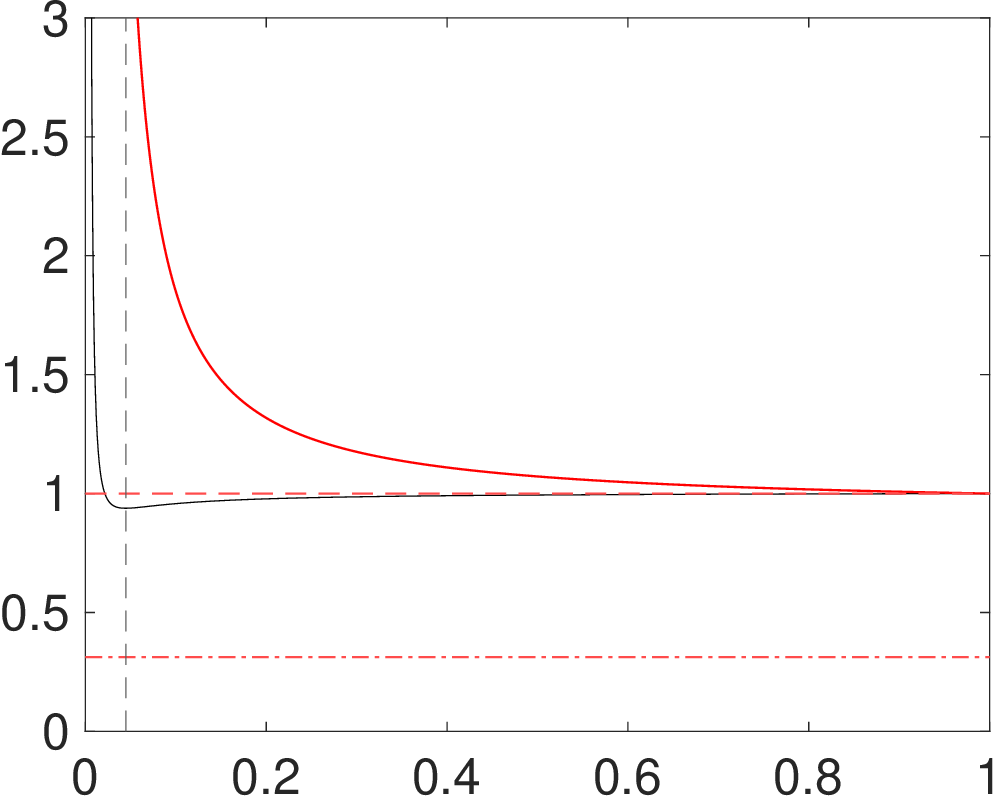}\\
    \end{tabular}
    \caption{Macroeconomic dataset. Value of the BF $H_t(\alpha_t)$ (solid black), the upper bound $\kappa_t(\alpha_t)$ (solid red) at different dates. In all plots, the reference lines at 1 ({\color{red}\protect\tikz[baseline]{\protect\draw[line width=0.2mm, dashed] (0,.6ex)--++(0.5,0) ;}}) and $10^{-1/2}$ ({\color{red}\protect\tikz[baseline]{\protect\draw[line width=0.2mm, dashdotted] (0,.6ex)--++(0.5,0) ;}}).}
    \label{emp1:figBFdatesBIS}
\end{figure}

\begin{figure}[h!]
    \centering
    \setlength{\tabcolsep}{5pt}
    \begin{tabular}{ccc}
\small    2008 & \small  2009 & \small  2010\\
    \includegraphics[scale=0.3]{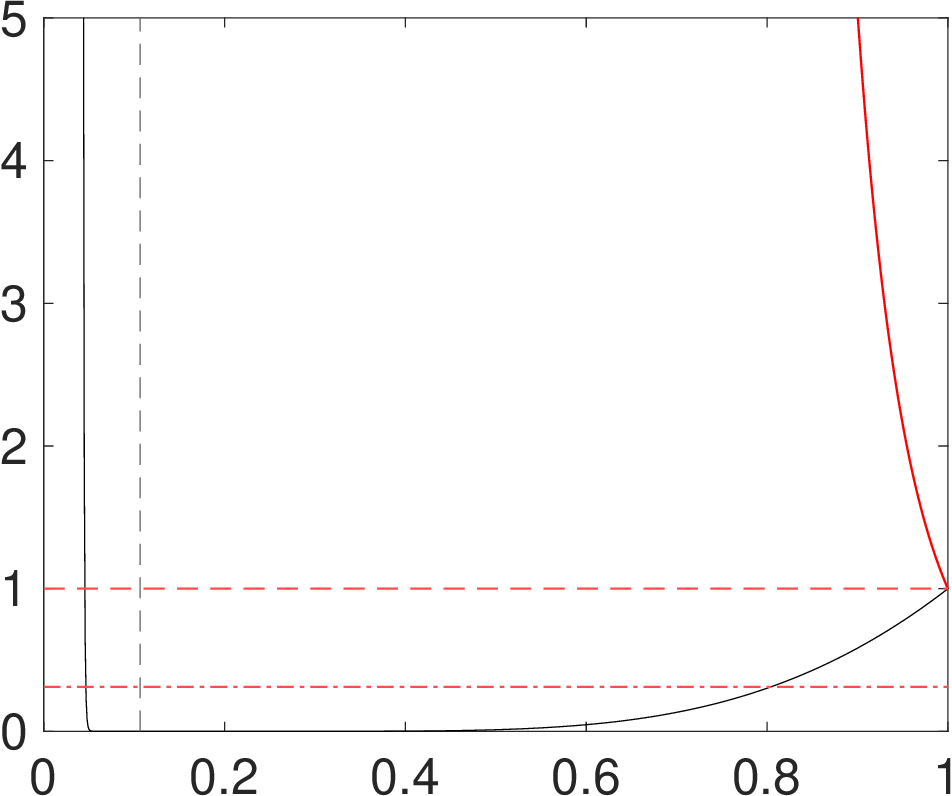}&
        \includegraphics[scale=0.3]{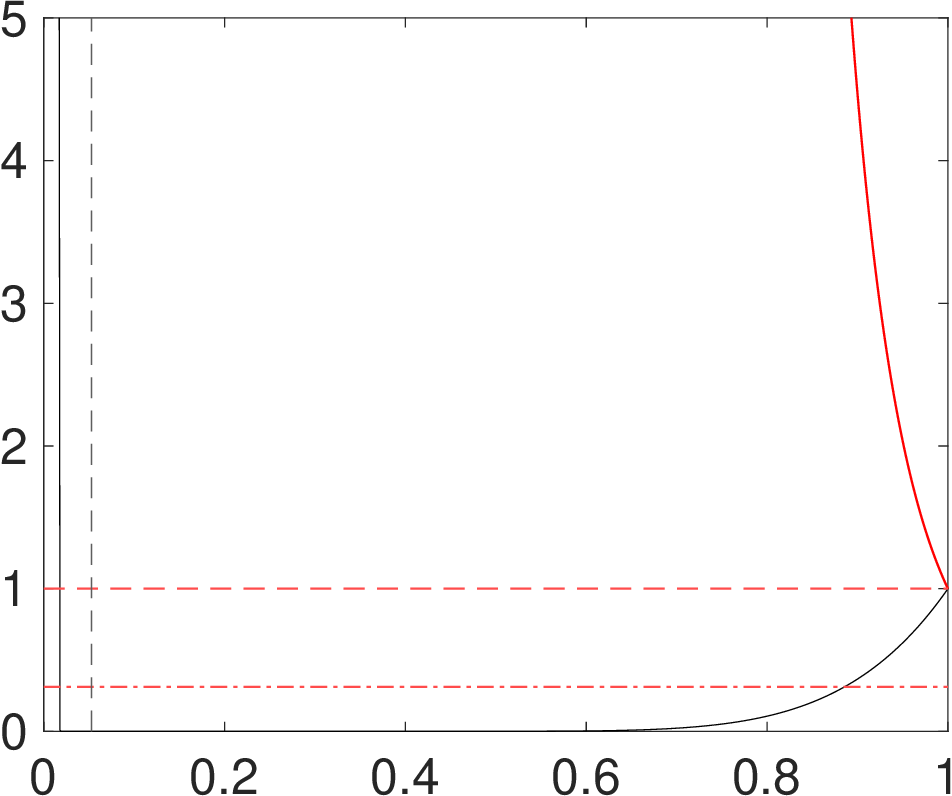}&
    \includegraphics[scale=0.3]{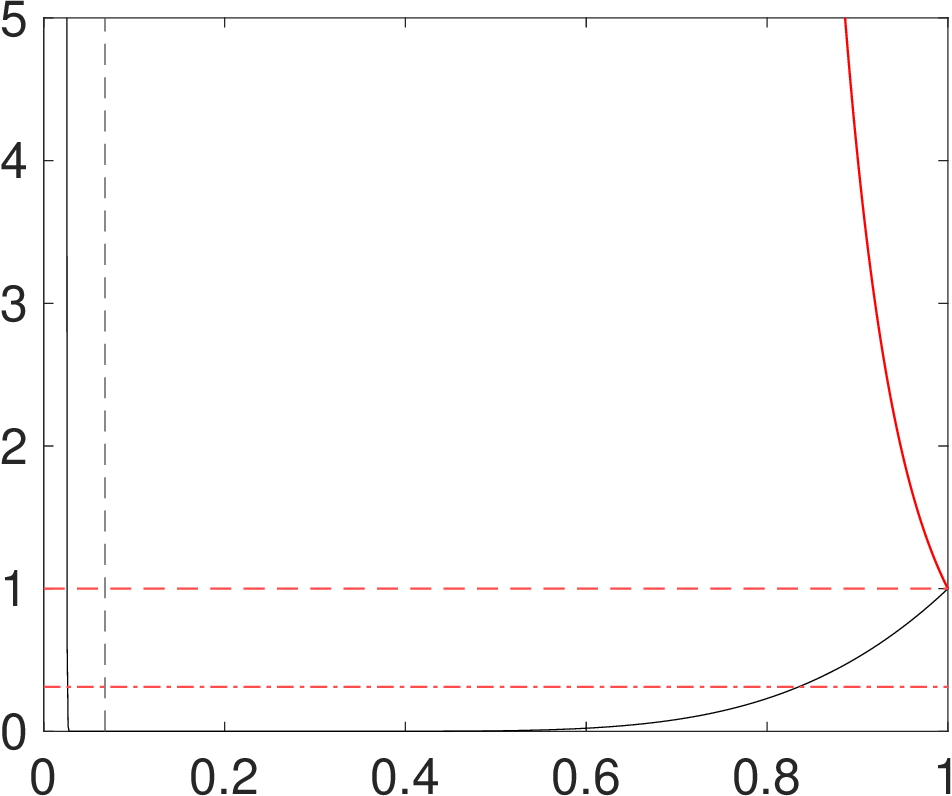}\\
\small     2011 & \small  2012 & \small  2013\\        \includegraphics[scale=0.3]{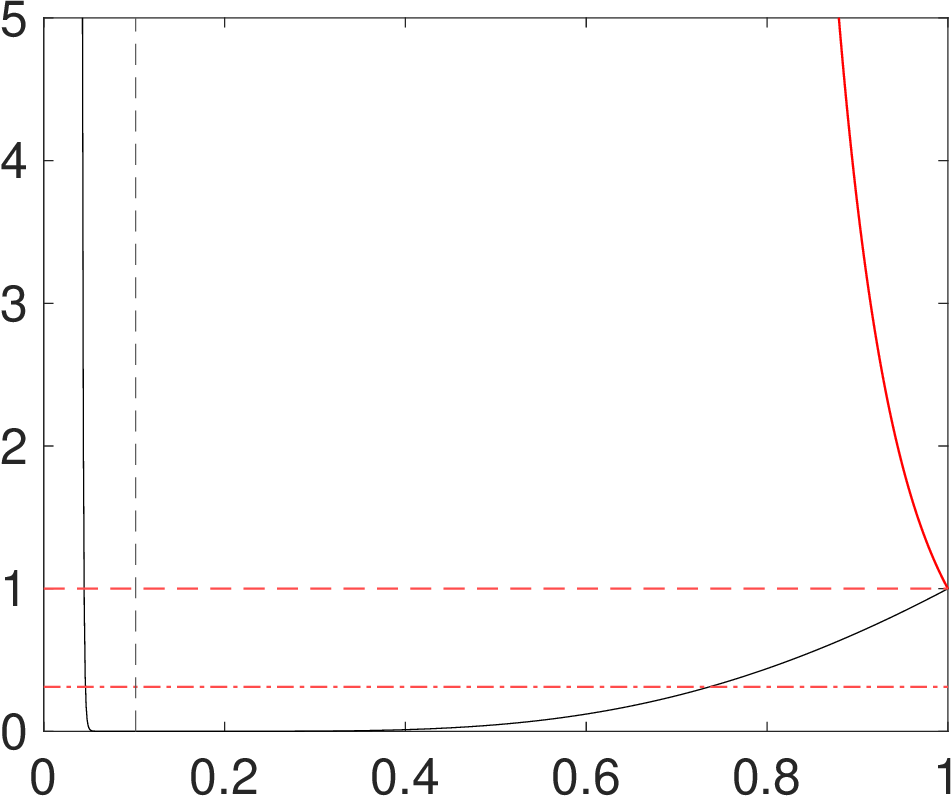}&
        \includegraphics[scale=0.3]{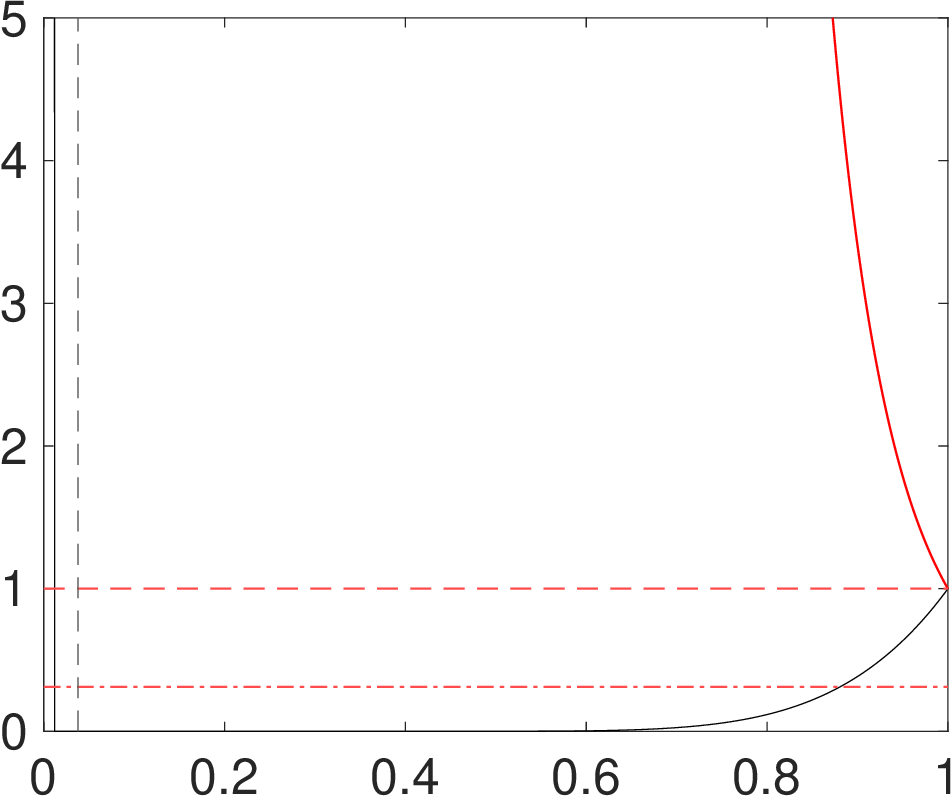}&
    \includegraphics[scale=0.3]{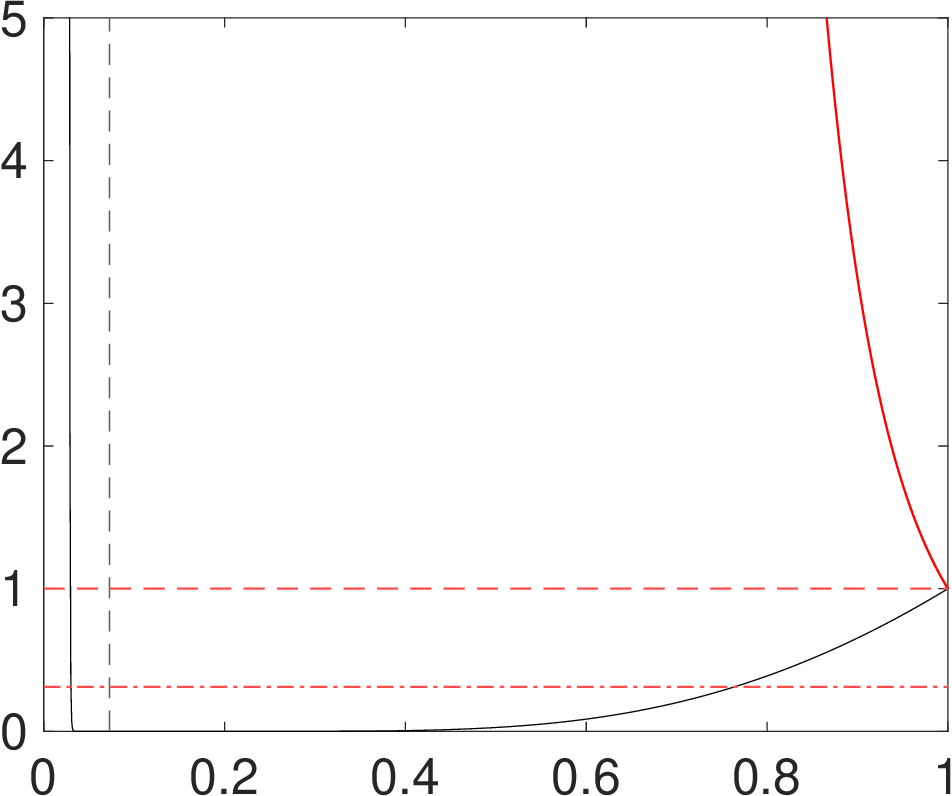}\\
\small     2014 & \small  2015 & \small  2016\\        \includegraphics[scale=0.3]{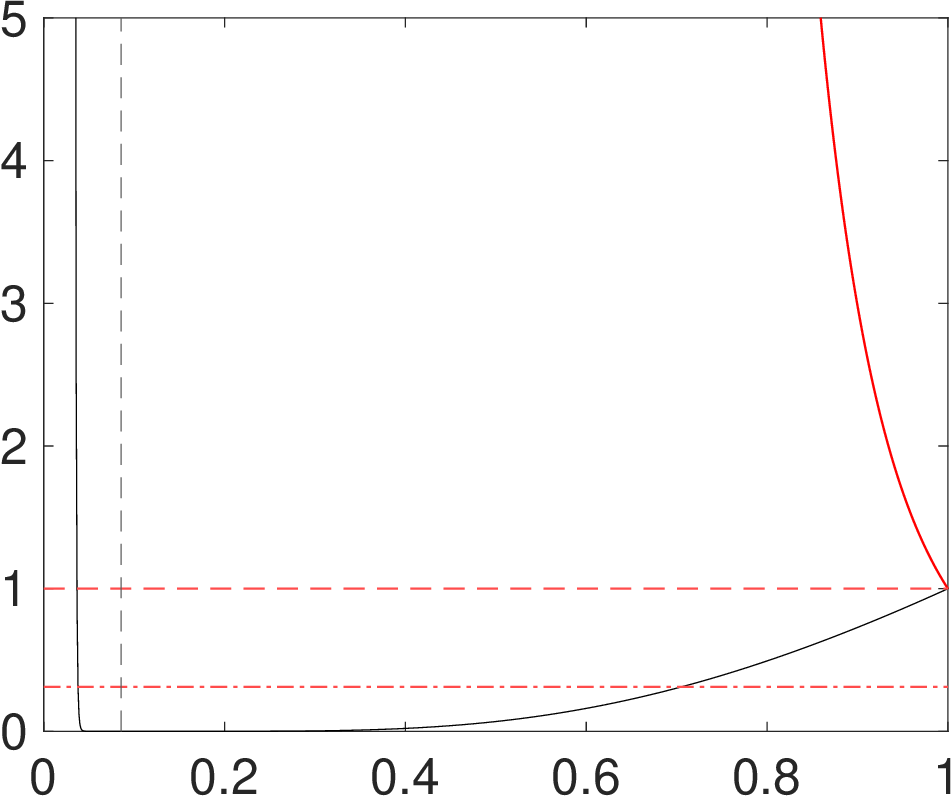}&
        \includegraphics[scale=0.3]{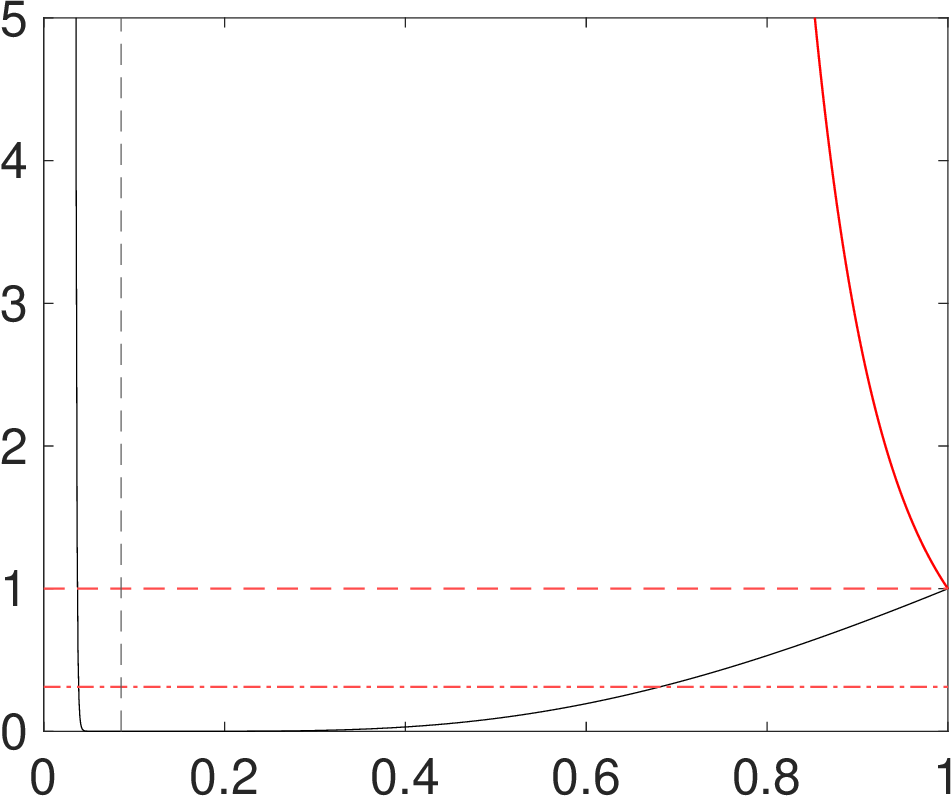}&
    \includegraphics[scale=0.3]{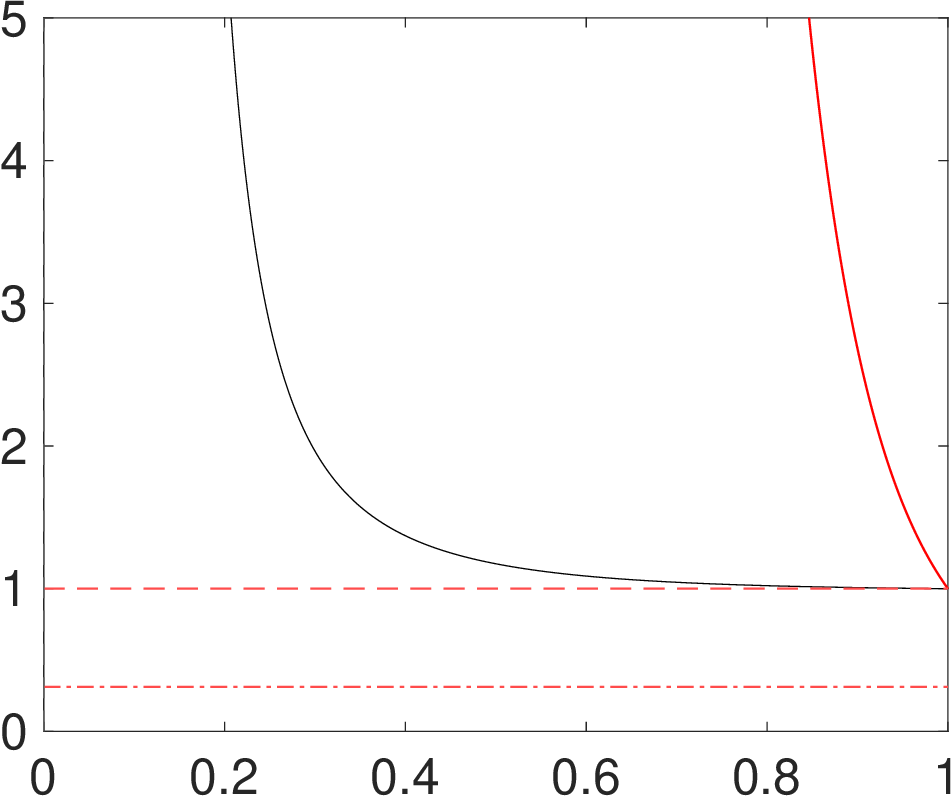}\\
    \end{tabular}
    \caption{International trade dataset. Value of the BF $H_t(\alpha_t)$ (solid black), the upper bound $\kappa_t(\alpha_t)$ (solid red) at different dates. In all plots, the reference lines at 1 ({\color{red}\protect\tikz[baseline]{\protect\draw[line width=0.2mm, dashed] (0,.6ex)--++(0.5,0) ;}}) and $10^{-1/2}$ ({\color{red}\protect\tikz[baseline]{\protect\draw[line width=0.2mm, dashdotted] (0,.6ex)--++(0.5,0) ;}}).}
    \label{emp1:figBFdatesBIStrade}
\end{figure}

\begin{figure}[h!]
    \centering
    \setlength{\tabcolsep}{5pt}
    \begin{tabular}{ccc}
\small    20 Sep 2019 & \small  11 Oct 2019 & \small  15 Nov 2019\\
       \includegraphics[scale=0.3]{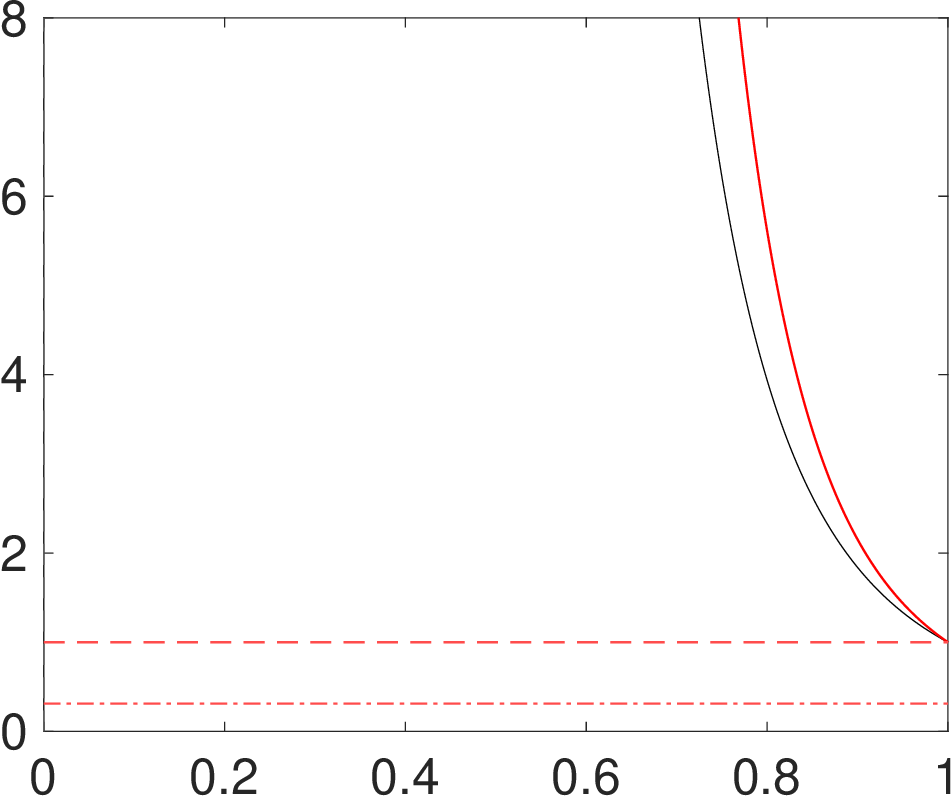}&
        \includegraphics[scale=0.3]{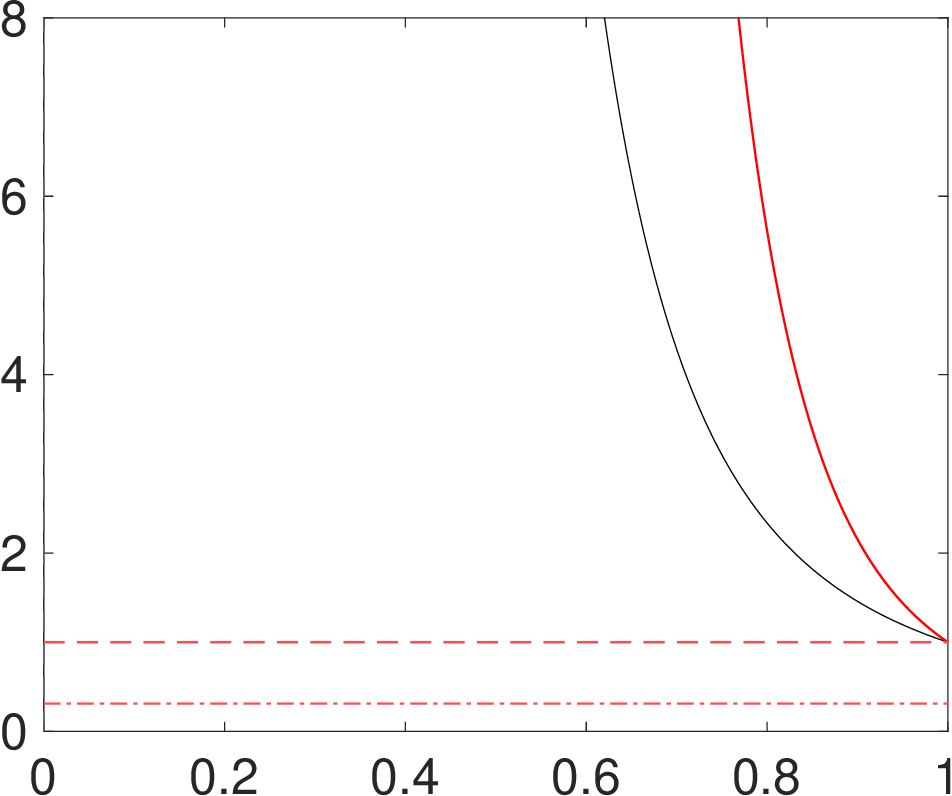}&
    \includegraphics[scale=0.3]{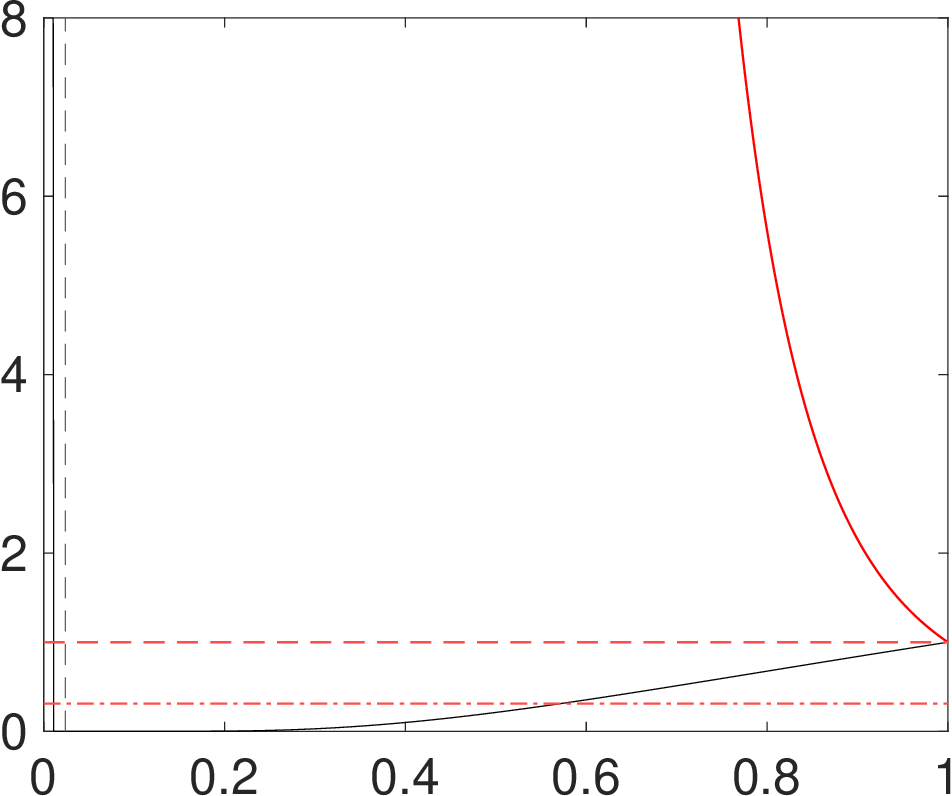}\\
\small     01 May 2020 & \small  14 Feb 2020 & \small  28 Feb 2020\\        \includegraphics[scale=0.3]{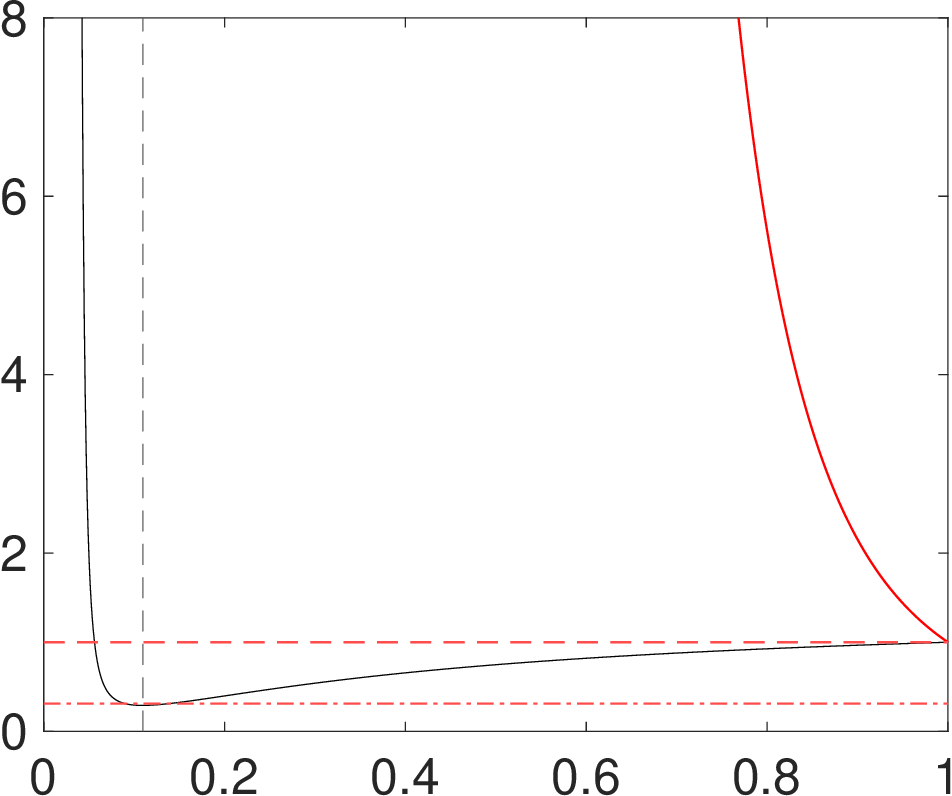}&
        \includegraphics[scale=0.3]{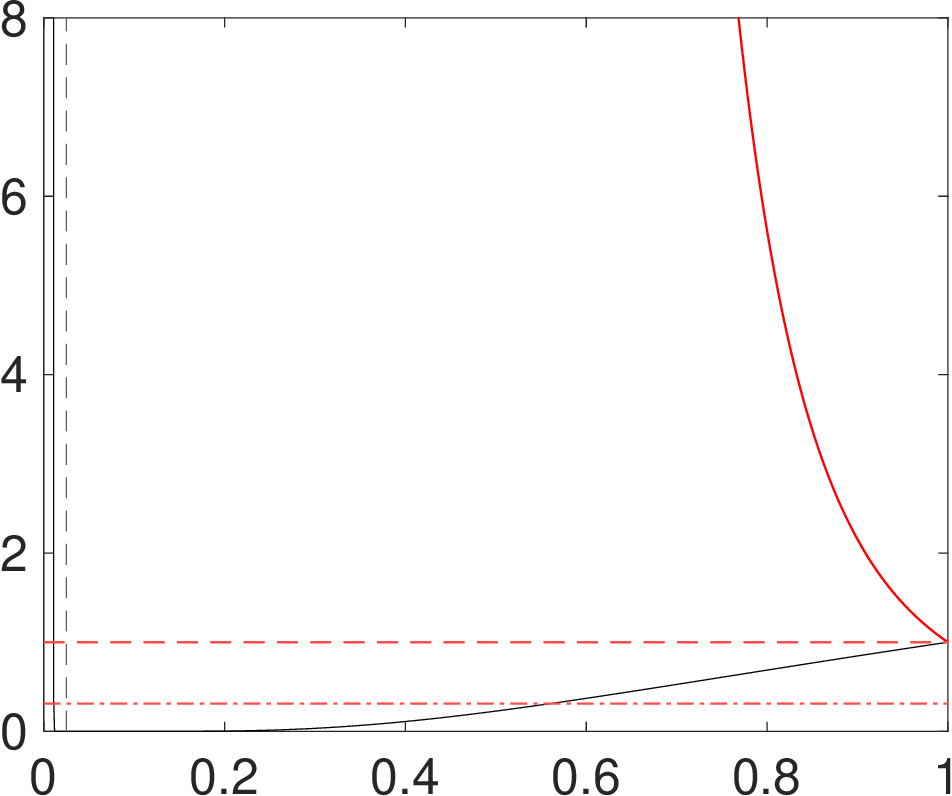}&
    \includegraphics[scale=0.3]{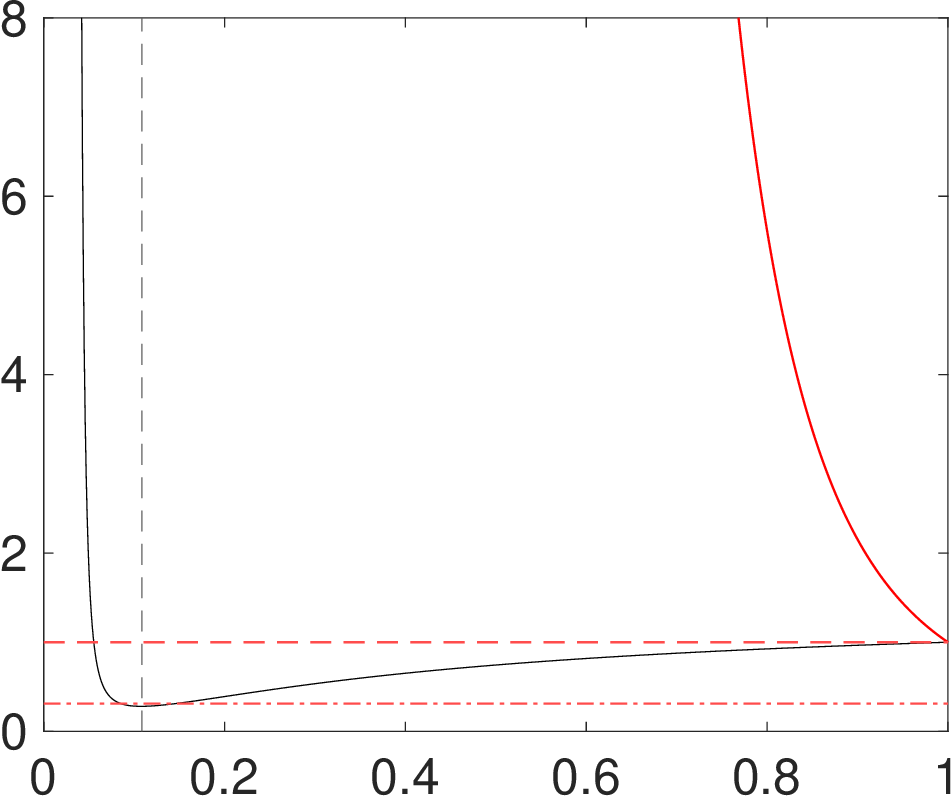}\\
\small  11 Sep 2020   & \small  18 Sep 2020  & \small  25 Sep 2020\\        \includegraphics[scale=0.3]{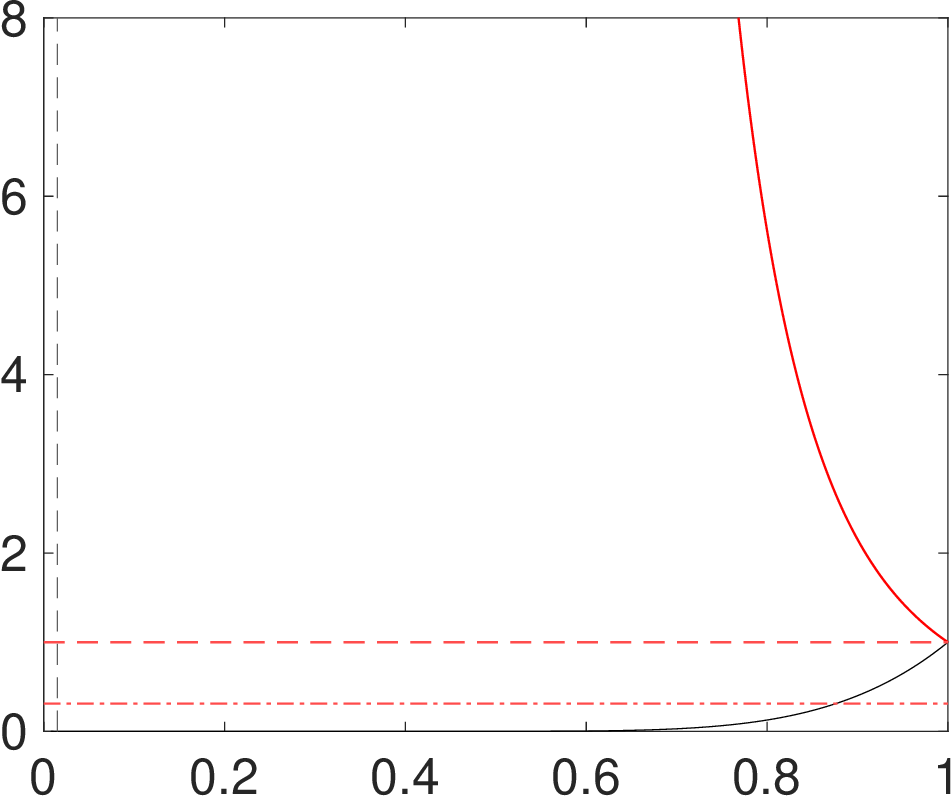}&
        \includegraphics[scale=0.3]{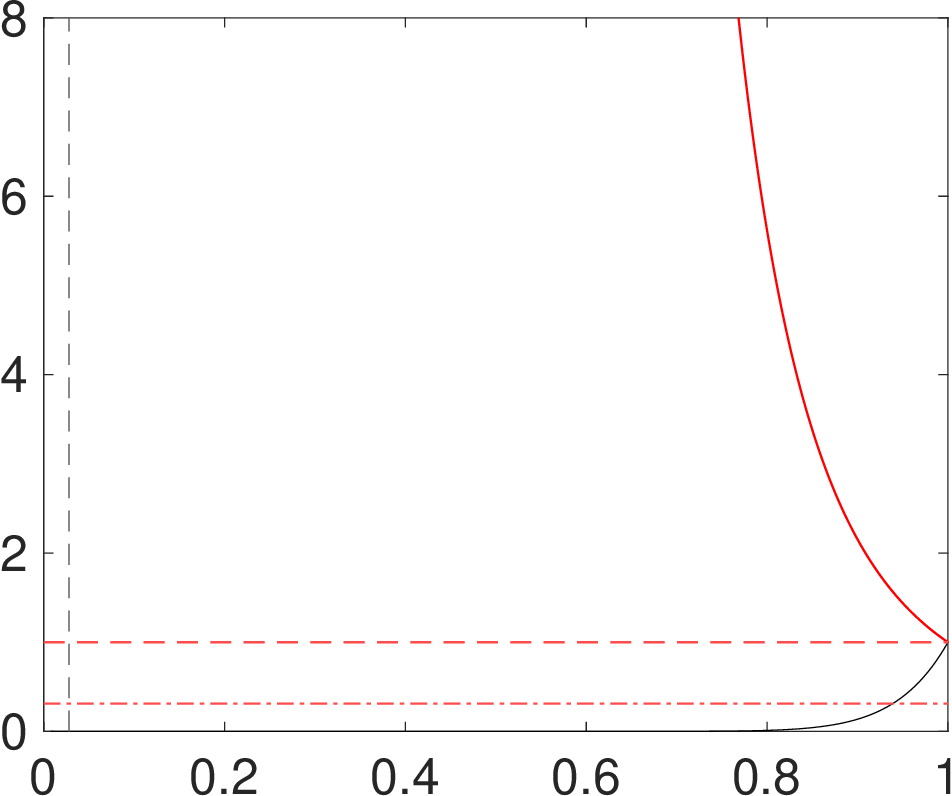}&
    \includegraphics[scale=0.3]{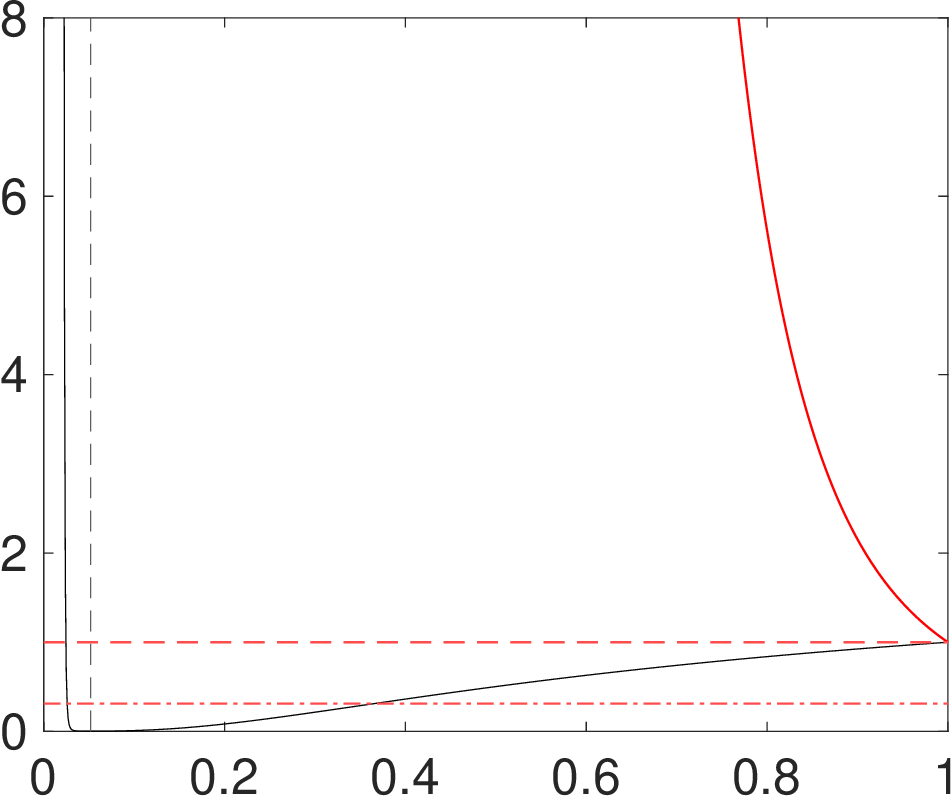}\\
    \end{tabular}
    \caption{Realized volatility network dataset. Value of the BF $H_t(\alpha_t)$ (solid black), the upper bound $\kappa_t(\alpha_t)$ (solid red) at different dates. In all plots, the reference lines at 1 ({\color{red}\protect\tikz[baseline]{\protect\draw[line width=0.2mm, dashed] (0,.6ex)--++(0.5,0) ;}}) and $10^{-1/2}$ ({\color{red}\protect\tikz[baseline]{\protect\draw[line width=0.2mm, dashdotted] (0,.6ex)--++(0.5,0) ;}}).}
    \label{emp1:figBFdatesBISvolatility}
\end{figure}

\pagebreak
\renewcommand{\thesection}{E}
\renewcommand{\theequation}{E.\arabic{equation}}
\renewcommand{\thefigure}{E.\arabic{figure}}
\renewcommand{\thetable}{D.\arabic{table}}
\renewcommand{\theproposition}{E.\arabic{proposition}}
\setcounter{table}{0}
\setcounter{figure}{0}
\setcounter{equation}{0}
\setcounter{proposition}{0}

\section{R Package BAYSFWATCH}
\label{sec:BAYSFWATCH}
A GitHub repository hosts the R package specifically developed for this work, providing tools to perform outlier detection. The package is available at:
\\
\url{https://github.com/BayesianEcon/BAYSFWATCH}.

\bibliographystyle{apalike}
\bibliography{biblio.bib}

@ARTICLE{Abraham1979-ba,
  title     = {{Bayesian Analysis of Some Outlier Problems in Time Series}},
  author    = "Abraham, Bovas and Box, George E P",
  journal   = "Biometrika",
  publisher = "JSTOR",
  volume    =  66,
  number    =  2,
  pages     = "229",
  month     =  aug,
  year      =  1979
}

@ARTICLE{Chen2018-ht,
  title     = {{Scalable {B}ayesian Modeling, Monitoring, and Analysis of Dynamic
               Network Flow Data}},
  author    = "Chen, Xi and Irie, Kaoru and Banks, David and Haslinger, Robert
               and Thomas, Jewell and West, Mike",
  journal   = "Journal of the American Statistical Association",
  publisher = "Informa UK Limited",
  volume    =  113,
  number    =  522,
  pages     = "519--533",
  month     =  apr,
  year      =  2018,
  language  = "en"
}

@article{bayarri1994robust,
  title={{Robust {B}ayesian Bounds for Outlier Detection}},
  author={Bayarri, MJ and Berger, James O},
  journal={Recent Advances in Statistics and Probability (Vilaplana, JP and Puri, ML, eds.)},
  pages={175--190},
  year={1994}
}

@article{Bayarri2004,
author = {M. J. Bayarri and J. O. Berger},
title = {{The Interplay of Bayesian and Frequentist Analysis}},
volume = {19},
journal = {Statistical Science},
number = {1},
publisher = {Institute of Mathematical Statistics},
pages = {58 -- 80},
keywords = {Admissibility, Bayesian model checking, conditional frequentist, confidence intervals, consistency, coverage, design, hierarchical models, nonparametric Bayes, objective Bayesian methods, P-values, reference priors, testing},
year = {2004}
}

@article{van2021bayesian,
  title={{A {B}ayesian-Frequentists Approach for Detecting Outliers in a One-way Variance Components Model}},
  author={van der Merwe, Abraham J and Groenewald, Piet C N and Sj{\"o}lander, Morn{\'e} R and Meyer, Johan H},
  journal={Communications in Statistics-Theory and Methods},
  volume={50},
  number={23},
  pages={5652--5677},
  year={2021},
  publisher={Taylor \& Francis}
}

@article{page2011bayesian,
  title={{Bayesian Local Contamination Models for Multivariate Outliers}},
  author={Page, Garritt L and Dunson, David B},
  journal={Technometrics},
  volume={53},
  number={2},
  pages={152--162},
  year={2011},
  publisher={Taylor \& Francis}
}

@article{Bayarri2003-tb,
title = {{Bayesian Measures of Surprise for Outlier Detection}},
journal = {Journal of Statistical Planning and Inference},
volume = {111},
number = {1},
pages = {3--22},
year = {2003},
author = {M.J Bayarri and J Morales}
}

@ARTICLE{Schad2022-oj,
  title     = {{Workflow Techniques for the Robust use of {B}ayes Factors}},
  author    = {Schad, Daniel J and Nicenboim, Bruno and B{\"u}rkner,
               Paul-Christian and Betancourt, Michael and Vasishth, Shravan},
  journal   = {Psychological Methods},
  volume    = {28},
  number    = {6}, 
  pages     = {1404–1426},
  year      = {2023}
}

@article{xian209,
 author = {Christian P. Robert and Nicolas Chopin and Judith Rousseau},
 journal = {Statistical Science},
 number = {2},
 pages = {141--172},
 publisher = {Institute of Mathematical Statistics},
 title = {Harold {J}effreys's Theory of Probability Revisited},
 urldate = {2025-08-27},
 volume = {24},
 year = {2009}
}

@ARTICLE{Box1968-gm,
  title     = {{A {B}ayesian Approach to Some Outlier Problems}},
  author    = "Box, G E P and Tiao, G C",
  journal   = "Biometrika",
  publisher = "JSTOR",
  volume    =  55,
  number    =  1,
  pages     = "119",
  month     =  mar,
  year      =  1968
}

@ARTICLE{Guttman1978-be,
  title     = {{Care and Handling of Univariate Outliers in the General Linear
               Model to Detect Spuriosity: A {B}ayesian Approach}},
  author    = "Guttman, Irwin and Dutter, Rudolf and Freeman, Peter R",
  journal   = "Technometrics",
  publisher = "JSTOR",
  volume    =  20,
  number    =  2,
  pages     = "187",
  month     =  may,
  year      =  1978
}

@article{pettit1985outliers,
  title={{Outliers and Influential Observations in Linear Models}},
  author={Pettit, LI and Smith, AFM},
  journal={Bayesian Statistics},
  volume={2},
  number={1},
  pages={473--94},
  year={1985}
}

@ARTICLE{Guttman1973-bk,
  title     = {{Care and Handling of Univariate or Multivariate Outliners in
               Detecting Spuriosity---A {B}ayesian Approach}},
  author    = "Guttman, Irwin",
  journal   = "Technometrics",
  publisher = "Informa UK Limited",
  volume    =  15,
  number    =  4,
  pages     = "723--738",
  month     =  nov,
  year      =  1973
}

@ARTICLE{Li2021-pk,
  title     = {{Monitoring Recessions: A {B}ayesian Sequential Quickest Detection
               Method}},
  author    = "Li, Haixi and Sheng, Xuguang Simon and Yang, Jingyun",
  journal   = "International Journal of Forecasting",
  publisher = "Elsevier BV",
  volume    =  37,
  number    =  2,
  pages     = "500--510",
  month     =  apr,
  year      =  2021
}

@ARTICLE{Hoeting1996-ym,
  title     = {{A Method for Simultaneous Variable Selection and Outlier
               Identification in Linear Regression}},
  author    = "Hoeting, Jennifer and Raftery, Adrian E and Madigan, David",
  journal   = "Computational Statistics and Data Analysis",
  publisher = "Elsevier BV",
  volume    =  22,
  number    =  3,
  pages     = "251--270",
  month     =  jul,
  year      =  1996,
  language  = "en"
}

@ARTICLE{Schrider2016-nn,
  title     = {{Effects of Linked Selective Sweeps on Demographic Inference and
               Model Selection}},
  author    = "Schrider, Daniel R and Shanku, Alexander G and Kern, Andrew D",
  journal   = "Genetics",
  publisher = "Oxford University Press (OUP)",
  volume    =  204,
  number    =  3,
  pages     = "1207--1223",
  month     =  nov,
  year      =  2016,
  copyright = "https://academic.oup.com/journals/pages/open\_access/funder\_policies/chorus/standard\_publication\_model",
  language  = "en"
}

@ARTICLE{Ly2016-aw,
  title     = {{Harold {J}effreys's Default {B}ayes Factor Hypothesis Tests:
               Explanation, Extension, and Application in Psychology}},
  author    = "Ly, Alexander and Verhagen, Josine and Wagenmakers, Eric-Jan",
  journal   = "Journal of Mathematical Psychology",
  publisher = "Elsevier BV",
  volume    =  72,
  pages     = "19--32",
  month     =  jun,
  year      =  2016
}

@article{west1986bayesian,
  title={{Bayesian Model Monitoring}},
  author={West, Mike},
  journal={Journal of the Royal Statistical Society: Series B (Methodological)},
  volume={48},
  number={1},
  pages={70--78},
  year={1986},
  publisher={Wiley Online Library}
}

@book{zhang2011matrix,
  title={{Matrix Theory: Basic Results and Techniques}},
  author={Zhang, Fuzhen},
  year={2011},
  publisher={Springer}
}

@ARTICLE{Pettit1990-zb,
  title     = {{The Conditional Predictive Ordinate for the Normal Distribution}},
  author    = "Pettit, L I",
  journal   = "Journal of the Royal Statistical Society, Series B",
  publisher = "Wiley",
  volume    =  52,
  number    =  1,
  pages     = "175--184",
  month     =  sep,
  year      =  1990,
  copyright = "http://onlinelibrary.wiley.com/termsAndConditions\#vor",
  language  = "en"
}

@ARTICLE{Pettit1992-bc,
  title     = {{Bayes Factors for Outlier Models using the Device of Imaginary Observations}},
  author    = "Pettit, Lawrence I",
  journal   = "Journal of the American Statistical Association",
  publisher = "Informa UK Limited",
  volume    =  87,
  number    =  418,
  pages     = "541--545",
  month     =  jun,
  year      =  1992
}

@ARTICLE{Stefan2019-nf,
  title     = {{A Tutorial on {B}ayes Factor Design Analysis using an Informed Prior}},
  author    = "Stefan, Angelika M and Gronau, Quentin F and Sch{\"o}nbrodt,
               Felix D and Wagenmakers, Eric-Jan",
  journal   = "Behavior Research Methods",
  publisher = "Springer Science and Business Media LLC",
  volume    =  51,
  number    =  3,
  pages     = "1042--1058",
  month     =  jun,
  year      =  2019,
  keywords  = "Bayes factor; Design analysis; Power analysis; Sample size;
               Statistical evidence",
  language  = "en"
}

@ARTICLE{Verdinelli1991-cu,
  title     = {{Bayesian Analysis of Outlier Problems using the {G}ibbs Sampler}},
  author    = "Verdinelli, Isabella and Wasserman, Larry",
  journal   = "Statistics and Computing",
  publisher = "Springer Science and Business Media LLC",
  volume    =  1,
  number    =  2,
  pages     = "105--117",
  month     =  dec,
  year      =  1991,
  language  = "en"
}

@ARTICLE{Yekutieli2012-bd,
  title     = {{Adjusted {B}ayesian Inference for Selected Parameters}},
  author    = "Yekutieli, Daniel",
  journal   = "Journal of the Royal Statistical Society, Series B",
  publisher = "Oxford University Press (OUP)",
  volume    =  74,
  number    =  3,
  pages     = "515--541",
  month     =  jun,
  year      =  2012,
  language  = "en"
}

@book{abamag05,
place={Cambridge},
series={Econometric Exercises},
title={Matrix Algebra},
DOI={10.1017/CBO9780511810800},
publisher={Cambridge University Press},
author={Abadir, Karim M. and Magnus, Jan R.},
year={2005}
}

@book{magneu99,
author={Magnus, J. R. and Neudecker, H.},
year={1999},
title={{Matrix Differential Calculus with Applications in Statistics and Econometrics, 2nd Ed.}},
publisher={Wiley}
}

@article{Can09,
author = {Canova, Fabio and Ciccarelli, Matteo},
title = {{Estimating Multicountry {VAR} models}},
journal = {International Economic Review},
volume = {50},
number = {3},
pages = {929-959},
year = {2009}
}

@article{And15,
author = {Anderlucci, L. and Viroli, C.},
title = {{Covariance Pattern Mixture Models for the Analysis of Multivariate Heterogeneous Longitudinal Data}},
journal = {Annals of Applied Statistics},
volume = {9},
number = {2},
pages = {777-800},
year = {2015}
}

@article{Gal18,
author = {Gallaugher, M. P. and McNicholas, P. D.},
title = {{Finite Mixtures of Skewed Matrix Variate Distributions}},
journal = {Pattern Recognition},
volume = {80},
pages = {83-93},
year = {2018}
}

@book{Gup99,
author = {Gupta, A. K. and Nagar, D. K.},
year={1999},
title={Matrix Variate Distributions},
publisher={Chapman \& Hall, CRC Press}
}

@article{guhaniyogi2017bayesian,
  title={{Bayesian Tensor Regression}},
  author={Guhaniyogi, Rajarshi and Qamar, Shaan and Dunson, David B},
  journal={The Journal of Machine Learning Research},
  volume={18},
  number={1},
  pages={2733--2763},
  year={2017},
  publisher={JMLR. org}
}

@Article{Rose2004,
  author  = {Rose, A},
  journal = {American Economic Review},
  title   = {{Do We Really Know that the {WTO} Increases Trade}},
  year    = {2004},
  number  = {1},
  pages   = {98--114},
  volume  = {94},
  date    = {2004},
}

@Article{Alcala2004,
  author = {Alcal\'{a}, F and Ciccone, A},
  title  = {{Trade and Productivity}},
journal ={The Quarterly Journal of Economics},
  year   = {2004},
  date   = {2004},
  number = {2},
  pages  = {613--646},
  volume = {119},
}

@Article{Brancaccio2020,
  author  = {Brancaccio, G and Kalouptsidi, M and Papageorgiou, T},
  journal = {Econometrica},
  title   = {{Geography, Transportation, and Endogenous Trade Costs}},
  year    = {2020},
  number  = {2},
  pages   = {657--691},
  volume  = {88},
  date    = {2020},
}

@article{weiss1997bayesian,
  title={{Bayesian Sample Size Calculations for Hypothesis Testing}},
  author={Weiss, Robert},
  journal={Journal of the Royal Statistical Society: Series D (The Statistician)},
  volume={46},
  number={2},
  pages={185--191},
  year={1997},
  publisher={Wiley Online Library}
}

@incollection{robert202250,
    author  ={Robert, Christian P},
    title = {{50 shades of {B}ayesian Testing of Hypotheses}},
    booktitle ={Advancements in Bayesian Methods and Implementations},
    editor={Young, Alastair G. and Srinivasa Rao, Arni SR and Rao,C. R.},
    pages ={103-120},
    chapter={5},
    publisher ={Academic Press},
    year = {2022}
}

@article{ANDRAS2008395,
title = {{Properties of the Probability Density Function of the Non-central Chi-squared Distribution}},
journal = {Journal of Mathematical Analysis and Applications},
volume = {346},
number = {2},
pages = {395-402},
author = {Szil\'ard Andr\'as and \'Arp\'ad Baricz},
year = {2008}
}

@article{pham2015trace,
  title={Trace of the {W}ishart matrix and applications},
  author={Pham-Gia, T and Thanh, Dinh N and Phong, Duong T and others},
  journal={Open Journal of Statistics},
  volume={5},
  number={03},
  pages={173},
  year={2015},
  publisher={Scientific Research Publishing}
}

@article{mathai1980moments,
  title={{Moments of the Trace of a Noncentral {W}ishart Matrix}},
  author={Mathai, AM},
  journal={Communications in Statistics, Theory and Methods},
  volume={9},
  number={8},
  pages={795--801},
  year={1980},
  publisher={Taylor \& Francis}
}

@article{kourouklis1985distribution,
  title={{On the Distribution of the Trace of a Noncentral {W}ishart Matrix}},
  author={Kourouklis, S and Moschopoulos, PG},
  journal={Metron},
  volume={43},
  number={2},
  pages={85--92},
  year={1985}
}

@article{landim2000dynamic,
  title={Dynamic hierarchical models: {A}n extension to matrix-variate observations},
  author={Landim, Flavia and Gamerman, Dani},
  journal={Computational Statistics \& Data Analysis},
  volume={35},
  number={1},
  pages={11--42},
  year={2000},
  publisher={Elsevier}
}

@article{triantafyllopoulos2008missing,
  title={{Missing Observation Analysis for Matrix-variate Time Series Data}},
  author={Triantafyllopoulos, Kostas},
  journal={Statistics \& Probability Letters},
  volume={78},
  number={16},
  pages={2647--2653},
  year={2008},
  publisher={Elsevier}
}

@article{wang2009bayesian,
  title={{Bayesian Analysis of Matrix Normal Graphical Models}},
  author={Wang, Hao and West, Mike},
  journal={Biometrika},
  volume={96},
  number={4},
  pages={821--834},
  year={2009},
  publisher={Oxford University Press}
}

@article{billio2018bayesianDT,
  title={{Bayesian Dynamic Tensor Regression}},
  author={Billio, Monica and Casarin, Roberto and Kaufmann, Sylvia and Iacopini, Matteo},
  journal={Journal of Business and Economic Statistics},
  volume={41},
  number={2},
  pages={429-439},
  year={2023}
}

@article{billio2021matrix,
  title={{A Matrix-Variate t Model for Networks}},
  author={Billio, Monica and Casarin, Roberto and Costola, Michele and Iacopini, Matteo},
  journal={Frontiers in Artificial Intelligence},
  volume={4},
  pages={1-49},
  year={2021},
  publisher={Frontiers}
}

@article{thompson2020classification,
  title={{Classification with the Matrix-variate-t Distribution}},
  author={Thompson, Geoffrey Z and Maitra, Ranjan and Meeker, William Q and Bastawros, Ashraf F},
  journal={Journal of Computational and Graphical Statistics},
  volume={29},
  number={3},
  pages={668--674},
  year={2020},
  publisher={Taylor \& Francis}
}

@article{CarvWest07DynMatNormGraph,
  Author = {Carvalho, Carlos M and West, Mike},
  Journal = {Bayesian Analysis},
  Number = {1},
  Pages = {69--97},
  Title = {{Dynamic Matrix-Variate Graphical Models}},
  Volume = {2},
  Year = {2007}
}

@article{Viroli11MatNorm,
  Author = {Viroli, Cinzia},
  Journal = {Statistics and Computing},
  Number = {4},
  Pages = {511--522},
  Title = {{Finite Mixtures of Matrix Normal Distributions for Classifying Three-way Data}},
  Volume = {21},
  Year = {2011}
}

@article{BASSETTI201449,
title = {{Beta-product Dependent {P}itman--{Y}or Processes for {B}ayesian Inference}},
journal = {Journal of Econometrics},
volume = {180},
number = {1},
pages = {49-72},
year = {2014},
author = {Federico Bassetti and Roberto Casarin and Fabrizio Leisen}
}

@article{BILLIO201997,
title = {{Bayesian Nonparametric Sparse {VAR} models}},
journal = {Journal of Econometrics},
volume = {212},
number = {1},
pages = {97-115},
year = {2019},
author = {Monica Billio and Roberto Casarin and Luca Rossini}
}

@article{tomarchio2022mixtures,
  title={{Mixtures of Matrix-variate Contaminated Normal Distributions}},
  author={Tomarchio, Salvatore D and Gallaugher, Michael PB and Punzo, Antonio and McNicholas, Paul D},
  journal={Journal of Computational and Graphical Statistics},
  volume={31},
  number={2},
  pages={413--421},
  year={2022},
  publisher={Taylor \& Francis}
}

@article{Grubb50,
author = {Frank E. Grubbs},
title = {{Sample Criteria for Testing Outlying Observations}},
volume = {21},
journal = {The Annals of Mathematical Statistics},
number = {1},
publisher = {Institute of Mathematical Statistics},
pages = {27 -- 58},
year = {1950}
}

@article{rosner1983percentage,
  title={{Percentage Points for a Generalized {ESD} Many--outlier Procedure}},
  author={Rosner, Bernard},
  journal={Technometrics},
  volume={25},
  number={2},
  pages={165--172},
  year={1983},
  publisher={Taylor \& Francis}
}

@article{kass1995bayes,
  title={{Bayes Factors}},
  author={Kass, Robert E and Raftery, Adrian E},
  journal={Journal of the American Statistical Association},
  volume={90},
  number={430},
  pages={773--795},
  year={1995},
  publisher={Taylor \& Francis}
}

@article{from2022some,
  title={{Some New Inequalities for the Beta Function and Certain Ratios of Beta Functions}},
  author={From, Steven G and Ratnasingam, Suthakaran},
  journal={Results in Applied Mathematics},
  volume={15},
  pages={100302},
  year={2022},
  publisher={Elsevier}
}

@article{berger1994noninformative,
  title={{Noninformative Priors and {B}ayesian Testing for the {AR(1)} Model}},
  author={Berger, James O and Yang, Ruo-Yong},
  journal={Econometric Theory},
  volume={10},
  number={3-4},
  pages={461--482},
  year={1994},
  publisher={Cambridge University Press}
}

@article{DESANTIS2004121,
title = {{Statistical Evidence and Sample Size Determination for {B}ayesian Hypothesis testing}},
journal = {Journal of Statistical Planning and Inference},
volume = {124},
number = {1},
pages = {121-144},
year = {2004},
author = {Fulvio {De Santis}},
}

@article{HeldOtt2017,
   author = "Held, Leonhard and Ott, Manuela",
   title = {{On p-Values and {B}ayes Factors}}, 
   journal= "Annual Review of Statistics and Its Application",
   year = "2018",
   volume = "5",
   pages = "393-419"
}

@article{pawel2022sceptical,
  title={{The Sceptical Bayes Factor for the Assessment of Replication Success}},
  author={Pawel, Samuel and Held, Leonhard},
  journal={Journal of the Royal Statistical Society Series B: Statistical Methodology},
  volume={84},
  number={3},
  pages={879--911},
  year={2022},
  publisher={Oxford University Press}
}

@article{Chen03062021,
author = {Fang Chen, Keying Ye and Min Wang},
title = {{The minimum {B}ayes Factor Hypothesis Test for Correlations and Partial Correlations}},
journal = {Communications in Statistics - Theory and Methods},
volume = {50},
number = {11},
pages = {2467--2480},
year = {2021}
}

@article{pawel2025closed,
  title={{Closed-form Power and Sample Size Calculations for Bayes Factors}},
  author={Pawel, Samuel and Held, Leonhard},
  journal={The American Statistician},
  pages={1--34},
  year={2025},
  publisher={Taylor \& Francis}
}

@article{zellner1986assessing,
  title={{On Assessing Prior Distributions and Bayesian Regression Analysis with g-prior Distributions}},
  author={Zellner, Arnold},
  journal={Bayesian Inference and Decision Techniques},
  year={1986},
  publisher={Elsevier Science}
}

@book{bernardo2009bayesian,
  title={{Bayesian Theory}},
  author={Bernardo, Jos{\'e} M and Smith, Adrian FM},
  volume={405},
  year={2009},
  publisher={John Wiley \& Sons}
}

@article{wag2023,
  title={{History and Nature of the {J}effreys--{L}indley Paradox}},
  author={Wagenmakers, Eric-Jan and Ly, Alexander},
  journal={Archive for History of Exact Sciences},
  volume={77},
  number={1},
  pages={25--72},
  year={2023},
  publisher={Springer}
}

@article{Robert_2014, 
title={On the {Jeffreys-Lindley} Paradox}, 
volume={81},
number={2}, 
journal={Philosophy of Science}, author={Robert, Christian P.}, year={2014}, 
pages={216–232}
}

@article{robert2009harold,
  title={{Harold {J}effreys's Theory of Probability Revisited}},
  author={Robert, Christian P and Chopin, Nicolas and Rousseau, Judith},
  journal={Statistical Science},
  pages={141--172},
  year={2009},
  publisher={JSTOR}
}

@article{chopin2004bayesian,
  title={Bayesian inference and state number determination for hidden Markov models: an application to the information content of the yield curve about inflation},
  author={Chopin, Nicolas and Pelgrin, Florian},
  journal={Journal of Econometrics},
  volume={123},
  number={2},
  pages={327--344},
  year={2004},
  publisher={Elsevier}
}

@article{koop2007estimation,
  title={{Estimation and Forecasting in Models with mMltiple Breaks}},
  author={Koop, Gary and Potter, Simon M},
  journal={The Review of Economic Studies},
  volume={74},
  number={3},
  pages={763--789},
  year={2007},
  publisher={Wiley-Blackwell}
}

@article{casarin2024bayesian,
  title={{Bayesian Nonparametric Panel Markov-Switching GARCH Models}},
  author={Casarin, Roberto and Costantini, Mauro and Osuntuyi, Anthony},
  journal={Journal of Business \& Economic Statistics},
  volume={42},
  number={1},
  pages={135--146},
  year={2024},
  publisher={Taylor \& Francis}
}

@article{casini2024change,
  title={{Change-point Analysis of Time Series with Evolutionary Spectra}},
  author={Casini, Alessandro and Perron, Pierre},
  journal={Journal of Econometrics},
  volume={242},
  number={2},
  pages={105811},
  year={2024},
  publisher={Elsevier}
}

@article{kole2023moments,
  title={{Moments, Shocks and Spillovers in Markov-switching VAR models}},
  author={Kole, Erik and van Dijk, Dick},
  journal={Journal of Econometrics},
  volume={236},
  number={2},
  pages={105474},
  year={2023},
  publisher={Elsevier}
}

@article{GIORDANI2007112,
title = {{A Unified Approach to Nonlinearity, Structural Change, and Outliers}},
journal = {Journal of Econometrics},
volume = {137},
number = {1},
pages = {112-133},
year = {2007},
issn = {0304-4076},
doi = {https://doi.org/10.1016/j.jeconom.2006.03.013},
author = {Paolo Giordani and Robert Kohn and Dick {van Dijk}}
}

@article{atkinson1997detecting,
  title={{Detecting Shocks: Outliers and Breaks in time series}},
  author={Atkinson, Anthony Curtis and Koopman, Siem-Jan and Shephard, Neil},
  journal={Journal of Econometrics},
  volume={80},
  number={2},
  pages={387--422},
  year={1997},
  publisher={Elsevier}
}

@article{agarwal2024unobserved,
  title={{Unobserved Performance of Hedge Funds}},
  author={Agarwal, Vikas and Ruenzi, Stefan and Weigert, Florian},
  journal={The Journal of Finance},
  volume={79},
  number={5},
  pages={3203--3259},
  year={2024},
  publisher={Wiley Online Library}
}

@article{liao2022extrapolative,
  title={{Extrapolative Bubbles and Trading Volume}},
  author={Liao, Jingchi and Peng, Cameron and Zhu, Ning},
  journal={The Review of Financial Studies},
  volume={35},
  number={4},
  pages={1682--1722},
  year={2022},
  publisher={Oxford University Press}
}

@article{malikov2020estimation,
  title={{Estimation of Firm-level Productivity in the Presence of Exports: Evidence from China's Manufacturing}},
  author={Malikov, Emir and Zhao, Shunan and Kumbhakar, Subal C},
  journal={Journal of Applied Econometrics},
  volume={35},
  number={4},
  pages={457--480},
  year={2020},
  publisher={Wiley Online Library}
}

@article{chang2024discussion,
  title={{A Discussion on the Robust Vector Autoregressive Models: Novel Evidence from Safe Haven Assets}},
  author={Chang, Le and Shi, Yanlin},
  journal={Annals of Operations Research},
  volume={339},
  number={3},
  pages={1725--1755},
  year={2024},
  publisher={Springer}
}

@article{zeng2021bayesian,
  title={{Bayesian Median Autoregression for Robust Time Series Forecasting}},
  author={Zeng, Zijian and Li, Meng},
  journal={International Journal of Forecasting},
  volume={37},
  number={2},
  pages={1000--1010},
  year={2021},
  publisher={Elsevier}
}

\end{document}